\newtheorem{thm}{Theorem}
\newtheorem{lemma}[thm]{Lemma}
\newtheorem{cor}{Corollary}
\newtheorem{prop}[thm]{Proposition}
\newtheorem{defn}{Definition}
\newcommand{\bx}{{\bf x}}
\newcommand{\by}{{\bf y}}
\newcommand{\bz}{{\bf z}}
\newcommand{\wbar}{\overline{w}}
\newcommand{\Wbar}{\overline{W}}
\newcommand{\btheta}{\bm{\theta}}
\newcommand{\talpha}{\tilde{\alpha}}
\newcommand{\tvartheta}{\tilde{\vartheta}}
\begin{document}

\title{Tractable Bayesian variable selection: beyond normality}

\author{David Rossell}
\address{Universitat Pompeu Fabra, Department of Business and Economics, Barcelona (Spain)}

\author{Francisco J. Rubio}
\address{London School of Hygiene \& Tropical Medicine, London (UK)}

\date{}  

\begin{abstract}
Bayesian variable selection often assumes normality,
but the effects of model misspecification are not sufficiently understood.
There are sound reasons behind this assumption, particularly for large $p$:
ease of interpretation, analytical and computational convenience.
More flexible frameworks exist, including semi- or non-parametric models,
often at the cost of some tractability.
We propose a simple extension of the Normal model that allows for skewness and thicker-than-normal tails but preserves tractability.
It leads to easy interpretation and a log-concave likelihood that facilitates optimization and integration.
We characterize asymptotically parameter estimation and Bayes factor rates, in particular studying the effects of model misspecification.
Under suitable conditions misspecified Bayes factors are consistent
and induce sparsity at the same asymptotic rates than under the correct model.
However, the rates to detect signal are altered by an exponential factor,
often resulting in a loss of sensitivity.
These deficiencies can be ameliorated by inferring the error distribution from the data,
a simple strategy that can improve inference substantially.
Our work focuses on the likelihood and can thus be combined with any likelihood penalty
or prior, but here we focus on non-local priors to induce extra sparsity
and ameliorate finite-sample effects caused by misspecification.
Our results highlight the practical importance of focusing on the likelihood
rather than solely on the prior, when it comes to Bayesian variable selection.
The methodology is available in R package `mombf'.

\vspace*{.3in}

\noindent \textsc{Keywords}: Variable selection, two-piece errors, Bayes factors, model misspecification, robust regression.

\end{abstract}

\maketitle

\section{Introduction}

The rise of high-dimensional problems has generated a renewed interest in simple models.
Beyond the obvious issue that modest sample sizes
limit the number of parameters that can be learned accurately,
simple models remain a central choice due to their
analytical and computational tractability,
ease of interpretation, and the fact that they often work well in practice.
There is, however, a pressing need to seek extensions which,
while retaining the aforementioned advantages, incorporate additional flexibility and can be studied
without unrealistically assuming that the posed model is correct. 
Ideally such extensions should detect when the added
flexibility is not needed so that one can fall back onto simpler models.
We focus on canonical variable selection in linear
regression from a Bayesian standpoint, although some results may also be useful for penalized likelihood methods.
Given that the number of models to consider is exponential in the number of variables,
it is highly convenient to adopt error models that lead to fast within-model calculations,
\textit{e.g.}~closed forms or fast approximations for the integrated likelihood.
Our work is based on two-piece distributions,
an easily interpretable family that has a long history and which we fully characterize
in the linear model case (synthesizing and extending current results) 
under model misspecification.
Our main contributions are showing that two-piece errors (specifically when applied to the Normal and Laplace families)
lead to tractable inference, proposing simple computational algorithms,
and characterizing variable selection under model misspecification,
including when this likelihood is combined with non-local priors (NLPs, \cite{johnson:2010}).
We show that in the presence of asymmetries or heavy tails the Normal model incurs a significant loss of power, and propose a formal strategy to detect such departures from normality.
When these departures are negligible our model collapses onto Normal errors,
for which closed-form expressions are often available.

To fix ideas, we consider the linear regression model
\begin{align}
y = X \theta + \epsilon,
\label{eq:lm}
\end{align}
where $y=(y_1,\ldots,y_n)^T$ is the observed outcome for $n$ individuals,
$X$ is an $n \times p$ matrix with potential predictors,
$\theta=(\theta_1,\ldots,\theta_p)^T \in \mathbb{R}^p$ are regression coefficients
and $\epsilon=(\epsilon_1,\ldots,\epsilon_n)^T$ are independent and identically distributed (id) errors
(see Section \ref{ssec:bf_rates} for a discussion on non-id errors).
The goal is to determine the non-zero coefficients in $\theta$
under an arbitrary data-generating distribution for the $\epsilon_i$'s,
building a framework that remains convenient for large $p$.
Let $\gamma_j=\mbox{I}(\theta_j \neq 0)$ for $j=1,\ldots,p$ be variable inclusion indicators
and $p_\gamma=\sum_{j=1}^{p} \gamma_j$ the number of active variables.
To consider that residuals may be asymmetric and/or have thicker-than-normal tails
$\gamma_{p+1}=1$ denotes the presence of asymmetry ($\gamma_{p+1}=0$ otherwise)
and $\gamma_{p+2}=1$ that of thick tails ($\gamma_{p+2}=0$ for Normal tails).
Thus $\gamma=(\gamma_1,\ldots,\gamma_{p+2})$ denotes the assumed model.
$X_{\gamma}$ and $\theta_\gamma$ are the corresponding submatrix of $X$ and subvector of $\theta$, respectively.
We denote the $i^{th}$ row in $X$ and $X_\gamma$ by $x_i^T \in \mathbb{R}^p$
and $x_{\gamma i}^T \in \mathbb{R}^{p_\gamma}$.

There are a number of proposals to relax the normality assumption.
Within the frequentist literature
\cite{wanghansheng:2007} proposed median regression with LASSO penalties (LASSO-LAD) and
\cite{wanglang:2009} with rank-based SCAD penalties.
\cite{arslan:2012} extended LASSO median regression by weighting observations
and \cite{fan:2014} considered adaptive LASSO quantile regression.
These approaches are formally connected to assuming either Laplace or asymmetric Laplace errors.
There are also model-free M-estimation methods, \textit{e.g.}~combining Huber's loss with an adaptive LASSO penalty \cite{lambertlacroix:2011},
sparse trimmed-means LASSO \cite{alfons:2013},
and non-negative garrote extensions to induce robustness to outliers 
\cite{gijbels:2015}.
Theoretical characterizations also exist,
\textit{e.g.}~\cite{mendelson:2014} proved the consistency and asymptotic normality of high-dimensional M-estimators
and \cite{loh:2015} extended the results to generalized M-estimators with non-convex loss functions.
Within the Bayesian framework,
\cite{gottardo:2007} and \cite{wang:2016} consider variable selection after transforming $y_i$ and/or $x_i$,
the former allowing for $t$ errors and the latter inducing NLPs on $\theta$
via the transformation's Jacobian. 
While certainly interesting, the transformed conditional mean $E(y_i \mid x_i)$ is no longer linear in $x_i$
and parameter interpretation and prior elicitation is less straightforward.
Our main interest is in linear predictors with simple error distributions.
Along these lines, \cite{yu:2013} proposed Gibbs sampling for model choice in Bayesian quantile regression using a latent scale augmentation,
and \cite{yan_kottas:2015} extended Azzalini's skew Normal to Laplace errors
within Bayesian quantile regression, which leads to easily-implementable MCMC, and induced sparsity via LASSO penalties.
Related to our work \cite{rubiogenton:2016} and \cite{rubioyu:2016} employ skew-symmetric and two-piece errors
in linear regression, respectively, albeit the set of covariates is fixed and they focus on prediction and censored responses.
Yet another possible avenue is to pose highly flexible errors,
\textit{e.g.}~\cite{chung:2009} set a non-parametric model to simultaneously learn the effect of $x_i$
on the mean and on the shape of the residual distribution.
\cite{kundu:2014} proposed variable selection with non-parametric symmetric residuals,
for which notably \cite{chae:2016} proved high-dimensional model selection consistency and concentration rates under model misspecification.
Most Bayesian work uses Markov Chain Monte Carlo (MCMC) for parameter estimation and computation of
marginal likelihoods and does not collapse onto the Normal model when warranted by the data,
hampering its computational scalability as $p$ or $n$ grow,
further the theoretical study is typically M-closed.

In contrast, we show that simpler parametric error models equipped with efficient analytical
approximations to the integrated likelihood
achieve selection consistency under model misspecification,
and embed these models within a framework that when appropriate collapses onto normality.
We also show that model misspecification can markedly decrease the sensitivity to detect truly active variables,
\textit{e.g.}~under asymmetry or heavy tails.
Our results complement the examples in \cite{gruenwald:2014},
where the presence of inliers favoured the addition of spurious variables
(see also Figure 1 in \cite{kundu:2014}).
We show that asymptotically misspecified Bayes factors to discard spurious models essentially multiply the correct
Bayes factor by a constant term, but when detecting true signals this term is exponential in $n$.
That is, asymptotically model misspecification has more serious effects on sensitivity than on false positives.
For finite $n$, false positives can be an important issue. We use the example in
\cite{gruenwald:2014} to illustrate how such finite $n$ effects can be reduced
by penalizing small coefficients via NLPs (Section \ref{ssec:nonid_sim}).

Before presenting our approach we clarify our main contributions relative to earlier work in two-piece distributions.
\cite{rubio:2014} showed that Jeffreys priors and their associated posteriors for location-scale two-piece models are improper,
and that the (improper) independence Jeffreys prior leads to a proper posterior.
\cite{rubioyu:2016} extended the study to linear regression, again under improper priors.
Unfortunately, improper priors cannot be used for Bayesian model selection as they lead to the well-known Jeffreys-Lindley-Bartlett paradox.
There is also literature (\text{e.g.} \cite{arellanovalle:2005}) on MLE consistency and asymptotic normality in the case with no covariates.
Checks of the large sample theory technical conditions are however hard to come by,
which are non-standard due to the non-existence of certain derivatives.
Our two-piece likelihood properties, specifically log-concativity and asymptotic analysis under model misspecification are, to our knowledge, new.
As well as our results on Bayes factors, indeed the main theme of our paper: model selection.
The M-estimation technical machinery for the theorems is also of interest
as an avenue for asymptotic analysis of Bayesian model selection under misspecification.
Finally optimization and integration algorithms
built on interior-point methods are newly developed here to scale with $n$ and $p$.
A particular case of our framework provides a new approach to Bayesian quantile regression.
We also propose a novel strategy to infer the error model from the data.

The manuscript is structured as follows.
Section \ref{sec:loglikelihood} reviews two-piece distributions
and establishes the concavity of the log-likelihood in the asymmetric Normal and Laplace cases.
Section \ref{sec:prior} proposes a prior formulation based on NLPs
that enforces sparsity and discards degrees of asymmetry that are irrelevant in practice.
Section \ref{sec:paramestim} tackles maximum likelihood and posterior mode estimation,
specifically giving asymptotic distributions and optimization algorithms
that capitalize on likelihood tractability.
Section \ref{sec:modsel} outlines a framework to select both variables and the residual distribution,
proposes fast approximations to the integrated likelihood and characterizes asymptotically the associated Bayes factors.
Section \ref{sec:results} shows results on simulated and experimental data,
and Section \ref{sec:conclusion} offers concluding remarks.
The supplementary material contains all proofs and further results.
R code to reproduce our results is also provided as a supplement to this article.

\section{Log-likelihood}
\label{sec:loglikelihood}

We recall the definition of a two-piece distribution for model \eqref{eq:lm} and predictors $X_\gamma$.

\begin{defn}
A random variable $y_i \in \mathbb{R}$ following a two-piece distribution
with location $x_{\gamma i}^T\theta_\gamma$, scale $\sqrt{\vartheta} \in \mathbb{R}^+$ and asymmetry $\alpha$ has density function
$s(y_i;x_{\gamma i}^T\theta_\gamma,\vartheta,\alpha) =$
\begin{eqnarray}\label{eq:TP}\\
\dfrac{2}{\sqrt{\vartheta}[a(\alpha)+b(\alpha)]}
\left[f\left(\dfrac{y_i-x_{\gamma i}^T\theta_\gamma}{\sqrt{\vartheta} a(\alpha)} \right) I(y_i<x_{\gamma i}^T\theta_\gamma) + f\left(\dfrac{y_i-x_{\gamma i}^T \theta_\gamma}{\sqrt{\vartheta}b(\alpha)} \right) I(y_i \geq x_{\gamma i}^T \theta_\gamma) \right],\nonumber
\end{eqnarray}
where $f(\cdot)$ is a symmetric unimodal density with mode at $0$ and support on $\mathbb R$,
and $a(\alpha),b(\alpha) \in \mathbb{R}^+$.
\end{defn}
Two-piece distributions induce asymmetry by (continuously) merging two symmetric densities that have the same mode
$x_{\gamma i}^T \theta_\gamma$ but different scale parameters $\sqrt{\vartheta}a(\alpha)$, $\sqrt{\vartheta}b(\alpha)$ on each side of the mode.
Some popular parameterizations are
the inverse scale factors $\{a(\alpha),b(\alpha)\}=\{\alpha,1/\alpha\}$ for $\alpha \in {\mathbb R}^+$ \cite{fernandez:1998}
or the epsilon-skew parameterization $\{a(\alpha),b(\alpha)\}=\{1-\alpha,1+\alpha\}$ for $\alpha \in [-1,1]$ \cite{mudholkar:2000}.
We adopt the latter as it leads to orthogonality in the expected log-likelihood hessian between $\alpha$ and $\vartheta$,
also it allows easy interpretation as the total variation distance between
$s(y_i;x_{\gamma i}^T \theta_\gamma,\vartheta,\alpha)$ and its symmetric counterpart $s(y_i; x_{\gamma i}^T \theta_\gamma,\vartheta,0)$
is $|\alpha|/2$ \cite{dette:2016}.
Further, a classical skewness coefficient proposed by Arnold-Groeneveld
defined as $\mbox{AG}=1-2F(x_{\gamma i}^T \theta_\gamma) \in [-1,1]$ for a univariate random variable with mode at $x_{\gamma i}^T \theta_\gamma$
and cumulative distribution function $F()$,
is equal to $\mbox{AG}=-\alpha$ \cite{rubio:2014}.

Two-piece distributions are appealing for regression given that the mode of $s()$ is $x_{\gamma i}^T \theta_\gamma$,
its mean (when defined) depends on $x_{\gamma i}$ only through $x_{\gamma i} \theta_\gamma$
and its variance is proportional to $\vartheta$ (see below for specific expressions),
facilitating interpretation and prior elicitation.
Despite these properties and them being a classical strategy with a fascinating history,
proposed at least as early as 1897 and rediscovered multiple times \cite{wallis:2014},
their popularity has been limited due to practical concerns,
\textit{e.g.}~log-likelihood maximization may be hampered by discontinuous gradients or hessians.
For this reason we focus on two-piece Normal and Laplace errors,
for which we prove log-concavity and thus analytical and computational tractability,
giving a practical mechanism to capture asymmetry and heavier-than-normal tails.
Specifically, the two-piece Normal is obtained by letting $f(z)=N(z;0,1)$ in (\ref{eq:TP}) be the standard Normal density,
and gives $E(y_i \mid x_{\gamma i})= x_{\gamma i}^T\theta_\gamma- \alpha\sqrt{8\vartheta/\pi}$,
$\mbox{Var}(y_i \mid  x_{\gamma i})= \vartheta [(3-8/\pi) \alpha^2 +1]$
and a median that is also linear in $x_{\gamma i}$ \cite{mudholkar:2000}.
The corresponding likelihood has the simple expression
$\log L_1(\theta_\gamma,\vartheta,\alpha) =$
\begin{align}
-\frac{n}{2} \log (2\pi) -\frac{n}{2} \log (\vartheta) - \frac{1}{2\vartheta}
\left( \sum_{i \in A(\theta_\gamma)}^{} \frac{(y_i-x_{\gamma i}^T\theta_\gamma)^2}{(1+\alpha)^2}
+ \sum_{i \not\in A(\theta_\gamma)}^{} \frac{(y_i-x_{\gamma i}^T\theta_\gamma)^2}{(1-\alpha)^2} \right)
= \nonumber \\
=-\frac{n}{2} \log (2\pi) -\frac{n}{2}\log (\vartheta) - \frac{1}{2\vartheta}
(y - X_\gamma\theta_\gamma)^T W^2 (y - X_\gamma\theta_\gamma).
\label{eq:skewnorm_loglhood}
\end{align}
where $A(\theta_\gamma)= \left\{  i: y_i < x_{\gamma i}^T\theta_\gamma \right\}$ are the observations with negative residuals,
$W=\mbox{diag}(w)$, $w_i=|1+\alpha|^{-1}$ if $i \in A(\theta_\gamma)$ and $w_i=|1-\alpha|^{-1}$ if $i \not\in A(\theta_\gamma)$.
For later convenience we denote by $\wbar$ the signed weight vector
with $\wbar_i=w_i$ if $i \in A(\theta_\gamma)$ and $\wbar_i=-w_i$ if $i \not \in A(\theta_\gamma)$,
by $w^k=(w_1^k,\ldots,w_n^k)$ the element-wise $k^{th}$ power of a vector,
$\wbar^k=(\mbox{sign}(\wbar_1) |\wbar_1|^k,\ldots,\mbox{sign}(\wbar_n) |\wbar_n|^k)^T$
and $\Wbar^k=\mbox{diag}(\wbar^k)$.
Note that (\ref{eq:skewnorm_loglhood}) is linked to asymmetric least square regression 
and is the Normal likelihood for $\alpha=0$.

The two-piece Laplace is obtained by setting $f(z)=0.5 \exp(-\vert z \vert)$ in (\ref{eq:TP}).
This distribution is more commonly referred to as asymmetric Laplace,
we denote it $y_i \sim \mbox{AL}(x_{\gamma i}^T \theta_\gamma, \vartheta, \alpha)$
and note that $E(y_i \mid x_{\gamma i},\theta_\gamma,\vartheta,\alpha)= x_{\gamma i}^T\theta_\gamma - 2 \alpha \sqrt{\vartheta}$
and $\mbox{Var}(y_i \mid x_{\gamma i})= 2\vartheta(1+\alpha^2)$ \cite{arellanovalle:2005}.
For coherency from here onwards, we also refer to the two-piece Normal as asymmetric Normal and denote $y_i \sim \mbox{AN}(x_{\gamma i}^T \theta_\gamma, \vartheta, \alpha)$.
The asymmetric Laplace log-likelihood is
$\log L_2(\theta_\gamma,\vartheta,\alpha) =$
\begin{align}
-n\log(2) -\frac{n}{2} \log(\vartheta) - \frac{1}{\sqrt{\vartheta}}
\left( \sum_{i \in A(\theta_\gamma)}^{} \frac{\vert y_i-x_{\gamma i}^T\theta_\gamma\vert}{1+\alpha}
+ \sum_{i \not\in A(\theta_\gamma)}^{} \frac{\vert y_i-x_{\gamma i}^T\theta_\gamma\vert}{1-\alpha} \right).
\label{eq:skewlap_loglhood}
\end{align}
The symmetric Laplace case is obtained for $\alpha=0$, in which case optimization of (\ref{eq:skewlap_loglhood})
with respect to $\theta_\gamma$ is equivalent to median regression, whereas for fixed $\alpha \neq 0$ it leads to quantile regression.
Hence a particular case of our framework is obtained when conditioning upon asymmetric Laplace errors
with a fixed $\alpha$, this leads to Bayesian quantile regression for the quantile $\tau=(1+\alpha)/2$.
Fixing $\alpha$ can be interesting in certain applications, is implemented in our software and
illustrated in the DLD data (Section \ref{ssec:dld}).
However by default we recommend treating $\alpha$ as a parameter to be learnt from the data.
This reduces sensitivity to model misspecification: conditioning upon non-optimal $\alpha$ increases the KL-divergence between the assumed model class and the data-generating truth, which may decrease power to detect truly active variables (Proposition \ref{prop:bfrates} and follow-up discussion).
Further, we propose a framework to infer the error distribution, clearly there one wishes to use the best-fitting $\alpha$.
Finally, each $\alpha$ conditioned upon may lead to different selected variables, this can be interesting but in applications one often is more interested in global variable selection.

Our first results regarding the tractability of (\ref{eq:skewnorm_loglhood})-(\ref{eq:skewlap_loglhood})
are given in Propositions \ref{prop:lhood_skewnorm_properties}-\ref{prop:lhood_skewlap_properties}
(Proposition \ref{prop:lhood_skewnorm_properties}(i) was already shown by \cite{mudholkar:2000}).

\begin{prop}
The asymmetric Normal log-likelihood in (\ref{eq:skewnorm_loglhood}) satisfies:
\begin{enumerate}[label=(\roman*)]
\item Its gradient is continuous and is given by

\begin{align}
g_1(\theta_\gamma,\vartheta,\alpha)=
\begin{pmatrix}
\frac{1}{\vartheta} X_\gamma^T W^2 (y - X_\gamma\theta_\gamma) \\
-\frac{n}{2 \vartheta} + \frac{1}{2 \vartheta^2}(y-X_\gamma\theta_\gamma)^T W^2 (y-X_\gamma\theta_\gamma) \\
\frac{1}{\vartheta} (y - X_\gamma\theta_\gamma)^T \Wbar^3 (y - X_\gamma\theta_\gamma)
\end{pmatrix}.
\nonumber
\end{align}

\item Its Hessian with respect to $\theta_\gamma$ is continuous everywhere except on the zero Lebesgue measure set
$\{ \theta_\gamma \in \mathbb{R}^p: x_{\gamma i}^T\theta_\gamma = y_i \mbox{ for some } i=1,\ldots,n \}$,
and is $H_1(\theta_\gamma,\vartheta,\alpha)= \vartheta^{-1} \times$
\begin{align}
\begin{pmatrix}
-X_\gamma^T W^2 X_\gamma
& \frac{1}{\vartheta} X_\gamma^T W^2 (X_\gamma\theta_\gamma - y)
& -2 X_\gamma^T \Wbar^3 (y - X_\gamma\theta_\gamma) \\
\frac{1}{\vartheta} (X_\gamma\theta_\gamma - y)^T W^2 X_\gamma
 & \frac{n}{2\vartheta} - \frac{(y-X_\gamma\theta_\gamma)^T\Wbar^2 (y-X_\gamma\theta_\gamma)}{\vartheta^2}
 & - \frac{1}{\vartheta} (y - X_\gamma\theta_\gamma)^T \Wbar^3 (y - X_\gamma\theta_\gamma) \\
-2 (y - X_\gamma\theta_\gamma)^T \Wbar^3 X_\gamma
& - \frac{1}{\vartheta} (y - X_\gamma\theta_\gamma)^T \Wbar^3 (y - X_\gamma\theta_\gamma)
& -3 (y-X_\gamma\theta_\gamma)^T W^4 (y-X_\gamma\theta_\gamma)
\end{pmatrix},
\nonumber
\end{align}

\item If $\mbox{rank}(X_\gamma)=p_\gamma$, then
$H_1(\theta_\gamma,\vartheta,\alpha)$ is strictly negative definite with respect to $(\theta_\gamma,\alpha)$ and
(\ref{eq:skewnorm_loglhood}) has a unique maximum $(\widehat{\theta_\gamma},\widehat{\vartheta},\widehat{\alpha})$.
Alternatively, if $\mbox{rank}(X_\gamma)<p_\gamma$, then $H_1(\theta_\gamma,\vartheta,\alpha)$ is negative semidefinite.
\end{enumerate}
\label{prop:lhood_skewnorm_properties}
\end{prop}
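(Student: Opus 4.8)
The plan is to reduce all three parts to the weighted residual sum of squares $Q(\theta_\gamma,\alpha)=(y-X_\gamma\theta_\gamma)^{T}W^{2}(y-X_\gamma\theta_\gamma)=\sum_{i}w_i^{2}r_i^{2}$, writing $r_i=y_i-x_{\gamma i}^{T}\theta_\gamma$ and $\log L_1=-\tfrac{n}{2}\log(2\pi)-\tfrac{n}{2}\log\vartheta-Q/(2\vartheta)$. For (i) the only delicate point is continuity of the gradient in $\theta_\gamma$ across the switching set $\{r_i=0\}$, where $W$ jumps. First I would write $Q=\sum_i\rho_\alpha(r_i)$ with the per-observation loss $\rho_\alpha(t)=t^{2}/(1-\alpha)^{2}$ for $t\ge 0$ and $\rho_\alpha(t)=t^{2}/(1+\alpha)^{2}$ for $t<0$; this matches $A(\theta_\gamma)$ since $i\in A(\theta_\gamma)\iff r_i<0$. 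The key observation is that $\rho_\alpha\in C^{1}(\mathbb{R})$ (both pieces and their first derivatives vanish at $t=0$), even though it is not twice differentiable there, so $\theta_\gamma\mapsto Q$ is $C^{1}$ by the chain rule with $\nabla_{\theta_\gamma}Q=-2X_\gamma^{T}W^{2}r$. Differentiating $-Q/(2\vartheta)$ in $\vartheta$, and in $\alpha$ via $\partial w_i^{2}/\partial\alpha=-2\wbar_i^{3}$ (valid and smooth for $\alpha\in(-1,1)$), reproduces the remaining components of $g_1$ and shows it is continuous everywhere.

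For (ii) I would differentiate $g_1$ once more. Since $\rho_\alpha$ fails to be $C^{2}$ exactly at $t=0$, the sole obstruction to a continuous Hessian in $\theta_\gamma$ is the jump of $\rho_\alpha''$ where some $r_i=0$, i.e.\ the measure-zero set $\{\theta_\gamma:x_{\gamma i}^{T}\theta_\gamma=y_i\text{ for some }i\}$. Off this set $A(\theta_\gamma)$ is locally constant, hence $W$ is locally constant in $\theta_\gamma$ and $\nabla^{2}_{\theta_\gamma}Q=2X_\gamma^{T}W^{2}X_\gamma$, giving the top-left block $-\vartheta^{-1}X_\gamma^{T}W^{2}X_\gamma$. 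The remaining entries require differentiation only in the smooth variables $\vartheta$ and $\alpha$; here I would use $\partial w_i^{2}/\partial\alpha=-2\wbar_i^{3}$ and $\partial\wbar_i^{3}/\partial\alpha=-3w_i^{4}$ to produce the $\Wbar^{3}$ and $W^{4}$ terms and assemble $H_1$ in the stated form.

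For (iii), strict negative definiteness of $H_1$ in the $(\theta_\gamma,\alpha)$ directions is equivalent to positive definiteness of $M=\bigl(\begin{smallmatrix}X_\gamma^{T}W^{2}X_\gamma & 2X_\gamma^{T}\Wbar^{3}r\\ 2r^{T}\Wbar^{3}X_\gamma & 3r^{T}W^{4}r\end{smallmatrix}\bigr)$, since that block equals $-\vartheta^{-1}M$. Under $\mathrm{rank}(X_\gamma)=p_\gamma$ the top-left block $X_\gamma^{T}W^{2}X_\gamma$ is strictly positive definite (as $W$ has strictly positive diagonal), so a Schur-complement argument reduces $M\succ 0$ to the scalar inequality $3\,r^{T}W^{4}r>4\,r^{T}\Wbar^{3}X_\gamma(X_\gamma^{T}W^{2}X_\gamma)^{-1}X_\gamma^{T}\Wbar^{3}r$, with the rank-deficient case handled by the same reduction with $\succ$ weakened to $\succeq$. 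For existence and uniqueness of the maximiser I would argue separately that, for each fixed $\alpha\in(-1,1)$, $\theta_\gamma\mapsto Q$ is convex (each $\rho_\alpha$ is) and strictly convex under full rank, which pins down $\widehat{\theta_\gamma}$ and then $\widehat\vartheta=Q/n$, with coercivity of $-\log L_1$ in $(\theta_\gamma,\vartheta)$ supplying existence.

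The step I expect to be hardest is the cross term in (iii): the off-diagonal block $X_\gamma^{T}\Wbar^{3}r$ genuinely couples $\theta_\gamma$ and $\alpha$, and a crude bound on the Schur complement is too lossy. Writing $\Wbar^{3}r=Wv$ with $v_i=\mathrm{sign}(\wbar_i)w_i^{2}r_i$ gives $r^{T}W^{4}r=\|v\|^{2}$ and $r^{T}\Wbar^{3}X_\gamma(X_\gamma^{T}W^{2}X_\gamma)^{-1}X_\gamma^{T}\Wbar^{3}r=\|Pv\|^{2}$, where $P$ is the orthogonal projector onto $\mathrm{col}(WX_\gamma)$; Cauchy--Schwarz only bounds the subtracted term by $4\|v\|^{2}$, which exceeds the available $3\|v\|^{2}$. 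Closing this gap forces one to exploit the sign identity $\mathrm{sign}(\wbar_i)=-\mathrm{sign}(r_i)$, so that $v_i=-w_i^{2}|r_i|$ has constant sign, together with the full-rank geometry, in order to control the alignment $\|Pv\|^{2}/\|v\|^{2}$ of the weighted residual vector with $\mathrm{col}(WX_\gamma)$. Bounding this alignment away from one --- equivalently, excluding the degenerate residual configurations that let the cross term dominate --- is the crux of the argument and the part I would expect to demand the most care.
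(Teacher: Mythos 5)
Your treatment of (i) and (ii) is essentially the paper's: the paper verifies continuity of the gradient by matching the one-sided limits of each observation's contribution at the switching set $\{x_{\gamma i}^T\theta_\gamma=y_i\}$, which is exactly the content of your observation that the per-observation loss $\rho_\alpha$ is $C^1$ but not $C^2$ at zero, and (ii) is direct differentiation in both. For (iii) you also set up the same reduction as the paper: strict negative definiteness of the $(\theta_\gamma,\alpha)$ block reduces, via the full-rank top-left block and a Schur complement, to $3\,r^TW^4r>4\,r^T\Wbar^3X_\gamma(X_\gamma^TW^2X_\gamma)^{-1}X_\gamma^T\Wbar^3r$, i.e.\ $4\|(I-P)v\|^2>\|v\|^2$ with $v_i=-w_i^2|r_i|$ and $P$ the orthogonal projector onto $\mathrm{col}(WX_\gamma)$. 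But you then stop, and this is a genuine gap rather than a deferred technicality: the inequality is false in general. Whenever $W|r|\in\mathrm{col}(X_\gamma)$ one has $Pv=v$, so the Schur complement equals $-\|v\|^2<0$ and the block is indefinite. A concrete instance: $n=2$, $X_\gamma=(1,1)^T$, $y=(1,-1)^T$, evaluated at $\theta_\gamma=0$, $\alpha=0$, gives the $(\theta_\gamma,\alpha)$ Hessian of $\log L_1$ equal to $\vartheta^{-1}\left(\begin{smallmatrix}-2&4\\4&-6\end{smallmatrix}\right)$, whose determinant is negative. So the alignment $\|Pv\|^2/\|v\|^2$ cannot be bounded below $3/4$ by the sign structure alone, and no pointwise argument of this form can close your gap.

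For comparison, the paper closes this very step by writing the Schur complement as $4a^T(I-P)a-a^Ta$ with $a=W^2(y-X_\gamma\theta_\gamma)$ and asserting $\min_a a^T(I-P)a/(a^Ta)\geq 1$; since $I-P$ is an orthogonal projector with nontrivial kernel (its smallest eigenvalue is $0$ once $\mathrm{rank}(\Wbar X_\gamma)\geq 1$), that bound is incorrect, so your instinct that this is the crux is exactly right --- the paper's own argument does not survive it either. Two further points. First, your fallback for uniqueness (strict convexity of the weighted residual sum of squares in $\theta_\gamma$ for each fixed $\alpha$, plus coercivity) only pins down $\widehat{\theta}_\gamma(\alpha)$ for each $\alpha$ and $\widehat{\vartheta}$ given the rest; it does not rule out multiple local maxima of the profile over $\alpha$, which is precisely what the paper's route (almost-everywhere negative-definite Hessian plus continuous gradient implies strict joint concavity, via its auxiliary lemma) is meant to deliver and which hinges on the unproven inequality. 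Second, note that separate concavity in $\theta_\gamma$ and in $\alpha$ does hold, so any correct proof of joint concavity must genuinely control the cross term; neither your proposal nor the paper currently does so.
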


The implication is that, analogously to Normal errors, when $X_\gamma$ has full rank (\ref{eq:skewnorm_loglhood}) is continuous and concave
almost everywhere in $(\theta_\gamma,\alpha)$.
This fact, combined with $\log L_1$ having a continuous gradient, guarantees overall concavity and hence a unique maximum
(see the proof for a formal argument).
Further, inspection of \eqref{prop:lhood_skewnorm_properties} reveals that
$\log L_1$ is locally quadratic as a function of $\theta_\gamma$ within regions of constant $A(\theta_\gamma)$
and that its maximizer with respect to $(\theta_\gamma,\alpha)$ does not depend on $\vartheta$,
two observations that facilitate optimization.

Proposition \ref{prop:lhood_skewlap_properties} shows that, although $\log L_2$ is piecewise-linear in $\theta_\gamma$ and thus has a singular hessian,
one can prove concavity and uniqueness of a maximum in terms of $(\theta_\gamma,\alpha)$ as in Proposition \ref{prop:lhood_skewnorm_properties},
extending the well-known result of concavity with respect to only $\theta_\gamma$ \cite{koenker:2005}. 
In Sections \ref{sec:paramestim}-\ref{sec:modsel} we describe how this result facilitates computation,
in particular leading to simple optimization and analytical approximations to integrated likelihoods,
and asymptotic characterizations.


\begin{prop}
The asymmetric Laplace log-likelihood in (\ref{eq:skewlap_loglhood}) satisfies:
\begin{enumerate}[label=(\roman*)]
\item It is continuously differentiable with gradient

\begin{align}
g_2(\theta_\gamma,\vartheta,\alpha)= \vartheta^{-\frac{1}{2}} \times
\begin{pmatrix}
- X_\gamma^T \wbar \\
-\frac{n}{2\vartheta^{\frac{1}{2}}}  + \frac{1}{2\vartheta}
w^T \vert y - X_\gamma\theta_\gamma \vert \\
 \vert y - X_\gamma\theta_\gamma \vert^T \wbar^2
\end{pmatrix},
\nonumber
\end{align}

\noindent except on the zero Lebesgue measure set
$\{ \theta_\gamma \in \mathbb{R}^p: x_{\gamma i}^T\theta_\gamma = y_i \mbox{ for some } i=1,\ldots,n \}$,
where the gradient is undefined.

\item Its Hessian with respect to $\theta_\gamma$ is continuous everywhere except on the zero Lebesgue measure set
$\{ \theta_\gamma \in \mathbb{R}^p: x_{\gamma i}^T\theta_\gamma = y_i \mbox{ for some } i=1,\ldots,n \}$,
and is $H_2(\theta_\gamma,\vartheta,\alpha)= \vartheta^{-1/2} \times$

\begin{align}
\begin{pmatrix}
0 &
\frac{1}{2\vartheta} X_\gamma^T \wbar
& X_\gamma^T w^2 \\
\frac{1}{2\vartheta} \wbar^T X_\gamma
& \frac{n}{2\vartheta^{\frac{3}{4}}} - \frac{3}{4\vartheta^2} w^T \vert y-X_\gamma \theta_\gamma \vert
& -\frac{1}{2\vartheta} \vert y - X_\gamma\theta_\gamma \vert^T \wbar^2 \\
(X_\gamma^T w^2)^T
& -\frac{1}{2\vartheta} \vert y - X_\gamma\theta_\gamma \vert^T \wbar^2
& -2 \vert y - X_\gamma \theta_\gamma \vert^T \wbar^3
\end{pmatrix}.
\nonumber
\end{align}

\item If $\mbox{rank}(X_\gamma)=p_\gamma$, then \eqref{eq:skewlap_loglhood} is strictly concave in $(\theta_\gamma,\alpha)$
and has a unique maximum $(\widehat{\theta_\gamma},\widehat{\vartheta},\widehat{\alpha})$.
Alternatively, if $\mbox{rank}(X_\gamma)<p_\gamma$, then it is non-strictly concave in $(\theta_\gamma,\alpha)$.
\end{enumerate}

\label{prop:lhood_skewlap_properties}
\end{prop}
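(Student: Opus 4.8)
The plan is to treat parts (i)–(ii) as bookkeeping and to concentrate the effort on (iii). For (i), off the set $\mathcal{N}=\{\theta_\gamma : x_{\gamma i}^T\theta_\gamma=y_i \text{ for some } i\}$ every residual $r_i=y_i-x_{\gamma i}^T\theta_\gamma$ has a locally constant sign, so each $|r_i|$ is differentiable with $\partial_{\theta_\gamma}|r_i|=-\mathrm{sign}(r_i)x_{\gamma i}$. Differentiating $\log L_2$ term by term and collecting the signs into the signed weights $\wbar$ reproduces $g_2$; since the sign pattern is locally constant off $\mathcal{N}$, the resulting expression is a smooth rational function of $(\theta_\gamma,\vartheta,\alpha)$ and hence continuous there, whereas on $\mathcal{N}$ the $\theta_\gamma$-component jumps as some $\wbar_i$ flips sign, so the gradient is undefined. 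Part (ii) follows by differentiating $g_2$ once more on each region of constant $A(\theta_\gamma)$: because $\log L_2$ is affine in $\theta_\gamma$ there, the $\theta_\gamma\theta_\gamma$ block vanishes and the remaining entries are routine, with continuity off $\mathcal{N}$ again immediate.

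For (iii) I would first reduce away the scale. The stationarity equations $X_\gamma^T\wbar=0$ and $|y-X_\gamma\theta_\gamma|^T\wbar^2=0$ read off from $g_2$ do not involve $\vartheta$, and $\vartheta$ profiles out in closed form ($\sqrt{\widehat{\vartheta}}=n^{-1}S$, where $S$ is the weighted sum of absolute residuals), so it suffices to analyse the maximiser in $(\theta_\gamma,\alpha)$ for fixed $\vartheta$ and then recover $\widehat{\vartheta}$. For fixed $\alpha$, $-\log L_2$ is (up to the additive $\tfrac n2\log\vartheta$) a positively weighted sum $\sum_i w_i|y_i-x_{\gamma i}^T\theta_\gamma|$, i.e. the classical LAD/quantile objective; it is convex in $\theta_\gamma$, with a unique minimiser $\widehat{\theta_\gamma}(\alpha)$ under $\mathrm{rank}(X_\gamma)=p_\gamma$ and $y$ in general position \citep{koenker:2005}.

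The substantive step is to upgrade this to a unique joint maximiser. Here the argument \emph{cannot} mirror Proposition \ref{prop:lhood_skewnorm_properties}: the $\theta_\gamma\theta_\gamma$ block of $H_2$ is identically zero, so the $(\theta_\gamma,\alpha)$ Hessian is singular, and in fact indefinite within a region, since its vanishing diagonal block coexists with the nonzero coupling $\vartheta^{-1/2}X_\gamma^Tw^2$; negative-definiteness is therefore unavailable. Instead I would profile out $\theta_\gamma$ and study $\alpha\mapsto\ell_P(\alpha):=\max_{\theta_\gamma}\log L_2(\theta_\gamma,\vartheta,\alpha)$. Using the continuity of $g_2$ off $\mathcal{N}$ from part (i) and an envelope argument, $\ell_P$ is $C^1$ with derivative $\vartheta^{-1/2}|y-X_\gamma\widehat{\theta_\gamma}(\alpha)|^T\wbar^2$, and the goal is to show this score is strictly decreasing in $\alpha$, so that $\ell_P$ is unimodal with a single interior maximiser $\widehat{\alpha}$. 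Combined with uniqueness of $\widehat{\theta_\gamma}(\widehat{\alpha})$ and the closed-form $\widehat{\vartheta}$, this yields a unique global maximiser; when $\mathrm{rank}(X_\gamma)<p_\gamma$ the $\theta_\gamma$-convexity degrades to non-strict, giving the non-strict conclusion.

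The main obstacle is precisely this last step. Because the pointwise Hessian in $(\theta_\gamma,\alpha)$ has zero curvature in $\theta_\gamma$ and is not negative semidefinite, the joint optimality behaviour cannot be read off second-order information as in the asymmetric Normal case; the crux is to control the interaction between the kink geometry in $\theta_\gamma$ and the smooth dependence on $\alpha$, which I would handle through the profiling/monotone-score route rather than a direct Hessian computation. Care is also needed to guarantee that the profiled objective is genuinely unimodal, ruling out degenerate configurations such as near-perfect fits where $\alpha$ is only weakly identified; this is exactly where the full-rank hypothesis, together with the separation of residuals across the two pieces, must be used.
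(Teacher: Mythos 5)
Your handling of (i)--(ii) matches the paper's (which dismisses them as straightforward algebra), but for (iii) your route diverges sharply from the paper's. The paper first asserts \emph{non-strict} joint concavity by writing each summand $-w_i|y_i-x_{\gamma i}^T\theta_\gamma|/\sqrt{\vartheta}$ as the negative of a maximum of two pieces of the form $\pm(y_i-x_{\gamma i}^T\theta_\gamma)/(\sqrt{\vartheta}(1\pm\alpha))$, and then obtains strictness by contradiction: if $\log L_2$ were affine along a segment joining $\eta_1\neq\eta_2$, the strictly negative $(\alpha,\alpha)$ entry of $H_2$ forces $\alpha_2=\alpha_1$, after which the problem reduces to uniqueness for the reweighted median-regression objective $|W(y-X_\gamma\theta_\gamma)|$ under $\mathrm{rank}(WX_\gamma)=p_\gamma$. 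Your profile-likelihood route is genuinely different, and your diagnosis of \emph{why} the Hessian route is unavailable is correct: a symmetric matrix with a vanishing $\theta_\gamma\theta_\gamma$ block and nonzero coupling $X_\gamma^Tw^2$ is never negative semidefinite, so the pointwise Hessian in $(\theta_\gamma,\alpha)$ is indefinite. This observation in fact conflicts with the proposition itself (and with the paper's concavity step, since $t/(1\pm\alpha)$ is not jointly convex in $(t,\alpha)$): with $n=2$, $x_1=x_2=1$, $y=(-10,10)$, $\vartheta=1$, one has $\log L_2=-2\log 2-(10+\theta)/(1+\alpha)-(10-\theta)/(1-\alpha)$ for $|\theta|<10$, and the average of its values at $(2,0.1)$ and $(-2,-0.1)$, namely $-2\log 2-19.80$, exceeds its value $-2\log 2-20$ at their midpoint $(0,0)$, so joint concavity fails even with $\mathrm{rank}(X_\gamma)=p_\gamma$ and $n>p_\gamma$. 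The right move is to flag this, not to try to prove the claim.

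That said, as a proof your proposal has a genuine gap at exactly the point you label the substantive step: the assertion that the profiled score $\alpha\mapsto\vartheta^{-1/2}|y-X_\gamma\widehat{\theta}_\gamma(\alpha)|^T\wbar^2$ is strictly decreasing is announced as a goal and never argued, and without it you obtain neither unimodality of $\ell_P$ nor uniqueness of the joint maximiser. The envelope step also needs more care than you give it, since the weighted-LAD minimiser $\widehat{\theta}_\gamma(\alpha)$ can jump as the active set changes, so $C^1$-ness of $\ell_P$ is not automatic. Finally, even if the monotone-score step were supplied, profiling delivers at most a unique maximum; it says nothing about the strict concavity in $(\theta_\gamma,\alpha)$ that part (iii) actually asserts, so the statement as written would remain unproven (and, per the counterexample above, unprovable).
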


Parameter estimates maximizing \eqref{eq:skewnorm_loglhood}-\eqref{eq:skewlap_loglhood} can be interpreted as the best-fitting linear model under weighted least-squares or weighted least absolute deviations, respectively. Different weights are assigned to observations on each side of the estimated $x_i^T \theta$. The weights are determined by $\alpha$, which captures residual asymmetry and converges to a unique KL-optimal value (Section \ref{ssec:asymp_mle}). Selected variables can be interpreted in a similar fashion, essentially as defining the smallest model amongst those minimizing each criterion (Section \ref{ssec:bf_rates}). That is, variable selection can be understood in terms of optimal variable configurations under well-known criteria.

\section{Prior formulation}
\label{sec:prior}

We complete the Bayesian model via priors on the model indicators $\gamma$
and the model-specific parameters $(\theta_\gamma, \alpha)$.
For $p(\gamma)$ by default we adopt the standard Beta-Binomial$(a_\gamma,b_\gamma)$ prior \cite{scott:2010}
where $a_\gamma,b_\gamma>0$ are known constants (by default $a_\gamma=b_\gamma=1$),
although our implementation also incorporates uniform and Binomial priors.
The four posed residual distributions (Normal, asymmetric Normal, Laplace and asymmetric Laplace)
are assigned equal prior probability independently from the variable inclusions.
Therefore
\begin{align}
p(\gamma)= \frac{1}{4} \frac{B(a_\gamma + \sum_{j=1}^{p} \gamma_j, b_\gamma + p - \sum_{j=1}^{p} \gamma_j)}{B(a_\gamma,b_\gamma)},
\label{eq:priormodel}
\end{align}
where $B()$ is the Beta function.
Any model with $p_\gamma>n$ is assigned $p(\gamma)=0$, as it would result in data interpolation.

Regarding $p(\theta_\gamma \mid \gamma)$,
given that the mode, mean and median of $y_i$ are linear in $x_{\gamma i}^T \theta_\gamma$
the usual prior specification strategies under Normal errors remain sensible.
The possibilities are too numerous to list here, see \textit{e.g.}~\cite{bayarri:2012} or \cite{mallickh:2013} and references therein.
We focus on the class of NLPs introduced by \cite{johnson:2010},
as these lead to stronger sparsity than conventional (local) priors and
(under suitable conditions) consistency of posterior model probabilities in high-dimensional Normal regression
where $p=o(n)$ \cite{johnson:2012} or $\log p=o(n)$ \cite{shin:2015}.
However our theory also applies to local priors.
The basic intuition is that, under model $\gamma$, all elements in $\theta_\gamma$
are assumed to be non-zero. Thus, $p(\theta_\gamma \mid \gamma)$ should vanish as any element in $\theta_\gamma$ approaches 0.
We focus on two specific choices \cite{johnson:2012,rossell:2013b}
\begin{align}
p_M(\theta_\gamma\mid \vartheta, \gamma) = \prod_{\gamma_j=1}^{}
  \frac{\theta_j^2}{k g_{\theta} \vartheta} N(\theta_j; 0, g_{\theta} k \vartheta),
\label{eq:pmom} \\
p_E(\theta_\gamma\mid \vartheta, \gamma) = \prod_{\gamma_j=1}^{} \exp \left\{
    \sqrt{2} - \frac{g_{\theta} k \vartheta}{\theta_j^2} \right\}
 N(\theta_j;0, g_{\theta} k \vartheta),
\label{eq:pemom}
\end{align}
called product MOM and eMOM priors (respectively), where $g_{\theta}$ is a known prior dispersion.
For Normal or asymmetric Normal errors $k=1$, and for the Laplace or asymmetric Laplace $k=2$
as then $\mbox{Var}(\epsilon_i)$ is proportional to $2\vartheta$.
Along the same lines for the scale parameter we set a standard inverse gamma
$p(\vartheta \mid \gamma)= \mbox{IG}(\vartheta; a_{\vartheta}/2,k b_{\vartheta}/2)$
(in our examples $a_{\vartheta}=b_{\vartheta}=0.01$).
MOM vanishes at a quadratic speed around the origin
and accelerates polynomial Bayes factor sparsity rates,
whereas eMOM vanishes exponentially and leads to quasi-exponential rates \cite{johnson:2010,rossell:2017},
a result we extend here for our new class of models and under model misspecification (Section \ref{sec:modsel}).
In our examples, we follow the default recommendation in \cite{johnson:2010}
and set $g_{\theta}=0.348,0.119$ for MOM and eMOM (respectively),
under the rationale that they assign 0.01 prior probability to $| \theta_i/\sqrt{\vartheta} |< 0.2$,
{\it i.e.}~effect sizes often deemed practically irrelevant.
Naturally, whenever prior information is available we recommend using it to set $g_\theta$.
The supplementary material describes a third prior class called iMOM that provides a thick-tailed counterpart to the eMOM.
Although the iMOM is implemented in our software, we do not consider it further here
given that its performance was very similar to the eMOM
but it has the unappealing property of leading to non-convex optimization (akin to other thick-tailed priors, \textit{e.g.}~Cauchy),
and when considering $p(\alpha)$ (see below) it leads to a density that diverges on the boundary
($\alpha=-1$ or $\alpha=1$).

To set $p(\alpha \mid \gamma_{p+1}=1)$ ($\alpha=0$ under $\gamma_{p+1}=0$)
we reparameterize $\talpha=\operatorname{atanh}(\alpha)\in{\mathbb R}$ as in \cite{rubio:2014}.
These authors proposed $0.5 (1+\alpha) \sim \mbox{Beta}(2,2)$,
which places the prior mode at $\alpha=0$ and thus defines a local prior.
Our goal here is to detect situations where the degree of asymmetry is practically relevant
and to otherwise allow the posterior to collapse on the symmetric model.
To achieve this, we consider
$p_M(\talpha \mid \gamma_{p+1}=1)= \talpha^2 \phi(\talpha/\sqrt{g_\alpha})/\sqrt{g_\alpha}$,
and $p_E(\talpha \mid \gamma_{p+1}=1)= e^{\sqrt{2}- g_\alpha/\talpha^2} N(\talpha;0,g_\alpha)$,
where $g_{\alpha} \in \mathbb{R}^+$ is a fixed prior dispersion parameter.
To set $g_\alpha$, by default we consider that Arnold-Groeneveld asymmetry coefficients $|\alpha|< 0.2$ are often practically irrelevant.
Thus, we set $g_\alpha$ such that $P(|\alpha| \geq 0.2)=0.99$.
Also, note that $\alpha=2$ gives a total variation distance of $|\alpha|/2=0.1$,
{\it i.e.} the largest difference
$|P(\epsilon_i \in A \mid \alpha=0)-P(\epsilon_i \in A \mid \alpha)|$ for any set $A$ is 0.1,
which we typically view as irrelevant.
Since $\mbox{atanh}(0.2)=0.203$, a direct calculation gives that $P(|\talpha| \geq 0.203)= 0.99$ when
$g_\alpha=0.357,0.122$ under MOM and eMOM.
To assess sensitivity in our examples, we also considered $g_\alpha$ such that $P(|\alpha| \geq 0.1)=0.99$
(total variation distance=0.05), giving $g_\alpha=0.087,0.030$.
Figure \ref{fig:NLPalpha} depicts $p(\alpha)$ under these settings.
Our results showed that variable selection is typically robust to choices of $g_\alpha$ within this range.

\begin{figure}[h]
\begin{center}
\begin{tabular}{c}
\includegraphics[width=8cm,height=6cm]{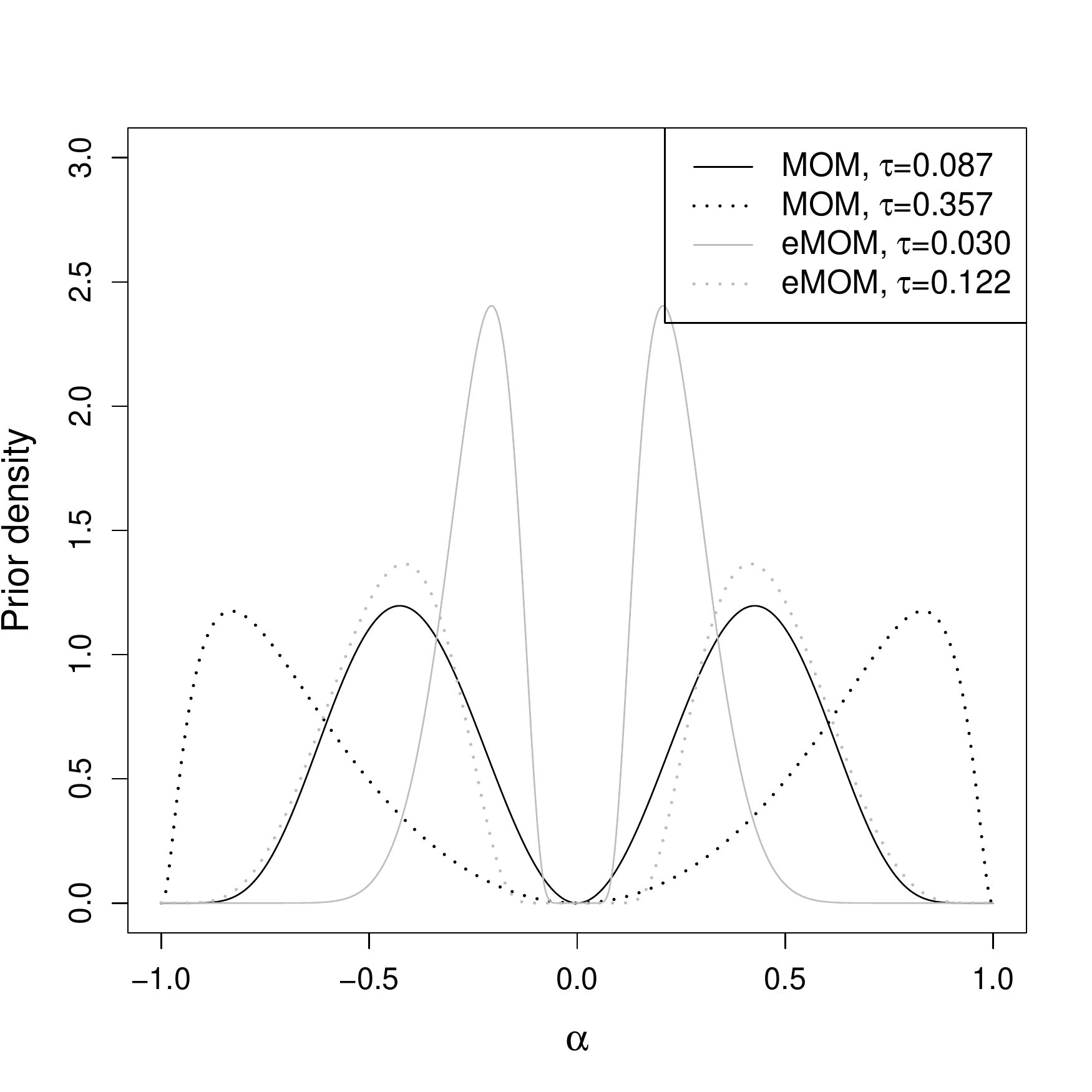}
\end{tabular}
\end{center}
\caption{Default priors for $\alpha$.}
\label{fig:NLPalpha}
\end{figure}

\section{Parameter estimation}
\label{sec:paramestim}

We obtain some results for parameter estimation under a given $\gamma$ that are also useful to establish variable selection rates
(see Section \ref{sec:modsel} for results on Bayesian model averaging).
Section \ref{ssec:asymp_mle} gives the limiting distribution of
$(\widehat{\theta}_\gamma,\widehat{\vartheta}_\gamma,\widehat{\alpha}_\gamma)= \arg\max_{\theta_\gamma,\vartheta,\alpha} \log L_k(\theta_\gamma,\vartheta,\alpha)$
as $n \rightarrow \infty$ for asymmetric Normal $(k=1)$ and Laplace $(k=2)$
when data are generated from (\ref{eq:lm}) but the error model may be misspecified.
Briefly, as is typically the case, we obtain parameter estimation consistency
and asymptotic normality, albeit there is a loss of efficiency and an underestimation of uncertainty.
Section \ref{ssec:mle_optim} presents novel optimization algorithms for maximum likelihood and posterior mode estimation
designed to improve the computational scalability of current related methods.

\subsection{Asymptotic distributions}
\label{ssec:asymp_mle}

We lay out technical conditions for our asymptotic results to hold.
\begin{enumerate}[label=\bfseries A\arabic*.]
\item The parameter space $\Gamma\subset {\mathbb R}^p\times{\mathbb R}_+ \times(-1,1)$ is compact and convex.

\item Data are truly generated as $y_i= x_i^T \theta^* + \epsilon_i$ for some $\theta^* \in \mathbb{R}^p$, fixed $p_{\gamma^*}=\sum_{j=1}^p \mbox{I}(\theta_j^* \neq 0)$ and $\epsilon_i$ are \emph{i.i.d.} and independent of $x_i$. Let the data-generating $y_i \vert x_i \stackrel{i.i.d.}{\sim} S_0(\cdot \vert x_i)$ with density $s_0(y_i \mid x_i)>0$ for all $y_i$.

\item For all $\gamma$ there is some $n_0$ such that $X_\gamma^TX_\gamma$ is strictly positive definite almost surely for all $n>n_0$.

\item
Denote by $x_i \stackrel{i.i.d.}{\sim} \Psi(\cdot)$ the generating process of the covariates (which can be either stochastic or deterministic).
\begin{eqnarray*}
\int  \vert y_1\vert^j dS_0(y_1\vert x_1) d\Psi(x_1)<\infty,\\
\int  \vert \vert x_1 \vert \vert^j d\Psi(x_1)<\infty,
\end{eqnarray*}
where $j=1,2, \text{ or } 4$, and we specify the order $j$ of interest in each of the results below, and $\vert\vert \cdot \vert \vert$ denotes the Euclidean distance $\vert\vert z \vert\vert= (\sum_{}^{} z_i^2)^{\frac{1}{2}}$.

\item For $\eta \in \Gamma$
\begin{eqnarray*}
\int\dfrac{\partial}{\partial \eta_j}\left[\int m_{\eta}(y_1,x_1) dS_0(y_1\vert x_1) \right]d\Psi(x_1) &=& \dfrac{\partial}{\partial \eta_j} \int\int m_{\eta}(y_1,x_1) dS_0(y_1\vert x_1) d\Psi(x_1),\\
\int\dfrac{\partial^2}{\partial \eta_i \eta_j}\left[\int m_{\eta}(y_1,x_1) dS_0(y_1\vert x_1) \right]d\Psi(x_1) &=& \dfrac{\partial^2}{\partial \eta_i \eta_j} \int\int m_{\eta}(y_1,x_1) dS_0(y_1\vert x_1) d\Psi(x_1).
\end{eqnarray*}

\end{enumerate}

These conditions are in line with those in classical robust regression, \textit{e.g.}~see \cite{huber:1973} or \cite{koenker:1982}.
Condition A1 is made out of technical convenience, naturally one may take an arbitrarily large $\Gamma$. Condition A2 states that data truly arise from a linear model, where the key assumption is that the residuals are independent. Extensions to non-id errors are discussed in Section \ref{ssec:bf_rates}. Condition A3 holds whenever the rows of $X$ are regarded as a deterministic sequence satisfying the condition, or for instance when $x_i$ are independent and identically distributed from an underlying distribution of fixed dimension with positive-definite $\mbox{Cov}(x_1)$, as then $X^TX$ converges almost surely to a positive-definite matrix by the strong law of large numbers. We focus on fixed $p$, extensions to $p$ growing with $n$ are possible along the lines in \cite{mendelson:2014}, but its detailed treatment is beyond the scope of this paper. Condition A4 requires existence of moments up to a certain order.
Condition A5 requires being able to exchange integration and differentiation, and is needed only to prove asymptotic normality.

Our results summarize and extend classical studies focusing on $\theta_\gamma$ in least squares, median and quantile regression to consider the whole parameter vector $(\theta_\gamma,\vartheta,\alpha)$. Briefly, \cite{eicker:1964} and \cite{srivastava:1971} showed that the least squares estimator ($k=1,\alpha=0$) satisfies  $\sqrt{n} V^T (\widehat{\theta}_{\gamma}-\theta_0) \stackrel{D}{\longrightarrow} N(0, \mbox{Var}(\epsilon_1) I)$,
where $\theta_0$ minimizes Kullback-Leibler divergence to the data-generating truth and $VV^T=X_\gamma^TX_\gamma/n$,
assuming that $\mbox{Var}(\epsilon_1)< \infty$ and minimum conditions on $X_\gamma^TX_\gamma$. 
To our knowledge, the asymmetric Normal has been much less studied, \textit{e.g.}~\cite{kimber:1985}, \cite{mudholkar:2000} and \cite{arellanovalle:2005} considered the case with no covariates and no checks of the conditions required by large sample theory are shown, which are non-trivial given that $H_1(\theta_{\gamma},\vartheta,\alpha)$ is discontinuous. Regarding Laplace errors ($k=2,\alpha=0$), \cite{pollard:1991} and \cite{knight:1999} showed $2f_0 \sqrt{n}V^T(\widehat{\theta}_{\gamma}-\theta_0) \stackrel{D}{\longrightarrow} N(0,I)$, where $f_0=p(\epsilon_0)$ and $\epsilon_0$ is the median of $s_0(\epsilon_i)$, under mild conditions on $X_\gamma^TX_\gamma$ and $f_0>0$.
\cite{koenker:1994} generalized the result to the asymmetric Laplace, obtaining $2f_0\sqrt{n/(1-\alpha^2)}V^T(\widehat{\theta}_{\gamma}-\theta_0) \stackrel{D}{\longrightarrow} N(0,I)$, where $f_0=p(\epsilon)$ evaluated at the $\tau^{th}$ quantile $\epsilon= S_0^{-1}(\tau)$, where
in our parameterization $\tau=(1+\alpha)/2$.
Proposition \ref{prop:Consistency} establishes the consistency of the maximum likelihood estimator
$\widehat{\eta}_\gamma=(\widehat{\theta}_\gamma,\widehat{\vartheta}_\gamma,\widehat{\alpha}_\gamma)$
to the Kullback-Leibler optimal parameter values,
whereas Proposition \ref{prop:AsympNorm} gives asymptotic normality.

\begin{prop}\label{prop:Consistency}
Assume Conditions A1--A4 with $p<n$, where $j=2$ in A4 when $k=1$ and $j=1$ when $k=2$.
Then, the function $M_k(\theta_\gamma,\vartheta,\alpha) =  {\mathbb E}[\log L_k(y_1\vert x_1^{T}\theta_\gamma,\vartheta,\alpha)]$ has a unique maximizer $(\theta_\gamma^*,\vartheta_\gamma^*,\alpha_\gamma^*) = \operatorname{argmax}_{\Gamma} M_k(\theta_\gamma,\vartheta,\alpha)$. Moreover, the maximum likelihood estimator $(\widehat{\theta}_\gamma,\widehat{\vartheta}_\gamma,\widehat{\alpha}_\gamma)\stackrel{P}{\rightarrow} (\theta_\gamma^*,\vartheta_\gamma^*,\alpha_\gamma^*)$ as $n\rightarrow \infty$.
\end{prop}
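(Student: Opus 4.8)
The plan is to treat this as a standard consistency statement for a (possibly misspecified) M-estimator and verify the two ingredients of the standard argmax theorem for M-estimators: a uniform law of large numbers for the normalized objective, and a unique, well-separated maximizer of the population objective. Writing $m_\eta(y_1,x_1)$ for the per-observation log-density with $\eta=(\theta_\gamma,\vartheta,\alpha)$, so that $M_n(\eta)=n^{-1}\log L_k=n^{-1}\sum_i m_\eta(y_i,x_i)$ and $M_k(\eta)=\mathbb{E}[m_\eta(y_1,x_1)]$, the estimator $\widehat\eta_\gamma$ maximizes $M_n$ and we want $\widehat\eta_\gamma\stackrel{P}{\to}\eta^*=\arg\max_\Gamma M_k$.

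First I would establish existence and uniqueness of $\eta^*$. Existence follows from compactness of $\Gamma$ (Condition A1) together with continuity of $M_k$ in $\eta$. For uniqueness I would exploit the concavity structure from Propositions \ref{prop:lhood_skewnorm_properties} and \ref{prop:lhood_skewlap_properties}: each $m_\eta$ is concave in $(\theta_\gamma,\alpha)$ for fixed $\vartheta$, a property preserved under expectation, so $M_k$ is concave in $(\theta_\gamma,\alpha)$. The dependence on $\vartheta$ destroys joint concavity, so I would instead profile it out. As noted after Proposition \ref{prop:lhood_skewnorm_properties}, the maximizer in $(\theta_\gamma,\alpha)$ does not depend on $\vartheta$: it minimizes the expected weighted sum of squares for $k=1$ or weighted absolute deviations for $k=2$. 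Strict convexity of this weighted criterion, hence a unique $(\theta_\gamma^*,\alpha_\gamma^*)$, follows because the $(\theta_\gamma,\alpha)$-block of the expected Hessian is negative definite, using positive-definiteness of $\mathbb{E}[x_1x_1^T]$ (guaranteed by Conditions A3--A4) and the fact that the weights $|1\pm\alpha|^{-1}$ stay bounded on compact $\Gamma$. The optimal scale $\vartheta_\gamma^*$ is then pinned down in closed form by the first-order condition in $\vartheta$, yielding a unique global maximizer.

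Next I would prove the uniform law of large numbers $\sup_{\eta\in\Gamma}|M_n(\eta)-M_k(\eta)|\stackrel{P}{\to}0$. The i.i.d. assumption (Condition A2) gives the pointwise law of large numbers, and I would upgrade it to uniform convergence by the classical argument for i.i.d. data over a compact parameter set: verify that $\eta\mapsto m_\eta(y_1,x_1)$ is continuous for almost every $(y_1,x_1)$ and dominated by an integrable envelope. Continuity holds even where the weight switches across $y_1=x_1^T\theta_\gamma$, since the two pieces of the density agree at the mode, so $m_\eta$ has no jumps despite the non-smoothness highlighted in Propositions \ref{prop:lhood_skewnorm_properties}(ii) and \ref{prop:lhood_skewlap_properties}(ii). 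For the envelope, compactness of $\Gamma$ bounds $\vartheta$ away from $0$ and $\infty$ and keeps the weights bounded, so $|m_\eta|$ is controlled by $C(1+(y_1-x_1^T\theta_\gamma)^2)$ for $k=1$ and $C(1+|y_1-x_1^T\theta_\gamma|)$ for $k=2$; integrability then follows from the moment conditions in A4, which is exactly why $j=2$ is required for the asymmetric Normal and $j=1$ suffices for the asymmetric Laplace.

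With these two pieces in place, consistency is immediate: uniqueness and continuity on compact $\Gamma$ make $\eta^*$ a well-separated maximum, i.e. $\sup_{\|\eta-\eta^*\|\ge\epsilon}M_k(\eta)<M_k(\eta^*)$ for every $\epsilon>0$, and combining this with the uniform convergence yields $M_k(\widehat\eta_\gamma)\to M_k(\eta^*)$ and hence $\widehat\eta_\gamma\stackrel{P}{\to}\eta^*$. The main obstacle, in my view, is not any single step but the interplay between the non-smoothness of the objective and the lack of joint concavity in $\vartheta$: one must simultaneously control an unbounded, kinked integrand uniformly over $\Gamma$ (handled by the continuity-plus-envelope argument) while securing a unique population optimum (handled by the profiling device that reduces the problem to a strictly convex criterion in $(\theta_\gamma,\alpha)$). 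Passing from the sample-level statements of Propositions \ref{prop:lhood_skewnorm_properties}--\ref{prop:lhood_skewlap_properties} to strict negative-definiteness of the population $(\theta_\gamma,\alpha)$-Hessian is the most delicate technical point.
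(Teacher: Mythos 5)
Your proposal is correct in outline and shares the paper's two central devices — the argmax theorem of \cite{vandervaart:1998} (Theorem 5.7) and the profiling of $\vartheta$, i.e.\ observing that the maximizer in $(\theta_\gamma,\alpha)$ does not depend on $\vartheta$ and that the optimal scale then has a closed form — but it reaches the uniform law of large numbers by a genuinely different route. You invoke the classical Wald-type argument: continuity of $\eta\mapsto m_\eta(y_1,x_1)$ (checked across the kink at $y_1=x_1^T\theta_\gamma$) plus an integrable envelope $C(1+(y_1-x_1^T\theta_\gamma)^2)$ or $C(1+|y_1-x_1^T\theta_\gamma|)$ over the compact $\Gamma$, which is exactly what A4 with $j=2$ or $j=1$ delivers; this yields $\sup_{\eta\in\Gamma}|M_n(\eta)-M_k(\eta)|\to 0$ over the \emph{full} parameter vector in one stroke. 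The paper instead fixes $\vartheta=\vartheta^\star$ and applies the convexity lemma of \cite{pollard:1991}, which upgrades pointwise convergence of concave functions to uniform convergence on compacts in $(\theta_\gamma,\alpha)$ without any envelope, and then handles $\widehat\vartheta_\gamma$ separately by a triangle-inequality argument using the explicit expressions for $\widehat\vartheta_\gamma$ and $\vartheta_\gamma^*$. Your route buys a single uniform statement and a cleaner final step at the cost of verifying domination; the paper's buys weaker integrability requirements on the $(\theta_\gamma,\alpha)$ part at the cost of an extra argument for the scale. On uniqueness, you correctly identify the delicate point: strict negative definiteness of the population $(\theta_\gamma,\alpha)$-Hessian (needing $E[x_1x_1^T]\succ 0$ and, for $k=2$, $s_0>0$ to smooth the piecewise-linear sample criterion). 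The paper is in fact somewhat terser here — it deduces concavity of $M$ as a limit of concave functions and asserts a unique maximum — so your explicit population-Hessian check is, if anything, the more careful way to close that step.
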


%
%

\begin{prop}\label{prop:AsympNorm}
Assume Conditions A1--A5, with $j=4$ in A4 when $k=1$ and $j=2$ when $k=2$. Denote $\eta = (\theta_\gamma,\vartheta,\alpha)$, $m_{\eta}(y_1,x_1)=\log s_k(y_1\vert x_1^{T}\theta_{\gamma},\vartheta,\alpha)$, $P m_{\eta} = {\mathbb E}\left[m_{\eta}(y_1,x_1)\right]$, and $\eta_\gamma^* = (\theta_\gamma^*,\vartheta_\gamma^*,\alpha_\gamma^*) = \operatorname{argmax}_{\Gamma} P m_{\eta}$.
Then, the sequence $\sqrt{n}(\widehat{\eta}_\gamma-\eta_\gamma^*)$  is asymptotically Normal with mean $0$ and covariance matrix $V_{\eta_\gamma^*}^{-1} {\mathbb E}[ \dot{m}_{\eta_\gamma^*} \dot{m}_{\eta_\gamma^*}^T] V_{\eta_\gamma^*}^{-1}$, where $\dot{m}_{\eta_\gamma^*}$ is the gradient of $m_{\eta}(\cdot)$, with respect to $\eta$, evaluated at $\eta_\gamma^*$ and $V_{\eta_\gamma^*}$ is the second derivative matrix of $P m_{\eta}$ evaluated at $\eta_\gamma^*$.
\end{prop}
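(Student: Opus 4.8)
The plan is to cast $\widehat{\eta}_\gamma$ as an $M$-estimator maximizing the empirical criterion $\mathbb{P}_n m_\eta = n^{-1}\sum_{i=1}^n m_\eta(y_i,x_i)$ and to invoke the standard master theorem for $M$-estimators whose criterion is Lipschitz in the parameter but \emph{not} necessarily twice (or even once) differentiable (see, e.g., van der Vaart, \emph{Asymptotic Statistics}, Theorem~5.23). That result delivers exactly the sandwich form $V_{\eta_\gamma^*}^{-1}\,\mathbb{E}[\dot m_{\eta_\gamma^*}\dot m_{\eta_\gamma^*}^T]\,V_{\eta_\gamma^*}^{-1}$ asserted in the proposition, so the work reduces to verifying its three hypotheses: (a) consistency of $\widehat{\eta}_\gamma$; (b) almost-sure differentiability of $\eta\mapsto m_\eta(y_1,x_1)$ at $\eta_\gamma^*$ together with a square-integrable Lipschitz envelope; and (c) a second-order Taylor expansion of the population map $\eta\mapsto P m_\eta$ at $\eta_\gamma^*$ with nonsingular Hessian $V_{\eta_\gamma^*}$. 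Part~(a) is already supplied by Proposition~\ref{prop:Consistency}, and A1 places $\eta_\gamma^*$ in a compact $\Gamma$ bounded away from $\alpha=\pm1$; I would first record that $\eta_\gamma^*$ is interior so the local expansions are meaningful.

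For (b) the two error models are treated in parallel. For the asymmetric Normal ($k=1$) the gradient $g_1$ is continuous everywhere by Proposition~\ref{prop:lhood_skewnorm_properties}(i), so $m_\eta$ is differentiable at $\eta_\gamma^*$ for every $(y_1,x_1)$. For the asymmetric Laplace ($k=2$) Proposition~\ref{prop:lhood_skewlap_properties}(i) shows non-differentiability occurs only on $\{y_1 = x_{\gamma 1}^T\theta_\gamma^*\}$; since A2 gives a density $s_0>0$, this event has probability zero, so differentiability at the \emph{fixed} point $\eta_\gamma^*$ holds almost surely. The Lipschitz envelope is read off the explicit gradients: inspecting $g_1$, the dominating function grows like $\|x_1\|\,|y_1|$ and $|y_1|^2$, so $P\dot m^2<\infty$ requires fourth moments and A4 is invoked with $j=4$; inspecting $g_2$, the envelope grows only like $\|x_1\|$ and $|y_1|$, so second moments suffice and A4 is invoked with $j=2$. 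These are precisely the moment orders stated in the hypotheses, and compactness of $\Gamma$ (A1) keeps the weights $|1\pm\alpha|^{-1}$ bounded, making the Lipschitz constant uniform on a neighborhood of $\eta_\gamma^*$.

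For (c) I would differentiate the \emph{population} objective rather than the sample one: although the per-observation Hessian is discontinuous (asymmetric Normal) or undefined on a null set (asymmetric Laplace), integration against the smooth, strictly positive density $s_0$ smooths out the kinks. Condition~A5 is exactly what licenses pulling both derivatives inside the expectation, so $V_{\eta_\gamma^*}=\partial^2 P m_\eta/\partial\eta\,\partial\eta^T$ exists and equals $\mathbb{E}[H_k(\eta_\gamma^*)]$. Nonsingularity of $V_{\eta_\gamma^*}$ then follows by combining the strict negative-definiteness in $(\theta_\gamma,\alpha)$ from Propositions~\ref{prop:lhood_skewnorm_properties}(iii) and \ref{prop:lhood_skewlap_properties}(iii), the full-rank condition A3 on $X_\gamma^TX_\gamma$, the negativity of the expected curvature in $\vartheta$, and the $\alpha$--$\vartheta$ orthogonality of the expected Hessian noted in Section~\ref{sec:loglikelihood}, which gives the block structure needed to invert. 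Assembling (a)--(c) and feeding them into the master theorem yields the asymptotically linear representation $\sqrt{n}(\widehat{\eta}_\gamma-\eta_\gamma^*) = -V_{\eta_\gamma^*}^{-1}\, n^{-1/2}\sum_i \dot m_{\eta_\gamma^*}(y_i,x_i)+o_P(1)$, and the classical CLT for the i.i.d.\ score terms (finite by the same moment bounds) gives the stated Normal limit.

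The main obstacle is precisely the non-smoothness: one cannot Taylor-expand the sample criterion directly. The delicate point is reconciling two facts---that $m_\eta$ is only almost-surely differentiable at $\eta_\gamma^*$, while the \emph{population} objective is genuinely twice differentiable there---and controlling the gap between them. This is handled by the empirical-process content of the master theorem: the square-integrable Lipschitz envelope from step (b) renders the relevant class of increments Donsker, yielding the stochastic equicontinuity that bridges the merely a.s.-differentiable integrand and the smooth population curvature. Verifying the envelope's square-integrability (hence the exact moment orders of A4) and the interchange of differentiation and integration of A5 are therefore the two calculations that carry the real weight.
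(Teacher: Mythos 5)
Your proposal follows essentially the same route as the paper's proof: both verify the hypotheses of van der Vaart's Theorem 5.23 by (a) importing consistency from Proposition \ref{prop:Consistency}, (b) constructing a square-integrable Lipschitz envelope $K(y_1,x_1)$ for $\dot m_\eta$ on a neighborhood of $\eta_\gamma^*$ via the explicit gradients, compactness of $\Gamma$, and A4 with exactly the stated moment orders, and (c) computing the population second-derivative matrix under A5 and establishing its nonsingularity from the concavity in $(\theta_\gamma,\alpha)$ together with the vanishing $\vartheta$-cross-derivatives at the optimum. The only differences are presentational (the paper evaluates $Pm_{\eta\vert x_1}$ conditionally on $x_1$ and writes out the integrals $I_1,\ldots,I_4$ explicitly, and derives the block structure of $V_{\eta_\gamma^*}$ from the first-order conditions rather than from the parameterization's orthogonality), so no substantive gap.
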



The sandwich covariance $V_{\eta_\gamma^*}^{-1} {\mathbb E}[ \dot{m}_{\eta_\gamma^*} \dot{m}_{\eta_\gamma^*}^T] V_{\eta_\gamma^*}^{-1}$
is typically an inflated version of that obtained when the true model is assumed ($V_{\eta_\gamma^*}^{-1}$),
implying the well-known consequence of model misspecification
that parameter estimation suffers a loss of efficiency and uncertainty is underestimated.
To gain insight, Corollary \ref{cor:asymp_mle_misspec} gives specific asymptotic variances under various model misspecification cases.
For instance, when truly $\epsilon_i \sim N(0,\vartheta)$ wrongly assuming Laplace errors increases the variance by a factor $\pi/2$,
and a similar phenomenon is observed when ignoring the presence of residual asymmetry.
We defer discussion of the implications for variable selection to Section \ref{sec:modsel} and the examples in Section \ref{sec:results}.
\begin{cor}
The asymptotic distribution of $\widehat{\theta}_\gamma$ obtained by maximizing either the Normal, ANormal, Laplace or ALaplace likelihood
is $V(\widehat{\theta}_\gamma-\theta_\gamma^*) \stackrel{D}{\longrightarrow} N\left(0, v I \right)$, for some $v>0$.
The asymptotic variances $v$, when $\epsilon_i$ truly arise \emph{i.i.d.}~under four specific distributions, are given below.
\begin{center}
\begin{tabular}{|c|c|c|c|c|}\hline
                                & \multicolumn{4}{|c|}{Maximized log-likelihood} \\ \hline
True model                      & Normal & ANormal & Laplace & ALaplace \\ \hline
$N(0,\vartheta)$                & $\vartheta$  & $\vartheta$ & $\frac{\pi}{2} \vartheta$ & $\frac{\pi}{2} \vartheta$\\

$\mbox{AN}(0,\vartheta,\alpha)$ & $\vartheta (1+0.454 \alpha^2)$ & $\vartheta(1-\alpha^2)$ $(\star)$ & $\frac{\pi}{2} \vartheta k_{\alpha}$ & $\dfrac{\pi}{2}\vartheta (1-{\alpha^*_\gamma}^2)$\\

$L(0,\vartheta)$                & $2\vartheta$ & $2\vartheta$ & $\vartheta$ & $\vartheta$\\

$\mbox{AL}(0,\vartheta,\alpha)$ & $2\vartheta (1+\alpha^2)$ & $2\vartheta w_{\alpha,\alpha^*_\gamma}$ ($\star$) &$\vartheta (1+|\alpha|)^2$ & $\vartheta (1-\alpha^2)$\\ \hline
\end{tabular}
\end{center}
where $k_{\alpha}= \exp\left\{ \left[ \Phi^{-1}\left( \frac{1}{2(1+|\alpha|)} \right)  \right]^2  \right\} \geq 1$, $w_{\alpha,\alpha^*_\gamma} = \dfrac{(1+\alpha)^2-2 \alpha  \left(1+\alpha^*_\gamma\right)}{\left(1-\alpha ^2\right)^2} \in [0,1]$, and $\alpha^*_\gamma$ is as in Proposition \ref{prop:AsympNorm}. Cases marked $(\star)$ 
were derived assuming that covariates have zero mean. 
\label{cor:asymp_mle_misspec}
\end{cor}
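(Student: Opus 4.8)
The plan is to read the asymptotic variance of $\widehat\theta_\gamma$ off the sandwich covariance of Proposition~\ref{prop:AsympNorm} and evaluate it cell by cell. Write $V_{\eta_\gamma^*}$ for the second-derivative matrix of $Pm_\eta$ and $\Sigma = {\mathbb E}[\dot m_{\eta_\gamma^*}\dot m_{\eta_\gamma^*}^T]$, and partition both according to $(\theta_\gamma,(\vartheta,\alpha))$. The first task is to show that the $\theta_\gamma$-block of $V_{\eta_\gamma^*}^{-1}\Sigma V_{\eta_\gamma^*}^{-1}$ collapses to $V_{\theta\theta}^{-1}\Sigma_{\theta\theta}V_{\theta\theta}^{-1}$, which happens exactly when $V_{\eta_\gamma^*}$ is block-diagonal between $\theta_\gamma$ and $(\vartheta,\alpha)$. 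The $\alpha$--$\vartheta$ block is handled by the epsilon-skew parameterization, which makes these two orthogonal. The cross-Hessians $V_{\theta\vartheta}$ and $V_{\theta\alpha}$ obtained from Propositions~\ref{prop:lhood_skewnorm_properties}(ii) and \ref{prop:lhood_skewlap_properties}(ii) are, per observation, $x_{\gamma 1}$ times a function of the residual $r_1 = y_1 - x_{\gamma 1}^T\theta_\gamma$; since $x_i\perp\epsilon_i$ and $r_1=\epsilon_1$ at $\theta_\gamma^*$, their expectations factor as ${\mathbb E}[x_{\gamma 1}]$ times an error moment. These vanish without further assumptions when a symmetric model is fitted to symmetric data (so that $\alpha_\gamma^*=0$ and the relevant odd residual moment is zero), and otherwise are killed by the zero-mean covariate condition --- the assumption I would invoke for the $(\star)$ cells and, more generally, to obtain the tabulated diagonal form.

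Granting the reduction, the two smooth fits $k=1$ (Normal and asymmetric Normal) are treated with the exact score and Hessian of Proposition~\ref{prop:lhood_skewnorm_properties}: the $\theta$-score is $\vartheta^{-1}x_{\gamma 1}w_1^2 r_1$ and $V_{\theta\theta}=-\vartheta^{-1}{\mathbb E}[x_{\gamma 1}x_{\gamma 1}^T]\,{\mathbb E}[w_1^2]$, with the weights $w_1\in\{(1+\alpha_\gamma^*)^{-1},(1-\alpha_\gamma^*)^{-1}\}$ evaluated at the Kullback--Leibler optimum. Forming $V_{\theta\theta}^{-1}\Sigma_{\theta\theta}V_{\theta\theta}^{-1}$, the common factor ${\mathbb E}[x_{\gamma 1}x_{\gamma 1}^T]$ cancels against the standardization $VV^T=X_\gamma^TX_\gamma/n$ and the fitted scale drops out, leaving the scalar $v={\mathbb E}[w_1^4 r_1^2]/{\mathbb E}[w_1^2]^2$. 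I would then evaluate this by splitting the residual density at the mode and computing the requisite second moments of the true error: a symmetric truth gives $w_1\equiv 1$ and $v=\mathrm{Var}(\epsilon_1)$ (hence $\vartheta$, $2\vartheta$, and the factor $3-8/\pi\approx0.454$ for the asymmetric-Normal variance), while matching $\alpha_\gamma^*=\alpha$ in the correctly specified asymmetric-Normal case yields $v=\vartheta(1-\alpha^2)$.

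For the two piecewise-linear fits $k=2$ (Laplace and asymmetric Laplace) the sample Hessian in $\theta_\gamma$ vanishes almost everywhere, so I would instead differentiate the expected score $Pm_\eta$, whose smoothness comes from integrating against $s_0$; this replaces $V_{\theta\theta}$ by a matrix proportional to the true error density $f_0$ at the fitted break-point, exactly as in the quantile-regression asymptotics of \cite{pollard:1991,knight:1999,koenker:1994}. The meat $\Sigma_{\theta\theta}$ is immediate since the $\theta$-score is $\vartheta^{-1/2}x_{\gamma 1}\wbar_1$ with $\wbar_1^2\in\{(1\pm\alpha_\gamma^*)^{-2}\}$, so $v$ again reduces to a scalar built from $f_0$ and ${\mathbb E}[\wbar_1^2]$. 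Reading $f_0$ at the median recovers $v=1/(4f_0^2)$ (giving $\tfrac{\pi}{2}\vartheta$ under true Normality and $\vartheta$ under true Laplace), while evaluating it at the quantile $S_0^{-1}((1+\alpha_\gamma^*)/2)$ of the true asymmetric error produces the transcendental factor $k_\alpha=\exp\{[\Phi^{-1}(\tfrac{1}{2(1+|\alpha|)})]^2\}$ and the ratio $w_{\alpha,\alpha_\gamma^*}$.

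The step I expect to be genuinely delicate is the characterization of the Kullback--Leibler optimal asymmetry $\alpha_\gamma^*$ in the mixed cells --- an asymmetric model fitted to a differently skewed truth, or a symmetric model fitted to skewed data. There $\alpha_\gamma^*$ solves the stationarity equation ${\mathbb E}[\partial_\alpha m_\eta]=0$, which for the two-piece form balances the weighted error mass on either side of the mode; substituting its solution back into the moment ${\mathbb E}[w_1^4 r_1^2]$ (for $k=1$) or into the density $f_0$ and the fixed quantile $\tau=(1+\alpha_\gamma^*)/2$ (for $k=2$) is precisely what generates the non-elementary entries $k_\alpha$ and $w_{\alpha,\alpha_\gamma^*}$. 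Once $\alpha_\gamma^*$ is pinned down and the decoupling of $\theta_\gamma$ from $(\vartheta,\alpha)$ is in place, the remaining per-cell computations are routine Gaussian and Laplace moment evaluations.
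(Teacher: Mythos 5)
Your proposal is correct, and every cell I spot-checked (e.g.\ $v=\vartheta(1-\alpha^2)$ for the correctly specified asymmetric Normal via $E[w_1^4r_1^2]/E[w_1^2]^2$, and $v=\pi\vartheta/2$ for the Laplace fit to Normal data via $1/(4f_0^2)$) comes out right, but your route is genuinely different from the paper's. You derive everything internally from the sandwich covariance of Proposition \ref{prop:AsympNorm}: you first establish block-diagonality of $V_{\eta_\gamma^*}$ between $\theta_\gamma$ and $(\vartheta,\alpha)$ (correctly identifying that the $\theta$--$\alpha$ cross-Hessian is $E[x_{\gamma 1}]$ times a strictly positive scalar, which is exactly where the zero-mean covariate assumption of the $(\star)$ cells enters), and then evaluate the $\theta\theta$-block cell by cell. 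The paper instead outsources the heavy lifting to existing theorems: the Laplace and ALaplace columns are read off Pollard's Theorem 1 and the Koenker--Bassett quantile-regression asymptotics already quoted in Section \ref{ssec:asymp_mle}, so the entire written proof reduces to computing the median (or $\tau$-quantile) and the density value $f_0$ of each true two-piece error law using Arellano-Valle's closed-form median expression, while the Normal column is dispatched by citing classical least-squares theory ($v=\mbox{Var}(\epsilon_1)$); the ANormal column is not spelled out at all. What your approach buys is uniformity and transparency --- a single sandwich computation covers all sixteen cells, the role of the KL-optimal $\alpha_\gamma^*$ and of the zero-mean covariate condition is made explicit rather than implicit in the $(\star)$ footnote, and you correctly isolate the genuinely delicate step (solving the stationarity equation for $\alpha_\gamma^*$ in the mismatched-skewness cells, which is what produces $k_\alpha$ and $w_{\alpha,\alpha_\gamma^*}$). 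What the paper's approach buys is brevity and the reuse of results whose regularity conditions (notably for the nonsmooth $k=2$ objective, where the sample Hessian is degenerate and one must differentiate the population criterion, exactly as you note) have already been verified elsewhere.
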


\subsection{Optimization}
\label{ssec:mle_optim}

We outline simple, efficient algorithms to obtain
$(\widehat{\theta}_\gamma,\widehat{\vartheta}_\gamma,\widehat{\alpha}_\gamma)= \arg\max_{\theta_\gamma,\vartheta,\alpha} \log L_k(\theta_\gamma,\vartheta,\alpha)$, where $k \in \{1,2\}$ are the asymmetric Normal and Laplace log-likelihoods \eqref{eq:skewnorm_loglhood}-\eqref{eq:skewlap_loglhood}. We also consider the corresponding posterior modes
$(\tilde{\theta}_\gamma,\tilde{\vartheta}_\gamma,\tilde{\alpha}_\gamma)= \arg\max_{\theta_\gamma,\vartheta,\alpha} \log L_k(\theta_\gamma,\vartheta,\alpha)
+ \log p(\theta_\gamma,\vartheta,\alpha \mid \gamma)$, where $p(\theta_\gamma,\vartheta,\alpha \mid \gamma)$ is the prior density (Section \ref{sec:prior}).
The algorithms are useful to obtain parameter estimates or Laplace approximations to the integrated likelihood.
\cite{mudholkar:2000} and \cite{arellanovalle:2005} gave an algorithm to obtain $\widehat{\theta}_\gamma$
for $\log L_1$ in the case with no covariates ($p_\gamma=1$).
To tackle point discontinuities in the derivatives
their algorithm requires solving $n$ separate optimization problems, which does not scale up with increasing $n$,
or alternatively using method of moments estimators.
Maximum likelihood estimation of $\theta_\gamma$ under the asymmetric Laplace and fixed $\alpha$ is connected to quantile regression (see below).
Regarding Bayesian frameworks, most rely on MCMC for parameter estimation
but this is too costly when we wish to consider a potentially large number of models.
Instead, we propose a generic framework for jointly obtaining $(\widehat{\theta}_\gamma,\widehat{\vartheta}_\gamma,\widehat{\alpha}_\gamma)$
or $(\tilde{\theta}_\gamma,\tilde{\vartheta}_\gamma,\tilde{\alpha}_\gamma)$ applicable to both the asymmetric Normal and Laplace.
The key result we exploit is concavity of the log-likelihood given by
Propositions \ref{prop:lhood_skewnorm_properties}-\ref{prop:lhood_skewlap_properties},
which allows iteratively optimizing first $\theta_\gamma$ and then $(\vartheta,\alpha)$.
Optimization with respect to $(\vartheta,\alpha)$ has closed form,
whereas setting $\theta_\gamma$ can be seen as weighted least squares for the asymmetric Normal
and as quantile regression for the asymmetric Laplace.
The latter task of maximizing $\log L_2$ with respect to $\theta_\gamma$ is a classical problem that can be framed as linear programming,
for which simplex and interior-point methods are available.
However, these are not applicable to the posterior mode as the target is no longer piecewise linear
and even efficient implementations have computational complexity greater than cubic in $p$ and supra-linear in $n$
\cite{koenker:2005}.

We outline two simple algorithms that have lower complexity
and can be readily adapted to obtain the posterior mode.
Briefly, in Algorithm \ref{alg:mle_lma}, Step 2 follows from setting first derivatives to zero
and directly extends \cite{mudholkar:2000} (Proposition 4.4) and \cite{arellanovalle:2005} (Section 4.2).
Step 3 is essentially a Levenberg-Marquardt algorithm \cite{levenberg:1944,marquardt:1963} exploiting gradient continuity.
$g_\theta$ and $H_\theta$ denote the gradient and hessian with respect to $\theta_\gamma$
as in Propositions \ref{prop:lhood_skewnorm_properties}-\ref{prop:lhood_skewlap_properties},
where for $\log L_2()$ we use the asymptotic hessian $X^TX/(\vartheta (1-\alpha^2))$.
Its updates are in between those of a Newton-Raphson and gradient descent algorithms
and can be interpreted as restricting the Newton-Raphson step to a trust region
where the quadratic approximation is accurate \cite{sorensen:1982}.
For large regularization parameter $\lambda$ the update $\delta$ converges to the gradient algorithm,
which by continuity is guaranteed to increase the target,
whereas for small $\lambda$ it converges to the Newton-Raphson algorithm,
achieving quadratic convergence as $\theta_\gamma^{(t)}$ approaches the optimum.

\begin{algorithm}
{\bf Optimization via Levenberg-Marquardt}\label{alg:mle_lma}
\begin{enumerate}
\item Initialize $\widehat{\theta}_\gamma^{(0)}= (X^TX)^{-1}X^Ty$, $\lambda=0$.
Set $t=1$
\item Let $s_1=\sum_{i \in A(\widehat{\theta}_\gamma^{(t-1)})}^{}  |y_i-x_{\gamma i}^T\widehat{\theta}_\gamma^{(t-1)}|^{3-k}$,
$s_2=\sum_{i \not\in A(\widehat{\theta}_\gamma^{(t-1)})}^{}  |y_i-x_{\gamma i}^T\widehat{\theta}_\gamma^{(t-1)}|^{3-k}$.
Update
\begin{align}
\widehat{\alpha}^{(t)}=
\frac{ s_1^{\frac{k}{2+k}} - s_2^{\frac{k}{2+k}} }{ s_1^{\frac{k}{2+k}} + s_2^{\frac{k}{2+k}}};
\widehat{\vartheta}^{(t)}= \frac{1}{4n^{k}} \left( s_1^{\frac{k}{2+k}} + s_2^{\frac{k}{2+k}} \right)^{2+k}.
\nonumber
\end{align}

\item Propose $m= \theta_\gamma^{(t-1)} + \delta$, where
\begin{align}
\delta= -\left(H_\theta + \lambda \mbox{diag}(H_\theta)\right)^{-1} g_\theta, \nonumber
\end{align}
and $g_\theta, H_\theta$ are the subsets of
$g_k(\widehat{\theta}_\gamma^{(t-1)},\widehat{\vartheta}^{(t)},\widehat{\alpha}^{(t)})$ and
$H_k(\widehat{\theta}_\gamma^{(t-1)},\widehat{\vartheta}^{(t)},\widehat{\alpha}^{(t)})$
corresponding to $\theta_\gamma$.
If $\log L_k(m,\vartheta^{(t)},\alpha^{(t)})>\log L_k(\theta_\gamma^{(t-1)},\vartheta^{(t)},\alpha^{(t)})$
set $\theta_\gamma^{(t)}=m$ and $\lambda= \lambda/2$, else update $\lambda= 1+\lambda$ and repeat Step 3.
\end{enumerate}
\end{algorithm}

Given a good initial guess $\widehat{\theta}_\gamma^{(0)}$,
the fact that $\log L_k$ are locally well approximated by a quadratic function in $\theta_\gamma$
($\log L_1$ is exactly locally quadratic) results in Algorithm \ref{alg:mle_lma} usually converging after a few iterations.
As usual, with second-order optimization each iteration requires a matrix inversion that is costly when $p$ is large.
As an alternative, Algorithm \ref{alg:mle_cda} uses coordinate descent to optimize each $\theta_{\gamma j}$ sequentially,
which only requires univariate updates,
where updating the set $A(\theta_\gamma)$ for each $\theta_{\gamma j}$ implies that Step 3 has cost $O(n p)$.
In contrast, Algorithm \ref{alg:mle_lma} determines $A(\theta_\gamma)$ once per iteration and performs matrix inversion,
with total cost $O(n + p^3)$ per iteration. Hence, although Algorithm \ref{alg:mle_lma} usually requires fewer iterations than Algorithm \ref{alg:mle_cda},
for large $p$ the latter is typically preferrable.
A related study of computational cost is offered in \cite{breheny:2011} in the context of penalized likelihood optimization,
who found that coordinate descent is often preferrable to multivariate updates.
These results show that, contrary to historical beliefs, two-piece distributions lead to convenient optimization.
R package mombf \cite{mombf:2016} incorporates both algorithms but our examples are based on Algorithm \ref{alg:mle_cda}, the results were essentially identical
to those of Algorithm \ref{alg:mle_lma} but the running time was substantially shorter.

We adapted both algorithms to find the posterior mode by simply redefining $g_k$ and $H_k$
to be the gradient and Hessian of $\log L_k(\theta_\gamma,\vartheta,\alpha) + \log p(\theta_\gamma,\vartheta,\alpha \mid \gamma)$.
The corresponding expressions are in Supplementary Section \ref{ssec:deriv_logprior}.
We remark that due to the penalty around the origin NLPs such as $p_M()$ and $p_E()$ in (\ref{eq:pmom})-(\ref{eq:pemom}) are not log-concave,
however this is not an issue as they are symmetric and log-concave in each quadrant (fixed $\mbox{sign}(\theta_\gamma,\alpha)$).
Thus $\log p(\theta_\gamma,\vartheta,\alpha \mid y,\gamma)$ is concave in each quadrant,
its unique global mode lies in the same quadrant as the maximum likelihood estimator
and we may initialize the algorithm at
$(\tilde{\theta}_\gamma^{(0)},\tilde{\vartheta}^{(0)},\tilde{\alpha}^{(0)})=(\widehat{\theta}_\gamma,\widehat{\vartheta}_\gamma,\widehat{\alpha}_\gamma)$.
Convergence is typically achieved after a few iterations.

\begin{algorithm}
{\bf Optimization via coordinate descent}\label{alg:mle_cda}
\begin{enumerate}
\item Set an arbitrary $c>1$ and initialize $\theta_\gamma^{(0)}$, $\lambda=0$ as in Algorithm \ref{alg:mle_lma}.

\item Update $(\widehat{\vartheta}^{(t)},\widehat{\alpha}^{(t)})$ as in Algorithm \ref{alg:mle_lma}.

\item For $j=1,\ldots,p_\gamma$,
let $m= \theta_{\gamma j}^{(t-1)} - \frac{g_j}{h_{jj}(1 + \lambda)}$,
where $g_j$ is the $j^{th}$ element in $g_1(\theta_\gamma)$ and $h_{jj}$ the $(j,j)$ element in $H_1(\theta_\gamma)$
at $\theta_\gamma=(\theta_{\gamma 1}^{(t)},\ldots,\theta_{\gamma j-1}^{(t)},\theta_{\gamma j}^{(t-1)},\ldots,\theta_{\gamma p_\gamma}^{(t-1)})$.
If $L_k$ evaluated at $\theta_{\gamma j}^{(t)}=m$ increases, set $\theta_{\gamma j}^{(t)}=m$, $\lambda= \lambda/c$,
else iteratively update $\lambda= c+\lambda$ and $m$ until $L_k$ increases.
\end{enumerate}
\end{algorithm}


\section{Model selection}
\label{sec:modsel}

Under a standard Bayesian framework $p(\gamma \mid y)= p(y \mid \gamma) p(\gamma) / p(y)$,
with integrated likelihood
\begin{align}
p(y \mid \gamma)= \int L_1(\theta_\gamma,\vartheta,0) p(\theta_\gamma,\vartheta) d\theta_\gamma d\vartheta,
\mbox{ if } \gamma_{p_\gamma+1}=0,\gamma_{p_\gamma+2}=0, \nonumber \\
p(y \mid \gamma)= \int L_1(\theta_\gamma,\vartheta,\alpha) p(\theta_\gamma,\vartheta,\alpha) d\theta_\gamma d\vartheta d\alpha,
\mbox{ if } \gamma_{p_\gamma+1}=1,\gamma_{p_\gamma+2}=0, \nonumber \\
p(y \mid \gamma)= \int L_2(\theta_\gamma,\vartheta,0) p(\theta_\gamma,\vartheta) d\theta_\gamma d\vartheta,
\mbox{ if } \gamma_{p_\gamma+1}=0,\gamma_{p_\gamma+2}=1, \nonumber \\
p(y \mid \gamma)= \int L_2(\theta_\gamma,\vartheta,\alpha) p(\theta_\gamma,\vartheta,\alpha) d\theta_\gamma d\vartheta d\alpha,
\mbox{ if } \gamma_{p_\gamma+1}=1,\gamma_{p_\gamma+2}=1.
\label{eq:marglhood}
\end{align}
Section \ref{ssec:intlhood} discusses how to compute $p(y \mid \gamma)$
and Section \ref{ssec:bf_rates} the asymptotic properties of the associated Bayes factors
and Bayesian model averaging, along with a discussion on model misspecification and to what extent
these results can be generalized to non-identically distributed errors (e.g. under heteroscedasticity or hetero-asymmetry).
Section \ref{ssec:modelsearch} outlines a stochastic model search algorithm that can be used when $p$ is too large for exhaustive enumeration of the $2^{p+2}$ models.

\subsection{Integrated likelihood}
\label{ssec:intlhood}

Computing (\ref{eq:marglhood}) in the case $\gamma_{p+1}=\gamma_{p+2}=0$ corresponds to Normal linear regression,
 for which existing methods are typically available,
\textit{e.g.}~\cite{johnson:2012} gave closed-form expressions for the MOM and Laplace approximations for the eMOM.
The three remaining cases require numerical evaluation, for which we propose Laplace and Monte Carlo approximations.
The former are appealing due to log-likelihood concavity and asymptotic normality (Section \ref{sec:paramestim}).
Indeed, in our examples they delivered very similar inference and were orders of magnitude faster than Monte Carlo.
Hence, by default we recommend Laplace approximations over Monte Carlo, except in small $p$ situations where the latter is still practical.
To ensure that the parameter support is on the real numbers Laplace approximations are based on the reparameterization
$\eta=(\theta_\gamma,\log(\vartheta),\mbox{atanh}(\alpha))$ and given by
\begin{align}
\widehat{p}(y \mid \gamma)= \exp\{\log L_k(\tilde{\eta}) + \log p(\tilde{\eta}) \} \frac{(2\pi)^{\sum_{j=1}^{p+2} \gamma_j/2}}{|H_k(\tilde{\eta})|^{1/2}},
\label{eq:laplace_approx}
\end{align}
where $k=1,2$ for $\gamma_{p+2}=0,1$ respectively,
$\tilde{\eta}$ and $H_k(\tilde{\eta})$ are the posterior mode and hessian of $\log L_k(\eta) + \log p(\eta)$.
The specific expressions are given in Supplementary Section \ref{sec:approx_marglhood}.
Expression (\ref{eq:laplace_approx}) simply requires the posterior mode (Algorithms \ref{alg:mle_lma}-\ref{alg:mle_cda}) and evaluating the hessian.
The latter is straightforward for $k=1$, but for $k=2$ it is singular in $\theta_\gamma$, requiring some care.
The reasoning behind (\ref{eq:laplace_approx}) is to approximate the log-integrand in (\ref{eq:marglhood})
by a smooth function that has strictly positive definite hessian in $\theta_\gamma$,
which is facilitated in our setting by $\log L_2$ concavity and asymptotic normality.
We found that a simple yet effective strategy is to replace $H_2$ by the asymptotic expected hessian $\overline{H}_2$
obtained under independent asymmetric Laplace errors.

Although we did not find the following concern to be a practical issue in our examples,
we remark that in principle $\overline{H}_2$ may underestimate the underlying uncertainty in $\theta_\gamma$
and thus inflate $|\overline{H}_2|$,
\textit{e.g.}~under truly non-Laplacian independent and identically distributed errors one needs to add a multiplicative constant (Section \ref{ssec:asymp_mle}),
whereas independent but heteroscedastic errors require a matrix-reweighting adjustment \cite{kocherginsky:2005}.
Typical strategies to improve the estimated curvature rely either on direct estimation under the assumption of independent and identically distributed errors, or indirect estimation via inversion of score tests, although these only provide univariate confidence intervals
and their cost does not scale well with $p$, or sampling-based methods such as bootstrap or Monte Carlo.
As a practical alternative here we consider that the goal is really to approximate the actual curvature of $\log L_2$,
which can be easily done with a few point evaluations of $\log L_2$ in a neighbourhood of $\tilde{\eta}_\gamma$.
Briefly, we consider the adjustment $D \overline{H}_2 D$,
where $D$ is a diagonal matrix such that its element $d_{ii}$ gives the best approximation
of $\log L_2$ as a quadratic function of $\theta_i$ in the least squares sense.
$D \overline{H}_2 D$ matches the actual curvature in $\log L_2$ and is thus less dependent on asymptotic theory than other strategies,
and has the advantage that $D$ can be computed quickly.
See Supplementary Section \ref{ssec:approx_logl_alapl} for further details and Supplementary Figure \ref{fig:quadapprox} for an example.
Given that the unadjusted $\overline{H}_2$ performed well in our examples and the associated results
were practically indistinguishable to those based on Monte Carlo, unless otherwise stated our results are based on $\overline{H}_2$.

As our Monte Carlo alternative, we implemented an importance sampling estimator based on multivariate T draws
and covariance matching the asymptotic posterior covariance.
Specifically, let $\eta^{(b)} \sim T_3(\tilde{\eta}, \tilde{H}_k^{-1}/3)$ for $b=1,\ldots,B$ where $B$ is a large integer,
then
\begin{align}
\widehat{p}_I(y \mid \gamma)= B^{-1} \sum_{b=1}^{B} L_k(\eta^{(b)}) p(\eta^{(b)}) / T_3(\eta^{(b)}; \tilde{\eta}, \tilde{H_k}^{-1}/3).
\label{eq:is}
\end{align}

We remark that NLPs are multimodal in $(\theta_\gamma,\alpha)$, thus some care is needed when using Laplace approximations.
To give an honest characterization of the properties of our preferred computational method,
in Section \ref{sec:modsel} we obtain asymptotic rates for Bayes factors based on $\hat{p}(y \mid \gamma)$ in \eqref{eq:laplace_approx}.
\cite{rossell:2017} studied the discrepancies between $p(y \mid \gamma)$ and $\hat{p}(y \mid \gamma)$ for MOM, iMOM and eMOM priors
and Normal errors.
Briefly, given that secondary modes vanish asymptotically for truly active covariates but not for spurious covariates,
$\hat{p}(y \mid \gamma)$ imposes a stronger penalty on spurious variables than $p(y \mid \gamma)$,
however for such models $p(y \mid \gamma)$ decreases fast enough that both approximations typically lead to very similar inference.

\subsection{Bayes factor rates}
\label{ssec:bf_rates}

Let $\gamma^*=(\mbox{I}(\theta_1^* \neq 0),\ldots,\mbox{I}(\theta_p^* \neq 0),\mbox{I}(\alpha^*\neq 0),\mbox{I}(k^*=2))$
be the optimal model,
that is $(\theta^*,\vartheta^*,\alpha^*,k^*)= \arg\max_{\Gamma,k} M_k(\theta,\vartheta,\alpha)$ maximize
the expected log-likelihood across $k=1,2$, and the expectation is with respect to the data-generating density in Condition A1.
We indicate by $\gamma^* \subset \gamma$ that $\gamma^*$ is a submodel of $\gamma$, {\it i.e.} $\gamma_j^* \leq \gamma_j$ for $j=1,\ldots,p+1$,
and by $\gamma^* \not\subset \gamma$ that $\gamma_j^*>\gamma_j$ for some $j$.
If the data were truly generated from the assumed error distribution, it is well-known that the Bayes factor in favour of $\gamma$
decreases exponentially with $n$ when $\gamma^* \not\subset \gamma$ ($\gamma$ is missing important variables).
Conversely when $\gamma$ adds spurious variables to $\gamma^*$
the Bayes factor is only $O_p(n^{-(p_\gamma-p_{\gamma^*})/2})$ under local priors,
an imbalance that is ameliorated under NLPs, which achieve faster polynomial or quasi-exponential rates
depending on their chosen parametric form \cite{johnson:2010,johnson:2012}.
Proposition \ref{prop:bfrates} gives an extension under model misspecification, the first result of this kind for NLPs.
{\textcolor{blue}{We remark that the rates apply directly to the Laplace approximations \eqref{eq:laplace_approx}.
As studied by \cite{rossell:2017} (Supplementary Section 5, Supplementary Figure 8), when $\gamma$ contains spurious parameters
the non-local posterior $p(\theta_\gamma,\alpha_\gamma \mid \gamma,y)$
can have non-vanishing multimodality, in which case Laplace approximations $\hat{p}(y \mid \gamma)$ underestimate $p(y \mid \gamma)$ even as $n \rightarrow \infty$.
In our experience this is not a major concern (e.g. Table 3S compares Laplace with importance sampling estimates),
but we find it preferrable to characterize inference under our recommended computational framework, i.e. for $\hat{p}(y \mid \gamma)$.
}}
A critical condition for Proposition \ref{prop:bfrates} is that the prior density be strictly positive at the optimum,
$p(\theta_{\gamma^*}^*,\alpha_{\gamma^*}^* \mid \gamma^*)>0$,
which is trivially satisfied by pMOM and peMOM priors.
It also holds for local priors, which for simplicity we define as $p(\theta_\gamma,\vartheta,\alpha\mid \gamma)>0$
for all $(\theta_\gamma,\vartheta,\alpha) \in \Gamma_\gamma$ and we assume to be continuous.

\begin{prop}\label{prop:bfrates}
Suppose that Conditions A1-A3 hold, fixed $p_\gamma,p_{\gamma^*}$ and $n \rightarrow \infty$.
If $\gamma^* \not\subset \gamma$ then
{\textcolor{blue}{
$\frac{1}{n}\log (\widehat{p}(y\mid \gamma)/\widehat{p}(y\mid \gamma^*))
\stackrel{P}{\longrightarrow} -a_1$
for local, pMOM and peMOM priors and some constant $a_1>0$.
}}
Conversely, if $\gamma^* \subset \gamma$ then
$\widehat{p}(y\mid \gamma)/\widehat{p}(y\mid \gamma^*)=O_p(b_n)$
where $b_n= n^{-(p_\gamma-p_{\gamma^*})/2}$ for local priors,
$b_n= n^{-3(p_\gamma-p_{\gamma^*})/2}$ for the pMOM prior, and
$b_n=e^{-c\sqrt{n}}$ for the peMOM prior where $c>0$.

\end{prop}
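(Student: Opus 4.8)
The plan is to insert the Laplace approximation \eqref{eq:laplace_approx} into the log Bayes factor and track the order of each resulting term. Writing $q=p_\gamma-p_{\gamma^*}$ for the number of extra parameters in $\gamma$ and $\tilde\eta$ for posterior modes,
\[\log \frac{\widehat{p}(y\mid\gamma)}{\widehat{p}(y\mid\gamma^*)} = \Delta_L + \Delta_p + \frac{q}{2}\log(2\pi) - \frac{1}{2}\log\frac{|H(\tilde\eta_\gamma)|}{|H(\tilde\eta_{\gamma^*})|},\]
where $\Delta_L$ is the difference of maximized log-likelihoods (each evaluated in its own error family) and $\Delta_p$ the difference of log-prior ordinates. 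The entire argument reduces to identifying the order in probability of these four pieces; crucially the pMOM, peMOM and local priors all factorize over coordinates, which is what generates the three different rates.

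For the under-fitting case $\gamma^*\not\subset\gamma$, Proposition \ref{prop:Consistency} gives $\tilde\eta_\gamma\stackrel{P}{\to}\eta_\gamma^*$ and, combined with the uniform law of large numbers used in its proof, $\tfrac{1}{n}\log L_k(\tilde\eta_\gamma)\stackrel{P}{\to}M_k(\eta_\gamma^*)$. Hence $\tfrac{1}{n}\Delta_L\stackrel{P}{\to} M_k(\eta_\gamma^*)-M_{k^*}(\eta_{\gamma^*}^*)=:-a_1$. Since $(\gamma^*,k^*)$ is the unique global maximizer of the expected log-likelihood and $\gamma$ omits a coordinate that is active at this optimum (so that dropping it strictly lowers the attainable expected log-likelihood), we have $a_1>0$. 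All remaining terms vanish after dividing by $n$: $\tfrac1n\Delta_p\to0$ because the prior is continuous and strictly positive at the interior, nonzero optima, and $\tfrac1n\log|H|=O(\log n/n)\to0$ because the determinants grow only polynomially. This yields $\tfrac{1}{n}\log(\widehat{p}(y\mid\gamma)/\widehat{p}(y\mid\gamma^*))\stackrel{P}{\to}-a_1$.

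For the over-fitting case $\gamma^*\subset\gamma$ (so $k=k^*$), split the parameters into the shared block $\zeta$ and the extra block $\xi\in\mathbb{R}^q$ with Kullback--Leibler-optimal value $\xi^*=0$. The core step is a profiled expansion: the profile log-likelihood $PL(\xi)=\max_\zeta\log L_k(\zeta,\xi)$ expands around $(\zeta^*,0)$ as $PL(\xi)=PL(0)+S^T\xi-\tfrac{n}{2}\xi^T\bar V\xi+r_n(\xi)$, with profile score $S=O_p(\sqrt n)$, negligible remainder $r_n$, and $\bar V$ the Schur-complement block of the expected Hessian $V_{\eta_\gamma^*}$ of Proposition \ref{prop:AsympNorm} (for $k=2$ this is the expected Hessian $\overline{H}_2$ actually used in \eqref{eq:laplace_approx}). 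One then locates the mode $\tilde\xi$ of $PL(\xi)+\log p_{\mathrm{extra}}(\xi)$. For a local prior $\log p_{\mathrm{extra}}$ is $O(1)$, so $\tilde\xi=O_p(n^{-1/2})$, both $\Delta_L$ and $\Delta_p$ are $O_p(1)$, the extra block inflates $|H(\tilde\eta_\gamma)|$ by a factor of order $n^q$, and the determinant term contributes $-\tfrac q2\log n$; the ratio is $O_p(n^{-q/2})$. For the pMOM prior \eqref{eq:pmom}, $\log p_{\mathrm{extra}}(\xi)=\sum_j(2\log|\xi_j|+O(1))$; the mode remains at scale $n^{-1/2}$, so each coordinate gives $2\log|\tilde\xi_j|=-\log n+O_p(1)$, adding $-q\log n$ from $\Delta_p$ on top of the $-\tfrac q2\log n$ determinant term, hence $O_p(n^{-3q/2})$. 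For the peMOM prior \eqref{eq:pemom}, $\log p_{\mathrm{extra}}(\xi)=\sum_j(\sqrt2-g_\theta k\vartheta/\xi_j^2+\ldots)$; balancing $-\tfrac{n}{2}\bar V_{jj}\xi_j^2$ against $-g_\theta k\vartheta/\xi_j^2$ places the mode at scale $n^{-1/4}$, where both the likelihood excess and the prior term contribute $-\Theta(\sqrt n)$ per coordinate, dominating the $O(\log n)$ determinant and $O_p(1)$ pieces and giving $e^{-c\sqrt n}$ for some $c>0$.

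The main obstacle is this last over-fitting analysis for the NLPs, specifically pinning down the order of $\tilde\xi$ in probability. The upper bounds $\tilde\xi=O_p(n^{-1/2})$ (pMOM) and $O_p(n^{-1/4})$ (peMOM) follow from concentration of the profiled likelihood, but one also needs a matching lower bound guaranteeing that the mode does not collapse into the prior's zero at $\xi=0$; this is exactly what the repulsion of the NLP near the origin provides, but it must be quantified so that $\log|\tilde\xi_j|=-\tfrac12\log n+O_p(1)$ (pMOM) and the $-\Theta(\sqrt n)$ balance (peMOM) hold with the stated probability. A secondary technical point is that for $k=2$ the observed Hessian in $\theta_\gamma$ is singular, so every expansion must be carried out with the expected Hessian $\overline{H}_2$ used in \eqref{eq:laplace_approx}; Propositions \ref{prop:Consistency}--\ref{prop:AsympNorm}, together with the concavity of Propositions \ref{prop:lhood_skewnorm_properties}--\ref{prop:lhood_skewlap_properties}, furnish the uniform control that justifies replacing the observed information by the expected information in the profiled expansion.
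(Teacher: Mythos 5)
Your proposal is correct and follows essentially the same route as the paper: the same four-term decomposition of the Laplace-approximated Bayes factor, the same uniform-LLN argument giving $\tfrac{1}{n}\Delta_L\stackrel{P}{\to}-a_1$ with the prior and determinant terms vanishing after division by $n$ in the under-fitting case, and in the over-fitting case the same three ingredients (LR statistic $O_p(1)$, Hessian determinants of order $n^{p_\gamma}$, and posterior-mode scales $n^{-1/2}$ and $n^{-1/4}$ under pMOM and peMOM). The only difference is one of packaging: where you sketch a profiled quadratic expansion to pin down the mode scales and the NLP prior-ordinate rates, the paper imports those from \cite{rossell:2017} (Proposition 2(i)) and establishes $\lambda(y)=O_p(1)$ via an oracle-$\vartheta$ plug-in combined with \cite{koenker:1982} for $k=2$ and a weighted least-squares decomposition for $k=1$.
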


\begin{cor}
Let $E(\theta_i \mid y)= \sum_{\gamma}^{} E(\theta_i \mid y, \gamma) p(\gamma \mid y)$
be Bayesian model averaging estimates,
$r^+= \mbox{max}_{p_\gamma=p_{\gamma^*}+1} p(\gamma)/p(\gamma^*)$,
$r^{-}= \mbox{max}_{p_\gamma \leq p_{\gamma^*}} p(\gamma)/p(\gamma^*)$,
where $p(\gamma)$ is non-increasing in $p_\gamma$ and $\log r^-= O(n)$.
Under the conditions in Proposition \ref{prop:bfrates},
if $\theta_i^* = 0$ then $E(\theta_i \mid y)= r^+ O_p(n^{-2})$ under the pMOM prior
and $r^+ O_p(e^{-c\sqrt{n}})$ under the peMOM prior.
If $\theta_i^* \neq 0$ then
$E(\theta_i \mid y)= \theta_i^* + O_p(n^{-1/2})$ under the pMOM and peMOM priors.
\label{cor:bma}
\end{cor}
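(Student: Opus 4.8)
The plan is to write the model-averaged estimate as $E(\theta_i \mid y) = \sum_{\gamma} E(\theta_i \mid y, \gamma) p(\gamma \mid y)$, express each posterior model probability as $p(\gamma \mid y) = B_\gamma\, p(\gamma)/\sum_{\gamma'} B_{\gamma'} p(\gamma')$ with $B_\gamma = \widehat{p}(y \mid \gamma)/\widehat{p}(y \mid \gamma^*)$, and control the sum using the rates in Proposition \ref{prop:bfrates}. Since $p$ is fixed there are only $2^{p+2}$ terms, so no uniform control over a growing model space is needed; it suffices to bound each term. Observe first that the normalising constant is bounded below in probability by the $\gamma=\gamma^*$ term (which equals $p(\gamma^*)$), while every other term is $O_p(b_n)$ for $\gamma^* \subset \gamma$ and $O_p(e^{-na_1})\,r^-$ for $\gamma^* \not\subset \gamma$; under $\log r^- = O(n)$ and $a_1>0$ the latter are negligible, so $\sum_{\gamma'} B_{\gamma'} p(\gamma') = p(\gamma^*)(1+o_p(1))$ and consequently $p(\gamma^* \mid y) \stackrel{P}{\longrightarrow} 1$. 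This model-selection-consistency backbone underlies both cases.

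For $\theta_i^* \neq 0$ we have $\gamma_i^* = 1$, so $\gamma^*$ and every high-probability model contain variable $i$. The dominant contribution is the $\gamma^*$ term: a Bernstein--von Mises argument for the single-model posterior under $\gamma^*$, using consistency (Proposition \ref{prop:Consistency}) and asymptotic normality of the maximum likelihood estimator (Proposition \ref{prop:AsympNorm}) together with the fact that the non-local prior is smooth, bounded and strictly positive in a neighbourhood of $\theta_i^* \neq 0$, yields $E(\theta_i \mid y, \gamma^*) = \theta_i^* + O_p(n^{-1/2})$. All remaining models contribute $\sum_{\gamma \neq \gamma^*} E(\theta_i \mid y, \gamma) p(\gamma \mid y)$; since each within-model mean is $O_p(1)$ on the compact $\Gamma$ and $\sum_{\gamma \neq \gamma^*} p(\gamma \mid y) = O_p(b_n)$, this remainder is $O_p(b_n) = o_p(n^{-1/2})$ for both priors. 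Combining, $E(\theta_i \mid y) = \theta_i^* + O_p(n^{-1/2})$.

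For $\theta_i^* = 0$ we have $\gamma_i^* = 0$, and only models with $\gamma_i = 1$ contribute (otherwise $E(\theta_i \mid y, \gamma) = 0$). Among these, models with $\gamma^* \not\subset \gamma$ carry posterior mass $O_p(e^{-na_1})$ and are negligible, so the leading term is the single smallest supermodel $\gamma = \gamma^* \cup \{i\}$, for which $p_\gamma = p_{\gamma^*}+1$ and hence $p(\gamma)/p(\gamma^*) \le r^+$ and $B_\gamma = O_p(b_n)$ by Proposition \ref{prop:bfrates}. It remains to bound the within-model mean $E(\theta_i \mid y, \gamma)$. Writing the single-model posterior of $\theta_i$ as asymptotically proportional to the non-local prior times a Normal likelihood centred at $\widehat{\theta}_i = O_p(n^{-1/2})$ with width $O_p(n^{-1/2})$, for the pMOM prior the extra factor $\theta_i^2$ makes the posterior mean a ratio whose leading order is $E(\theta_i \mid y, \gamma) = O_p(n^{-1/2})$; multiplying by $B_\gamma = O_p(n^{-3/2})$ and the prior ratio $r^+$ gives $E(\theta_i \mid y) = r^+ O_p(n^{-2})$. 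For the peMOM prior it suffices that $E(\theta_i \mid y, \gamma) = O_p(1)$ on the compact $\Gamma$, since $B_\gamma = O_p(e^{-c\sqrt{n}})$ absorbs any polynomial factor, yielding $E(\theta_i \mid y) = r^+ O_p(e^{-c\sqrt{n}})$.

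The main obstacle is the refined single-model posterior-mean computation for a spurious coefficient under the pMOM prior: obtaining the precise $O_p(n^{-1/2})$ rate (rather than the crude $O_p(1)$ bound) is what produces the overall $n^{-2}$ rate. This is a Laplace-type analysis of a ratio of integrals in which the $\theta_i^2$ penalty, the $O_p(n^{-1/2})$-concentrated likelihood and its $O_p(n^{-1/2})$ centring must be tracked simultaneously; concretely, with a Normal approximation of mean $\mu = O_p(n^{-1/2})$ and variance $\tau^2 = O_p(n^{-1})$ the $\theta_i^2$-weighted mean equals $(\mu^3 + 3\mu\tau^2)/(\mu^2+\tau^2)$, and verifying that this is $O_p(n^{-1/2})$ and that the Normal approximation error is of smaller order is the delicate step. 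The remaining ingredients --- model-selection consistency, Bernstein--von Mises for non-zero coefficients, and the finite-sum bookkeeping over the $2^{p+2}$ models --- are routine given Proposition \ref{prop:bfrates}.
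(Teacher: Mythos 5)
Your proposal is correct and follows essentially the same route as the paper's proof: decompose the model average over $\gamma^*$, its supersets and non-supersets, bound the posterior model probabilities via the Bayes factor rates of Proposition \ref{prop:bfrates} together with the prior ratios $r^+$ and $r^-$, and combine these with the $O_p(n^{-1/2})$ within-model posterior means. The only difference is cosmetic: you spell out the $\theta_i^2$-weighted Gaussian computation giving the $O_p(n^{-1/2})$ posterior mean of a spurious coefficient under the pMOM prior, a step the paper asserts more tersely by appeal to the MLE asymptotics and the Laplace approximation.
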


Proposition \ref{prop:bfrates} implies model selection consistency
with Bayes factor rates that have the same functional form as when the correct model is assumed.
We emphasize that this does not imply that there is no cost due to assuming an incorrect model:
the coefficient $a_1$ in the exponential or those in the polynomial rates are affected.
The constant $a_1$ determines how quickly one can detect truly active variables (asymptotically)
and is given by the KL divergence between the assumed model class and the data-generating truth.
That is, under the true model $a_1$ takes a different value than under a misspecified model
and hence the ratio of the correct versus misspecified Bayes factors to detect signals is essentially exponential in $n$.
In contrast, when $\gamma^* \subset \gamma$ this ratio converges to a constant,
hence the effects of model misspecification on false positives vanishes asymptotically.
We remark that, for finite $n$, misspecification can have a marked effect on false positives,
see Section \ref{ssec:nonid_sim} for examples.
Corollary \ref{cor:bma} is the trivial implication that Bayes factors also drive parameter estimation shrinkage
in a Bayesian model averaging setting \cite{rossell:2017}.
When $\theta_i^*=0$, the shrinkage is $1/n^2$ or $e^{-\sqrt{n}}$ for pMOM and peMOM respectively,
in contrast to $1/n$ for local priors and $1/\sqrt{n}$ for the unregularized MLE,
times a term given by model prior probabilities.

We remark that Conditions A1-A3 for Proposition \ref{prop:bfrates} assume independent and identically distributed (id) errors.
It is possible to relax these conditions, particularly that of id errors.
Loosely speaking, the three main ingredients in the proof are that
$(\hat{\theta}_\gamma,\hat{\alpha}_\gamma,\hat{\vartheta}_\gamma) \stackrel{P}{\longrightarrow} (\theta_\gamma^*,\alpha_\gamma^*,\vartheta_\gamma^*)$
(MLE consistency),
{\textcolor{blue}{
that asymptotically
$P\left(n^{-p_\gamma} |H_k(\tilde{\eta}_\gamma)| \in [c_1,c_2] \right) \longrightarrow 1$ for some constants $c_1>0,c_2>0$,
}}
and that the likelihood ratio statistic between $\gamma^*$ and a supra-model $\gamma$ is bounded in probability.
The MLE and likelihood ratio conditions hold quite generally for non-id errors, in particular the latter is satisfied
whenever its limiting distribution is say a chi-square or mixture of chi-squares.
Regarding $H_k$, under independent but non-id errors the ALaplace model has
$H_2^{-1}= s (X_\gamma^T F_\gamma X_\gamma)^{-1} (X_\gamma^T X_\gamma)(X_\gamma^T F_\gamma X_\gamma)^{-1},$
where $s>0$ is a constant depending on $\alpha$ and $F_\gamma$ an $n \times n$ diagonal matrix accounting for each observation's variance \cite{kocherginsky:2005}. The Laplace model is a particular case of this result.
Under Normal errors the MLE has the non-asymptotic covariance $H_1^{-1}= (X_\gamma^T X_\gamma)^{-1} X_\gamma^T \mbox{Cov}(\epsilon) X_\gamma (X_\gamma^T X_\gamma)^{-1}$, and similarly for the asymmetric least squares criterion implied by the two-piece Normal.
Provided that $\max_{i=1,\ldots,n} \mbox{Var}(\epsilon_i)$ is bounded or grows at a slower-than-polynomial rate with $n$
{\textcolor{blue}{
and the eigenvalues of $n (X_\gamma^T X_\gamma)^{-1}$ lie between two positive constants,
then $P\left(n^{-p_\gamma} |H_k(\tilde{\eta}_\gamma)| \in [c_1,c_2] \right) \longrightarrow 1$ for some $c_1>0,c_2>0$.
Relaxing the independence assumption requires more care,
\textit{e.g.} under very strong dependence $|H_k|$ could grow at a slower rate than $n^{p_\gamma}$.}}
We remark that these observations are simply meant to provide intuition,
obtaining precise conditions for Proposition \ref{prop:bfrates}
under non-iid settings is an interesting question for future research.

From the discussion above model misspecification affects sensitivity via the constant $a_1$.
In our experience, typically there is a loss of power.
Fully characterizing this issue theoretically is complicated as $a_1$ depends
on the unknown data-generating truth, but it is possible to provide some intuition.
Consider an arbitrary variable configuration
$(\gamma_1,\ldots,\gamma_p) \not\subset (\gamma_1^*,\ldots,\gamma_p^*)$ that is missing some truly active variables.
Suppose that, as in Condition A1, truly $\epsilon_i \sim s_0(\epsilon_i)=s(\epsilon_i \mid \xi_0)$ for some
error density family $s(\epsilon_i \mid \xi)$, $\xi \in \Xi$, and fixed $\xi_0 \in \Xi$.
Denote by $L_0(\theta_\gamma,\xi)$ the likelihood under the correct $\epsilon_i \sim s(\epsilon_i \mid \xi)$
and $p_0(y \mid \gamma)= \int L_0(y \mid \theta_\gamma,\xi) p(\theta_\gamma,\xi) d\theta_\gamma d\xi$ the associated
integrated likelihood under some prior $p(\theta_\gamma,\xi)>0$.
The interest is in comparing the correct Bayes factor $p_0(y \mid \gamma^*)/p_0(y \mid \gamma)$
to the misspecified $\hat{p}(y \mid \gamma^*)/\hat{p}(y \mid \gamma)$.
Under fairly general conditions
$$\log (p_0(y \mid \gamma^*)/p_0(y \mid \gamma)) \approx n \mbox{D}_0(p_0(y \mid \theta_\gamma^*,\xi_\gamma^*,\gamma)),$$
plus lower order terms analogous to those in Proposition \ref{prop:bfrates} when $\gamma \not\subset \gamma^*$,
where $\mbox{D}_0(p_0(y \mid \theta_\gamma^*,\xi_\gamma^*,\gamma))$ is the Kullback-Leibler divergence between the data-generating $p_0(y \mid \theta^*_{\gamma^*},\xi_0,\gamma^*)$
and the KL-optimal $p_0(y \mid \theta_\gamma^*,\xi_\gamma^*,\gamma)$ under $\gamma$.
Trivial algebra gives
\begin{align}
\log \left( \frac{p_0(y \mid \gamma^*)/p_0(y \mid \gamma)}{\hat{p}(y \mid \gamma^*)/\hat{p}(y \mid \gamma)} \right) \approx
n \left( \mbox{D}_0(p_0(y \mid \theta_\gamma^*,\xi_\gamma^*,\gamma))
+ \mbox{D}_0(p(y \mid \eta^*_{\gamma^*},\gamma^*))
- \mbox{D}_0(p(y \mid \eta^*_{\gamma},\gamma)) \right).
\label{eq:bf_losspower}
\end{align}
The sign of the right hand side in \eqref{eq:bf_losspower} determines whether the misspecified Bayes factor
has lower or greater asymptotic power than the correct Bayes factor.
A precise study of \eqref{eq:bf_losspower} deserves separate treatment,
but the expression can be loosely interpreted as a type of triangle inequality.
If the divergence due to simultaneously using the wrong error distribution and $\gamma$ instead of $\gamma^*$,
$\mbox{D}_0(p(y \mid \eta^*_{\gamma},\gamma))$,
is smaller than the sum of the divergences due to only using the wrong error distribution plus that of only using $\gamma$ instead of $\gamma^*$.
Then, misspecifiying the error distribution results in slower (but still exponential) Bayes factor rates to detect truly active variables.
To our knowledge there is no guarantee that \eqref{eq:bf_losspower} is positive in general for any
assumed model and data-generating truth, however in all our examples misspecified Bayes factors
exhibited such a loss of power, suggesting that this is often the case.

\subsection{Model exploration}
\label{ssec:modelsearch}

Algorithm \ref{alg:gibbs_modelsearch} describes a novel
Gibbs sampling that can be used when $p_\gamma$ is too large for exhaustive enumeration of all $2^{p_\gamma+2}$ models.
Although conceptually simple, Algorithm \ref{alg:gibbs_modelsearch} extends a method that delivered good results for high-dimensional variable selection
under Normal errors \cite{johnson:2012},
and is designed to spend most iterations in the Normal model whenever it is a good enough approximation.
That is, as illustrated in our examples the computational effort adapts automatically to the nature of the data,
so that the cost associated to abandoning the Normal model is only incurred when this is required to improve inference.
Our implementation also allows the user to fix $(\gamma_{p+1},\gamma_{p+2})$, so that one can condition on Normal,
asymmetric Normal, Laplace or asymmetric Laplace errors whenever this is desired.

The number of iterations $T$ should ideally be large enough for the chain to converge,
see for instance \cite{johnson:2013} for a discussion of formal convergence diagnostics based on coupling methods.
In practice, it usually suffices to monitor some posterior quantities of interest.
For instance, in the setting of variable selection with NLPs \cite{rossell:2017}
found useful to set $T$ large enough so that sampling-based estimates of $p(\gamma_j=1 \mid y)$
are close enough to estimates based on renormalizing posterior probabilities across the models visited so far.

\begin{algorithm}
{\bf Gibbs model space search.}\label{alg:gibbs_modelsearch}
\begin{enumerate}
\item Let $\gamma_{p+1}^{(0)}=\gamma_{p+2}^{(0)}=0$ and set $\gamma_1^{(0)},\ldots,\gamma_p^{(0)}$
using the greedy forward-backward initialization algorithm in \cite{johnson:2012}.
Set $t=1$.

\item For $j=1,\ldots,p$, update $\gamma_j^{(t)}=1$ with probability
$$
\frac{p(\gamma_1^{(t)},\ldots,\gamma_{j-1}^{(t)},1,\gamma_{j+1}^{(t-1)},\ldots,\gamma_p^{(t-1)} \mid y)}
{\sum_{\gamma_j=0}^{1} p(\gamma_1^{(t)},\ldots,\gamma_{j-1}^{(t)},\gamma_j,\gamma_{j+1}^{(t-1)},\ldots,\gamma_p^{(t-1)} \mid y)}.
$$

\item Update $(\gamma_{p+1}^{(t)},\gamma_{p+2}^{(t)})= (l,m)$ with probability
$$
\frac{p(\gamma_1^{(t)},\ldots,\gamma_p^{(t)},l,m \mid y)}
{\sum_{\gamma_{p+1}=0}^{1} \sum_{\gamma_{p+2}=0}^{1} p(\gamma_1^{(t)},\ldots,\gamma_p^{(t)},\gamma_{p+1},\gamma_{p+2} \mid y)}.
$$
If $t \leq T$, set $t=t+1$ and go back to Step 2, otherwise stop.
\end{enumerate}
\end{algorithm}

\section{Results}
\label{sec:results}

We studied via simulations the practical implications of model misspecification on variable selection,
both on small and large $p$ (Sections \ref{ssec:lowdim_sim}-\ref{ssec:highdim_sim}),
as well as the ability of our framework to detect asymmetries ($\gamma_{p+1}=1$) and heavier-than-normal tails ($\gamma_{p+2}=1$).
{\textcolor{blue}{The heteroscedastic errors simulation in Section \ref{ssec:nonid_sim} and}}
the DLD example in Section \ref{ssec:dld} also illustrates how to perform quantile regression for multiple fixed quantile levels as a particular case of our framework.

Computations were carried out using function modelSelection in R package mombf 1.9.2 \cite{mombf:2016},
using default prior settings (Section \ref{sec:prior}) and Laplace approximations to $p(y \mid \gamma)$ unless otherwise stated.
Although our goal is to build a Bayesian framework to cope with simple departures from normality,
for comparison we included some penalized likelihood methods with available R implementation:
standard LASSO penalties on least squares regression (LASSO-LS, \cite{tibshirani:1996}),
LASSO penalties on least absolute deviation (LASSO-LAD, \cite{wanglang:2009}), SCAD penalties on least squares \cite{fan:2001},
and LASSO penalties on quantile regression (LASSO-QR, \cite{wu:2009}).
For LASSO-LS, LASSO-LAD, LASSO-QR and SCAD we set the penalization parameter with 10-fold cross-validation
using functions mylars, rq.lasso.fit and ncvreg in R packages parcor 0.2.6, rqPen 1.5.1 and ncvreg 3.4.0 (respectively) with default parameters.
LASSO-LAD corresponds to setting the 0.5 quantile in rq.lasso.fit, whereas for LASSO-QR we set the optimal quantile $(1+\alpha)/2$ where
$\alpha$ is the data-generating truth. That is, we performed a conservative comparison where results for LASSO-QR may be slightly optimistic.
All R code is provided in supplementary files.

\subsection{Low-dimensional simulation}
\label{ssec:lowdim_sim}

\begin{figure}
\begin{center}
\begin{tabular}{cc}
$\epsilon_i \sim $ Normal & $\epsilon_i \sim $ ANormal \\
\includegraphics[width=0.48\textwidth]{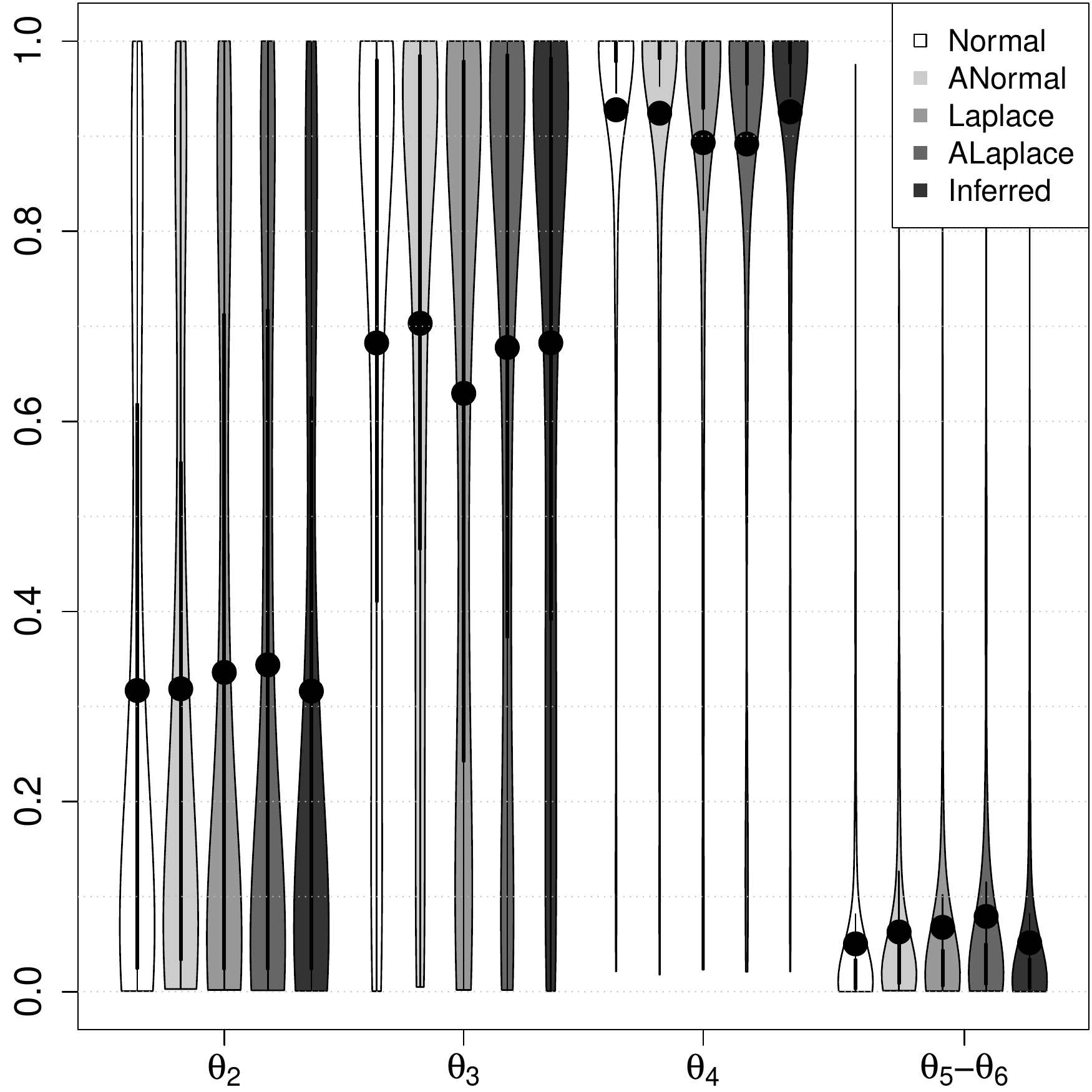} &
\includegraphics[width=0.48\textwidth]{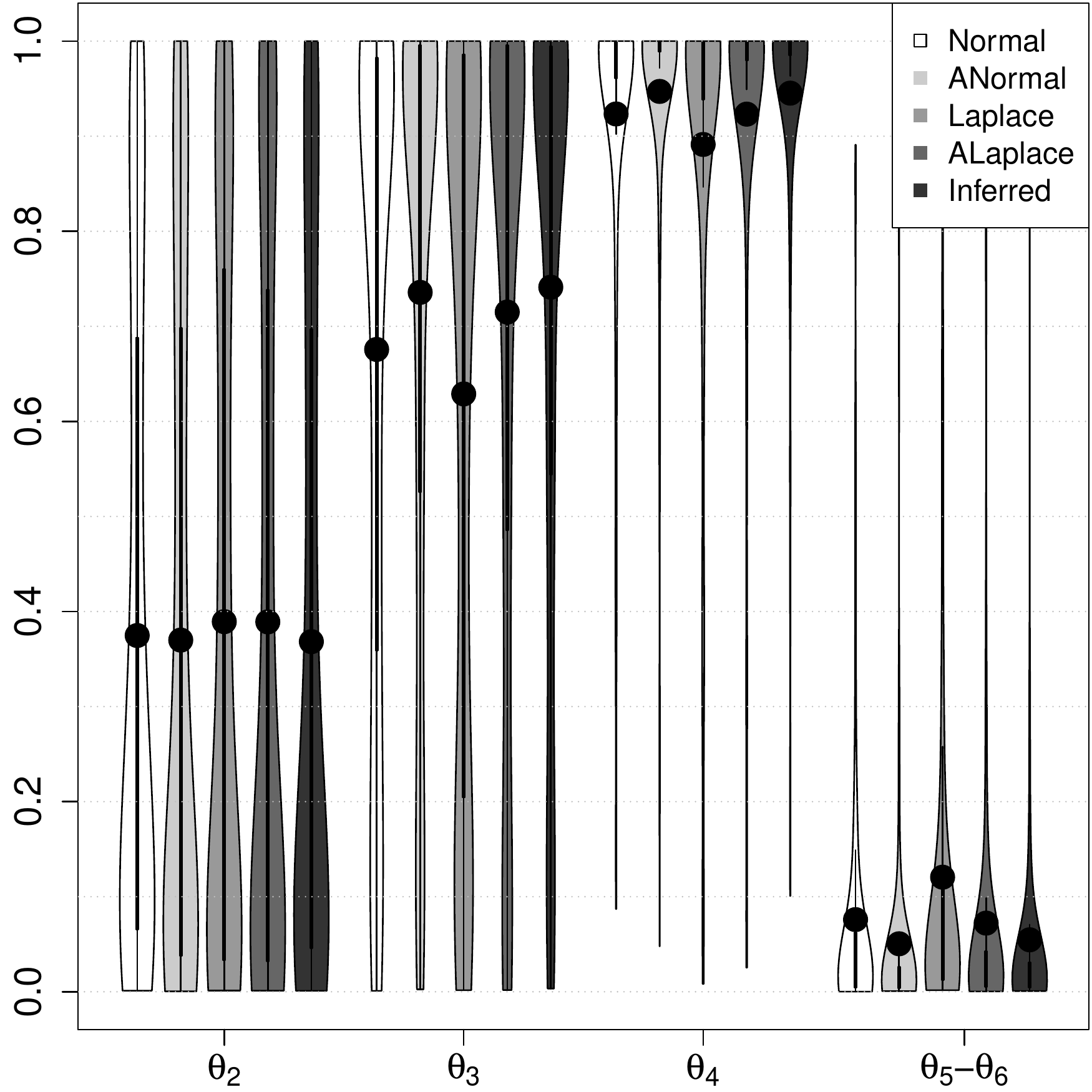} \\
$\epsilon_i \sim $ Laplace & $\epsilon_i \sim $ ALaplace \\
\includegraphics[width=0.48\textwidth]{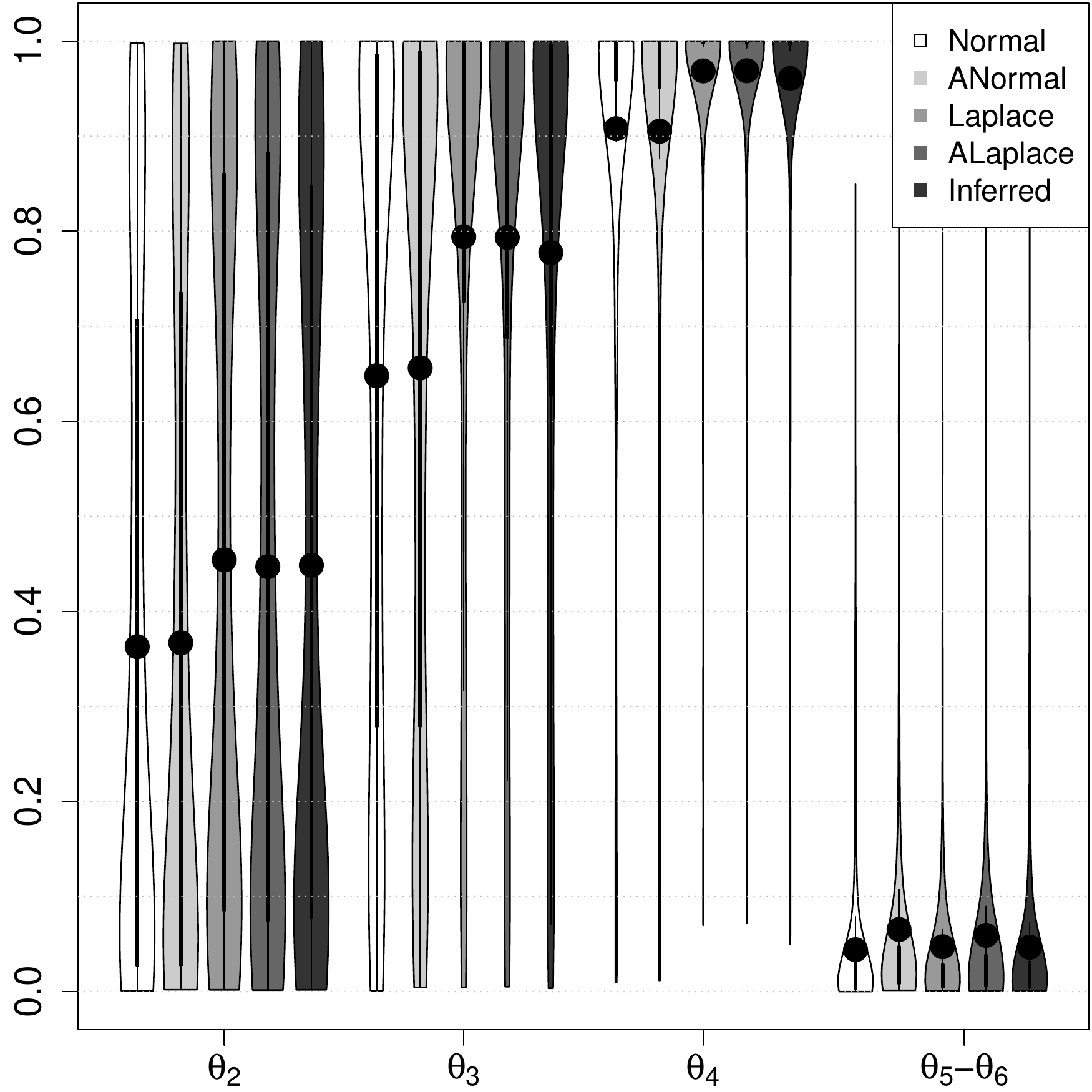} &
\includegraphics[width=0.48\textwidth]{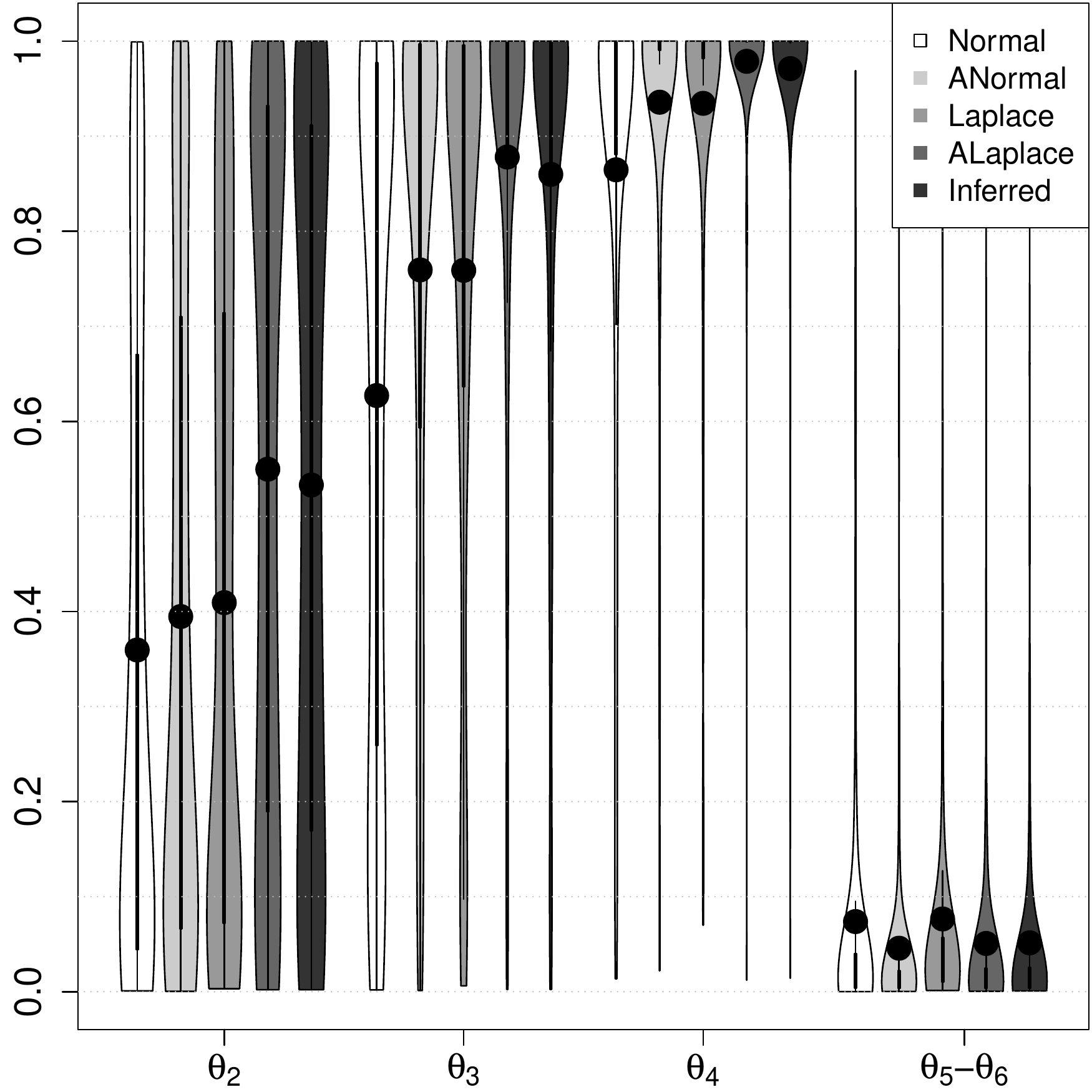} \\
\end{tabular}
\end{center}
\caption{$P(\theta_i \neq 0 \mid y)$ for simulation with constant $\vartheta=2$, $\alpha=0,0.5$.
$P(\theta_i \neq 0 \mid y)$ for $p=6$, $\theta=(0,0.5,1,1.5,0,0)$,
$n=100$, $\rho_{ij}=0.5$. Black circles show the mean.}
\label{fig:simres_margpp}
\end{figure}

We started by simulating 200 data sets from a linear model with Normal residuals, each with
$n=100$, $p=6$, $\theta=(0,0.5,1,1.5,0,0)$ ($\theta_1=0$ corresponds to the intercept), $\vartheta=2$.
Covariate values were generated from a multivariate Normal centered at 0,
with unit variances and all pairwise correlations $\rho_{ij}=0.5$.
We compared the results under assumed Normal, asymmetric Normal, Laplace and asymmetric Laplace errors,
and also when inferring the residual distribution with our framework (Section \ref{sec:modsel}).
Throughout, we used MOM priors with default $g_{\theta}=0.348$, $g_{\alpha}=0.357$ and uniform model probabilities $p(\gamma) \propto 1$.
Given that $p$ is small we enumerated and computed $p(\gamma \mid y)$ for all models.
Figure \ref{fig:simres_margpp} (top left) shows the marginal posterior probabilities $p(\gamma_j=1 \mid y)$.
These were almost identical under assumed Normal and asymmetric Normal errors.
Both models were preferrable to Laplace or asymmetric Laplace errors, mainly in giving higher $p(\gamma_j=1 \mid y)$ for truly active variables.

We repeated the simulation study, this time generating $\epsilon_i \sim \mbox{AN}(0,2,-0.5)$,
$\epsilon_i \sim L(0,2)$ and finally $\epsilon_i \sim\mbox{AL}(0,2,-0.5)$.
Here we observed more marked differences than under $\epsilon_i \sim N(0,2)$,
specifically failing to account for thick tails caused a substantial drop in $p(\gamma_j=1 \mid y)$ for truly active predictors.
As an example, when truly $\epsilon_i \sim \mbox{AL}(0,4,-0.5)$ the mean $p(\theta_3 \neq 0 \mid y)$ increased from 0.63
under assumed Normal errors to 0.89 under asymmetric Laplace errors.
These results suggest that wrongly assuming Normal errors may has more pronounced consequences
on inference than using more robust error distributions.
Interestingly, including asymmetry in the model had no noticeable adverse effects on inference
even when residuals were truly symmetric, and improved power when residuals were truly asymmetric.
Hence the reasoning for adopting symmetric models seems mostly computational.

Our framework based on inferring $(\gamma_{p+1},\gamma_{p+2})$ showed a highly competitive behaviour,
usually fairly close to assuming the true distribution (Figure \ref{fig:simres_margpp}).
The mean posterior probability assigned to the true error distribution was always $>0.8$ (Supplementary Table \ref{tab:perror_p5}),
indicating that the desired departures from normality were effectively detected.

We repeated all the analyses above first using Monte Carlo estimates of $p(y \mid \gamma)$ based on $B=10,000$ importance samples,
and then again using our alternative default $g_\alpha=0.087$.
Supplementary Table \ref{tab:perror_p5} shows that inference on the error distribution remained remarkably stable,
albeit as expected reducing $g_\alpha=0.357$ to $0.087$ increases slightly $p(\alpha \neq 0 \mid y)$ in all settings.
Supplementary Figures \ref{fig:simres_margpp_priorskew1}-\ref{fig:simres_margpp_mc} show $p(\gamma_j=1 \mid y)$.
These are virtually indistinguishable from those in Figure \ref{fig:simres_margpp},
indicating that the results are robust to these implementation details.

Finally, we assessed the behaviour of the least-squares initialization in Algorithms \ref{alg:mle_lma}-\ref{alg:mle_cda} under different data-generating mechanisms, specifically in terms of CPU times.
Table \ref{tab:runtime_sim1} gives mean times across $10,000$ independent simulations with $p=6$ and increasing data-generating truths $\alpha^*=0, -0.25,-0.5,-0.75$, both for two-piece Normal and two-piece Laplace errors. These are for the whole model-fitting process, including exhaustive model enumeration and computation of posterior model probabilities.
The time increases were of roughly 25\% from $\alpha^*=0$ to $\alpha^*=-0.75$. This is as expected, under asymmetry least-squares is a poorer initial $\hat{\btheta}^{(0)}$. The increase is however mild, indicating that a larger fraction of the computation cost arises from other operations (e.g. matrix inversion after the mode has been found). These results support that our $\hat{\btheta}^{(0)}$ is not particularly problematic. One could certainly consider alternative $\hat{\btheta}^{(0)}$, say median regression or trimmed least squares, but these are typically costlier that least-squares hence the overall gains are likely to be moderate at best.

\subsection{Non-identically distributed errors}
\label{ssec:nonid_sim}

\begin{figure}
\begin{center}
\begin{tabular}{cc}
$\epsilon_i \sim $ Normal & $\epsilon_i \sim $ ANormal \\
\includegraphics[width=0.48\textwidth]{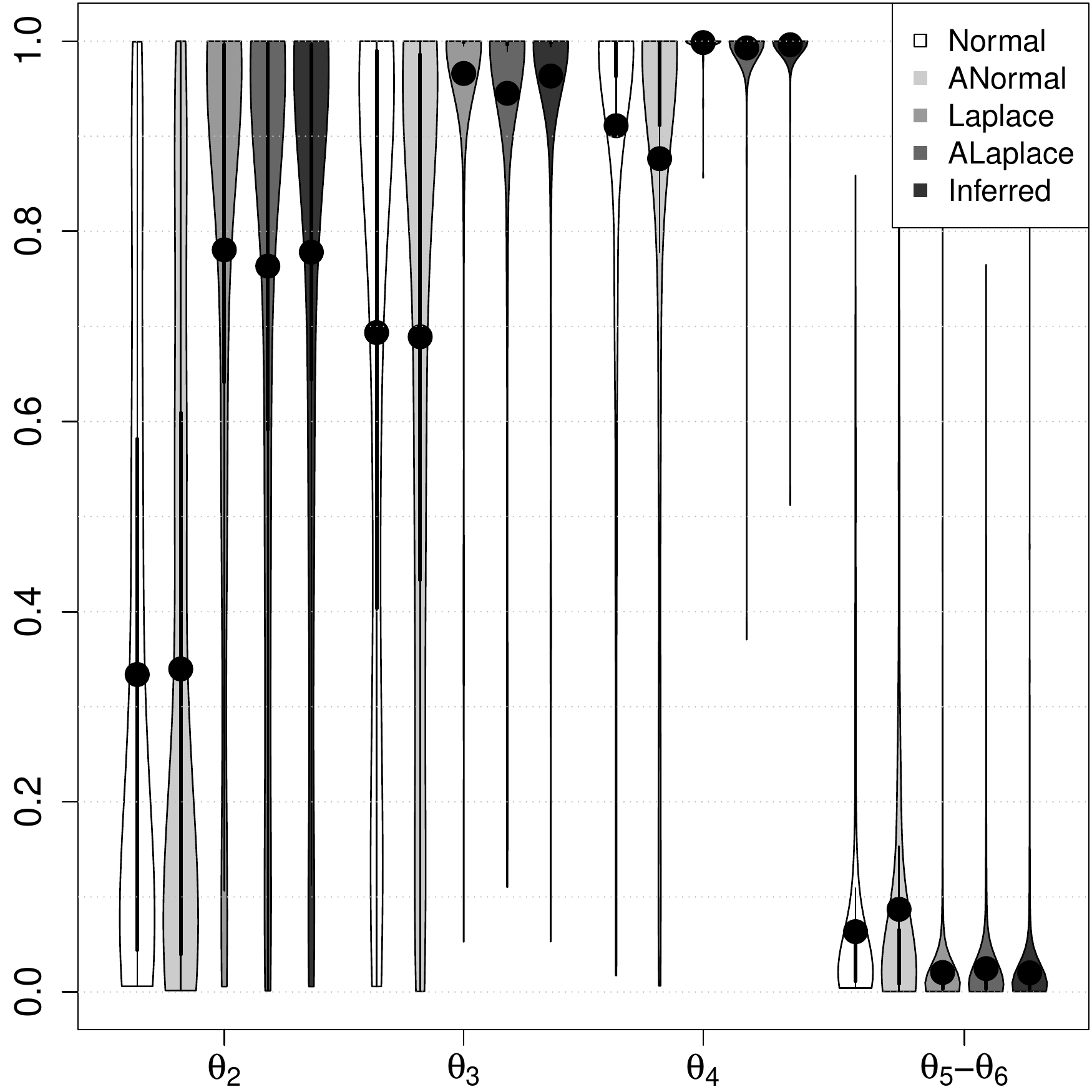} &
\includegraphics[width=0.48\textwidth]{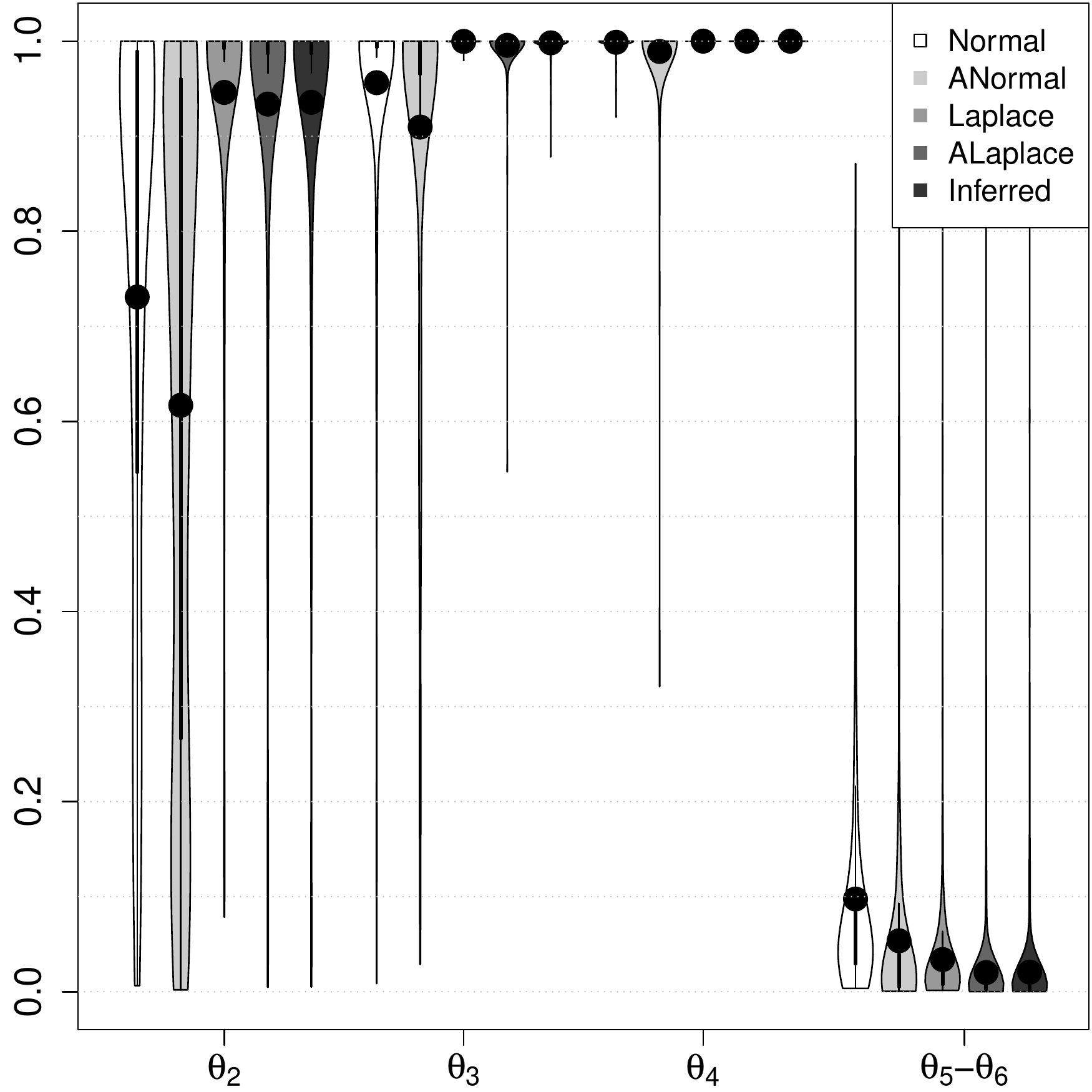} \\
$\epsilon_i \sim $ Laplace & $\epsilon_i \sim $ ALaplace \\
\includegraphics[width=0.48\textwidth]{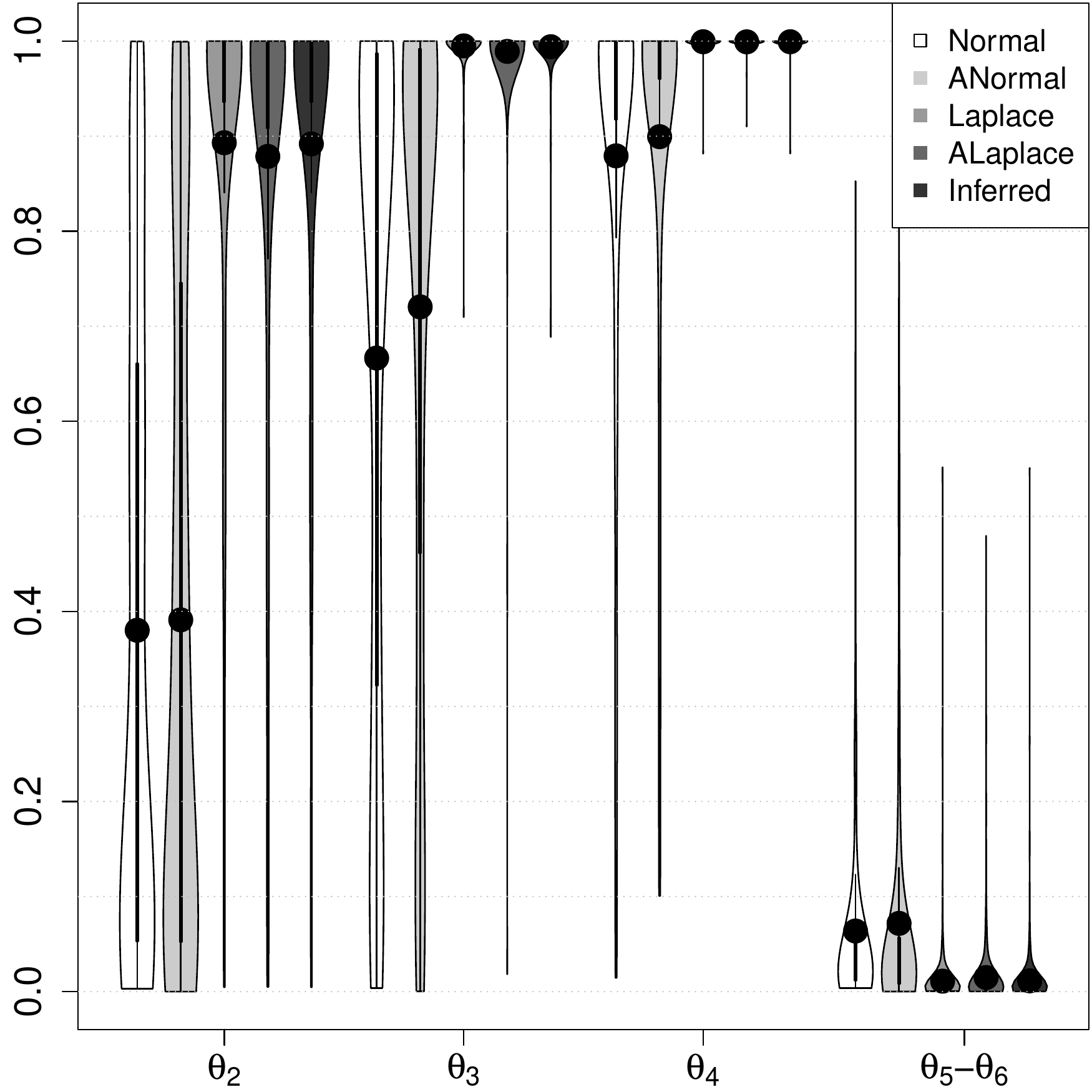} &
\includegraphics[width=0.48\textwidth]{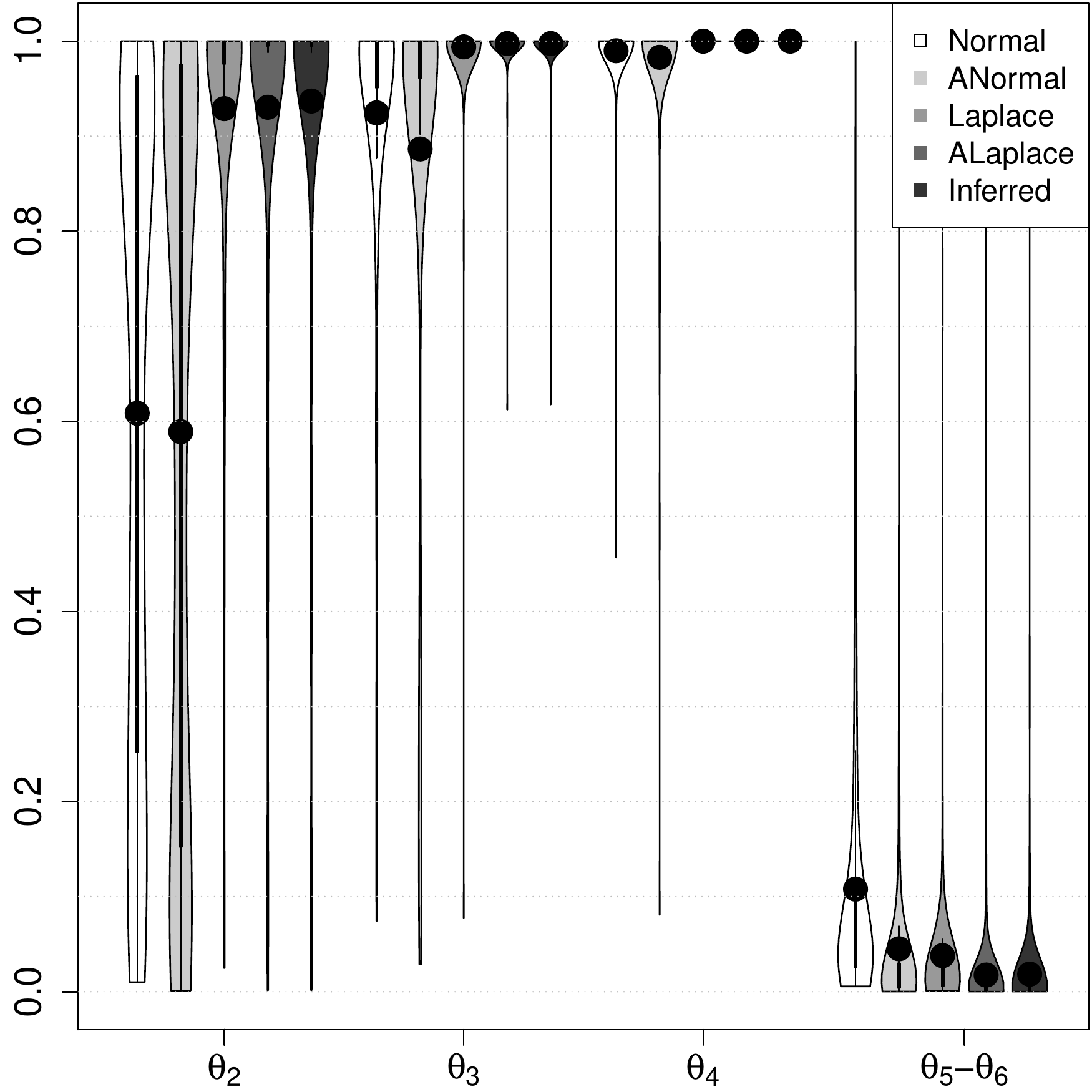} \\
\end{tabular}
\end{center}
\caption{$P(\theta_i \neq 0 \mid y)$ for simulation with $\vartheta_i \propto e^{x_i^T \theta}$,
constant $\alpha=0,-0.5$. $p=6$, $\theta=(0,0.5,1,1.5,0,0)$,
$n=100$, $\rho_{ij}=0.5$. Black circles show the mean.}
\label{fig:simres_hetsk_margpp}
\end{figure}

We investigate the effect of deviations from the identically distributed errors assumption.
We repeated the simulations in Section \ref{ssec:lowdim_sim}
under heteroscedastic and hetero-asymmetric errors,
and reproduced a pathological example reported by \cite{gruenwald:2014}.
Under heteroscedasticity, we set $\tilde{\epsilon}_i= e^{x_i^T \theta} \epsilon_i/c$
where $c$ was set such that $\mbox{Var}(\tilde{\epsilon}_i)=\mbox{Var}(\epsilon_i)$,
so that the signal-to-noise was comparable to our earlier simulations.
{\textcolor{blue}{This example mimics that used by \cite{koenker:2005} (Figure 1.6) to illustrate
the potential interest of conditioning upon multiple quantile levels,}}
except that ours has a stronger (exponential) association between mean and variance.
{\textcolor{blue}{We first apply our framework without conditioning on $\alpha$.}}
Figure \ref{fig:simres_hetsk_margpp} shows $P(\gamma_j=1 \mid y)$ for $p=6$.
The main feature is that the Laplace and asymmetric Laplace models clearly outperform
the Normal model both in sensitivity and specificity.
For instance, when truly $\theta_2^*=0.5$ the mean $P(\gamma_2=1 \mid y)$ increased from 0.33 to 0.78
under assumed Normal and Laplace residuals respectively.
The mean for truly inactive $\theta_5^*=\theta_6^*=0$ decreased from 0.063 to 0.021.
Interestingly, inferring the error model chose Laplace errors even when these were truly Normal and
showed a highly competitive performance (Supplementary Table \ref{tab:infererror_heterosk}).
Intuitively, heteroscedasticity gives an overabundance of residuals at the origin and at the tails relative to a homoscedastic Normal.
Such errors are better captured by a Laplace model.

{\textcolor{blue}{
Next, following \cite{koenker:2005} we assessed the performance of quantile regression at fixed quantile levels $q=0.05,0.25,0.75,0.95$.
The usual motivation for conditioning upon multiple quantiles is to consider that each quantile
could potentially depend on a different subset of predictors.
This corresponds to conditioning upon asymmetric Laplace errors and fixed $\alpha=2q-1$ (Section \ref{sec:loglikelihood}).
The marginal posterior inclusion probabilities in Table \ref{tab:multiplequantile_heterosk} show that $q=0.5$
(the KL-optimal value) led to substantially higher sensitivity than say $q=0.05$ or $q=0.95$.
We remark that
under our heteroscedastic data-generating truth the $q^{th}$ conditional quantile is $\bx_i^T\btheta + z_q \sqrt{e^{\bx_i^T\btheta} /c}$
where $z_q$ is the $q^{th}$ standard Normal quantile.
The results illustrate that, in this and similar situations where all quantiles depend on the same subset of variables,
inferring $\alpha$ can lead to better variable selection than conditioning upon poor choices of $\alpha$.
Naturally, under more complex scenarios where quantiles do depend on different variable subsets,
conditioning upon multiple $\alpha$ can provide a richer description of the dependence of $y_i$ on $\bx_i$.
}}

Our second simulation scenario considered the presence of non-constant asymmetry.
Specifically, we generated $\mbox{tanh}(\alpha_i) \sim N(\mbox{atanh}(\bar{\alpha}),1/4^2)$
where the median asymmetry is $\bar{\alpha}=0,-0.5$ as before.
Under this setting when $\bar{\alpha}=0$ then $\alpha_i \in (-0.45,0.45)$ with 0.95 probability
and when $\bar{\alpha}=-0.5$ it is $(-0.78,-0.06)$, i.e. there is substantial variation in asymmetry.
Supplementary Figure \ref{fig:simres_hetsym_margpp} displays $P(\gamma_j=1 \mid y)$ for $p=6$.
These results are qualitatively similar to those in Figure \ref{fig:simres_margpp}
where $\alpha_i$ was held fixed.
We remark that although in these examples non-constant asymmetry was not a concern,
its impact could be more serious in other settings, \textit{e.g.}~under strong dependencies between the asymmetry and the mean.
See Section \ref{sec:conclusion} for some further discussion.

Finally, we mimic the example in \cite{gruenwald:2014}, Section 5.1.2.
The authors set $(y_i,x_{i1},\ldots,x_{ip})=(0,0,\ldots,0)$ with probability 0.5
and $y_i= x_i^T \theta^* + \epsilon_i$ with probability 0.5,
where $x_{ij} \sim N(0,1)$, $\theta^*=(0.1,0.1,0.1,0.1,0.1,0,\ldots,0)$ and $\epsilon_i \sim N(0,\vartheta)$.
This extreme case of non-id errors is interesting in that the degeneracy at the origin
results in inliers, rather than the more commonly considered outliers in $y_i$ or leverage points in $x_i$.
We selected variables under assumed Normal errors for $p=n=50$,
for this $(n,p)$ the authors reported a particularly large inflation of false positives (as $n \rightarrow \infty$ these disappeared).
Specifically we set $\vartheta^*=2$, Zellner's $p(\theta_\gamma \mid \gamma)= N(\theta_\gamma; 0, n (X_\gamma^T X_\gamma)^{-1})$
and the Beta-Binomial(1,1) prior for $p(\gamma)$.
The posterior mode selected a striking 21.3 out of the 45 spurious variables (mean across 100 independent simulations),
confirming their findings (Supplementary Table \ref{tab:gruenwald}).
Under a pMOM prior the mean false positives decreased to 12.1  when
conditioning on Normal errors and further to 10.5 when inferring the error model.
Interestingly under the peMOM prior and Normal errors the mean false positives were only 2.9.
All methods showed similar sensitivity, selecting roughly 3 out of the 5 active variables.
This example illustrates that, while serious model misspecification can have marked effects for finite $n$,
these can be partially mitigated by adopting priors that penalize small coefficients
and flexible error models. In this particular example the exponential peMOM penalties
were more effective than the pMOM penalties in lowering false positives.

\subsection{High dimensional simulation}
\label{ssec:highdim_sim}

\begin{figure}
\begin{center}
\begin{tabular}{cc}
$\epsilon_i \sim N(0,4)$ & $\epsilon_i \sim \mbox{AN}(0,4,-0.5)$ \\
\includegraphics[width=0.48\textwidth]{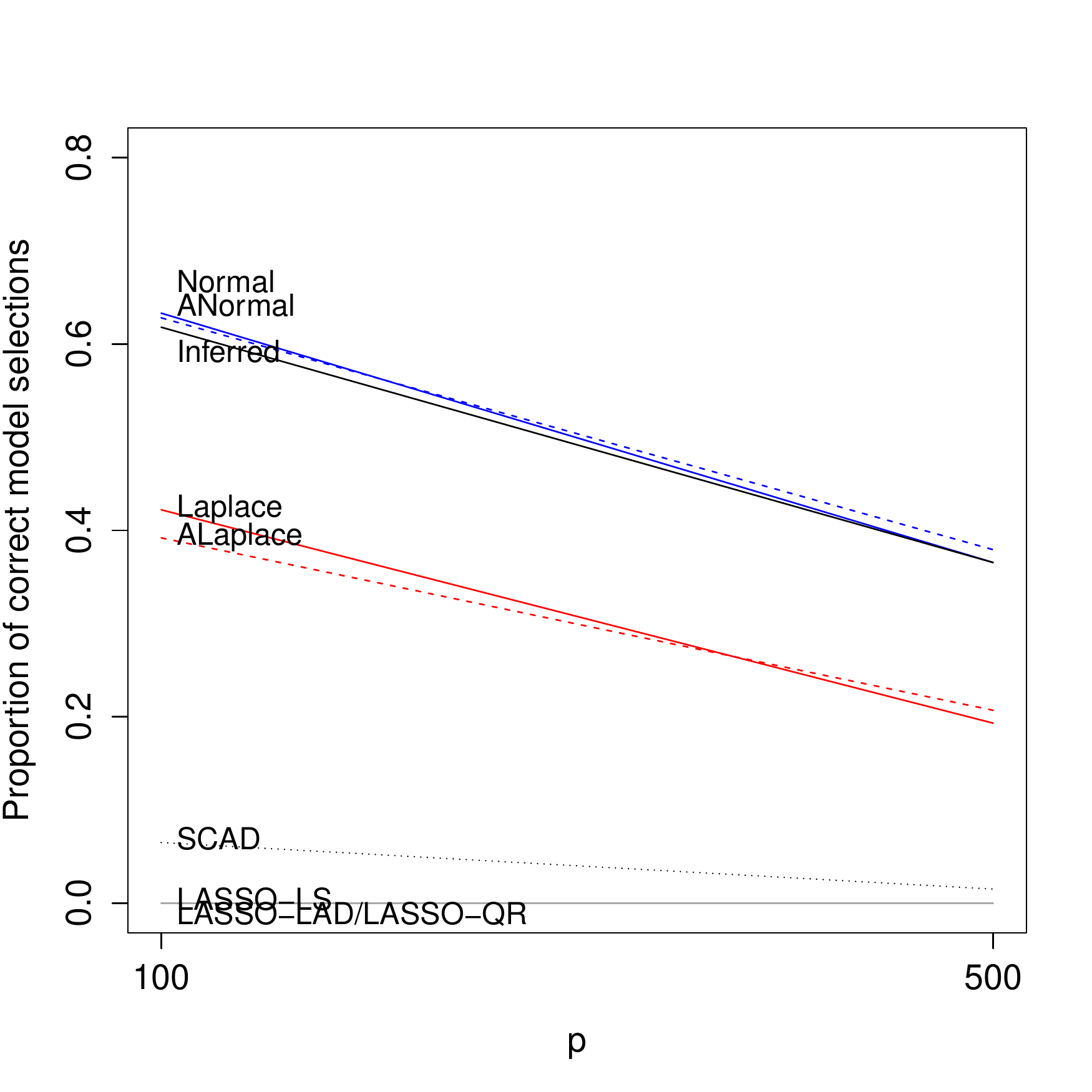} &
\includegraphics[width=0.48\textwidth]{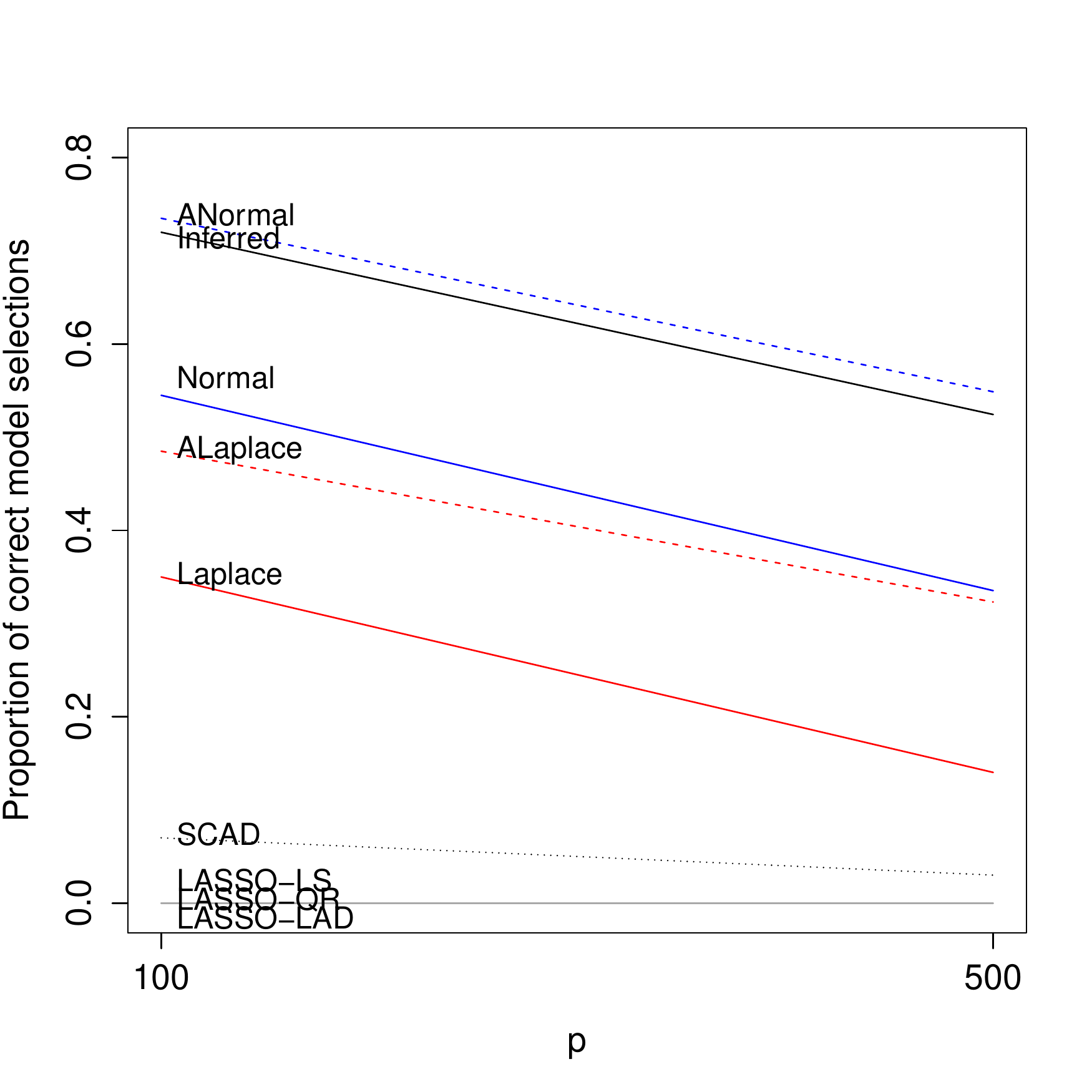} \\
$\epsilon_i \sim L(0,4)$ & $\epsilon_i \sim \mbox{AL}(0,4,-0.5)$ \\
\includegraphics[width=0.48\textwidth]{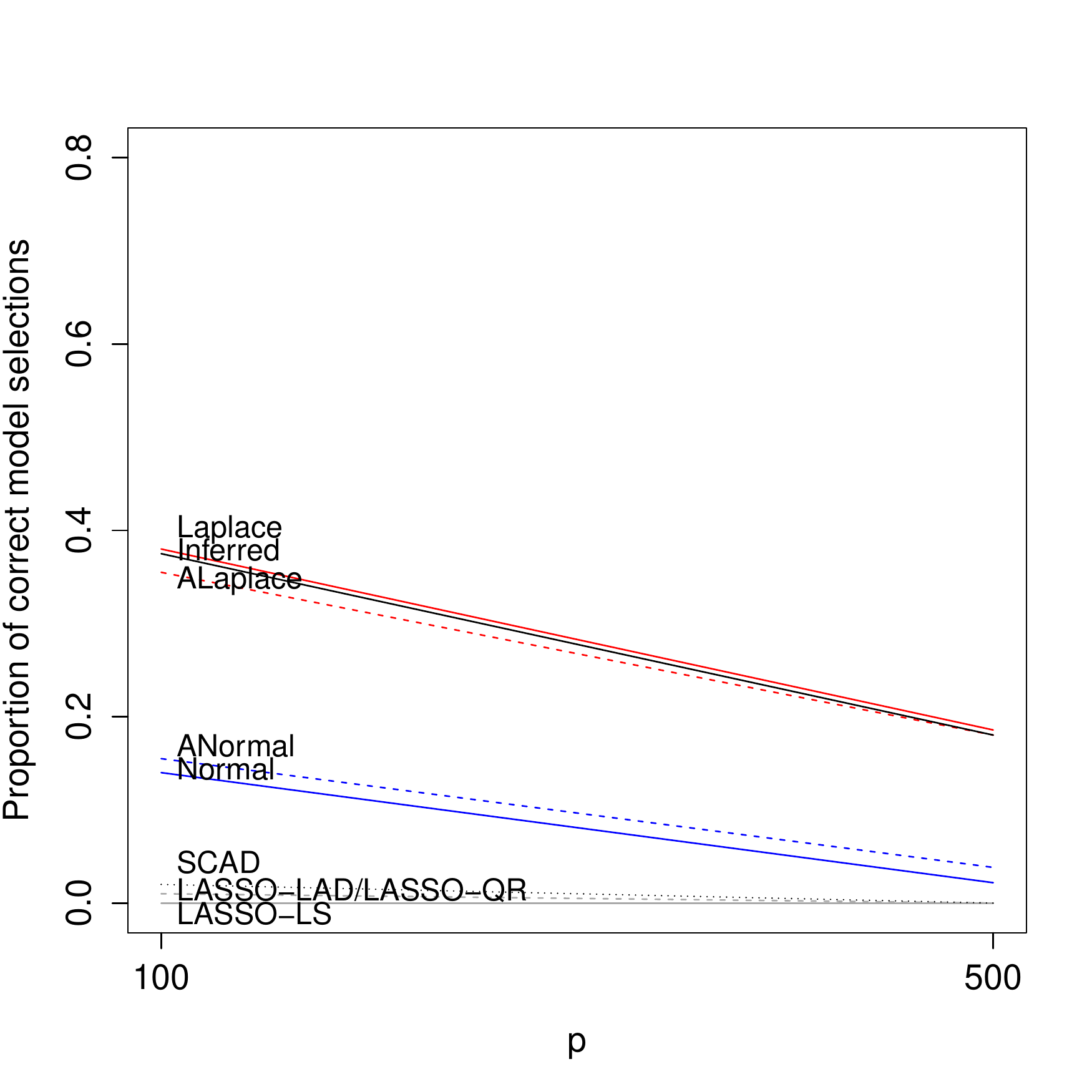} &
\includegraphics[width=0.48\textwidth]{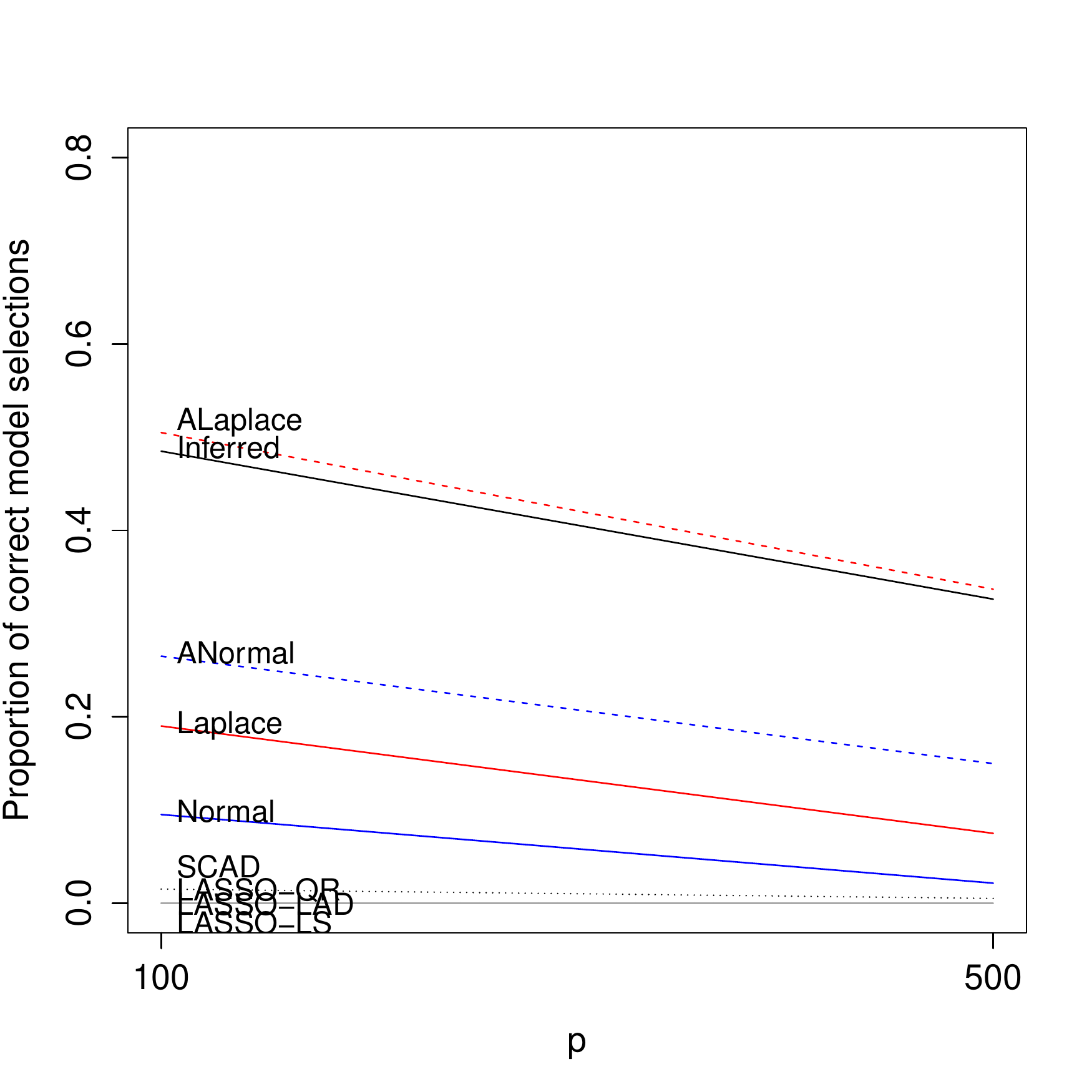} \\
\end{tabular}
\end{center}
\caption{Proportion of correct model selections $p(\widehat{\gamma}= \gamma_0 \mid y)$.
$\vartheta=1$, $\theta=(0,0.5,1,1.5,0,\ldots,0)$, $n=100$, $\rho_{ij}=0.5$.}
\label{fig:pcorrect_vtheta1}
\end{figure}

We repeated the simulation study in Section \ref{ssec:lowdim_sim}
with $\theta=(0,0.5,1,1.5,0,\ldots,0)$ by adding 95 spurious predictors for a total of $p=100$ covariates,
and subsequently 400 more spurious predictors for a total $p=500$.
Given that the model space is too large for a full enumeration,
we run the Gibbs algorithm in Section \ref{ssec:modelsearch} with $T=10,000$ iterations.
To initialize the chain we used the greedy Gibbs algorithm from \cite{johnson:2012},
which starts at $\gamma=(0,\ldots,0)$ and keeps adding or removing individual covariates until a local mode is found.
We set $p(\gamma)$ to the default Beta-Binomial(1,1) and left all other settings as in Section \ref{ssec:lowdim_sim}.

We conducted one first set of simulations under $\vartheta=1$.
Figure \ref{fig:pcorrect_vtheta1} shows the proportion of simulations in which the posterior mode
$\widehat{\gamma}= \arg\max_\gamma p(\gamma \mid y)$ was equal to the simulation truth $\gamma_0=(0,1,1,1,0,\ldots,0)$.
The main finding was that assuming the wrong error distribution had a marked detrimental impact on Bayesian variable selection,
particularly in the presence of asymmetries or thicker-than-normal tails.
Supplementary Table \ref{tab:simres_vtheta1} gives the exact figures, as well as the number of false and true positives.
All Bayesian formulations compared favourably to LASSO-LS, LASSO-LAD, LASSO-QR and LASSO-SCAD,
mainly due to the latter incurring an inflated number of false positives.
This is in agreement with earlier findings \cite{johnson:2012,rossell:2017} when comparing NLPs to penalized likelihoods,
and likely partially related to the fact that cross-validation focuses on predictive ability and thus tends to favour the
inclusion of a few spurious covariates.
Interestingly, in our study LASSO-LAD showed little advantages over LASSO-LS, even under truly Laplace errors.
LASSO-QR did improve slightly upon LASSO-SCAD when truly $\alpha^* \neq 0$ both in sensitivity and specificity.
Analogously to the $p=6$ case in Figure \ref{fig:simres_margpp}, when $p=101,501$ the marginal inclusion probabilities
for truly active variables suffered a drop when ignoring the presence of asymmetry or heavy tails
(Supplementary Figures \ref{fig:simres_margpp_p100_vartheta1}-\ref{fig:simres_margpp_p500_vartheta1}).
Our framework to infer the error distribution delivered highly competitive inference.

Supplementary Table \ref{tab:runtime_sim} indicates CPU times for $p=100$.
The Normal model exhibited lower times under truly Normal or Laplace errors,
likely due to the availability of closed-form expressions for $p(\gamma \mid y)$.
The presence of asymmetry encouraged the inclusion of an intercept term under the Normal model,
the associated increase in model dimension cancelled the computational savings.
Times for our inferred residuals framework were highly competitive under all scenarios.

To emulate a situation with lower signal-to-noise ratio we repeated the simulation study under $\vartheta=2$.
The results are shown in Supplementary Table \ref{tab:simres_vtheta2}
and Supplementary Figures \ref{fig:simres_margpp_p100_vartheta2}-\ref{fig:simres_margpp_p500_vartheta2}.
Briefly, the performance of all methods suffered in this more challenging setting
due to a drop in the power to detect truly active predictors,
however their relative performances were largely analogous to those for $\vartheta=1$.

\subsection{TGFB data}
\label{ssec:tgfb}

\begin{figure}
\begin{tabular}{cc}
\includegraphics[width=0.5\textwidth]{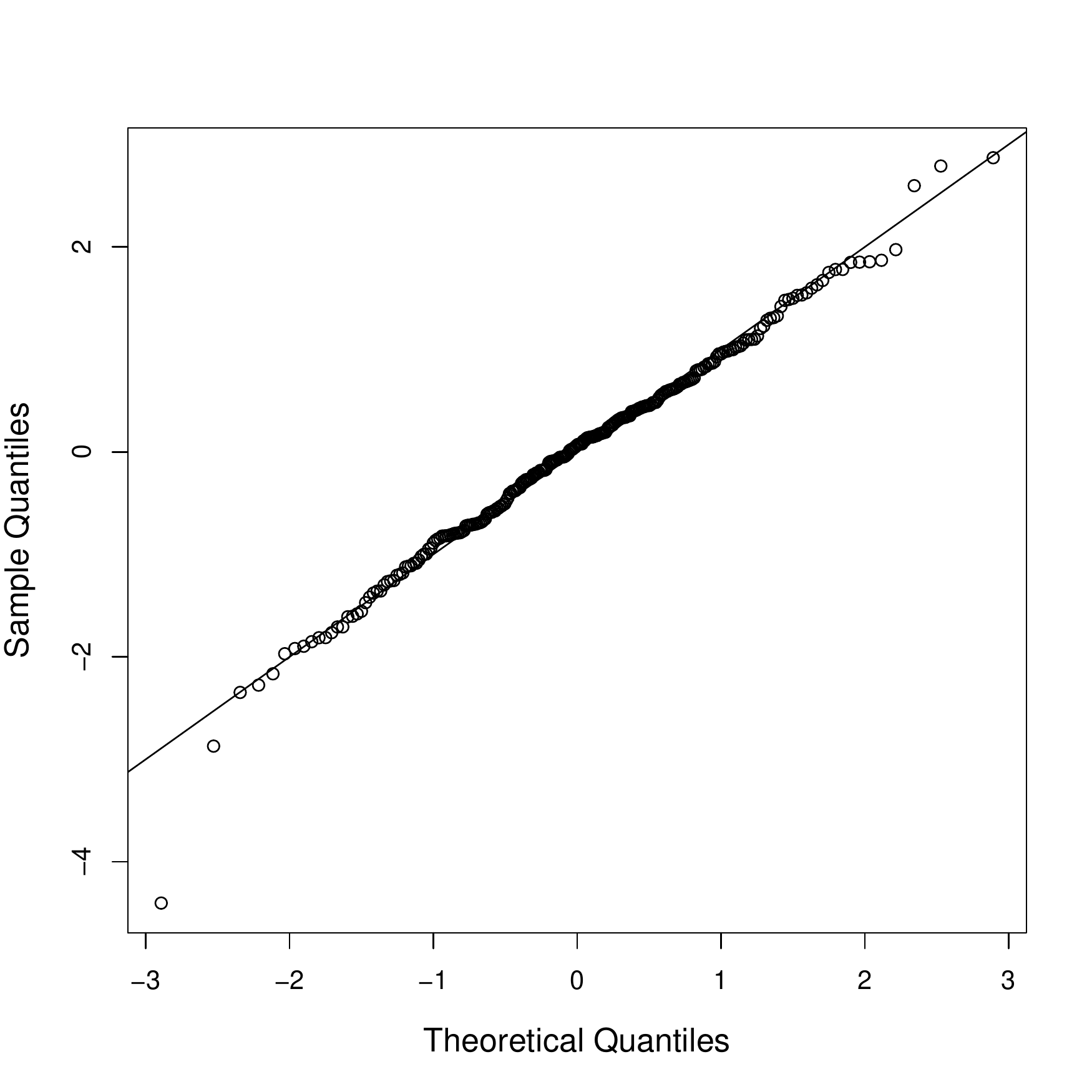} &
\includegraphics[width=0.5\textwidth]{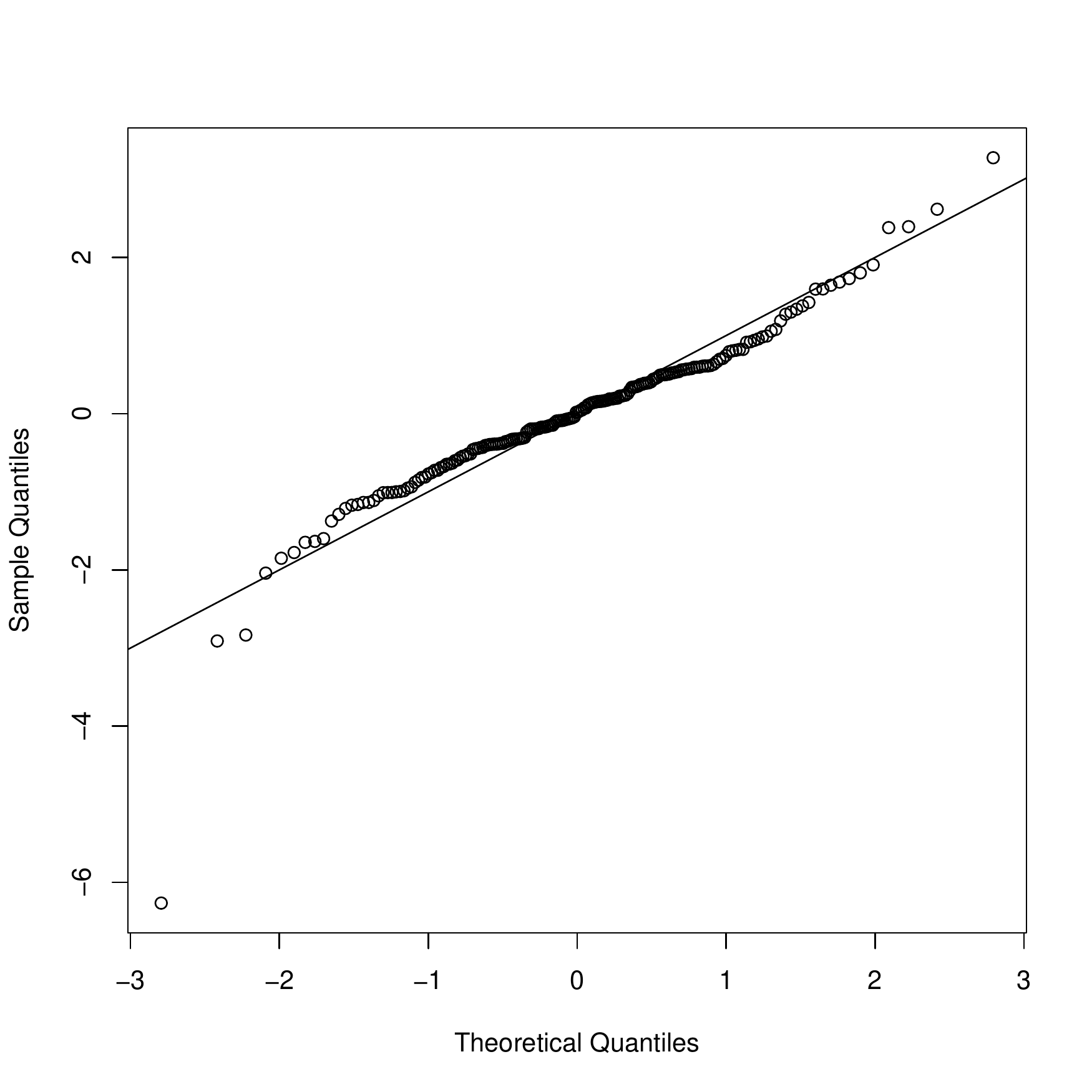} \\
\end{tabular}
\caption{QQ Normal plot for TGFB (left) DLD (right) data.}
\label{fig:qqnorm}
\end{figure}

We illustrate our methodology with the human microarray gene expression data in colon cancer patients from \cite{calon:2012}.
Briefly, following upon \cite{rossell:2017}, we aim to detect which
amongst $p=$10,172 candidate genes have an effect on the
expression levels of TGFB, a gene known to play an important role in colon cancer progression.
These data contain moderately correlated covariates with absolute
Pearson correlations ranging in (0,0.956) and 99\% of them being in the interval (0,0.375).
Both response and predictors were standardized to zero mean and unit variance.
The dataset and further information are provided in \cite{rossell:2017}.


\begin{table}
\begin{center}
\begin{tabular}{|l|cc|} \hline
Gene symbol   & \multicolumn{2}{|c|}{$p(\gamma \mid y)$} \\
             & Normal & Inferred \\ \hline
  ARL4C,AOC3,URB2,FAM89B,PCGF2,CCDC102B & 0.299 & 0.304 \\
            ARL4C,CNRIP1,AOC3,PCGF2 & 0.165 & 0.167 \\
                 ARL4C,CNRIP1,PCGF2 & 0.161 & 0.163 \\
       ARL4C,CNRIP1,AOC3,PCGF2,RPS6KB2 & 0.045 & 0.046 \\
            ARL4C,AOC3,PCGF2,CCDC102B & 0.028 & 0.028 \\
       ARL4C,AOC3,FAM89B,PCGF2,CCDC102B & 0.025 & 0.025 \\ \hline
\end{tabular}
\end{center}
\caption{TGFB data. Highest probability models under Normal and inferred error distribution.}
\label{tab:top6_tgfb}
\end{table}

We start by considering inference under the Normal model, i.e. conditional on $\gamma_{p+1}=\gamma_{p+2}=0$.
We ran 1,000 Gibbs iterations ({\it i.e.}~$10^3\times 10,172$ model updates),
which was deemed sufficient for practical convergence (see supplementary material in \cite{rossell:2017}). 
Table \ref{tab:top6_tgfb} shows the highest posterior probability models.
The top model included the 6 genes ARL4C, AOC3, URB2, FAM89B, PCGF2, CCDC102B
and had an estimated $p(\gamma \mid y)=0.299$.
Alternatively, selecting variables with marginal $p(\gamma_j=1 \mid y)>0.5$
\cite{barbieri:2004} returned 5 out of these 6 genes ($p(\gamma_j=1 \mid y)=0.482$ for FAM89B).
Briefly, according to genecards.org FAM89B is a TGFB regulator, ARL4C and PCGF2 have been related to various cancer types and
AOC3 is used to alleviate cancer symptoms, reinforcing the plausibility that these genes may be indeed related to TGFB.
URB2 and CCDC102B have no known relation to cancer, although the latter is connected to ARL4D in the STRING interaction networks.

We next considered the possibility that the Normal model might not be adequate for these data.
As an exploratory check, a quantile-quantile plot based on the residuals under the top model
revealed no strong departure from normality (Figure \ref{fig:qqnorm}).
Although this is somewhat reassuring one cannot discard a lack of normality under a different set of predictors,
as the top model was selected under assumed normality.
To conduct a more formal analysis we run Algorithm \ref{alg:gibbs_modelsearch} ($T=1,000$ iterations) now including $\gamma_{p+1},\gamma_{p+2}$.
The posterior probabilities for Normal, asymmetric Normal, Laplace and asymmetric Laplace errors
were 0.998, 0.0002, 0.0018 and 1.3$\times 10^{-27}$, respectively.
The six top models and their posterior probabilities closely matched those under the assumed Normal model (Table \ref{tab:top6_tgfb}),
and the correlation between marginal inclusion probabilities under Normal and inferred residuals was 0.96.
These results support that
our framework to infer $(\gamma_{p+1},\gamma_{p+2})$ in Algorithm \ref{alg:gibbs_modelsearch} is able to detect when errors are approximately Normal.

\subsection{DLD data}
\label{ssec:dld}

We consider another genomics study by \cite{yuan:2016}.
In contrast to Section \ref{ssec:tgfb},
here RNA-sequencing was used to measure gene expression, a newer and more precise technology than microarrays.
The study included 100 colorectal, 36 prostate, and 6 pancreatic cancer and 50 healthy control patients, for a total of $n=192$ patients.
Briefly, the authors used a measure of expression called RPM. RPM considers the number of reads mapped to a given gene relative to the gene length
and may exhibit heavy tails or asymmetries, even after log or other transformations.
We focus on the 58 messenger RNA genes identified in the exRNA species diversity analysis provided by the authors in Supplementary Table S1.
To illustrate our methodology, we consider predicting the expression of gene DLD based on the remaining 57 genes
and the 3 binary variables indicating the patient type (colorectal, prostate, pancreatic).
According to genecards.org, the protein encoded by DLD can perform mechanistically distinct functions, it can regulate the energy metabolism
and has been found to be associated with dehydrogenase and leukocyte adhesion defficiencies.

We first applied our methodology conditioning on Normal errors ($\gamma_{p+1}=\gamma_{p+2}=0$).
We used 10,000 Gibbs iterations.
The highest posterior probability model had $p(\gamma \mid y)=0.58$ and contained 5 genes
(C6orf226, ECH1, CSF2RA, FBXL19, RRP1B),
however, its residuals showed a clear departure from normality (Figure \ref{fig:qqnorm}, right).
We run again our Gibbs algorithm, this time inferring $\gamma_{p+1}$ and $\gamma_{p+2}$.
The analysis returned an overwhelming $p(\gamma_{p+1}=1,\gamma_{p+2}=0 \mid y)=0.999$ in favour of Laplace residuals.
The top model had posterior probability 0.36 and contained the same 5 predictors plus an extra gene MTMR1.
MTMR1 encodes a protein related to the myotubularin family containing consensus sequences for protein tyrase phosphatases,
whereas the response gene DLD has a post-translational modification based on tyrosine phosphorylation,
thus giving a plausible biological mechanism connecting MTMR1 and DLD.
Supplementary Table \ref{tab:dld_margpp} lists the six largest marginal variable inclusion probabilities under Normal and inferred error distribution.

So far, we treated $\alpha$ as a parameter to be learnt from the data.
We now condition upon asymmetric Laplace errors and fixed $\alpha=-0.5,0,0.5$.
This leads to quantile regression for the $(1+\alpha)/2=0.25,0.5,0.75$ percentiles (Section \ref{sec:loglikelihood}).
Supplementary Table \ref{tab:dld_top5models} displays the top 5 models for each $\alpha$.
Briefly, five genes (C6orf226, CSF2RA, ECH1, RRP1B and FBXL19) featured in the top model for all $\alpha$'s,
the first four with marginal inclusion probability $>0.99$.
FBXL19 had higher probability under $\alpha=0$ than $\alpha=-0.5,0.5$ (0.783 vs. 0.516 and 0.467 respectively).
MTMR1 featured in the top model only for $\alpha=0$ (marginal probability 0.619).
Given the biological plausibility that MTMR1 is related to DLD,
these results suggest that setting $\alpha=0$ (the value inferred from the data) may have led to higher power to detect MTMR1
than conditioning on Normal or asymmetric Laplace residuals with $\alpha=-0.5$ or $\alpha=0.5$.
This is in agreement with Proposition \ref{prop:bfrates}
and our simulations in Sections \ref{ssec:lowdim_sim}-\ref{ssec:highdim_sim}.

\section{Discussion}
\label{sec:conclusion}

Most efforts in Bayesian variable selection focus either on the Normal model or on flexible alternatives that require MCMC.
Our framework represents a middle-ground to add flexibility in a parsimonious manner that remains analytically and computationally tractable,
facilitating applications where either $p$ is large or $n$ is too moderate to fit more complex models accurately.
Our results show that model misspecification is a non-ignorable issue with important consequences for model selection.
Bayes factor rates typically retain the same functional dependence on $n$ (\textit{e.g.}~polynomial or exponential) as when the model is correctly specified,
however the coefficients governing these rates do change.
Specifically, the ratio of the correct \textit{vs.}~misspecified Bayes factors to detect truly active variables grows exponentially with $n$
when a triangle-type inequality holds, signaling the potential for an important drop in sensitivity.
Our empirical studies support this finding: failing to account for simple forms of asymmetry or heavy tails
reduced the proportion of correct model selections by several folds.
Misspecification also has an effect on false positives.
Although here the ratio of correct vs. misspecified Bayes factors is essentially a constant,
the effect can be noticeable for finite $n$.
Hence it is important to consider flexible likelihoods and, when possible, also adopt false positive correction mechanisms
for finite $n$. As a possible venue for the latter, we illustrated in an example how non-local priors helped discard small
spurious parameters arising from misspecification. A more detailed study would be interesting future work.

Other future avenues include extensions to allow for polynomial error tails, dependent errors, heteroscedasticity or covariate-dependent asymmetry.
We remark that fully non-parametric strategies already exist, \textit{e.g.}~\cite{chung:2009}.
The challenge is to build models that provide an intermediate level of flexibility while giving tractable variable selection.
For instance, allowing the variance or asymmetry to depend on $x_i$ is an interesting task for which there is no unique agreed-upon solution. One possibility is to let $\vartheta_i= \exp\left( x_i^T\beta\right)$, where $\vert \beta \vert \leq \vert \theta \vert$, akin to what \cite{daye:2012} for Normal errors. The authors found that the log-likelihood for $\beta$ for fixed $\theta$ is log-concave, and so is that for $\theta$ under fixed $\beta$, enabling fast optimization. It would be interesting to develop similar strategies for the asymmetry and non-Normal errors. An issue here would be dealing with the increased problem dimensionality due to selecting variables also for $\beta$.
Another interesting venue stemming from our work is posing non-parametric models that can collapse onto simple parametric forms
when the extra flexibility is not needed.
Again the idea is to strike a balance between the tractability offered by simple models
and the ultimate goal of providing accurate inference.
Other extensions are developing more advanced optimization or model search strategies,
our goal here was to illustrate that even relatively simple methods can be competitive.
Such computational issues are particularly meaningful in increasingly challenging settings, \textit{e.g.}~large graphical or spatio-temporal models.
Overall, we hope to have provided a basic framework that others can build on to tackle these exciting applications.


\section*{Supplementary material. Tractable Bayesian variable selection: beyond normality}

\date{}  
\maketitle

\section{iMOM prior}
\label{sec:imom}

The product iMOM prior density on $\theta_\gamma$ \cite{johnson:2012} is given by
\begin{align}
p_I(\theta_\gamma\mid \vartheta, \gamma) = \prod_{\gamma_j=1}^{} \frac{(g_{\theta} \vartheta)^{\frac{1}{2}}}{\sqrt{\pi} \theta_j^2}
\exp\left\{ - \frac{g_{\theta} \vartheta}{\theta_j^2} \right\},
\label{eq:pimom}
\end{align}
where by default $g_\theta=0.133$ assigns $p(|\theta/\vartheta^{1/2}|>0.2)=0.99$.
Regarding the asymmetry parameter $\talpha=\mbox{atanh}(\alpha)$,
the prior is $p_I(\talpha \mid \gamma_{p+1}=1)= \talpha^{-2} \sqrt{g_\alpha/\pi} e^{-g_\alpha/\talpha^2}$,
and the default prior dispersions are $g_\alpha=0.033$ to obtain $P(|\talpha| \geq 0.1)= 0.99$
and $g_\alpha=0.136$ for $P(|\alpha| \geq 0.2)=0.99$.

\section{Proofs}

For simplicity, we drop $\gamma$ from the notation in the proof of
Propositions \ref{prop:lhood_skewnorm_properties}-\ref{prop:AsympNorm} and Corollay \ref{cor:asymp_mle_misspec},
given that all arguments are conditional on a given model $\gamma$.

\subsection{Proof of Proposition \ref{prop:lhood_skewnorm_properties}}

We start by stating a useful lemma stating that positive definite hessian plus continuous gradient guarantees concavity.

\begin{lemma}
Let $f(\theta)$ be a function with continuous gradient $g(\theta)$, for all $\theta$,
and negative definite hessian $H(\theta)$ almost everywhere with respect to the Lebesgue measure.
Then, $f(\theta)$ is strictly concave.
If $H(\theta)$ is negative semidefinite, then $f(\theta)$ is concave.
\label{prop:negdef_contgrad_uniquemax}
\end{lemma}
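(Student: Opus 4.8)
The plan is to prove the stronger (strictly concave) statement first and then read off the semidefinite case by relaxing the strict inequalities throughout. I would reduce the multivariate claim to a family of one-dimensional ones via the gradient-monotonicity characterization of concavity: on the (convex) domain, $f$ is strictly concave if and only if $(g(\theta_2)-g(\theta_1))^{T}(\theta_2-\theta_1)<0$ for every $\theta_1\neq\theta_2$, equivalently if and only if each restriction $\phi(t)=f(\theta_1+t(\theta_2-\theta_1))$ is strictly concave on $[0,1]$. Writing $v=\theta_2-\theta_1$, continuity of $g$ gives $\phi\in C^{1}$ with $\phi'(t)=g(\theta_1+tv)^{T}v$, so everything hinges on showing $\phi'$ is (strictly) decreasing.

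The key identity I would exploit is $\phi'(1)-\phi'(0)=\int_{0}^{1}\phi''(t)\,dt=\int_{0}^{1} v^{T}H(\theta_1+tv)\,v\,dt$, valid once $\phi'$ is absolutely continuous. This is where the continuity-of-$g$ hypothesis must be upgraded: in our setting $g$ is not merely continuous but locally Lipschitz, being piecewise smooth and changing form only across the finitely many hyperplanes $\{x_i^{T}\theta=y_i\}$ of Proposition \ref{prop:lhood_skewnorm_properties}, so $\phi'$ is Lipschitz and the fundamental theorem of calculus applies. Wherever $H$ exists the integrand equals $v^{T}Hv<0$ by negative definiteness, which would give $\phi'(1)<\phi'(0)$ immediately—except that the whole segment may lie inside the measure-zero set on which $H$ is undefined. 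I expect this to be the main obstacle, and I would resolve it by a Fubini argument: since the exceptional set $B$ is Lebesgue-null in the ambient space, for almost every translate $\theta_1+w$ of the segment (with $w\perp v$) the line through it meets $B$ in a one-dimensional null set, so $\phi''=v^{T}Hv<0$ holds a.e. along such ``good'' segments and the integral is strictly negative there.

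Concavity (the semidefinite case) then follows for every segment: good segments give $(g(\theta_2)-g(\theta_1))^{T}v\le 0$, and continuity of $g$ lets me pass to the limit from good segments to an arbitrary one, preserving the non-strict inequality. Strict concavity is more delicate, because the same limiting argument degrades $<$ into $\le$. To recover strictness on a possibly-exceptional segment I would argue by contradiction: if $f$ were affine along $[\theta_1,\theta_2]$ then $v$ would be a genuine flat direction, which I must show is incompatible with $H$ being strictly negative definite off a null set. In the application this step is clean because of the piecewise-quadratic structure: on each region of constant residual-sign pattern $\log L_k$ is exactly quadratic with Hessian $-\vartheta^{-1}(WX_\gamma)^{T}(WX_\gamma)$ in $\theta_\gamma$ (and strictly negative definite in $(\theta_\gamma,\alpha)$ when $\mathrm{rank}(X_\gamma)=p_\gamma$), so the one-sided second derivative $v^{T}Hv$ is strictly negative even for a segment lying inside a residual hyperplane, ruling out any flat direction.

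In short, the two ingredients doing the real work are absolute continuity of $\phi'$, which I obtain from local Lipschitzness of the gradient rather than from continuity alone, and the Fubini-plus-continuity device that transfers the a.e. negative-definiteness of $H$ onto individual segments; strict concavity is then secured by excluding flat directions via the strict definiteness of $H$ where it exists, and the non-strict conclusion is obtained by replacing every strict inequality above with its non-strict counterpart.
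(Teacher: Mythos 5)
Your proof is correct and its skeleton is the same as the paper's: restrict $f$ to the segment $\theta_w=(1-w)\theta_1+w\theta_2$, observe that the restricted derivative is $\phi'(w)=g(\theta_w)^Tv$ with $v=\theta_2-\theta_1$, and use $v^TH(\theta_w)v<0$ to force $\phi'$ to decrease. Where you depart from the paper is in the two repairs, and both address genuine gaps rather than pedantic ones. The paper's proof simply asserts that a continuous first derivative whose second derivative is of one sign almost everywhere must be strictly monotone; that implication is false under the stated hypotheses alone (perturb $\phi'$ by a Cantor function to obtain $\phi''<0$ a.e.\ with $\phi'$ continuous yet not decreasing), so your upgrade from continuity of $g$ to local Lipschitzness --- hence absolute continuity of $\phi'$ and the validity of $\phi'(1)-\phi'(0)=\int_0^1\phi''(t)\,dt$ --- is doing necessary work. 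Likewise the paper never confronts the fact that ``almost everywhere'' refers to Lebesgue measure on the ambient space, so a whole segment could sit inside the exceptional set; your Fubini argument over parallel translates, the limiting argument that salvages non-strict concavity on every segment, and the exclusion of flat directions via the explicit piecewise-quadratic form of the log-likelihood close that hole. The trade-off is that your argument is no longer a proof of the lemma under its literal hypotheses --- it imports structure (piecewise smoothness of $g$, one-sided Hessians) specific to the two-piece likelihoods --- but since the lemma is only invoked for those likelihoods (Proposition \ref{prop:lhood_skewnorm_properties}), nothing is lost, and your version is the one that actually withstands scrutiny.
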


\begin{proof}
Let $\theta_1$ and $\theta_2$ be two arbitrary values and denote
$\theta_w= (1-w)\theta_1 + w\theta_2$ where $w \in [0,1]$.
Define $h(w)= -f(\theta_w)$,
to show that $f(\theta)$ is concave it suffices to see that $h(w)$ is convex for arbitrary $(w,\theta_1,\theta_2)$.
Straightforward algebra shows that
$\frac{\partial}{\partial w} h(w)= -g(\theta_w) (\theta_2 - \theta_1)$
and further derivation shows that
\begin{align}
\frac{\partial^2}{\partial w^2} h(w)= -(\theta_2 - \theta_1)^T H(\theta_w) (\theta_2 - \theta_1) > 0,
\nonumber
\end{align}
since $H(\theta)$ is negative definite ($\geq 0$ for negative semidefinite).

The second derivative $\frac{\partial^2}{\partial w^2} h(w)>0$ almost everywhere
and the first derivative $\frac{\partial}{\partial w} h(w)$ is continuous, which implies that
$\frac{\partial}{\partial w} h(w)$ is strictly increasing in $w$ and hence $h(w)$ is strictly convex
(non-strictly convex when $H(\theta)$ is negative semidefinite).
\end{proof}

\vspace{3mm}

\underline{Proof of Proposition \ref{prop:lhood_skewnorm_properties}, Part (i)}

The gradient $g_1(\theta,\vartheta,\alpha)$ follows from straightforward algebra,
which is obviously continuous with respect to $\vartheta \in \mathbb{R}^+$ and $\alpha \in [-1,1]$.
To see continuity of $g_1(\theta,\vartheta,\alpha)$ with respect to $\theta$,
consider increasing a single $\theta_j$ for some $j \in \{1,\ldots,p\}$ and fix the remaining elements in $\theta$,
which we denote $\theta_{(-j)}$. Also denote $x_{i(-j)}$ the subvector of $x_i$ obtained by removing $x_{ij}$.
Clearly, $\log L_1(\theta,\vartheta,\alpha)$ is quadratic in $\theta_j$ with coefficients that stay constant until
$\theta_j$ increases beyond a value $t$ such that an observation $i^*$ is added to or removed from $A(\theta)$,
{\it i.e.}~$y_{i^*} < x_{i^*(-j)}^T \theta_{(-j)} + x_{i^*j} \theta_j$ for $\theta_j \leq t$
and $y_{i^*} > x_{i^*(-j)}^T \theta_{(-j)} + x_{i^*j} \theta_j$ for $\theta_j > t$.
Taking the limit of the contribution of $i^*$ to $\log \left( L_1(\theta,\vartheta,\alpha) \right)$
as either $\theta_j \rightarrow t^{-}$ or $\theta_j \rightarrow t^{+}$ we obtain
$$
\mathop {\lim }\limits_{\theta_j \to t^+} \frac{(y_{i^*} - x_{i^*}\theta_i)^2}{(1+\alpha)^2}=
\mathop {\lim }\limits_{\theta_j \to t^-} \frac{(y_{i^*} - x_{i^*}\theta_i)^2}{(1-\alpha)^2}= 0,
$$
{\it i.e.}~$\log \left( L_1(\theta,\vartheta,\alpha) \right)$ is continuous.
Similarly, taking the limits for the contribution to the first partial derivative with respect to $\theta_j$ gives
$$
\mathop {\lim }\limits_{\theta_j \to t^+} \frac{2(y_{i^*} - x_{i^*}\theta_i)}{(1+\alpha)^2}=
\mathop {\lim }\limits_{\theta_j \to t^-} \frac{2(y_{i^*} - x_{i^*}\theta_i)}{(1-\alpha)^2}= 0,
$$
which proves that $g_1(\theta,\vartheta,\alpha)$ is continuous.

\vspace{3mm}
\underline{Proof of Proposition \ref{prop:lhood_skewnorm_properties}, Part (ii)}

The form of $H_1(\theta, \vartheta, \alpha)$ follows from easy algebra.

\vspace{3mm}
\underline{Proof of Proposition \ref{prop:lhood_skewnorm_properties}, Part (iii)}

We start by noting that the maximum of the asymmetric-normal log-likelihood with respect to $(\theta,\alpha)$
does not depend on $\vartheta$, hence we simply need to see that
\begin{align}
H=
\begin{pmatrix}
X^T W^2 X  & 2 X^T \Wbar^3 (y - X\theta) \\
2 (y - X\theta)^T \Wbar^3 X  & 3 (y-X\theta)^T W^2 (y-X\theta)
\end{pmatrix},
\\
\nonumber
\end{align}
is positive definite for almost all $(\theta,\alpha)$.
Once we show this, by Part (i) and Lemma \ref{prop:negdef_contgrad_uniquemax} we have that there is a unique maximum.

To see that $H$ is positive definite, we shall show that all its leading principal minors are positive.
Note that $X^TW^2X$ is the gram matrix corresponding to $W X$ and is hence positive definite when $\mbox{rank}(WX)=p$,
or equivalently when $\mbox{rank}(X)=p$ given that the effect of $W$ is to simply re-scale the rows of $X$.
If $\mbox{rank}(WX)<p$ then $X^TW^2X$ is positive semidefinite.
Therefore, we just need to check that $\mbox{det}(H)>0$.
Now, the usual formula for determinant based on submatrices gives that
$\mbox{det}(H)= \mbox{det}(X^TW^2X) \mbox{det} (B)$, where $B=$
\begin{align}
3 (y-X\theta)^T W^4 (y-X\theta) - 4 (y - X\theta)^T \Wbar^3 X (X^TW^2X)^{-1} X^T \Wbar^3 (y - X\theta) \nonumber \\
=3(y-X\theta)^TW^2 \left( I - \frac{4}{3} \Wbar X (X^TW^2X)^{-1} X^T \Wbar) \right) W^2 (y-X\theta),
\label{eq:detB_step1}
\end{align}
is a scalar, $I$ is the $n \times n$ identity matrix,
as usual $W$ is an $n \times n$ diagonal matrix with entries $1/(1 \pm \alpha)^2$ where the $\pm$ depends on whether
$i \in A(\theta)$ or $i \not\in A(\theta)$,
and similarly $\Wbar$ is diagonal with entries $\pm (1 \pm \alpha)$.
All that is left is to see that $B>0$.
For ease of notation let us define $Z= \Wbar X$,
given that $\Wbar \Wbar= \mbox{diag}(1/(1 \pm \alpha)^2)=W^2$ we can write
\begin{align}
B= 3(y-X\theta)^TW^2 \left( I - \frac{4}{3} Z (Z^TZ)^{-1} Z^T) \right) W^2 (y-X\theta)= \nonumber \\
4(y-X\theta)^TW^2 (I - Z (Z^TZ)^{-1} Z^T) W^2 (y-X\theta)
- (y-X\theta)^TW^2 W^2(y-X\theta) > 0 \nonumber \\
\Leftrightarrow
4 \frac{(y-X\theta)^TW^2 (I - Z (Z^TZ)^{-1} Z^T) W^2 (y-X\theta) }{(y-X\theta)^TW^2 W^2(y-X\theta)} - 1 > 0.
\label{eq:detB_step2}
\end{align}
To complete the proof, note that $a=W^2 (y-X\theta) \in \mathbb{R}^n$ is simply a vector
and that the hat matrix $Z (Z^TZ)^{-1} Z^T$ is symmetric and idempotent, which implies that it has $\mbox{rank}(Z)$ eigenvalues equal to 1
and $n - \mbox{rank}(Z)$ eigenvalues equal to 0.
Thus $I - Z (Z^TZ)^{-1} Z^T$ has $n - \mbox{rank}(Z)$ eigenvalues equal to 1 and the remaining $\mbox{rank}(Z)$ eigenvalues equal to 0.
Given that $n> \mbox{rank}(Z)$ by assumption,
$I - Z (Z^TZ)^{-1} Z^T$ has at least one non-zero eigenvalue, which allows us to bound
\begin{align}
\mbox{min}_{a \in \mathbb{R}^n} \frac{a (I - Z(Z^TZ)^{-1}Z^T) a}{a^Ta} \geq 1,
\nonumber
\end{align}
which from (\ref{eq:detB_step2}) gives that $B \geq 3$ and hence that $H$ is positive definite.

\subsection{Proof of Proposition \ref{prop:lhood_skewlap_properties}}

Parts (i) and (ii) follow from straightforward algebra.
For Part (iii) we first show that $\log L_2(\theta,\vartheta,\alpha)$ is (non-strictly) concave in $(\theta,\alpha)$
and then that when $\mbox{rank}(X)=p$ it is strictly concave.
To see non-strict concavity note that
$-|y_i - x_i^T\theta|/ (\sqrt{\vartheta} (1+\alpha))=
-\mbox{max}\{y_i-x_i^T\theta,x_i^T\theta-y_i\} / (\sqrt{\vartheta} (1+\alpha))$ is the maximum of two (non-strictly) concave functions
in $(\theta,\alpha)$ and hence also concave,
from which it follows that $L_3(\theta,\vartheta,\alpha)$ is a sum of concave functions and thus concave.

For ease of notation let $\eta=(\theta,\vartheta,\alpha)$, we now show that $\log L_2(\eta)$ is stricly concave
at any arbitrary $\eta_1=(\theta_1,\vartheta,\alpha_1)$ as long as $\mbox{rank}(X)=p$.
It is useful to note that $H_2(\theta,\vartheta,\alpha)$ is strictly negative definite in $\alpha$,
as the corresponding minor $- 2 \vert W^3 (y - X \theta) \vert / \sqrt{\vartheta}<0$.
From the definition of concavity and continuity of the log-likelihood,
if $\log L_2(\eta)$ were concave but non-strictly concave at $\eta=\eta_1$
then for some $\eta_2=(\theta_2,\vartheta,\alpha_2) \neq \eta_1$ we would have that
$\log L_2(a \eta_1 + (1-a) \eta_2)= a\log L_2(\eta_1) + (1-a)\log L_2(\eta_2)$ for all $a \in [0,1]$,
{\it i.e.}~$\log L_2(\eta)$ would be locally linear (in fact, constant) along the direction defined by $\eta_2-\eta_1$,
and in particular $\log L_2(\eta_1)=\log L_2(\eta_2)$.
From its form
\begin{align}
\log L_2(\eta)= -\frac{n}{2} \log(\vartheta) - \frac{1}{\vartheta}
\left( \frac{\sum_{i \in A(\theta)}^{} |y_i - x_i^T\theta|}{1+\alpha} +
\frac{\sum_{i \not\in A(\theta)}^{} |y_i - x_i^T\theta|}{1-\alpha} \right),
\nonumber
\end{align}
is locally linear in $\theta$ but clearly non-linear in $\alpha$,
implying that $\alpha_2=\alpha_1$.
More formally, it is easy to see that for fixed $\theta_1 \neq \theta_2$
the roots of $\log L_2(\eta_1)= \log L_2(\eta_2)$ in terms of $\alpha_2$
are given by the roots of a quadratic polynomial that are not linear in $\theta_2$,
thus the only possible linear solution is $\alpha_2=\alpha_1$.
The problem is hence reduced to showing that there is no $\theta_2$ sufficiently close to $\theta_1$ such that
\begin{align}
|W (y - X\theta_1)|= |W (y - X\theta_2)|,
\label{eq:weighted_abs_erros_concproof}
\end{align}
where $|\cdot|$ denotes the $L_1$ norm
and as usual $W$ is a diagonal matrix with $(i,i)$ element $(1+\alpha)^{-1}$ if $i \in A(\theta_1)$
and $(1-\alpha)^{-1}$ if $i \not\in A(\theta_1)$, where we note that
$A(\theta_2)=A(\theta_1)$ for $\theta_2$ sufficiently close to $\theta_1$ and thus
the same weighting matrix $W$ can be used in left and right hand sides of (\ref{eq:weighted_abs_erros_concproof}).
Expression \eqref{eq:weighted_abs_erros_concproof} is the $L_1$ error function featuring in median regression
with re-scaled $\tilde{y}= W y$ and $\tilde{X}= W X$,
which is concave as long as $p=\mbox{rank}(W X)=\mbox{rank}(X)$, as we wished to prove.

\subsection{Proof of Proposition \ref{prop:Consistency}}
\subsection*{\underline{Two-piece normal errors ($k=1$)}}
The proof strategy is as follows: we first show that the average log-likelihood $M_n(\theta_\gamma,\vartheta,\alpha) = \dfrac{1}{n}\log L_1(\theta_\gamma,\vartheta,\alpha)$ converges to its expected value $M(\theta_\gamma,\vartheta,\alpha)$ uniformly across $(\theta_\gamma,\vartheta,\alpha)\in\Gamma$, and later show that $M(\theta_\gamma,\vartheta,\alpha)$ has a unique maximum $(\theta_\gamma^*,\vartheta_\gamma^*,\alpha_\gamma^*)$, which jointly satisfy the conditions in Theorem 5.7 from \cite{vandervaart:1998} for consistency of $(\widehat{\theta}_\gamma,\widehat{\vartheta}_\gamma,\widehat{\alpha}_\gamma) \stackrel{P}{\longrightarrow} (\theta_\gamma^*,\vartheta_\gamma^*,\alpha_\gamma^*)$.

We remark that Condition A3 is met for instance by deterministic sequences $\{x_i\}$ satisfying the stated positive-definiteness condition and also by $x_i \stackrel{i.i.d.}{\sim} \Psi$ as long as $E(x_1 x_1^T)=\Sigma$ for some positive definite $\Sigma$, since then $n^{-1} X^TX \stackrel{a.s.}{\longrightarrow} \Sigma$ by the strong law of large numbers, and given that eigenvalues are continuous functions of $X^TX$ by the continuous mapping theorem $X^TX$ is positive definite almost surely as $n \rightarrow \infty$. Finally, $\Gamma$ is assumed to contain the maximizer $(\theta_\gamma^*,\vartheta_\gamma^*,\alpha_\gamma^*)$.

By the law of large numbers and the \emph{i.i.d.}~assumption, we have that $M_n(\theta_\gamma,\vartheta,\alpha) \stackrel{P}{\rightarrow} M(\theta_\gamma,\vartheta,\alpha)$, for each $(\theta_\gamma,\vartheta,\alpha)\in\Gamma$. Next, we prove that the limit $M$ is finite for all $(\theta_\gamma,\vartheta,\alpha)\in\Gamma$.
\begin{eqnarray*}
&& \vert M(\theta_\gamma,\vartheta,\alpha)\vert = \Big\vert {\mathbb E}\left[\log s_1(y_1\vert x_1^{T}\theta_\gamma,\vartheta,\alpha)\right] \Big\vert \leq  {\mathbb E}\left[\vert\log s_1(y_1\vert x_1^{T}\theta_\gamma,\vartheta,\alpha)\vert\right] \\
&=& \int \int \vert \log s_1(y_1\vert x_1^{T}\theta_\gamma,\vartheta,\alpha)\vert dS_0(y_1\vert  x_1) d\Psi( x_1)\\
&=&  \int \int_{y_1< x_1^{T}\theta_\gamma} \Bigg\vert \log \dfrac{1}{\sqrt{\vartheta}} \phi\left(\dfrac{y_1- x_1^{T}\theta_\gamma}{\sqrt{\vartheta}(1+\alpha)}\right) \Bigg\vert dS_0(y_1\vert  x_1) d\Psi( x_1)\\
&+& \int \int_{y_1\geq x_1^{T}\theta_\gamma} \Bigg\vert \log \dfrac{1}{\sqrt{\vartheta}} \phi\left(\dfrac{y_1- x_1^{T}\theta_\gamma}{\sqrt{\vartheta}(1-\alpha)}\right) \Bigg\vert dS_0(y_1\vert  x_1) d\Psi( x_1).
\end{eqnarray*}
For the first term in the last inequality we obtain, by integrating over the whole space, assumption A4 with $j=2$, and the triangle inequality, the following upper bound
\begin{eqnarray*}
&& \int \int \Bigg\vert \log \dfrac{1}{\sqrt{\vartheta}} \phi\left(\dfrac{y_1- x_1^{T}\theta_\gamma}{\sqrt{\vartheta}(1+\alpha)}\right) \Bigg\vert dS_0(y_1\vert  x_1) d\Psi( x_1)\\
&\leq& \vert \log\sqrt{2\pi\vartheta} \vert + \int\int \dfrac{(y_1- x_1^{T}\theta_\gamma)^2}{2\vartheta(1+\alpha)^2} dS_0(y_1\vert  x_1) d\Psi( x_1) <\infty.
\end{eqnarray*}
Analogously for the second term. Now, let $\vartheta=\vartheta^\star$ be an arbitrary fixed value for the (squared) scale parameter. The aim now is to first show that the average log-likelihood $M_n(\theta_\gamma,\vartheta^\star,\alpha)= n^{-1} \log L_1(\theta_\gamma,\vartheta^\star,\alpha)$ converges to its expected value $M(\theta_\gamma,\vartheta^\star,\alpha)$ uniformly in $(\theta_\gamma,\alpha)$, which implies that $(\widehat{\theta}_\gamma,\widehat{\alpha}_\gamma) \stackrel{P}{\longrightarrow} (\theta_\gamma^*,\alpha_\gamma^*)$, and to then exploit that $\widehat{\vartheta}_\gamma$ and $\vartheta_\gamma^*$ have simple expressions to show that $\widehat{\vartheta}_\gamma \stackrel{P}{\longrightarrow} \vartheta_\gamma^*$. To see that $M_n(\theta_\gamma,\vartheta^\star,\alpha)$ converges to $M(\theta_\gamma,\vartheta^\star,\alpha)$ uniformly in $(\theta_\gamma,\alpha)$
we use the result in Proposition \ref{prop:lhood_skewnorm_properties} that for positive-definite $X^TX$ (which holds for $n>n_0$) we have that $M_n(\theta_\gamma,\vartheta^\star,\alpha)$ is a sequence of concave functions in $(\theta_\gamma,\alpha)$, which by the convexity lemma in \cite{pollard:1991} (see also Theorem 10.8 from \cite{rockafellar:2015}) implies that
\begin{eqnarray}\label{SC1}
\sup_{(\theta_\gamma,\alpha) \in K} \left\vert M_n(\theta_\gamma,\vartheta^\star,\alpha) - M(\theta_\gamma,\vartheta^\star,\alpha)\right\vert \stackrel{P}{\longrightarrow} 0,
\end{eqnarray}
for each compact set $K\subseteq\Gamma$, and also that $M(\theta_\gamma,\vartheta^\star,\alpha)$ is finite and concave in $(\theta_\gamma,\alpha)$ and thus has a unique maximum $(\theta_\gamma^*,\alpha_\gamma^*)$. That is, for a distance measure $d()$ and every $\varepsilon>0$ we have
\begin{eqnarray}\label{SC2}
\sup_{d((\theta_\gamma^*,\vartheta^\star,\alpha_\gamma^*),(\theta,\vartheta^\star,\alpha)) \geq \varepsilon } M(\theta_\gamma,\vartheta^\star,\alpha) <M(\theta_\gamma^*,\vartheta^\star,\alpha_\gamma^*).
\end{eqnarray}
The consistency of $(\widehat{\theta}_\gamma,\widehat{\alpha}_\gamma) \stackrel{P}{\longrightarrow} (\theta_\gamma^*,\alpha_\gamma^*)$ follows directly from \eqref{SC1} and \eqref{SC2} together with Theorem 5.7 from \cite{vandervaart:1998}. To see that $\widehat{\vartheta}_\gamma \stackrel{P}{\longrightarrow} \vartheta_\gamma^*$, note first that from
\begin{align}\label{eq:mfun_tpnorm}
M(\theta_\gamma,\vartheta^\star,\alpha) = -\log\left(\sqrt{2\pi\vartheta^\star}\right)
&- \dfrac{1}{2\vartheta^\star} \int \Biggl[ \dfrac{(y_1- x_1^T\theta_\gamma)^2}{(1+\alpha)^2} I(y_1< x_1^T\theta_\gamma) \\
&+ \dfrac{(y_1- x_1^T\theta_\gamma)^2}{(1-\alpha)^2} I(y_1\geq x_1^T\theta_\gamma)\Biggr] dS_0(y_1\vert x_1) d\Psi(x_1),
\nonumber
\end{align}
we see that $(\theta_\gamma^*,\alpha_\gamma^*)$ does not depend on $\vartheta^\star$, thus $(\theta_\gamma^*,\alpha_\gamma^*)$ is a global maximum. From \eqref{eq:mfun_tpnorm} $M(\theta_\gamma^*,\vartheta,\alpha_\gamma^*)$ trivially has the maximizer
\begin{eqnarray*}
\vartheta_\gamma^*= \int \left[\dfrac{(y_1- x_1^T\theta_\gamma^*)^2}{(1+\alpha_\gamma^*)^2} I(y_1< x_1^T\theta_\gamma^*)
+ \dfrac{(y_1- x_1^T\theta_\gamma^*)^2}{(1-\alpha_\gamma^*)^2} I(y_1\geq x_1^T\theta_\gamma^*) \right]dS_0(y_1\vert x_1) d\Psi(x_1),
\end{eqnarray*}
and from the likelihood equations we have that
\begin{align}\label{eq:mle_vartheta_tpnorm}
\widehat{\vartheta}_{\gamma}= \frac{1}{n} \left(
\sum_{i=1}^{n} \frac{(y_i-x_i^T \widehat{\theta}_{\gamma})^2}{(1+\widehat{\alpha}_{\gamma})^2} I(y_i \leq x_i^T\widehat{\theta}_{\gamma}) +
\frac{(y_i-x_i^T \widehat{\theta}_{\gamma})^2}{(1-\widehat{\alpha}_{\gamma})^2} I(y_i>x_i^T\widehat{\theta}_{\gamma})\right).
\end{align}
In order to simplify notation, let us define
\begin{eqnarray*}
\rho(y_i,x_i,\theta_{\gamma},\alpha)= \frac{(y_i-x_i^T \theta_{\gamma})^2}{(1+\alpha)^2} I(y_i \leq x_i^T\theta_{\gamma}) +
\frac{(y_i-x_i^T \theta_{\gamma})^2}{(1-\alpha)^2} I(y_i>x_i^T\theta_\gamma).
\end{eqnarray*}
Then, by the triangle inequality
\begin{eqnarray*}
\left\vert \widehat{\vartheta}_{\gamma} - \vartheta_{\gamma}^* \right\vert  \leq \left\vert \widehat{\vartheta}_{\gamma} - \dfrac{1}{n}\sum_{i=1}^n\rho(y_i,x_i,\theta_{\gamma}^*,\alpha_{\gamma}^*) \right\vert + \left \vert \dfrac{1}{n}\sum_{i=1}^n\rho(y_i,x_i,\theta_\gamma^*,\alpha_\gamma^*) - \vartheta_\gamma^* \right\vert.
\end{eqnarray*}
For the second term it follows, by the law of large numbers, that
$$\left\vert \dfrac{1}{n}\sum_{i=1}^n\rho(y_i,x_i,\theta_\gamma^*,\alpha_\gamma^*) - \vartheta_\gamma^* \right\vert \stackrel{P}{\rightarrow} 0.$$
For the first term we have
\begin{eqnarray*}
\left\vert \widehat{\vartheta}_\gamma - \dfrac{1}{n}\sum_{i=1}^n\rho(y_i,x_i,\theta_\gamma^*,\alpha_\gamma^*) \right\vert &=& \left\vert M_n(\widehat{\theta}_\gamma,1/2,\widehat{\alpha}_\gamma) - M_n(\theta_\gamma^*,1/2,\alpha_\gamma^*) \right\vert\\
&\leq& \left\vert M_n(\widehat{\theta}_\gamma,1/2,\widehat{\alpha}_\gamma) - M(\widehat{\theta}_\gamma,1/2,\widehat{\alpha}_\gamma) \right\vert \\
&+& \left\vert  M(\widehat{\theta}_\gamma,1/2,\widehat{\alpha}_\gamma) - M(\theta_\gamma^*,1/2,\alpha_\gamma^*) \right\vert\\
&+& \left\vert  M(\theta_\gamma^*,1/2,\alpha_\gamma^*) - M_n(\theta_\gamma^*,1/2,\alpha_\gamma^*) \right\vert\\
&\leq& 2 \sup_{(\theta_\gamma,\alpha) \in \Gamma} \left\vert M_n(\theta_\gamma,1/2,\alpha) - M(\theta_\gamma,1/2,\alpha)\right\vert \\
&+& \left\vert  M(\widehat{\theta}_\gamma,1/2,\widehat{\alpha}) - M(\theta_\gamma^*,1/2,\alpha_\gamma^*) \right\vert.
\end{eqnarray*}
By using \eqref{SC1}, the consistency of $(\widehat{\theta}_\gamma,\widehat{\alpha}_\gamma)$, and the continuous mapping theorem it follows that $\left\vert \widehat{\vartheta}_\gamma - \dfrac{1}{n}\sum_{i=1}^n\rho(y_i,x_i,\theta_\gamma^*,\alpha_\gamma^*)\right\vert \stackrel{P}{\rightarrow} 0$. Consequently, $\widehat{\vartheta} \stackrel{P}{\rightarrow} \vartheta_\gamma^*$, which completes the proof.

\subsection*{\underline{Two-piece Laplace errors ($k=2$)}}

The proof strategy is analogous to that with $k=1$. Denote $M_n(\theta_\gamma,\vartheta,\alpha) = \dfrac{1}{n}\log L_2(\theta_\gamma,\vartheta,\alpha)$. By the law of large numbers, we have that $M_n(\theta_\gamma,\vartheta,\alpha) \stackrel{P}{\rightarrow} M(\theta_\gamma,\vartheta,\alpha)$, for each $(\theta_\gamma,\vartheta,\alpha)\in\Gamma$. Moreover,
\begin{eqnarray*}
&& \vert M(\theta_\gamma,\vartheta,\alpha)\vert = \Big\vert {\mathbb E}\left[\log s_2(y_1\vert x_1^{T}\theta_\gamma,\vartheta,\alpha)\right] \Big\vert \leq  {\mathbb E}\left[\vert\log s_2(y_1\vert x_1^{T}\theta_\gamma,\vartheta,\alpha)\vert\right] \\
&=& \int \vert \log s_2(y_1\vert x_1^{T}\theta_\gamma,\vartheta,\alpha)\vert dS_0(y_1\vert x_1) d\Psi(x_1)\\
&=&  \int_{y<x^{T}\theta_\gamma} \Bigg\vert \log \dfrac{1}{\sqrt{\vartheta}} f\left(\dfrac{y_1-x_1^{T}\theta_\gamma}{\sqrt{\vartheta}(1+\alpha)}\right) \Bigg\vert dS_0(y_1\vert x_1) d\Psi(x_1)\\
&+& \int_{y_1\geq x_1^{T}\theta_\gamma} \Bigg\vert \log \dfrac{1}{\sqrt{\vartheta}} f\left(\dfrac{y_1-x_1^{T}\theta_\gamma}{\sqrt{\vartheta}(1-\alpha)}\right) \Bigg\vert dS_0(y_1\vert x_1) d\Psi(x_1),
\end{eqnarray*}
where $f(z)=0.5\exp(-\vert z\vert)$. For the first term in the last inequality we have, by integrating over the whole space and the triangle inequality, the following upper bound
\begin{eqnarray*}
&&\int \Bigg\vert \log \dfrac{1}{\sqrt{\vartheta}} f\left(\dfrac{y_1-x_1^{T}\theta_\gamma}{\sqrt{\vartheta}(1+\alpha)}\right) \Bigg\vert dS_0(y_1\vert x_1) d\Psi(x_1)\\
&\leq& \vert \log 2\sqrt{\vartheta} \vert + \int \dfrac{\vert y_1-x_1^{T}\theta_\gamma\vert}{\sqrt{\vartheta}(1+\alpha)} dS_0(y_1\vert x_1) d\Psi(x_1)< \infty,
\end{eqnarray*}
where the finiteness follows by assumption A4 with $j=1$. An analogous result is obtained for the second term. Now, let $\vartheta=\vartheta^\star$ be an arbitrary fixed value for the (squared) scale parameter. From Proposition \ref{prop:lhood_skewlap_properties}, it follows that for positive-definite $X^TX$ (which is guaranteed by assumption A2, for $n>n_0$) we have that $M_n(\theta_\gamma,\vartheta^\star,\alpha)$ is concave in $(\theta,\alpha)$, which by the convexity lemma in \cite{pollard:1991} implies that
\begin{eqnarray}\label{SCL1}
\sup_{(\theta_\gamma,\alpha) \in K} \left\vert M_n(\theta_\gamma,\vartheta^\star,\alpha) - M(\theta_\gamma,\vartheta^\star,\alpha)\right\vert \stackrel{P}{\longrightarrow} 0,
\end{eqnarray}
for any compact set $K \subseteq \Gamma$, and also that $M(\theta_\gamma,\vartheta^\star,\alpha)$ is concave in $(\theta_\gamma,\alpha)$ and thus has a unique maximum $(\theta_\gamma^*,\alpha_\gamma^*)$. That is, for a distance measure $d()$ and every $\varepsilon>0$ we have
\begin{eqnarray}\label{SCL2}
\sup_{d((\theta_\gamma^*,\vartheta^\star,\alpha_\gamma^*),(\theta_\gamma,\vartheta^\star,\alpha)) \geq \varepsilon } M(\theta_\gamma,\vartheta^\star,\alpha) <M(\theta_\gamma^*,\vartheta^\star,\alpha_\gamma^*).
\end{eqnarray}
The consistency of $(\widehat{\theta}_\gamma,\widehat{\alpha}_\gamma) \stackrel{P}{\longrightarrow} (\theta_\gamma^*,\alpha_\gamma^*)$ follows directly from \eqref{SCL1} and \eqref{SCL2} together with Theorem 5.7 from \cite{vandervaart:1998}. To see that $\widehat{\vartheta}_\gamma \stackrel{P}{\longrightarrow} \vartheta_\gamma^*$, note first that from
\begin{align}\label{eq:mfun_tplap}
M(\theta_\gamma,\vartheta^\star,\alpha) = -\log\left(2\sqrt{\vartheta^\star}\right)
&- \dfrac{1}{\sqrt{\vartheta^\star}} \int \Biggl[\dfrac{\vert y_1-x_1^T\theta_\gamma\vert}{1+\alpha} I(y_1<x_1^T\theta_\gamma) \\
&+  \dfrac{\vert y_1-x_1^T\theta_\gamma\vert}{1-\alpha} I(y_1\geq x_1^T\theta_\gamma)\Biggr] dS_0(y_1\vert x_1) \Psi(x_1),
\nonumber
\end{align}
we see that $(\theta_\gamma^*,\alpha_\gamma^*)$ does not depend on $\vartheta^\star$, thus $(\theta_\gamma^*,\alpha_\gamma^*)$ is a global maximum. From \eqref{eq:mfun_tpnorm} $M(\theta_\gamma^*,\vartheta,\alpha_\gamma^*)$ trivially has the maximizer
\begin{eqnarray*}
\vartheta_\gamma^*= \left\{\int \left[\dfrac{\vert y_1-x_1^T\theta_\gamma^*\vert}{1+\alpha_\gamma^*} I(y_1<x_1^T\theta_\gamma^*)
+  \dfrac{\vert y_1-x_1^T\theta_\gamma^*\vert}{1-\alpha_\gamma^*} I(y_1\geq x_1^T\theta_\gamma^*) \right]dS_0(y_1\vert x_1) d\Psi(x_1)\right\}^2,
\end{eqnarray*}
and from the likelihood equations we have that
\begin{align}\label{eq:mle_vartheta_tplap}
\widehat{\vartheta}_\gamma=\left[ \frac{1}{n} \left(
\sum_{i=1}^{n} \frac{\vert y_i-x_i^T \widehat{\theta}_\gamma\vert}{1+\widehat{\alpha}_\gamma} I(y_i \leq x_i^T\widehat{\theta}_\gamma) +
\frac{\vert y_i-x_i^T \widehat{\theta}_\gamma\vert}{1-\widehat{\alpha}_\gamma} I(y_i>x_i^T\widehat{\theta}_\gamma)\right)\right]^2.
\end{align}
Let us define
\begin{eqnarray*}
\rho(y_i,x_i,\theta_\gamma,\alpha)= \frac{\vert y_i-x_i^T \theta_\gamma\vert}{1+\alpha} I(y_i \leq x_i^T\theta_\gamma) +
\frac{\vert y_i-x_i^T \theta_\gamma\vert}{1-\alpha} I(y_i>x_i^T\theta_\gamma).
\end{eqnarray*}
Then, by the triangle inequality
\begin{eqnarray*}
\left\vert \sqrt{\widehat{\vartheta}_\gamma} - \sqrt{\vartheta_\gamma^*} \right\vert  \leq \left\vert \sqrt{\widehat{\vartheta}_\gamma} - \dfrac{1}{n}\sum_{i=1}^n\rho(y_i,x_i,\theta_\gamma^*,\alpha_\gamma^*) \right\vert + \left \vert \dfrac{1}{n}\sum_{i=1}^n\rho(y_i,x_i,\theta_\gamma^*,\alpha_\gamma^*) - \sqrt{\vartheta_\gamma^*} \right\vert.
\end{eqnarray*}
For the second term in the right-hand side of the last equation, it follows, by the law of large numbers and the continuous mapping theorem, that
$$\left\vert \dfrac{1}{n}\sum_{i=1}^n\rho(y_i,x_i,\theta_\gamma^*,\alpha_\gamma^*) - \sqrt{\vartheta_\gamma^*} \right\vert \stackrel{P}{\rightarrow} 0.$$
For the first term we have
\begin{eqnarray*}
\left\vert \sqrt{\widehat{\vartheta}_\gamma} - \dfrac{1}{n}\sum_{i=1}^n\rho(y_i,x_i,\theta_\gamma^*,\alpha_\gamma^*) \right\vert &=& \left\vert M_n(\widehat{\theta}_\gamma,1,\widehat{\alpha}_\gamma) - M_n(\theta_\gamma^*,1,\alpha_\gamma^*) \right\vert\\
&\leq& \left\vert M_n(\widehat{\theta}_\gamma,1,\widehat{\alpha}_\gamma) - M(\widehat{\theta}_\gamma,1,\widehat{\alpha}_\gamma) \right\vert \\
&+& \left\vert   M(\widehat{\theta}_\gamma,1,\widehat{\alpha}_\gamma) -  M(\theta_\gamma^*,1,\alpha_\gamma^*) \right\vert\\
&+& \left\vert   M(\theta_\gamma^*,1,\alpha_\gamma^*) - M_n(\theta_\gamma^*,1,\alpha_\gamma^*) \right\vert\\
&\leq& 2 \sup_{(\theta_\gamma,\alpha) \in \Gamma} \left\vert M_n(\theta_\gamma,1,\alpha) - M(\theta_\gamma,1,\alpha)\right\vert\\
&+& \left\vert   M(\widehat{\theta}_\gamma,1,\widehat{\alpha}_\gamma) -  M(\theta_\gamma^*,1,\alpha_\gamma^*) \right\vert.
\end{eqnarray*}
By using \eqref{SCL1}, the consistency of $(\widehat{\theta}_\gamma,\widehat{\alpha}_\gamma)$, and the continuous mapping theorem it follows that $\left\vert \sqrt{\widehat{\vartheta}_\gamma} - \dfrac{1}{n}\sum_{i=1}^n\rho(y_i,x_i,\theta_\gamma^*,\alpha_\gamma^*)\right\vert \stackrel{P}{\rightarrow} 0$. Consequently, $\widehat{\vartheta}_\gamma \stackrel{P}{\rightarrow} \vartheta_\gamma^*$, which completes the proof.

\subsection{Proof of Proposition \ref{prop:AsympNorm}}

\subsection*{\underline{Two-piece normal errors ($k=1$)}}
The proof technique consists of showing first that $\dot{m}_{\eta}(y_1,x_1)$ is dominated by an $L^2$ function (square integrable), $K(y_1,x_1)$, for $\eta$ in a neighborhood of $\eta_\gamma^*$. Then, we prove that the function $P m_{\eta}$ admits a second-order Taylor expansion at $\eta_\gamma^*$ and that the matrix $V_{\eta_\gamma^*}$ is nonsingular. Finally, we appeal to the consistency result in Proposition \ref{prop:Consistency} in order to apply Theorem 5.23 of \cite{vandervaart:1998} to prove the asymptotic normality of $\widehat{\eta}_\gamma$.

We first note that under assumptions A1--A4, where A4 is assumed to be satisfied for $j=4$ throughout, Proposition \ref{prop:Consistency} implies the existence and uniqueness of $\eta_\gamma^* $. The gradient of $m_{\eta}(y_1,x_1)$, which is given by (i) in Proposition \ref{prop:lhood_skewnorm_properties} (with $n=1$), is bounded for all $\eta\in\Gamma$ and for each $(y_1,x_1)$, due to the compactness of $\Gamma$. Now, a direct application of the Minkowski inequality implies that $\vert \vert \dot{m}_{\eta}(y_1,x_1) \vert \vert$ is upper bounded by the sum of the absolute values of the entries of $\dot{m}_{\eta}(y_1,x_1)$. Let us now define $K(y_1,x_1)=\sup_{\eta\in{\mathcal B}_{\eta_\gamma^*}} \vert \vert \dot{m}_{\eta}(y_1,x_1) \vert \vert$, where ${\mathcal B}_{\eta_\gamma^*}\subset \Gamma$ is any neighborhood of $\eta_\gamma^*$, whose projection over $\theta$ coincides with ${\mathcal B}_{\theta_\gamma^*}$. Thus, from the expression of $\dot{m}_{\eta}(y_1,x_1)$ together with assumption A4, it follows that
\begin{eqnarray*}
\int K(y_1,x_1)^2 dS_0(y_1\vert x_1) d\Psi(x_1) < \infty,
\end{eqnarray*}
Then, by using the mean value theorem and the Cauchy-Schwartz inequality, it follows that for $\eta_1, \eta_2\in{\mathcal B}_{\eta_0}$, with probability 1,
\begin{eqnarray*}
\vert m_{\eta_1}(y_1,x_1) - m_{\eta_2}(y_1,x_1) \vert &=& \vert \dot{m}_{\eta_\star}(y_1,x_1)^{T} (\eta_1-\eta_2)\vert\\
&\leq& \vert \vert \dot{m}_{\eta_\star}(y_1,x_1) \vert \vert \cdot \vert \vert \eta_1-\eta_2 \vert \vert
\\&\leq& K(y_1,x_1) \cdot \vert \vert \eta_1-\eta_2 \vert \vert,
\end{eqnarray*}
where $\eta_\star = (1-c)\eta_1+c\eta_2$, for some $c\in(0,1)$.

Now, for each $ x_1$:
\begin{eqnarray*}
P m_{\eta\vert x_1} = {\mathbb E}[m_{\eta}\vert  x_1] &=& - \dfrac{1}{2}\log(2\pi) -\dfrac{1}{2}\log(\vartheta) \\ &-&\dfrac{1}{2\vartheta(1+\alpha)^2}\int_{-\infty}^{ x_1^{T}\theta_\gamma} (y_1- x_1^{T}\theta_\gamma)^2 dS_0(y_1\vert  x_1) \\
&-&\dfrac{1}{2\vartheta(1-\alpha)^2}\int_{ x_1^{T}\theta_\gamma}^{\infty} (y_1- x_1^{T}\theta_\gamma)^2  dS_0(y_1\vert  x_1).
\end{eqnarray*}
Thus, the gradient of $P m_{\eta\vert x_1}$ is given by
\begin{eqnarray*}
\dfrac{\partial}{\partial \theta_\gamma}P m_{\eta\vert x_1} &=& -\dfrac{ x_1}{\vartheta(1+\alpha)^2}I_1 + \dfrac{ x_1}{\vartheta(1-\alpha)^2}I_2,\\
\dfrac{\partial}{\partial \vartheta}P m_{\eta\vert x_1} &=& -\dfrac{1}{2\vartheta} + \dfrac{I_3}{2\vartheta^2(1+\alpha)^2}  + \dfrac{I_4}{2\vartheta^2(1-\alpha)^2},\\
\dfrac{\partial}{\partial \alpha}P m_{\eta\vert x_1} &=&  \dfrac{I_3}{\vartheta(1+\alpha)^3} - \dfrac{I_4}{\vartheta(1-\alpha)^3},
\end{eqnarray*}
Then, the second derivative matrix is given by
\begin{eqnarray*}
\dfrac{\partial^2}{\partial \theta_\gamma^2}P m_{\eta\vert x_1} &=& -\dfrac{ x_1  x_1^{T}[(1+\alpha)^2-4\alpha S_0( x_1^{T} \theta_\gamma\vert  x_1)]}{\vartheta(1-\alpha^2)^2},\\
\dfrac{\partial^2}{\partial \vartheta^2}P m_{\eta\vert x_1} &=& \dfrac{1}{2\vartheta^2} - \dfrac{I_3}{\vartheta^3(1+\alpha)^2}- \dfrac{I_4}{\vartheta^3(1-\alpha)^2},\\
\dfrac{\partial^2}{\partial \alpha^2}P m_{\eta\vert x_1} &=& -\dfrac{3 I_3}{\vartheta (1+\alpha)^4} - \dfrac{3 I_4}{\vartheta (1-\alpha)^4},\\
\dfrac{\partial^2}{\partial \vartheta \partial \theta_\gamma}P m_{\eta\vert x_1} &=& \dfrac{ x_1}{\vartheta^2(1+\alpha)^2}I_1 - \dfrac{ x_1}{\vartheta^2(1-\alpha)^2}I_2,\\
\dfrac{\partial^2}{\partial \alpha \partial \theta_\gamma}P m_{\eta\vert x_1} &=& \dfrac{2 x_1}{\vartheta(1+\alpha)^3}I_1 + \dfrac{2 x_1}{\vartheta(1-\alpha)^3}I_2,\\
\dfrac{\partial^2}{\partial \vartheta \partial \alpha }P m_{\eta\vert x_1} &=& -\dfrac{I_3}{\vartheta^2(1+\alpha)^3} + \dfrac{I_4}{\vartheta^2(1-\alpha)^3},
\end{eqnarray*}
where $I_1 = \int_{-\infty}^{ x_1^T\theta_\gamma} S_0(y_1\vert x_1)dy_1$, and $I_2 = \int_{ x_1^T\theta_\gamma}^{\infty} \left[1-S_0(y_1\vert x_1)\right]dy_1$, $I_3 = \int_{-\infty}^{ x_1^{T}\theta_\gamma} (y_1- x_1^{T}\theta_\gamma)^2 dS_0(y_1\vert  x_1)$, and $I_4 = \int_{ x_1^{T}\theta_\gamma}^{\infty} (y_1- x_1^{T}\theta_\gamma)^2  dS_0(y_1\vert  x_1)$. These entries are finite for all $\eta \in \Gamma$ by assumption A4. Note that $P m_{\eta} = {\mathbb E}[P m_{\eta\vert x_1}]$, where the expectation is taken over $x_1$. Assumptions A1--A4 together with Proposition \eqref{prop:Consistency} imply that $P m_{\eta}$ is finite and that this expectation is concave and has a unique maximum at $\eta_\gamma^*$. From assumption A5,
\begin{eqnarray*}
\dfrac{\partial}{\partial \theta_\gamma} P m_{\eta}\Bigg\vert_{\eta=\eta_\gamma^*} &=& {\mathbb E}\left[ \dfrac{\partial}{\partial \theta_\gamma}P m_{\eta\vert x_1}\right]\Bigg\vert_{\eta=\eta_\gamma^*} = 0,\\
\dfrac{\partial}{\partial \alpha} P m_{\eta}\Bigg\vert_{\eta=\eta_\gamma^*} &=& {\mathbb E}\left[ \dfrac{\partial}{\partial \alpha}P m_{\eta\vert x_1}\right]\Bigg\vert_{\eta=\eta_\gamma^*} = 0 ,
\end{eqnarray*}
which in turn imply that $\dfrac{\partial^2}{\partial \vartheta \partial \theta_\gamma}P m_{\eta}=0$ and $\dfrac{\partial^2}{\partial \vartheta \partial \alpha }P m_{\eta} = 0$ at $\eta = \eta_\gamma^*$. Thus, the matrix of second derivatives evaluated at $\eta_\gamma^*$ has the following structure:
\begin{eqnarray*}
V_{\eta} = \left( \begin{array}{ccc}
\dfrac{\partial^2}{\partial \theta_\gamma^2}P m_{\eta} & 0 & \dfrac{\partial^2}{\partial \vartheta \partial \alpha }P m_{\eta} \\
0 & \dfrac{\partial^2}{\partial \vartheta^2}P m_{\eta} & 0 \\
\dfrac{\partial^2}{\partial \vartheta \partial \alpha }P m_{\eta} & 0 & \dfrac{\partial^2}{\partial \alpha^2}P m_{\eta} \end{array} \right).
\end{eqnarray*}
Consequently, the determinant of this matrix is given by
\begin{eqnarray*}
\operatorname{det}V_{\eta} = \dfrac{\partial^2}{\partial \vartheta^2}P m_{\eta}\times \operatorname{det}\left( \begin{array}{cc}
\dfrac{\partial^2}{\partial \theta_\gamma^2}P m_{\eta} & \dfrac{\partial^2}{\partial \vartheta \partial \alpha }P m_{\eta} \\
\dfrac{\partial^2}{\partial \vartheta \partial \alpha }P m_{\eta} & \dfrac{\partial^2}{\partial \alpha^2}P m_{\eta} \end{array} \right).
\end{eqnarray*}
The determinant on the right-hand side of this expression, evaluated at $\eta_\gamma^*$, is non-zero since the $P m_{\eta}$ is concave with respect to $(\theta_\gamma,\alpha)$, as shown in Proposition \ref{prop:Consistency}. Moreover, the fact that the first derivative $\dfrac{\partial}{\partial \vartheta}P m_{\eta} =0$ at $\eta=\eta_\gamma^*$ together with the fact that $\eta_\gamma^*$ is the unique maximizer implies that $ \dfrac{\partial^2}{\partial \vartheta^2}P m_{\eta}\neq 0$. Consequently, the matrix of second derivatives of $P m_{\eta}$ is nonsingular at $\eta_\gamma^*$. The asymptotic normality result follows by Theorem 5.23 from \cite{vandervaart:1998}.

\subsection*{\underline{Two-piece Laplace errors ($k=2$)}}
First, we note that under assumptions A1--A4, where $j=2$ in A4 throughout, Proposition \ref{prop:Consistency} implies the existence and uniqueness of $\eta_\gamma^* $. The gradient of $m_{\eta}(y_1,x_1)$, which is given by (i) in Proposition \ref{prop:lhood_skewlap_properties} (with $n=1$), is bounded for almost all $\eta\in\Gamma$ and for each $(y_1,x_1)$, due to the compactness of $\Gamma$. Now, a direct application of the Minkowski inequality implies that $\vert \vert \dot{m}_{\eta}(y_1,x_1) \vert \vert$ is upper bounded almost surely by the sum of the absolute values of the entries of $\dot{m}_{\eta}(y_1,x_1)$. Let us now define $K(y_1,x_1)=\sup_{\eta\in{\mathcal B}_{\eta_\gamma^*}} \vert \vert \dot{m}_{\eta}(y_1,x_1) \vert \vert$, where ${\mathcal B}_{\eta_\gamma^*}\subset \Gamma$ is any neighborhood of $\eta_\gamma^*$, whose projection over $\theta_\gamma$ coincides with ${\mathcal B}_{\theta_\gamma^*}$. Thus, from the expression of $\dot{m}_{\eta}(y_1,x_1)$ together with assumption A4, it follows that
\begin{eqnarray*}
\int K(y_1,x_1)^2 dS_0(y_1\vert x_1) d\Psi(x_1) < \infty,
\end{eqnarray*}
Then, by using the mean value theorem and the Cauchy-Schwartz inequality, it follows that for $\eta_1, \eta_2\in{\mathcal B}_{\eta_\gamma^*}$, with probability 1,
\begin{eqnarray*}
\vert m_{\eta_1}(y_1,x_1) - m_{\eta_2}(y_1,x_1) \vert &=& \vert \dot{m}_{\eta_\star}(y_1,x_1)^{T} (\eta_1-\eta_2)\vert\\
&\leq& \vert \vert \dot{m}_{\eta_\star}(y_1,x_1) \vert \vert \cdot \vert \vert \eta_1-\eta_2 \vert \vert
\\&\leq& K(y_1,x_1) \cdot \vert \vert \eta_1-\eta_2 \vert \vert,
\end{eqnarray*}
where $\eta_\star = (1-c)\eta_1+c\eta_2$, for some $c\in(0,1)$.

Now, for each $ x_1$:
\begin{eqnarray*}
P m_{\eta\vert x_1} = {\mathbb E}[m_{\eta}\vert x_1] &=& - \log(2) - \dfrac{1}{2}\log(\vartheta) -\dfrac{1}{\sqrt{\vartheta}(1+\alpha)}\int_{-\infty}^{x_1^{T}\theta_\gamma} S_0(y_1\vert x_1)dy_1 \\
&-&\dfrac{1}{\sqrt{\vartheta}(1-\alpha)}\int_{x_1^{T}\theta_\gamma}^{\infty} 1-S_0(y_1\vert x_1)dy_1.
\end{eqnarray*}
Then, the gradient of $P m_{\eta\vert x_1}$ is given by
\begin{eqnarray*}
\dfrac{\partial}{\partial \theta_\gamma}P m_{\eta\vert x_1} &=& -\dfrac{x_1 S_0(x_1^{T}\theta_\gamma \vert x_1)}{\sqrt{\vartheta}(1+\alpha)} + \dfrac{x_1 [1- S_0(x_1^{T}\theta_\gamma \vert x_1)]}{\sqrt{\vartheta}(1-\alpha)},\\
\dfrac{\partial}{\partial \vartheta}P m_{\eta\vert x_1} &=& -\dfrac{1}{2\vartheta} + \dfrac{I_1}{2\vartheta^{3/2}(1+\alpha)} + \dfrac{I_2}{2\vartheta^{3/2}(1-\alpha)},\\
\dfrac{\partial}{\partial \alpha}P m_{\eta\vert x_1} &=&  \dfrac{I_1}{\sqrt{\vartheta}(1+\alpha)^2} - \dfrac{I_2}{\sqrt{\vartheta}(1-\alpha)^2},
\end{eqnarray*}
where $I_1 = \int_{-\infty}^{x_1^T\theta_\gamma} S_0(y_1\vert x_1)dy$, and $I_2 = \int_{x_1^T\theta_\gamma}^{\infty} 1-S_0(y_1\vert x_1)dy_1$, which are finite by assumption A4. Then, the second derivative matrix is given by
\begin{eqnarray*}
\dfrac{\partial^2}{\partial \theta_\gamma^2}P m_{\eta\vert x_1} &=& -\dfrac{2x_1x_1^{T} s_0(x_1^{T}\theta_\gamma \vert x_1)}{\sqrt{\vartheta}(1-\alpha^2)},\\
\dfrac{\partial^2}{\partial \vartheta^2}P m_{\eta\vert x_1} &=& \dfrac{1}{2\vartheta^2} - \dfrac{3I_1}{4\vartheta^{5/2}(1+\alpha)} - \dfrac{3I_2}{4\vartheta^{5/2}(1-\alpha)},\\
\dfrac{\partial^2}{\partial \alpha^2}P m_{\eta\vert x_1} &=&  -\dfrac{2I_1}{\sqrt{\vartheta}(1+\alpha)^3} - \dfrac{2I_2}{\sqrt{\vartheta}(1-\alpha)^3},\\
\dfrac{\partial^2}{\partial \vartheta \partial \theta_\gamma}P m_{\eta\vert x_1} &=& \dfrac{x_1 S_0(x_1^{T}\theta_\gamma \vert x_1)}{2\vartheta^{3/2}(1+\alpha)} - \dfrac{x_1 [1- S_0(x_1^{T}\theta_\gamma \vert x_1)]}{2\vartheta^{3/2}(1-\alpha)},\\
\dfrac{\partial^2}{\partial \alpha \partial \theta_\gamma}P m_{\eta\vert x_1} &=& \dfrac{x_1 S_0(x_1^{T}\theta_\gamma \vert x_1)}{\sqrt{\vartheta}(1+\alpha)^2} + \dfrac{x_1 [1- S_0(x_1^{T}\theta_\gamma \vert x_1)]}{\sqrt{\vartheta}(1-\alpha)^2},\\
\dfrac{\partial^2}{\partial \vartheta \partial \alpha }P m_{\eta\vert x_1} &=& -\dfrac{I_1}{2\vartheta^{3/2}(1+\alpha)^2} + \dfrac{I_2}{2\vartheta^{3/2}(1-\alpha)^2}.
\end{eqnarray*}
These entries are finite for all $\eta \in \Gamma$ by assumption A4. Note that $P m_{\eta} = {\mathbb E}[P m_{\eta\vert x_1}]$, where the expectation is taken over $x_1$. Assumptions A1--A4 together with Proposition \ref{prop:Consistency}, imply that $P m_{\eta}$ is finite and that this expectation is concave and has a unique maximum at $\eta_\gamma^*$. From assumption A5,
\begin{eqnarray*}
\dfrac{\partial}{\partial \theta_\gamma} P m_{\eta}\Bigg\vert_{\eta=\eta_\gamma^*} &=& {\mathbb E}\left[ \dfrac{\partial}{\partial \theta_\gamma}P m_{\eta\vert x_1}\right]\Bigg\vert_{\eta=\eta_\gamma^*} = 0,\\
\dfrac{\partial}{\partial \alpha} P m_{\eta}\Bigg\vert_{\eta=\eta_\gamma^*} &=& {\mathbb E}\left[ \dfrac{\partial}{\partial \alpha}P m_{\eta\vert x_1}\right]\Bigg\vert_{\eta=\eta_\gamma^*} = 0 ,
\end{eqnarray*}
which in turn imply that $\dfrac{\partial^2}{\partial \vartheta \partial \theta_\gamma}P m_{\eta}=0$ and $\dfrac{\partial^2}{\partial \vartheta \partial \alpha }P m_{\eta} = 0$ at $\eta=\eta_\gamma^*$. Thus, it follows that the matrix of second derivatives evaluated at $\eta_\gamma^*$ has the structure:
\begin{eqnarray*}
V_{\eta} = \left( \begin{array}{ccc}
\dfrac{\partial^2}{\partial \theta_\gamma^2}P m_{\eta} & 0 & \dfrac{\partial^2}{\partial \vartheta \partial \alpha }P m_{\eta} \\
0 & \dfrac{\partial^2}{\partial \vartheta^2}P m_{\eta} & 0 \\
\dfrac{\partial^2}{\partial \vartheta \partial \alpha }P m_{\eta} & 0 & \dfrac{\partial^2}{\partial \alpha^2}P m_{\eta} \end{array} \right).
\end{eqnarray*}
Consequently, the determinant of this matrix is given by
\begin{eqnarray*}
\operatorname{det}V_{\eta} = \dfrac{\partial^2}{\partial \vartheta^2}P m_{\eta}\times \operatorname{det}\left( \begin{array}{cc}
\dfrac{\partial^2}{\partial \theta_\gamma^2}P m_{\eta} & \dfrac{\partial^2}{\partial \vartheta \partial \alpha }P m_{\eta} \\
\dfrac{\partial^2}{\partial \vartheta \partial \alpha }P m_{\eta} & \dfrac{\partial^2}{\partial \alpha^2}P m_{\eta} \end{array} \right).
\end{eqnarray*}
The determinant on the right-hand side of this expression, evaluated at $\eta_\gamma^*$, is non-zero since the $P m_{\eta}$ is concave with respect to $(\theta_\gamma,\alpha)$, as shown in Proposition \ref{prop:Consistency}. Moreover, the fact that the first derivative $\dfrac{\partial}{\partial \vartheta}P m_{\eta} =0$ at $\eta=\eta_\gamma^*$ together with the fact that $\eta_\gamma^*$ is the unique maximizer implies that $ \dfrac{\partial^2}{\partial \vartheta^2}P m_{\eta}\neq 0$. Consequently, the matrix of second derivatives of $P m_{\eta}$ is nonsingular at $\eta_\gamma^*$. The asymptotic normality result follows by Theorem 5.23 from \cite{vandervaart:1998}.

\subsection{Proof of Corollary \ref{cor:asymp_mle_misspec}}
\label{proof:asymp_mle_lapl}
The result when $\epsilon_i\sim L(0,\vartheta)$ follows directly from \cite{pollard:1991} Theorem 1, hence it suffices to find the
expression for $f_0$ under each assumed residual distribution.
The median for a general two-piece distribution with a mode at 0 is given by
$\sqrt{\vartheta} (1+\alpha) F^{-1}\left( \frac{1}{2(1+\alpha)}  \right)$ if $\alpha >0$
and $\sqrt{\vartheta} (1-\alpha) F^{-1} \left( \frac{(1-2\alpha)}{2(1-\alpha)} \right)$ if $\alpha\leq 0$,
where $F(\cdot)$ is the cdf of the standard underlying distribution with mode 0, $\vartheta=1$
(\cite{arellanovalle:2005}, Expression (9)).

When $\epsilon_i \sim N(0,\vartheta)$ we have $m=0$ and hence $f_0=N(0;0,\vartheta)=1/(\sqrt{2\pi\vartheta})$.
When $\epsilon_i \sim \mbox{AN}(0,\vartheta,\alpha)$ we have
$m=\sqrt{\vartheta}(1+\alpha) \Phi^{-1}(0.5/(1+\alpha))$ if $\alpha>0$
and $m=\sqrt{\vartheta}(1-\alpha) \Phi^{-1}(0.5(1-2 \alpha)/(1-\alpha))$ if $\alpha <0$,
where $\Phi^{-1}(\cdot)$ is the inverse standard cdf,
and hence
$f_0= \exp \left\{ -\frac{1}{2} \left( \Phi^{-1}\left(\frac{0.5}{1+|\alpha|}\right)  \right)^2 \right\} \frac{1}{\sqrt{2\pi\vartheta}}$.
For the Laplace and Asymmetric Laplace, we note that the inverse cdf of the standard Laplace distribution
evaluated at a quantile $q \in [0,1]$ is
$F^{-1}(q)= \log(2q)$ if $q<0.5$ and $F^{-1}(q)= - \log(2(1-q))$ if $q \geq 0.5$.
When $\epsilon_i \sim L(0,\vartheta)$ we have $m=0$ and $f_0=1/(2\sqrt{\vartheta})$.
Finally, when $\epsilon_i \sim \mbox{AL}(0,\vartheta,\alpha)$ we have
$m=-\sqrt{\vartheta}(1+\alpha) \log (1+\alpha)$ if $\alpha>0$ and
$m=\sqrt{\vartheta}(1-\alpha) \log (1-\alpha)$ if $\alpha<0$,
from which it follows that $f_0= \frac{1}{2\sqrt{\vartheta}} \exp \left\{ -\log(1+|\alpha|)  \right\}= \frac{1}{2\sqrt{\vartheta} (1+|\alpha|)}$.

The results for the true Normal model follows by using classic asymptotic results on least square estimators (see \emph{e.g.} \cite{newey:1987})

\subsection{Proof of Proposition \ref{prop:bfrates}}
\label{ssec:proof_bfrates}

We provide the proof for the asymmetric Normal and asymmetric Laplace ($\alpha \neq 0$), their symmetric counterparts follow as particular cases.
$\eta_\gamma=(\theta_\gamma,\vartheta_\gamma,\alpha_\gamma)$ denotes the parameter vector under model $\gamma$,
$\widehat{\eta}_\gamma$ the MLE and $\tilde{\eta}_\gamma$ the posterior mode for a given observed $(y,X)$.
Further, $M_k(\eta_\gamma)= E (\log L_k(\eta_\gamma))$ where the expectation is with respect to the data-generating truth
and $\eta_\gamma^*= \arg\max_{\eta \in \Gamma_\gamma} M_k(\eta_\gamma)$ is the optimal parameter value under $\gamma$.
We wish to characterize the asymptotic behaviour of the Laplace-approximated Bayes factors
\begin{align}
\frac{\widehat{p}(y \mid \gamma)}{\widehat{p}(y \mid \gamma^*)}=
e^{\log L_k(\tilde{\eta}_\gamma) - \log L_k(\tilde{\eta}_{\gamma^*})} \times
\frac{p(\tilde{\eta}_\gamma \mid \gamma)}{p(\tilde{\eta}_{\gamma^*} \mid \gamma^*)} \times
(2 \pi)^{\frac{p_\gamma - p_{\gamma^*}}{2}} \times
\frac{ \left| H_k(\tilde{\eta}_{\gamma^*}) \right|^{\frac{1}{2}}}{\left| H_k(\tilde{\eta}_{\gamma}) \right|^{\frac{1}{2}}},
\label{eq:laplaceapprox_bf}
\end{align}
when $(y,X)$ arise from the data-generating model in Condition A1, which may differ from the assumed model.
{\textcolor{black}{
The term $(2 \pi)^{\frac{p_\gamma - p_{\gamma^*}}{2}}$ is a constant since $p_\gamma$ and $p_{\gamma^*}$ are fixed.
The expression for $H_1$ is given by \eqref{eq:skewnorm_logl_hessian_repar}
and recall that for $H_2$ we are taking the asymptotic covariance in \eqref{eq:alaplace_expected_hessian}.
Hence
$$
|H_2(\tilde{\eta}_\gamma)|=n^{p_\gamma} \left\vert\frac{1}{n} H_2(\tilde{\eta}_\gamma)\right\vert=
n^{p_\gamma}
\begin{vmatrix}
-\frac{1}{n}X^TX \dfrac{1}{\tilde{\vartheta}_\gamma(1-\tilde{\alpha}_\gamma^2)} &
\frac{\overline{x}}{\sqrt{\tilde{\vartheta}_\gamma}(1-\tilde{\alpha}_\gamma^2)}
& 0\\
\frac{\overline{x}}{\sqrt{\tilde{\vartheta}_\gamma}(1-\tilde{\alpha}_\gamma^2)} &
-\frac{1}{4 \tilde{\vartheta}_\gamma^2} & 0 \\
0 & 0 & -\frac{2}{1-\tilde{\alpha}_\gamma^2}
\end{vmatrix}.
$$
The determinant converges in probability to a negative constant
since $\tilde{\eta}_\gamma \stackrel{P}{\longrightarrow} \eta_\gamma^*$ by Proposition \ref{prop:Consistency}, together with
the continuous mapping theorem and the asymptotic Hessian (the limiting $-H_2$) being positive definite.
An analogous argument applies to $H_1$, hence
$n^{\frac{p_\gamma-p_{\gamma^*}}{2}} |H_k(\tilde{\eta}_{\gamma^*})|^{\frac{1}{2}}/ | H_k(\tilde{\eta}_{\gamma})|^{\frac{1}{2}}
\stackrel{P}{\longrightarrow} \tilde{a}_3$
for some constant $\tilde{a}_3>0$. In other words, $|H_k(\tilde{\eta}_{\gamma^*})|^{\frac{1}{2}}/ | H_k(\tilde{\eta}_{\gamma})|^{\frac{1}{2}}
= O_p \left(n^{\frac{p_{\gamma^*}-p_\gamma}{2}}\right)$.
}

The proof strategy is to first show that when $M_k(\eta_{\gamma}^*)-M_k(\eta_{\gamma^*}^*)<0$ ({\it i.e.} $\gamma^* \not\subset \gamma$)
the log-first term of the right hand in \eqref{eq:laplaceapprox_bf}
{\textcolor{black}{
behaves asymptotically in probability as $-na_1$, for some constant $a_1>0$,
and the logarithm of the second term converges in probability to a constant $a_2$. Thus,
$$\frac{1}{n} \log\left(\frac{\widehat{p}(y \mid \gamma)}{\widehat{p}(y \mid \gamma^*)}\right)
= -a_1(1+o_p(1)) + \frac{1}{n} \left( a_2 + \frac{p_\gamma-p_{\gamma^*}}{2} ( a_3 - \log(n)) \right) \stackrel{P}{\longrightarrow} -a_1$$
where $a_3=\log(2\pi) - \log(\tilde{a}_3)$, as we wish to prove.
}}
Subsequently we shall show that when $M_k(\eta_{\gamma}^*)-M_k(\eta_{\gamma^*}^*)=0$ (the case $\gamma^* \subset \gamma$)
the first term is essentially the likelihood ratio test statistic and is $O_p(1)$,
whereas, analogously to the results in \cite{johnson:2010} and \cite{rossell:2017},
the second term converges to a positive constant under local priors,
but it is $O_p(\tilde{b}_n)$ where $\tilde{b}_n= n^{p_{\gamma^*}-p_\gamma}$ under the pMOM prior
and $\tilde{b}_n= e^{-c\sqrt{n}}$ for some $c>0$ under the peMOM prior.
{\textcolor{black}{
This gives
$$
\frac{\widehat{p}(y \mid \gamma)}{\widehat{p}(y \mid \gamma^*)}=
e^{O_p(1)} O_p(\tilde{b}_n) O_p\left(n^{\frac{p_{\gamma^*}-p_\gamma}{2}}\right)=
O_p(b_n)
$$
}}
where $b_n= n^{\frac{p_{\gamma^*}-p_\gamma}{2}}$ for local priors,
$b_n= n^{3(p_{\gamma^*}-p_\gamma)/2}$ for the pMOM prior and $b_n= e^{-c\sqrt{n}} n^{\frac{p_{\gamma^*}-p_\gamma}{2}}$ for the peMOM prior, as we wish to prove.

Consider first the case when $\gamma^* \not\subset \gamma$, which implies $M_k(\eta_\gamma^*) - M_k(\eta_{\gamma^*}^*)<0$.
Then by continuity of $p(\eta_\gamma \mid \gamma)$ we have that
$p(\tilde{\eta}_\gamma \mid \gamma) \stackrel{P}{\longrightarrow} p(\eta_\gamma^* \mid \gamma) \geq 0$,
and analogously $p(\tilde{\eta}_{\gamma^*} \mid \gamma^*) \stackrel{P}{\longrightarrow} p(\eta^*_{\gamma^*} \mid \gamma^*)>0$
(strict positivity is ensured by the assumption of prior positivity at $\eta^*_{\gamma^*}$).
{\textcolor{black}{
Hence
$
p(\tilde{\eta}_\gamma \mid \gamma)/p(\tilde{\eta}_{\gamma^*} \mid \gamma^*) \stackrel{P}{\longrightarrow} a_2
$
for some constant $a_2 \geq 0$.
Note that $a_2=0$ when $\theta_\gamma^*$ contains some zeroes and hence a non-local prior would take the value
$p(\eta_\gamma^* \mid \gamma)=0$,
but this gives even faster Bayes factor rates in favor of $\gamma^*$.
}}
Regarding $\log L_k(\tilde{\eta}_\gamma) - \log L_k(\tilde{\eta}_{\gamma^*})$,
the law of large numbers and uniform convergence of $\log L_k$ to its expected value shown in Proposition \ref{prop:Consistency} give that
\begin{align}
\frac{1}{n} (\log L_k(\tilde{\eta}_\gamma) - \log L_k(\tilde{\eta}_{\gamma^*})) \stackrel{P}{\longrightarrow}
(M_k(\eta_\gamma^*) - M_k(\eta_{\gamma^*}^*)) < 0,
\label{eq:lln_lrt}
\end{align}
{\textcolor{black}{
hence the constant $a_1$ defined above is $a_1=M_k(\eta_{\gamma^*}^*) - M_k(\eta_\gamma^*) >0$.
}}

Next consider the case when $\gamma^* \subset \gamma$, which implies $M_k(\eta_\gamma^*) - M_k(\eta_{\gamma^*}^*)=0$.
Since $\tilde{\eta}_\gamma \stackrel{P}{\longrightarrow} \eta_\gamma^*$ by Proposition \ref{prop:Consistency},
we have that under a local prior
\begin{align}
\frac{p(\tilde{\eta}_\gamma \mid \gamma)}{p(\tilde{\eta}_{\gamma^*} \mid \gamma^*)} \stackrel{P}{\longrightarrow}
\frac{p(\eta_\gamma \mid \gamma)}{p(\eta_{\gamma^*} \mid \gamma^*)}>0.
\label{eq:limit_ratiopriors}
\end{align}
Under a non-local prior we still have
$p(\eta_{\gamma^*} \mid \gamma^*)>0$ but in contrast $p(\eta_\gamma \mid \gamma)=0$.
Thus, it is necessary to characterize the rate at which the latter term vanishes.
Briefly, following the proof of Theorem 1 in \cite{koenker:1982}, the fact that $\log L_k$ converges uniformly to its expectation
(see the proof of our Proposition \ref{prop:Consistency})
and consistency of $\tilde{\eta}_\gamma \stackrel{P}{\longrightarrow} \eta_\gamma^*$ give that $\log L_k$ can be approximated
by a quadratic function plus a term that is $o_p(1)$.
Then, the argument leading to \cite{rossell:2017}, Proposition 2(i), gives that
$\tilde{\theta}_{\gamma j}-\widehat{\theta}_{\gamma j}=O_p(1/n)$ and thus $\tilde{\theta}_{\gamma j}=O_p(n^{-1/2})$ under the pMOM prior $p_M$,
whereas $\tilde{\theta}_{\gamma j}=O_p(n^{-1/4})$ under the peMOM prior $p_E$.
It follows that $\pi_M(\tilde{\theta}_\gamma)= O_p(1) \prod_{\theta_{\gamma j}^* \neq 0}^{} \tilde{\theta}_{\gamma j}^2= O_p(n^{-(p_\gamma - p_{\gamma^*})})$,
and $\pi_E(\tilde{\eta})= O_p(1) \prod_{\theta_{\gamma j}^* \neq 0}^{} e^{O_p(1)/\tilde{\theta}_{\gamma}^2}= O_p(e^{-c\sqrt{n}})$ for some $c>0$, as desired.

To conclude the proof,
since $\log L_k(\tilde{\eta}_\gamma) - \log L_k(\tilde{\eta}_{\gamma^*})= \lambda(y) + o_p(1)$
where $\lambda(y)=\log L_k(\widehat{\eta}_\gamma) - \log L_k(\widehat{\eta}_{\gamma^*})$ is the likelihood ratio (LR) statistic,
it only remains to show that $\lambda(y)=O_p(1)$.
The strategy is to see that $\lambda(y)= \lambda(y;\vartheta_\gamma^*)(1 + o_p(1))$,
where $\lambda(y;\vartheta_\gamma^*)=
\log L_k(\widehat{\theta}_\gamma,\vartheta_\gamma^*,\widehat{\alpha}_\gamma) - \log L_k(\widehat{\theta}_{\gamma^*},\vartheta_\gamma^*,\widehat{\alpha}_\gamma)$
is the LR obtained by plugging in the oracle $\vartheta_\gamma^*=\vartheta_{\gamma^*}^*$,
then use classical results to prove that $\lambda(y;\vartheta_\gamma^*)=O_p(1)$.
Taking derivatives of the likelihoods
(Expressions (\ref{eq:skewnorm_loglhood}) and (\ref{eq:skewlap_loglhood}) in the main paper)
shows that for $k=1$ the MLE must satisfy
$$
\widehat{\vartheta}_\gamma= \frac{1}{n} \left(
\sum_{i \in A(\theta)}^{} \frac{(y_i-x_i^T \widehat{\theta}_\gamma)^2}{(1+\hat{\alpha})^2} +
\sum_{i \not\in A(\theta)}^{} \frac{(y_i-x_i^T \widehat{\theta}_\gamma)^2}{(1-\hat{\alpha})^2}
 \right)=
\frac{1}{n} (y-X_\gamma \widehat{\theta}_\gamma)^T W_{\hat{\theta}_\gamma,\hat{\alpha}}^2 (y - X_\gamma \widehat{\theta}_\gamma)
,
$$
whereas for $k=2$ it satisfies
$$
\widehat{\vartheta}_\gamma^{\frac{1}{2}}= \frac{1}{n} \left(
\sum_{i \in A(\theta)}^{} \frac{|y_i-x_i^T \widehat{\theta}_\gamma|}{(1+\hat{\alpha})} +
\sum_{i \not\in A(\theta)}^{} \frac{|y_i-x_i^T \widehat{\theta}_\gamma|}{(1-\hat{\alpha})}
 \right)=
\frac{1}{n} | W_{\hat{\theta}_\gamma,\hat{\alpha}}^{\frac{1}{2}} (y-X_\gamma \widehat{\theta}_\gamma)|.
$$
Plugging $\widehat{\vartheta}_\gamma$ into the likelihoods gives 
\begin{align}
\lambda(y)&= -\frac{n}{2} \log \left( \frac{\widehat{\vartheta}_\gamma}{\widehat{\vartheta}_{\gamma^*}} \right)=
-\frac{n}{2} \log \left( 1 + \frac{\widehat{\vartheta}_{\gamma} - \widehat{\vartheta}_{\gamma^*}}{\widehat{\vartheta}_{\gamma^*}} \right)=
-\frac{n}{2} \frac{\widehat{\vartheta}_{\gamma} - \widehat{\vartheta}_{\gamma^*}}{\widehat{\vartheta}_{\gamma^*}} (1 + o_p(1)) \nonumber \\
&= -\frac{n}{2} \frac{\widehat{\vartheta}_{\gamma} - \widehat{\vartheta}_{\gamma^*}}{\vartheta_{\gamma^*}^*} (1 + o_p(1))=
\lambda(y; \vartheta_\gamma^*) (1 + o_p(1))
\label{eq:lrt_mlevartheta}
\end{align}
since by Proposition \ref{prop:Consistency} $\widehat{\vartheta}_{\gamma^*} \stackrel{P}{\longrightarrow} \vartheta_{\gamma^*}^* >0$
and $(\widehat{\vartheta}_{\gamma} - \widehat{\vartheta}_{\gamma^*})/\widehat{\vartheta}_{\gamma^*} \stackrel{P}{\longrightarrow} 0$.

{\textcolor{black}{Finally we show that $\lambda(y; \vartheta_\gamma^*)=O_p(1)$,
which implies $\lambda(y; \vartheta_\gamma^*) (1 + o_p(1))= O_p(1)$ and completes the proof.
For ease of notation when $k=1$ define
$Z_n(\gamma)= (y - X_\gamma \widehat{\theta}_\gamma)^T W^2_{\hat{\theta}_\gamma,\hat{\alpha}} (y-X_\gamma \widehat{\theta}_\gamma)$
and $Z(\gamma)= (y - X_\gamma \widehat{\theta}_\gamma)^T W^2_{\theta_\gamma^*,\alpha^*} (y-X_\gamma \widehat{\theta}_\gamma)$,
and when $k=2$ let
$Z_n(\gamma)= |W_{\hat{\theta}_\gamma,\hat{\alpha}}^{\frac{1}{2}}(y - X_\gamma \widehat{\theta}_\gamma)|$,
$Z(\gamma)= |W_{\theta^*_\gamma,\alpha^*}^{\frac{1}{2}}(y - X_\gamma \widehat{\theta}_\gamma)|$.
Then by definition
\begin{align}
\lambda(y; \vartheta_\gamma^*)= \frac{Z_n(\gamma^*) - Z_n(\gamma)}{2\vartheta_{\gamma}^*}.
\end{align}
Now, note that
$Z_n(\gamma)= Z(\gamma) + Z(\gamma) (Z_n(\gamma)-Z(\gamma))/Z(\gamma)= Z(\gamma) (1+o_p(1))$,
since Proposition \ref{prop:Consistency} gives that
$\frac{1}{n} Z_n(\gamma) \stackrel{P}{\longrightarrow} \vartheta_\gamma$,
$\frac{1}{n} Z(\gamma) \stackrel{P}{\longrightarrow} \vartheta_\gamma$
and hence $(Z_n(\gamma)-Z(\gamma))/Z(\gamma) \stackrel{P}{\longrightarrow} 0$.
Following the same argument $Z_n(\gamma^*)= Z(\gamma^*) (1+o_p(1))$, hence
\begin{align}
\lambda(y; \vartheta_\gamma^*)=
\frac{Z(\gamma^*) - Z(\gamma) +o_p(Z(\gamma^*) - Z(\gamma)) }{2\vartheta_\gamma^*}
=\frac{Z(\gamma^*) - Z(\gamma)}{2\vartheta_\gamma^*} (1+o_p(1)).
\label{eq:lrt_knownweights}
\end{align}
The term $(Z(\gamma^*)-Z(\gamma))/\vartheta_\gamma^*$ is the LR test statistic for fixed $(\vartheta_\gamma^*,\alpha_\gamma^*)$
comparing $\gamma$ and $\gamma^* \subset \gamma$.
}}
When $k=2$ this is a quantile regression LR test statistic,
which \cite{koenker:1982} showed to be asymptotically $\chi_{p_\gamma - p_{\gamma^*}}^2$ (after rescaling by a constant)
precisely under our Conditions A2-A3.
{\textcolor{black}{
When $k=1$,
$(Z(\gamma^*)-Z(\gamma))/\vartheta_\gamma^*$ is the LR test statistic for a weighted least squares problem
regressing $\tilde{y}=W_{\theta^*,\alpha^*} y$ on $\widetilde{X}=W_{\theta^*,\alpha^*}X$,
which can be shown to be $O_p(1)$ under the conditions in Proposition \ref{prop:AsympNorm}.
Briefly, as usual for any $\gamma$ the total sum of squares can be decomposed as
$\tilde{y}^T \tilde{y}= \widehat{\theta}_\gamma^T \widetilde{X}_\gamma^T \widetilde{X}_\gamma \widehat{\theta}_\gamma
+ (\tilde{y} - \widetilde{X}_\gamma \widehat{\theta}_\gamma)^T (\tilde{y} - \widetilde{X}_\gamma \widehat{\theta}_\gamma)$, hence
$Z(\gamma^*) - Z(\gamma)=$
\begin{align}
(\tilde{y} - \widetilde{X}_{\gamma^*} \widehat{\theta}_{\gamma^*})^T (\tilde{y}-\widetilde{X}_{\gamma^*} \widehat{\theta}_{\gamma^*})
- (\tilde{y} - \widetilde{X}_{\gamma} \widehat{\theta}_{\gamma})^T (\tilde{y}-\widetilde{X}_{\gamma} \widehat{\theta}_{\gamma})
= \widehat{\theta}_\gamma^T \widetilde{X}_\gamma^T \widetilde{X}_\gamma \widehat{\theta}_\gamma
- \widehat{\theta}_{\gamma^*}^T \widetilde{X}_{\gamma^*}^T \widetilde{X}_{\gamma^*} \widehat{\theta}_{\gamma^*}.
\label{eq:waldtest_tpnorm}
\end{align}
Without loss of generality let $\widetilde{X}_\gamma=(\widetilde{X}_{\gamma^*},\widetilde{X}_{\gamma \setminus \gamma^*})$,
where $\widetilde{X}_{\gamma \setminus \gamma^*}$ are the columns in $\widetilde{X}_\gamma$ not contained in $\widetilde{X}_{\gamma^*}$.
Let $R=(I - \widetilde{X}_{\gamma^*} (\widetilde{X}_{\gamma^*}^T \widetilde{X}_{\gamma^*})^{-1} \widetilde{X}_{\gamma^*}) \widetilde{X}_{\gamma \setminus \gamma^*}$
be orthogonal to the projection of $\widetilde{X}_{\gamma}$ onto $\widetilde{X}_{\gamma^*}$,
then clearly $\widetilde{X}_{\gamma^*}^T R=0$ and $(\widetilde{X}_{\gamma^*},R)$ span the column space of $\widetilde{X}_{\gamma}$.
Hence
$\widehat{\theta}_\gamma^T \widetilde{X}_\gamma^T \widetilde{X}_\gamma \widehat{\theta}_\gamma=
\widehat{\theta}_{\gamma^*}^T \widetilde{X}_{\gamma^*}^T \widetilde{X}_{\gamma^*} \widehat{\theta}_{\gamma^*}
+ \widehat{\theta}_R^T R^T R \widehat{\theta}_R,
$
where $\widehat{\theta}_R= (R^TR)^{-1}R^Ty$,
giving that $Z(\gamma^*)-Z(\gamma)= \widehat{\theta}_R^T R^T R \widehat{\theta}_R$.
By Proposition \ref{prop:AsympNorm},
$\sqrt{n} \widehat{\theta}_R \stackrel{D}{\longrightarrow} N(0,\vartheta_\gamma^* V)$ for a fixed positive-definite matrix $V$.
}

{\textcolor{black}{
To conclude, our Conditions A3-A4 guarantee $\frac{1}{n} R^T R \stackrel{P}{\longrightarrow} \Sigma_R$ for some fixed $\Sigma_R$
and by the continuous mapping theorem
$\sqrt{n} \Sigma_R^{\frac{1}{2}} \widehat{\theta}_R \stackrel{D}{\longrightarrow} N(0,\vartheta_\gamma^* \Sigma_R^{\frac{1}{2}} V \Sigma_R^{\frac{1}{2}})$.
Hence $\frac{n}{\vartheta_\gamma^*} \widehat{\theta}_R^T \Sigma_R \widehat{\theta}_R \stackrel{D}{\longrightarrow} Q$,
where $Q=O_p(1)$ is a sum of re-scaled central chi-square random variables with 1 degree of freedom.
By Slutsky's theorem
$\frac{Z(\gamma^*) - Z(\gamma)}{\vartheta_\gamma^*}= \frac{n}{\vartheta_\gamma^*} \widehat{\theta}_R^T (\frac{1}{n} R^TR) \widehat{\theta}_R \stackrel{D}{\longrightarrow} Q$,
as we wished to prove.
}}

\subsection{Proof of Corollary \ref{cor:bma}}

The proof runs analogous to \cite{rossell:2017}, Proposition 3(ii).
Briefly, the BMA estimate is $E(\theta_i \mid y)=$
\begin{align}
E(\theta_i \mid \gamma^*,y) p(\gamma^* \mid y)
+ \sum_{\gamma^* \subset \gamma}^{} E(\theta_i \mid \gamma,y) p(\gamma \mid y)
+ \sum_{\gamma^* \not\subset \gamma}^{} E(\theta_i \mid \gamma,y) p(\gamma \mid y).
\label{eq:bma1}
\end{align}

Suppose that $\theta_i^* \neq 0$.
From Proposition \ref{prop:AsympNorm}, the difference between the MLE under $\gamma$ and $\theta_i^*$ is $O_p(1/\sqrt{n})$,
and it can be shown that the difference between a Laplace approximation to $E(\theta_i \mid \gamma,y)$ and the MLE is $O_p(1/\sqrt{n})$
hence $E(\theta_i \mid \gamma,y) - \theta_i^*= O_p(1/\sqrt{n})$.
Since $p(\gamma^* \mid y) \stackrel{P}{\longrightarrow} 1$ by Proposition \ref{prop:bfrates},
we have that $E(\theta_i \mid \gamma^*,y) p(\gamma^* \mid y)= \theta_i^* + O_p(1/\sqrt{n})$.
If $\theta_i^* = 0$ then by definition $E(\theta_i \mid \gamma^*,y) p(\gamma^* \mid y)=0$.

Consider the second term in \eqref{eq:bma1} where $\gamma^* \subset \gamma$, 
$$p(\gamma \mid y) \leq 1/(1+ B_{\gamma^*,\gamma} p(\gamma^*)/p(\gamma)) < B_{\gamma,\gamma^*} p(\gamma)/p(\gamma^*)= O_p(b_n^{(k)}) p(\gamma)/p(\gamma^*) \leq O_p(b_n^{(k)}) r^+,$$
where $B_{\gamma^*,\gamma}$ is the Bayes factor between $\gamma^*$ and $\gamma$.
From Proposition \ref{prop:bfrates}, we have that $b_n^{(k)}=n^{-(p_\gamma-p_{\gamma^*})/2}$ for a local prior,
$b_n^{(k)}=n^{-3(p_\gamma-p_{\gamma^*})/2}$ for the pMOM prior,
and $b_n^{(k)}= e^{-c\sqrt{n}}$, for some $c>0$, for the peMOM and piMOM priors.
Also, $E(\theta_i \mid \gamma,y) = \theta_i^* + O_p(1/\sqrt{n})$.
Therefore, if $\theta_i^* \neq 0$, we have
$E(\theta_i \mid \gamma,y) p(\gamma \mid y)= O_p(b_n^{(k)}) r^+$.
If $\theta_i^*=0$, then
$E(\theta_i \mid \gamma,y) p(\gamma \mid y)= O_p(b_n^{(k)}/\sqrt{n}) p(\gamma)/p(\gamma^*) \leq O_p(b_n^{(k)}/\sqrt{n}) r^+$.
The case for $\gamma^* \not\subset \gamma$ proceeds similarly by noting that by Proposition \ref{prop:bfrates}
we have $B_{\gamma,\gamma^*} r^-= O_p(e^{-cn}) r^-= O_p(b_n^{(k)})$ for some $c>0$,
since $e^{-cn} r^-= O(b_n^{(k)})$ by assumption.

Combining the previous results it follows that, if $\theta_i^* \neq 0$, then
\begin{align}
E(\theta_i \mid y)= \theta_i^* + O_p(1/\sqrt{n}) + O_p(b_n^{(k)}) r^+= \theta_i^* + O_p(1/\sqrt{n}),
\label{eq:bma2}
\end{align}
since $b_n^{(k)} r^+= O_p(1/\sqrt{n})$ by the assumption that $r^+$ does not increase with $n$.
Conversely if $\theta_i^*= 0$, then
\begin{align}
E(\theta_i \mid y)= O_p(b_n^{(k)}/\sqrt{n}) r^+,
\label{eq:bma3}
\end{align}
giving the desired result.

\section{Approximations to the integrated likelihood}
\label{sec:approx_marglhood}

For ease of notation, we drop the subindex $k$ denoting the set of active variables
and let $\theta=(\theta_1,\ldots,\theta_{|k|})$ be their coefficients.
Both the Laplace and Importance Sampling approximations require maximizing and evaluating the hessian of $h_l(\theta,\vartheta,\talpha)= \log L(\theta,\tvartheta,\talpha) + \log p(\theta,\tvartheta,\talpha)$, where $L(\cdot)$ and $p(\cdot)$ are the appropriate likelihood and prior density.
Denote by $g_l(\theta,\tvartheta,\talpha)$ the gradient of $h_l(\cdot)$ and by $H_l(\theta,\tvartheta,\talpha)$ its hessian, Algorithm \ref{alg:postmode} finds the posterior mode.

\begin{algorithm}
{\bf Posterior mode via Newton-Raphson}\label{alg:postmode}
\begin{enumerate}
\item Initialize $(\theta^{(0)},\tvartheta^{(0)},\talpha^{(0)})=(\widehat{\theta},\log (\widehat{\vartheta}),\mbox{atanh}(\widehat{\alpha}))$ where $(\widehat{\theta},\widehat{\vartheta},\widehat{\alpha})$ is the MLE given by Algorithm \ref{alg:mle_lma}. Set $t=1$ and repeat Steps 2-3 until $e$ is below some small tolerance (default $10^{-5}$).

\item Update $(\theta^{(t)},\tvartheta^{(t)},\talpha^{(t)})=$
$$(\theta^{(t-1)},\tvartheta^{(t-1)},\talpha^{(t-1)}) - H_l^{-1}(\theta^{(t-1)},\tvartheta^{(t-1)},\talpha^{(t-1)}) g_l(\theta^{(t-1)},\tvartheta^{(t-1)},\talpha^{(t-1)}).$$

\item Compute $e=||(\theta^{(t)},\tvartheta^{(t)},\talpha^{(t)}) - (\theta^{(t-1)},\tvartheta^{(t-1)},\talpha^{(t-1)})||^{\infty}$ where $||\bz||^{\infty}$ is the largest element of $\bz$ in absolute value. Set $t=t+1$.
\end{enumerate}
\end{algorithm}
As usual, in the event that $(\theta^{(t)},\tvartheta^{(t)},\talpha^{(t)})$ does not increase $h_l(\cdot)$, Step 2 can be adjusted by adding a constant $\lambda$ to the diagonal of $H_l(\cdot)$, which for large $\lambda$ gives the direction of the gradient and is guaranteed to decrease $h_l(\cdot)$. However, we observed that this is extremely rare in practice. Usually, the simple Newton step increases $h_l(\cdot)$ at each iteration and converges to the maximum in a few iterations.

Both $g_l(\cdot)$ and $H_l(\cdot)$ are the sum of a term coming from the log-likelihood plus a term coming from the log-prior density. The exact expressions are given below separately.

As an alternative to Algorithm \ref{alg:postmode}, we also provide Algorithm \ref{alg:postmode_cda} based on Coordinate Descent
({\it i.e.}~successive univariate optimization).
Note that the Newton steps to update $\theta_j$ and $\alpha$ are in the direction of the gradient and are hence guaranteed to increase the
objective function for small enough $\lambda$.
Step 2 takes advantage of the fact that the maximizer with respect to
$\tvartheta$ for fixed $(\theta,\alpha)$ is available in closed form.

\begin{algorithm}
{\bf Posterior mode via CDA}\label{alg:postmode_cda}
\begin{enumerate}
\item Initialize $\theta^{(0)}$ to the least squares estimate, $\alpha^{(0)}=0$, $t=0$.

\item For the MOM prior set
$\tvartheta^{(t)}= \log \left( s/(n+p+3a_{\vartheta}) \right)$, where

$$s= \left( b_{\vartheta} + \theta^{(t)^T} \theta^{(t)} +
\sum_{i \in A(\theta)}^{} \frac{(y_i-x_i^T\theta^{(t)})^2}{(1+\alpha^{(t)})^2}
+ \sum_{i \not\in A(\theta)}^{} \frac{(y_i-x_i^T\theta^{(t)})^2}{(1-\alpha^{(t)})^2}
\right).$$

For eMOM and iMOM use a Newton-Raphson step.

\item For $j=1,\ldots,p$

\begin{enumerate}
\item Set $\lambda=1$ and $\theta^*= \theta_j^{(t-1)} - \lambda g^*/h^*$,
where $g^*$ and $h^*$ are the first and second derivatives of
$f(\theta_j)= \log L_1(\theta_1^{(t-1)},\ldots,\theta_{j-1}^{(t-1)},\theta_j,\theta_{j+1}^{(t)},\ldots,\theta_p^{(t)},\vartheta^{(t)},\alpha) + \log p(\theta_j \mid \vartheta)$
evaluated at $\theta_j=\theta_j^{(t-1)}$.

\item If $f(\theta^*)>f(\theta_j^{(t-1)})$ set $\theta_j^{(t)}=\theta^*$, else set $\lambda= 0.5 \lambda$
and repeat Step 3-(1).
\end{enumerate}

\item Let $\tilde{\alpha}^*= \talpha^{(t-1)} - \lambda g^*/h^*$, where
$g^*$ and $h^*$ are the first and second derivatives of
$f(\talpha)= \log L_1(\theta^{(t)},\vartheta^{(t)},\talpha) + \log p(\talpha)$ at $\talpha= \talpha^{(t-1)}$.
If $f(\alpha^*)>f(\alpha^{(t-1)})$ set $\alpha^{(t)}=\alpha^*$, else set $\lambda= 0.5 \lambda$ and repeat Step 4.

\item Compute $e= \max |(\theta^{(t)},\tvartheta^{(t)},\talpha^{(t)}) - (\theta^{(t-1)},\tvartheta^{(t-1)},\talpha^{(t-1)})|$.
If $e<10^{-5}$ stop, else set $t=t+1$ and go back to Step 1.
\end{enumerate}
\end{algorithm}

\subsection{Derivatives of the log-likelihood}
\label{ssec:deriv_logl}

\subsubsection{\underline{Two-piece Normal}}

Under the re-parameterization $\tvartheta=\log(\vartheta)$, $\talpha=\mbox{atanh}(\alpha)$ the two-piece Normal log-likelihood (\ref{eq:skewnorm_loglhood}) has gradient

\begin{align}
\begin{pmatrix}
\frac{1}{\exp(\tvartheta)}
 X^T W (y - X\theta) \\
-\frac{n}{2} + \frac{1}{2\exp(\tvartheta)}
(y - X\theta)^T W (y - X\theta) \\
\frac{1}{2\exp(\tvartheta)}
(y - X\theta)^T W^{\star} (y - X\theta)
\end{pmatrix},
\nonumber
\end{align}
where as usual $W=\mbox{diag}(w)$,
$w_i=[1+\tanh(\talpha)]^{-2}$ if $i \in A(\theta)$ and $w_i=[1-\tanh(\talpha)]^{-2}$ if $i \not \in A(\theta)$,
and $W^{\star}=\mbox{diag}(w^*)$ with $w_i^{\star}=-\frac{2 \text{sech}^2(\talpha)}{(\tanh (\talpha)+1)^3}$ if $i \in A(\theta)$ and $w_i^{\star}=\frac{2 \text{sech}^2(\talpha)}{(1-\tanh (\talpha))^3}$  if $i \not \in A(\theta)$.
Its Hessian is given by

\begin{align}
-e^{-\tvartheta}
\begin{pmatrix}
 X^T W X &
 X^T W (y - X\theta) &
 X^T W^{\star} (y - X\theta)\\
&
\frac{1}{2} (y - X\theta)^T W (y - X\theta) &
-\frac{1}{2} (y - X\theta)^T W^{\star} (y - X\theta)\\
&
&
\frac{1}{2} (y - X\theta)^T W^{\star\star} (y - X\theta) \\
\end{pmatrix},
\label{eq:skewnorm_logl_hessian_repar}
\end{align}

\noindent where $W^{\star\star}=\mbox{diag}(w^{\star\star})$, with
$w_i^{\star\star}=2 e^{-4 \talpha} \left(e^{2 \talpha}+2\right)$ if $i \in A(\theta)$ and $w_i^{\star\star}=2 e^{2 \talpha}+4 e^{4 \talpha}$ if $i \not \in A(\theta)$.

\subsubsection{\underline{Two-piece Laplace}}

The asymmetric Laplace $\log L_2(\theta,\tilde{\vartheta},\tilde{\alpha})$, where
$\tvartheta=\log(\vartheta)$, $\talpha=\mbox{atanh}(\alpha)$ has gradient
$$
\begin{pmatrix}
- e^{-\tvartheta/2} X^T \wbar \\
-\frac{n}{2} + \frac{1}{2} e^{-\tvartheta/2} w^T |y-X \theta| \\
e^{-\tvartheta/2} |y-X \theta|^T \wbar^*
\end{pmatrix},
$$
and hessian
\begin{align}
e^{-\tvartheta/2} \times
\begin{pmatrix}
0 &
\frac{1}{2} X^T \wbar &
 X^T w^* \\
\frac{1}{2} \wbar^T X &
-\frac{1}{4} w^T |y - X \theta| &
-\frac{1}{2} |y - X \theta|^T \wbar^* \\
(X^T w^*)^T &
-\frac{1}{2} |y - X \theta|^T \wbar^* &
-2 |y - X \theta|^T w^*
\end{pmatrix},
\label{eq:skewlapl_logl_hessian_repar}
\end{align}
where
$w_i=\wbar_i=(1+\alpha)^{-1}$,
$w_i^*=\wbar_i^*=e^{-2\alpha}$ if $i \in A(\theta)$,
and $w_i=(1-\alpha)^{-1}$, $\wbar_i=-w_i$
$w_i^*=e^{2\alpha}$, $\wbar_i^*=-w_i^*$
if $i \not\in A(\theta)$.
Naturally, symmetric Laplace errors are the particular case $\alpha=0$
and give $w_{i}=w_{i}^*=1$.

\subsubsection{\underline{Expected two-piece Laplace log-likelihood}}

We derive $\overline{L}_2=E(\log L_2(\eta))$, where $\eta=(\theta,\vartheta,\alpha)$
and its derivatives under the data-generating model
$y_i= x_i^T \theta_0 + \epsilon_i$, for some $\theta_0 \in \mathbb{R}^p$
where $\epsilon_i$ are independent across $i=1,\ldots,n$ and arise from an arbitrary probability density function $s_0(y_i\vert x_i)$.
After some algebra and noting that $\epsilon_i=y_i-x_i^T \theta_0$ gives

\begin{eqnarray*}
\overline{L}_2= \int_{}^{}\!\, \log L_2(\eta) s_0(\epsilon\vert x) d\epsilon= - n\log(2) - \dfrac{n}{2}\log(\vartheta)
&-\sum_{i=1}^{n} \dfrac{1}{\sqrt{\vartheta}(1+\alpha)}\int_{-\infty}^{x_i^{T}(\theta-\theta_0)} S_0(\epsilon_i)d\epsilon_i \\
&-\sum_{i=1}^{n} \dfrac{1}{\sqrt{\vartheta}(1-\alpha)}\int_{x_i^{T}(\theta-\theta_0)}^{\infty} 1-S_0(\epsilon_i)d\epsilon_i,
\end{eqnarray*}
where $S_0(\epsilon_i) = S_0(\epsilon_i\vert 0)$ is the cumulative probability function associated to $s_0(\epsilon_i) = s_0(\epsilon_i\vert 0)$, where $0$ indicates a zero covariate vector.
Then taking derivatives we obtain
\begin{eqnarray*}
\dfrac{\partial}{\partial \theta}\overline{L}_2 &=& \sum_{i=1}^{n}
-\dfrac{x_i S_0(x_i^{T}(\theta - \theta_0))}{\sqrt{\vartheta}(1+\alpha)} + \dfrac{x_i [1- S_0(x_i^{T}(\theta- \theta_0))]}{\sqrt{\vartheta}(1-\alpha)},\\
\dfrac{\partial}{\partial \vartheta}\overline{L}_2 &=& \sum_{i=1}^{n}
-\dfrac{1}{2\vartheta} + \dfrac{I_{i1}}{2\vartheta^{3/2}(1+\alpha)} + \dfrac{I_{i2}}{2\vartheta^{3/2}(1-\alpha)},\\
\dfrac{\partial}{\partial \alpha}\overline{L}_2 &=& \sum_{i=1}^{n} \dfrac{I_{i1}}{\sqrt{\vartheta}(1+\alpha)^2} - \dfrac{I_{i2}}{\sqrt{\vartheta}(1-\alpha)^2},
\end{eqnarray*}
where $I_{i1} = \int_{-\infty}^{x_i^T(\theta-\theta_0)} S_0(\epsilon_i)d\epsilon_i$, $I_{i2} = \int_{x_i^T(\theta- \theta_0)}^{\infty} 1-S_0(\epsilon_i)d\epsilon_i$.
The second derivatives are

\begin{eqnarray*}
\dfrac{\partial^2}{\partial \theta^2} \overline{L}_2
&=& - \sum_{i=1}^{n} \dfrac{2x_ix_i^{T} s_0(x_i^{T}(\theta-\theta_0))}{\sqrt{\vartheta}(1-\alpha^2)},\nonumber \\
\dfrac{\partial^2}{\partial \vartheta^2} \overline{L}_2 &=&
\sum_{i=1}^{n} \dfrac{1}{2\vartheta^2} - \dfrac{3I_{i1}}{4\vartheta^{5/2}(1+\alpha)} - \dfrac{3I_{i2}}{4\vartheta^{5/2}(1-\alpha)},\nonumber \\
\dfrac{\partial^2}{\partial \alpha^2} \overline{L}_2 &=&
-\sum_{i=1}^{n} \dfrac{2I_{i1}}{\sqrt{\vartheta}(1+\alpha)^3} - \dfrac{2I_{i2}}{\sqrt{\vartheta}(1-\alpha)^3},\nonumber \\
\dfrac{\partial^2}{\partial \vartheta \partial \theta} \overline{L}_2 &=&
\sum_{i=1}^{n} \dfrac{x_i S_0(x_i^{T}(\theta - \theta_0))}{2\vartheta^{3/2}(1+\alpha)} - \dfrac{x_i [1- S_0(x_i^{T}(\theta - \theta_0))]}{2\vartheta^{3/2}(1-\alpha)},\\
\dfrac{\partial^2}{\partial \alpha \partial \theta} \overline{L}_2 &=&
\sum_{i=1}^{n} \dfrac{x_i S_0(x_i^{T}(\theta - \theta_0))}{\sqrt{\vartheta}(1+\alpha)^2} + \dfrac{x_i [1- S_0(x_i^{T}(\theta - \theta_0))]}{\sqrt{\vartheta}(1-\alpha)^2},\\
\dfrac{\partial^2}{\partial \vartheta \partial \alpha } \overline{L}_2 &=&
-\sum_{i=1}^{n}\dfrac{I_{i1}}{2\vartheta^{3/2}(1+\alpha)^2} + \dfrac{I_{i2}}{2\vartheta^{3/2}(1-\alpha)^2}.\\
\end{eqnarray*}

Simple inspection reveals that $(\partial/\partial \theta) \overline{L}_2=0$ implies
$(\partial^2/\partial \theta \partial \vartheta) \overline{L}_2=0$,
and likewise $(\partial/\partial \alpha) \overline{L}_2=0$ implies
$(\partial^2/\partial \theta \partial \alpha) \overline{L}_2=0$.
Since the maximum likelihood estimator $(\widehat{\theta},\widehat{\vartheta},\widehat{\alpha})$ converges in probability to the maximizer of $\overline{L}_2$,
these second derivatives evaluated at $(\widehat{\theta},\widehat{\vartheta},\widehat{\alpha})$ also converge in probability to 0.

We wish to find an asymptotic expression for the remaining
second derivatives evaluated at $(\widehat{\theta},\widehat{\vartheta},\widehat{\alpha})$
when the data-generating truth is $\epsilon_i \sim \mbox{AL}(x_i^T \theta_0,\vartheta_0,\alpha_0)$
for some $(\theta_0,\vartheta_0,\alpha_0)$.
Given that $(\widehat{\theta},\widehat{\vartheta},\widehat{\alpha}) \stackrel{P}{\longrightarrow} (\theta_0,\vartheta_0,\alpha_0)$,
the expressions above require evaluating the density of an asymmetric Laplace
$s_0(0)=1/(2\sqrt{\vartheta_0})$ and its cumulative probability function $S_0(0)=(1+\alpha_0)/2$.
Similarly, direct integration gives $I_{i1}= \sqrt{\vartheta_0}(1+\alpha_0)^2/2$ and $I_{i2}= \sqrt{\vartheta_0}(1-\alpha_0)^2/2$.

\begin{align}
\dfrac{\partial^2}{\partial \theta^2} \overline{L}_2 \stackrel{P}{\longrightarrow}
&-X^TX \dfrac{1}{\vartheta_0(1-\alpha_0^2)},\nonumber \\
\dfrac{\partial^2}{\partial \vartheta^2} \overline{L}_2 \stackrel{P}{\longrightarrow}
&\dfrac{n}{2\vartheta_0^2} - \dfrac{3n (1+\alpha_0)}{8\vartheta_0^{2}} - \dfrac{3 (1-\alpha_0)}{8\vartheta_0^{2}}=
-\frac{n}{4 \vartheta_0^2},\nonumber \\
\dfrac{\partial^2}{\partial \alpha^2} \overline{L}_2 \stackrel{P}{\longrightarrow}
&-\dfrac{n}{1+\alpha_0} - \dfrac{n}{1-\alpha_0}= -\frac{2n}{1-\alpha_0^2},\nonumber \\
\dfrac{\partial^2}{\partial \alpha \partial \theta} \overline{L}_2 \stackrel{P}{\longrightarrow}
& \frac{n \overline{x}}{\sqrt{\vartheta_0}} \left(\dfrac{1}{2(1+\alpha_0)} + \dfrac{1}{2(1-\alpha_0)} \right)=
\frac{n \overline{x}}{\sqrt{\vartheta_0}(1-\alpha_0^2)}.
\label{eq:alaplace_expected_hessian}
\end{align}

\subsection{Derivatives of the log-prior density}
\label{ssec:deriv_logprior}

The log-prior density is $\log p(\theta,\tvartheta)= \log p(\theta \mid \tvartheta) + \log p(\tvartheta)$ when $\talpha=0$ under the assumed model and $\log p(\theta,\tvartheta,\talpha)= \log p(\theta,\tvartheta) + \log p(\talpha)$ when $\talpha \neq 0$, where $p(\theta \mid \tvartheta)$ and $p(\talpha)$ are the pMOM, piMOM or peMOM priors and $p(\tvartheta)= \mbox{IG}(e^{\tvartheta}; a_{\vartheta}/2,b_{\vartheta}/2) e^{\vartheta}$. For ease of notation let $\theta^{-a}$ be the vector with elements $\theta_i^{-a}$ for $i=1,\ldots,|k|$.

\subsubsection{\underline{pMOM prior}} Straightforward algebra gives

\begin{align}
\nabla \log p_M(\theta,\tvartheta,\talpha)&= \begin{pmatrix}
2 \theta^{-1} - \theta e^{-\tvartheta}/g_{\theta} \\
-\frac{3|k|+a_{\vartheta}}{2} + (\theta^T\theta/g_{\theta} + b_{\vartheta}) e^{-\tvartheta}/2 \\
2 \talpha^{-1} - \talpha g_{\alpha}^{-1} \\
\end{pmatrix}, \nonumber \\
\nabla^2 \log p_M(\theta,\tvartheta,\talpha)&= \begin{pmatrix}
\mbox{diag}( -2\theta^{-2} - e^{-\tvartheta}/g_{\theta})  & \theta e^{-\tvartheta}/g_{\theta} & 0 \\
\theta^T e^{-\tvartheta}/g_{\theta} & - e^{-\tvartheta} (\theta^T\theta/g_{\theta}+b_{\vartheta})/2 & 0\\
0 & 0 & -2\talpha^{-2} - g_{\alpha}^{-1} \nonumber
\end{pmatrix},
\end{align}

\subsubsection{\underline{piMOM prior}} We obtain

\begin{align}
\nabla \log p_I(\theta,\tvartheta,\talpha)&= \begin{pmatrix}
-2 \theta^{-1} + 2 g_{\theta} e^{\tvartheta} \theta^{-3} \\
(|k| - a_{\vartheta})/2 +b_{\vartheta}e^{-\tvartheta}/2  - g_{\theta} e^{\tvartheta} \sum_{i}^{} \theta_i^{-2} \\
-2 \talpha^{-1} - 2 g_{\alpha} \talpha^{-3}
\end{pmatrix}, \nonumber \\
\nabla^2 \log p_I(\theta,\tvartheta,\talpha)&= \begin{pmatrix}
\mbox{diag}(2\theta^{-2} - 6 g_{\theta} e^{\tvartheta} \theta^{-4}) & 2g_{\theta} e^{\tvartheta} \theta^{-3} & 0 \\
(-2g_{\theta} e^{\tvartheta} \theta^{-3})^T & - b_{\vartheta} e^{-\tvartheta}/2 -e^{\tvartheta} g_{\theta} \sum_{i}^{} \theta_i^{-2} & 0 \\
0 & 0 & 2\talpha^{-2} + 6 g_{\alpha} \talpha^{-4} \nonumber
\end{pmatrix}.
\end{align}

\subsubsection{\underline{peMOM prior}} We obtain

\begin{align}
\nabla \log p_E(\theta,\tvartheta,\talpha)= \begin{pmatrix}
2 g_{\theta} e^{\tvartheta} \theta^{-3} - \theta e^{-\tvartheta} g_{\theta}^{-1} \\
-(|k| + a_{\vartheta})/2 +(b_{\vartheta}+\theta^T\theta/g_{\theta})e^{-\tvartheta}/2  - g_{\theta} e^{\tvartheta} \sum_{i}^{} \theta_i^{-2} \\
2 g_{\alpha} \talpha^{-3} - \talpha g_{\alpha}^{-1}
\end{pmatrix}, \nonumber
\end{align}
and $\nabla^2 \log p_E(\theta,\tvartheta,\talpha)=$
\begin{align}
\begin{pmatrix}
\mbox{diag}(- 6 g_{\theta} e^{\tvartheta} \theta^{-4} - e^{-\tvartheta}g_{\theta}^{-1}) & 2g_{\theta} e^{\tvartheta} \theta^{-3} + \theta e^{-\tvartheta} g_{\theta}^{-1} & 0 \\
(2g_{\theta} e^{\tvartheta} \theta^{-3} + \theta e^{-\tvartheta} g_{\theta}^{-1})^T & -(b_{\vartheta} + \theta^T\theta/g_{\theta}) e^{-\tvartheta}/2  -e^{\tvartheta} g_{\theta} \sum_{i}^{} \theta_i^{-2} & 0 \\
0 & 0 & - 6 g_{\alpha} \talpha^{-4} - g_{\alpha}^{-1} \nonumber
\end{pmatrix}.
\end{align}

\subsection{Quadratic approximation to asymmetric Laplace log-likelihood}
\label{ssec:approx_logl_alapl}

The goal is to approximate the curvature of the one-dimensional function
$f(\lambda)=\log L_2(\theta_\lambda,\widehat{\vartheta},\widehat{\alpha})$ around $\lambda=0$,
where $\theta_\lambda=(\widehat{\theta}_1,\ldots,\widehat{\theta}_{j-1},\widehat{\theta}_j+\lambda,\widehat{\theta}_{j+1},\ldots,\widehat{\theta}_p)$
is fixed to the maximum likelihood estimator except for the $j^{th}$ regression parameter, which is a function of $\lambda \in \mathbb{R}$.
Given that $f(0)$ is known and that its derivative at $\lambda=0$ is 0 ($\widehat{\theta}$ is a maximum) we seek $h_j^*<0$ such that
$f(\lambda)-f(0) \approx 0.5 h_j^* \lambda^2$.
Our strategy is to evaluate $f(\lambda_k)$ on a grid $\lambda_k$ for $k=1,\ldots,K$
and use the least-squares estimate
$h_j^*= 2 \sum_{k=1}^{K} \lambda_k^2 (f(\lambda_k)-f(0))/\sum_{k=1}^{K} \lambda_k^4$,
where the form of $\log L_2$ gives the simple expression
$$
f(\lambda_k)-f(0)= - \frac{1}{\sqrt{\widehat{\vartheta}}}
\sum_{i=1}^{n} \vert r_i - \lambda_k x_{ij} \vert
\left( \frac{\mbox{I}(r_i \leq \lambda_k x_{ij})}{1+\widehat{\alpha}}
+ \frac{\mbox{I}(r_i> \lambda_k x_{ij})}{1-\widehat{\alpha}} \right),
$$
and $r_i=y_i - x_i^T \widehat{\theta}$.
Once $h_1^*,\ldots,h_p^*$ have been obtained we let
$D=\mbox{diag}(h_1^*/\bar{h}_{11},\ldots,h_p^*/\bar{h}_{pp})$
where $\overline{H}= (X^TX)/(\widehat{\vartheta} (1-\widehat{\alpha}^2))$ is the asymptotic hessian under asymmetric Laplace errors,
and we approximate the hessian of $\log L_2(\theta,\widehat{\vartheta},\widehat{\alpha})$
around $\theta=\widehat{\theta}$ with $H^*= D^{\frac{1}{2}} \overline{H} D^{\frac{1}{2}}$.
The construction ensures that the diagonal elements in $H^*$ are $h_1^*,\ldots,h_p^*$,
i.e. the quadratic approximation matches the actual curvature of $\log L_2$ along each canonical axis.
From Section \ref{sec:paramestim} the correlation structure borrowed from $\overline{H}$
remains asymptotically valid as long as the residuals are independent and identically distributed,
however in our experience the approximation usually suffices for practical purposes even when these assumptions is violated.

The problem has been thus reduced to choosing the grid $\lambda_1,\ldots,\lambda_K$.
One naive option is to take the $n$ points of non-differentiability $\lambda=r_i/x_{ij}$,
however, by the nature of least squares, this strategy tends to approximate better $f(\lambda)$ for large $\lambda^2$
and we are interested in local approximations around $\lambda=0$,
further evaluating $f(\lambda)$ at $n$ points requires $O(n^2)$ operations for each $j=1,\ldots,p$
and is thus computationally costly.
Instead we evaluate $f(\lambda)$ only at the $K=2$ points
given by the endpoints of the asymptotic 95\% confidence interval
$\lambda=\{ -1.96 \overline{v}_j, 1.96\overline{v}_j)$ where $\overline{v}_j$ is the $j^{th}$ diagonal element in $\overline{H}^{-1}$.
This simple strategy ensures that the approximation holds locally around $\lambda=0$
in the sense of having non-negligible likelihood, requires only $O(n)$ operations
and we have observed to deliver reasonably accurate approximations in practice.
Our approximation is similar in spirit to the rank-based score test inversion used to obtain confidence intervals
in quantile regression, which has been amply described to deliver fairly precise intervals,
with the important difference that rank inversion requires an ordering of observations
that scales poorly with $p$ and $n$.

Supplementary Figure \ref{fig:quadapprox} shows an example with the likelihood $L_2$ (scaled to $(0,1)$)
and the two quadratic approximations based on the asymptotic covariance
and its least-squares adjustment for an intercept-only model ($p=1$) and $n=200$.
When residuals were truly generated from an asymmetric Laplace (left panel) the two quadratic approximations were essentially identical,
however under truly normally distributed residuals the asymptotic covariance over-estimated the curvature.

\begin{figure}
\begin{center}
\begin{tabular}{cc}
\includegraphics[width=0.5\textwidth]{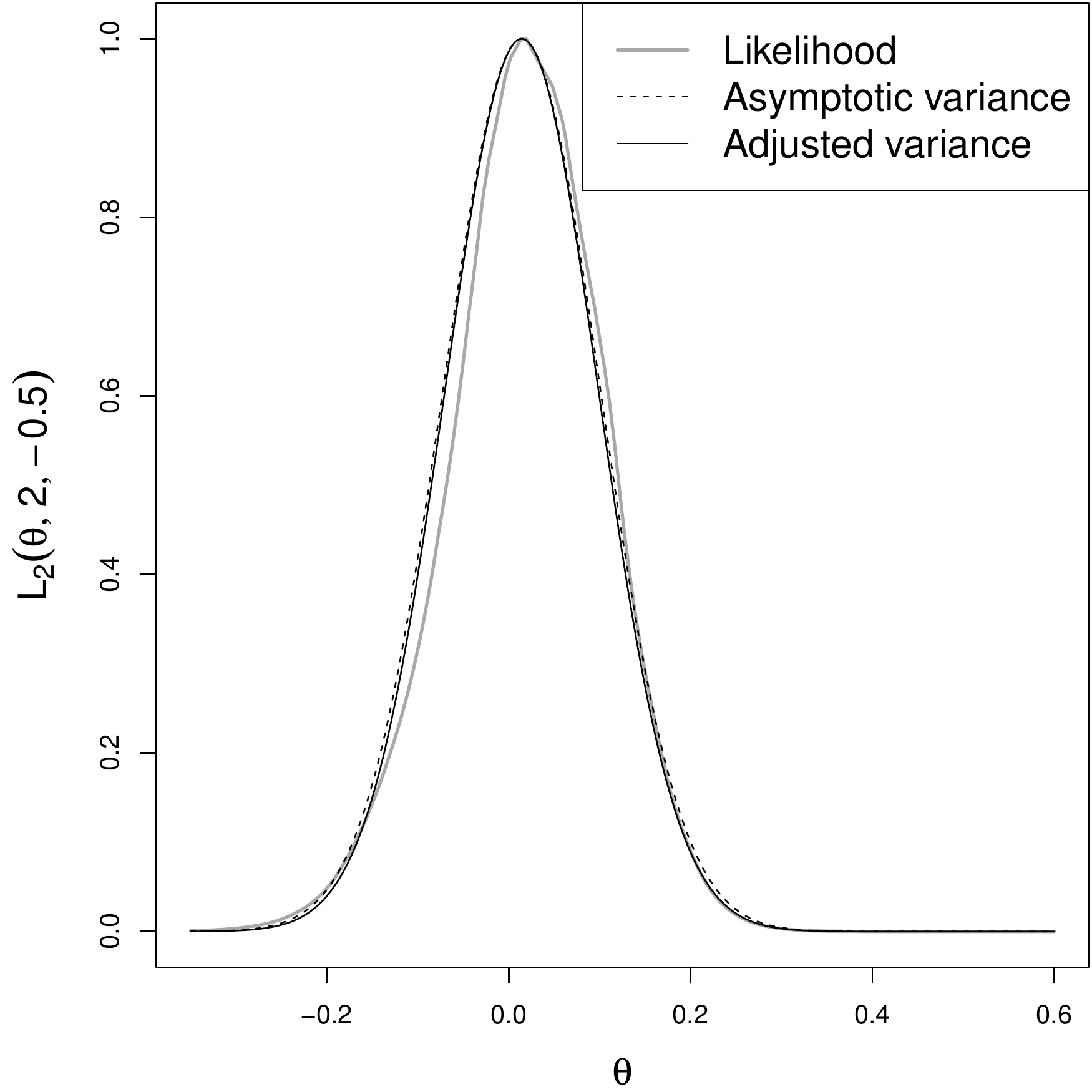} &
\includegraphics[width=0.5\textwidth]{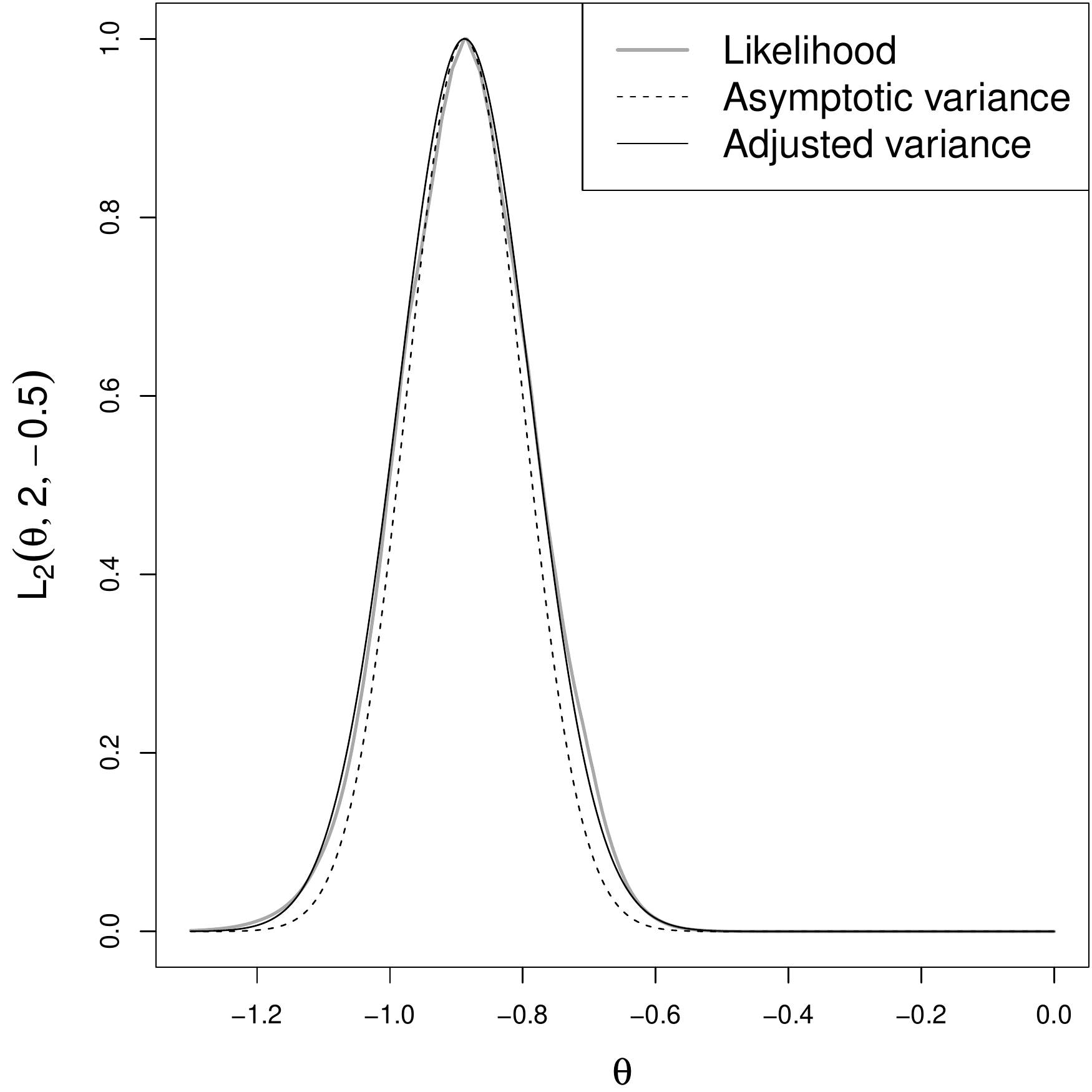}
\end{tabular}
\end{center}
\caption{Quadratic approximation to $L_2$ (solid grey) with $p=1,n=200$ from asymptotic covariance
(dotted black) and least-squares adjustment (solid black).
Left: $\epsilon_i \sim \mbox{AL}(0,2,-0.5)$; Right: $\epsilon_i \sim N(0,2)$.}
\label{fig:quadapprox}
\end{figure}

\section{Supplementary results}
\label{sec:supplres}

\begin{table}
\begin{center}
\begin{tabular}{|c|c|c|c|c|}\hline
 \multicolumn{5}{|c|}{Simulation truth $\epsilon_i \sim \mbox{AN}(0,4,\alpha)$} \\\cline{1-5}
Fitted model & $\alpha=0$ & $ \alpha=-0.25$ & $\alpha= -0.5$ & $\alpha= -0.75$ \\ \hline
Normal & 76.99 & 98.84 & 103.84 & 101.25 \\
ANormal & 92.08 & 86.10 & 102.60 & 115.64 \\
Laplace & 90.58 & 92.84 & 97.90 & 93.13 \\
ALaplace & 122.64 & 121.12 & 124.69 & 131.50 \\ \hline
 \multicolumn{5}{|c|}{Simulation truth $\epsilon_i \sim \mbox{AL}(0,4,\alpha)$} \\\cline{1-5}
Fitted model & $\alpha=0$ & $ \alpha=-0.25$ & $\alpha= -0.5$ & $\alpha= -0.75$ \\ \hline
Normal & 76.62 & 96.05 & 99.85 & 97.78 \\
ANormal & 81.77 & 82.74 & 92.67 & 104.76 \\
Laplace & 90.29 & 93.42 & 92.88 & 91.25 \\
ALaplace & 117.30 & 113.69 & 115.08 & 122.79 \\
\hline
\end{tabular}
\end{center}
\caption{CPU time ($10^{-4}$ seconds) on 3.4GHz Intel i7, 32Gb RAM, Windows 10.
$p=6$, $\vartheta=4$, $\theta=(0,0.5,0.75,1,0,\ldots,0)$, $n=100$, $\rho_{ij}=0.5$.
}
\label{tab:runtime_sim1}
\end{table}

\begin{table}
\begin{center}
\begin{tabular}{|c|c|c|c|c|}\hline
 & \multicolumn{4}{|c|}{Simulation truth} \\\cline{2-5}
 & $N(0,4)$ & $\mbox{AN}(0,4,-0.5)$ & $L(0,4)$ & $\mbox{AL}(0,4,-0.5)$ \\ \hline
Normal  &  6.9 & 29.7 &  6.4  & 32.9 \\
ANormal & 52.9 & 21.9 & 41.3  & 22.2 \\
Laplace & 17.0 & 28.2 & 14.6  & 26.6 \\
ALaplace& 57.7 & 26.4 &  26.7 & 22.4 \\
Inferred&  6.1 & 22.6 &  13.5 & 23.0 \\
\hline
\end{tabular}
\end{center}
\caption{CPU time (seconds) on 8GB RAM Mac laptop with 1.6GHz Intel i5 processors running OS X 10.11.6
$p=100$, $\vartheta=2$, $\theta=(0,0.5,0.75,1,0,\ldots,0)$, $n=100$, $\rho_{ij}=0.5$.
}
\label{tab:runtime_sim}
\end{table}

\subsection{Simulation study with identically distributed errors}
\label{ssec:simstudy_supplres}

\begin{table}
\begin{center}
\begin{tabular}{|l|cccc|} \hline
Truth & \multicolumn{4}{c|}{Average $p(\gamma_{p+1},\gamma_{p+2} \mid y)$} \\
 & $\gamma_{p+1}=\gamma_{p+2}=0$ & $\gamma_{p+1}=1,\gamma_{p+2}=0$
& $\gamma_{p+1}=0,\gamma_{p+2}=1$ & $\gamma_{p+1}=\gamma_{p+2}=0$ \\ \hline \hline
& \multicolumn{4}{c|}{$p=6$, $g_\alpha=0.357$, Laplace $p(\gamma \mid y)$} \\
$N(0,2)$              &  0.91    &         0.02 &            0.06  &           0.00  \\
$\mbox{AN}(0,2,-0.5)$ &  0.11    &         0.81 &            0.01  &           0.06 \\
$L(0,2)$              &  0.14    &         0.00 &            0.84  &           0.02  \\
$\mbox{AL}(0,2,-0.5)$ &  0.02    &         0.12 &            0.01  &           0.85 \\
\hline \hline
& \multicolumn{4}{c|}{$p=6$, $g_\alpha=0.357$, Monte Carlo $p(\gamma \mid y)$} \\
$N(0,2)$              &    0.91  &           0.02  &           0.06   &          0.00 \\
$\mbox{AN}(0,2,-0.5)$ &    0.11  &           0.81  &           0.01   &          0.07 \\
$L(0,2)$              &    0.12  &           0.01  &           0.85   &          0.02 \\
$\mbox{AL}(0,2,-0.5)$ &    0.02  &           0.12  &           0.01   &          0.85 \\
\hline \hline
& \multicolumn{4}{c|}{$p=6$, $g_\alpha=0.087$, Laplace $p(\gamma \mid y)$} \\
$N(0,2)$              &   0.87  &           0.07  &           0.06  &           0.01 \\
$\mbox{AN}(0,2,-0.5)$ &   0.07  &           0.86  &           0.01  &           0.07 \\
$L(0,2)$              &   0.13  &           0.01  &           0.79  &           0.07 \\
$\mbox{AL}(0,2,-0.5)$ &   0.01  &           0.13  &           0.01  &           0.85 \\ \hline \hline
\end{tabular}
\end{center}
\caption{Simulation study for $p=6$. Posterior probability of the 4 error distributions under
$\vartheta=2$, $\theta=(0,0.5,1,1.5,\ldots,0)$, $n=100$, $\rho_{ij}=0.5$.}
\label{tab:perror_p5}
\end{table}

\begin{table}
\begin{center}
\begin{tabular}{|l|cccc|} \hline
Truth & \multicolumn{4}{c|}{Average $p(\gamma_{p+1},\gamma_{p+2} \mid y)$} \\
& $\gamma_{p+1}=\gamma_{p+2}=0$ & $\gamma_{p+1}=1,\gamma_{p+2}=0$
& $\gamma_{p+1}=0,\gamma_{p+2}=1$ & $\gamma_{p+1}=\gamma_{p+2}=0$ \\ \hline \hline
& \multicolumn{4}{c|}{$p=101$, $\vartheta=1$} \\
$N(0,2)$              &   0.91   &          0.01  &           0.08 &            0.00 \\
$\mbox{AN}(0,2,-0.5)$ &   0.03   &          0.86  &           0.00 &            0.11 \\
$L(0,2)$              &   0.15   &          0.01  &           0.83 &            0.02 \\
$\mbox{AL}(0,2,-0.5)$ &   0.00   &          0.13  &           0.01 &            0.86 \\
\hline \hline
& \multicolumn{4}{c|}{$p=101$, $\vartheta=2$} \\
$N(0,2)$              &    0.89  &           0.01  &           0.10  &           0.00 \\
$\mbox{AN}(0,2,-0.5)$ &    0.02  &           0.89  &           0.00  &           0.09 \\
$L(0,2)$              &    0.15  &           0.01  &           0.82  &           0.02 \\
$\mbox{AL}(0,2,-0.5)$ &    0.00  &           0.16  &           0.01  &           0.83 \\
\hline \hline
& \multicolumn{4}{c|}{$p=501$, $\vartheta=1$} \\
$N(0,2)$              &      0.85 &            0.00   &          0.14  &           0.00 \\
$\mbox{AN}(0,2,-0.5)$ &      0.01 &            0.85   &          0.01  &           0.14 \\
$L(0,2)$              &      0.18 &            0.00   &          0.80  &           0.02 \\
$\mbox{AL}(0,2,-0.5)$ &      0.00 &            0.15   &          0.00  &           0.84 \\
\hline \hline
& \multicolumn{4}{c|}{$p=501$, $\vartheta=2$} \\
$N(0,2)$              &     0.83  &           0.00    &         0.16   &          0.00 \\
$\mbox{AN}(0,2,-0.5)$ &     0.00  &           0.87    &         0.00   &          0.12 \\
$L(0,2)$              &     0.19  &           0.00    &         0.79   &          0.01 \\
$\mbox{AL}(0,2,-0.5)$ &     0.00  &           0.22    &         0.00   &          0.77 \\
\hline
\end{tabular}
\end{center}
\caption{Simulation study for $p=101,501$. Posterior probability of the 4 error distributions under
$g_\alpha=0.357$, $\theta=(0,0.5,1,1.5,\ldots,0)$, $n=100$, $\rho_{ij}=0.5$.
Laplace approximation to $p(y \mid \gamma)$ was used.}
\label{tab:perror_largep}
\end{table}

\begin{figure}
\begin{center}
\begin{tabular}{cc}
$\epsilon_i \sim N(0,4)$ & $\epsilon_i \sim \mbox{AN}(0,4,-0.5)$ \\
\includegraphics[width=0.48\textwidth]{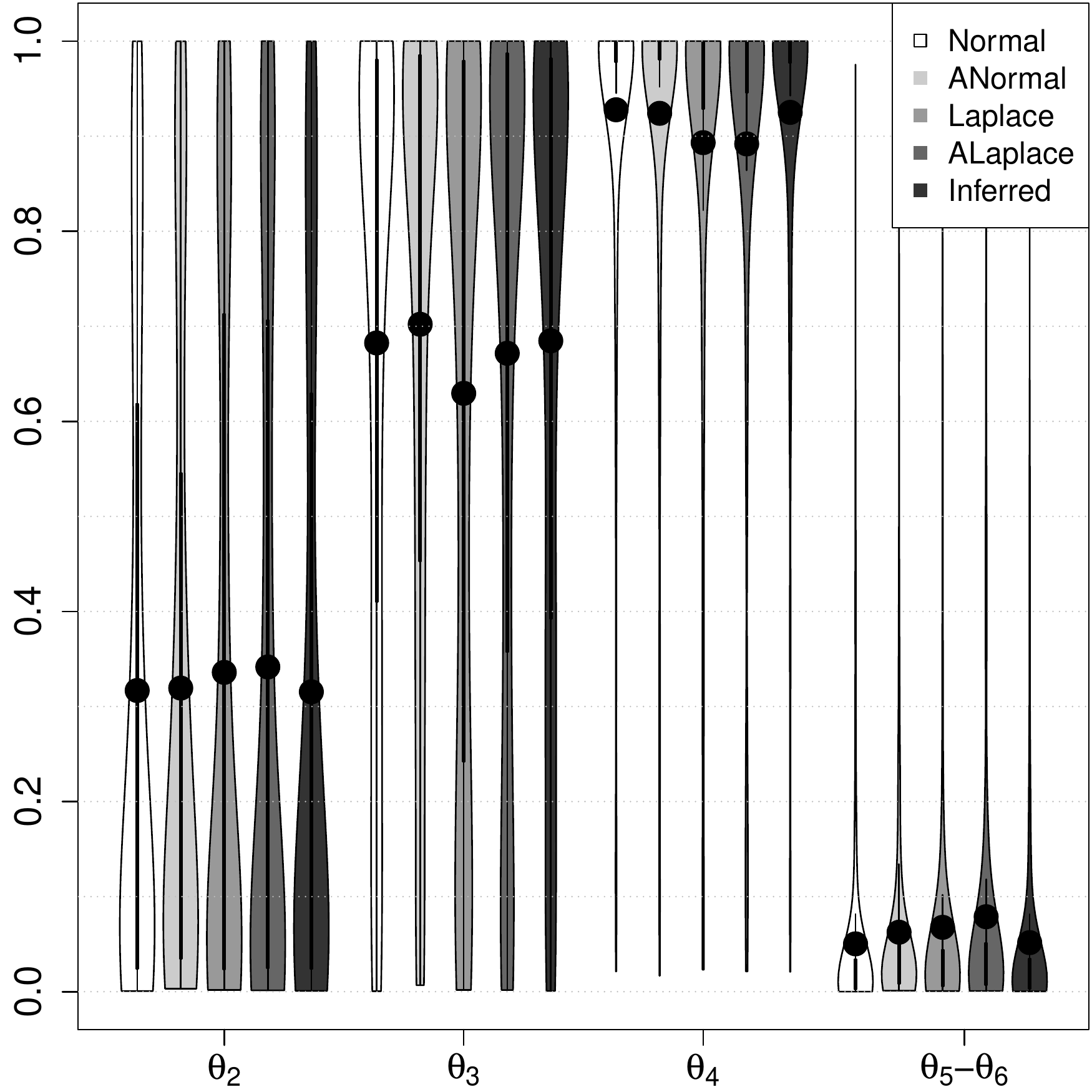} &
\includegraphics[width=0.48\textwidth]{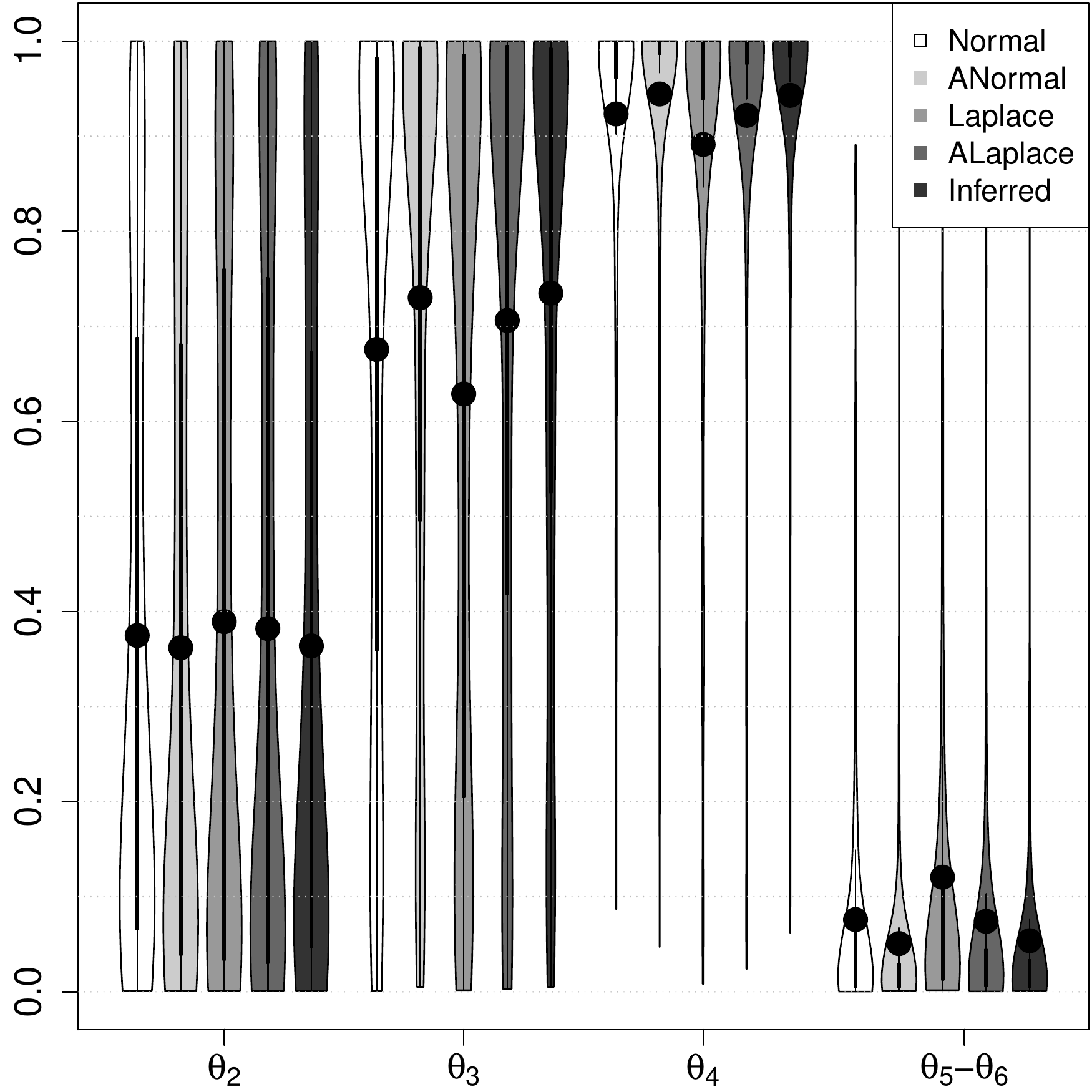} \\
$\epsilon_i \sim L(0,4)$ & $\epsilon_i \sim \mbox{AL}(0,4,-0.5)$ \\
\includegraphics[width=0.48\textwidth]{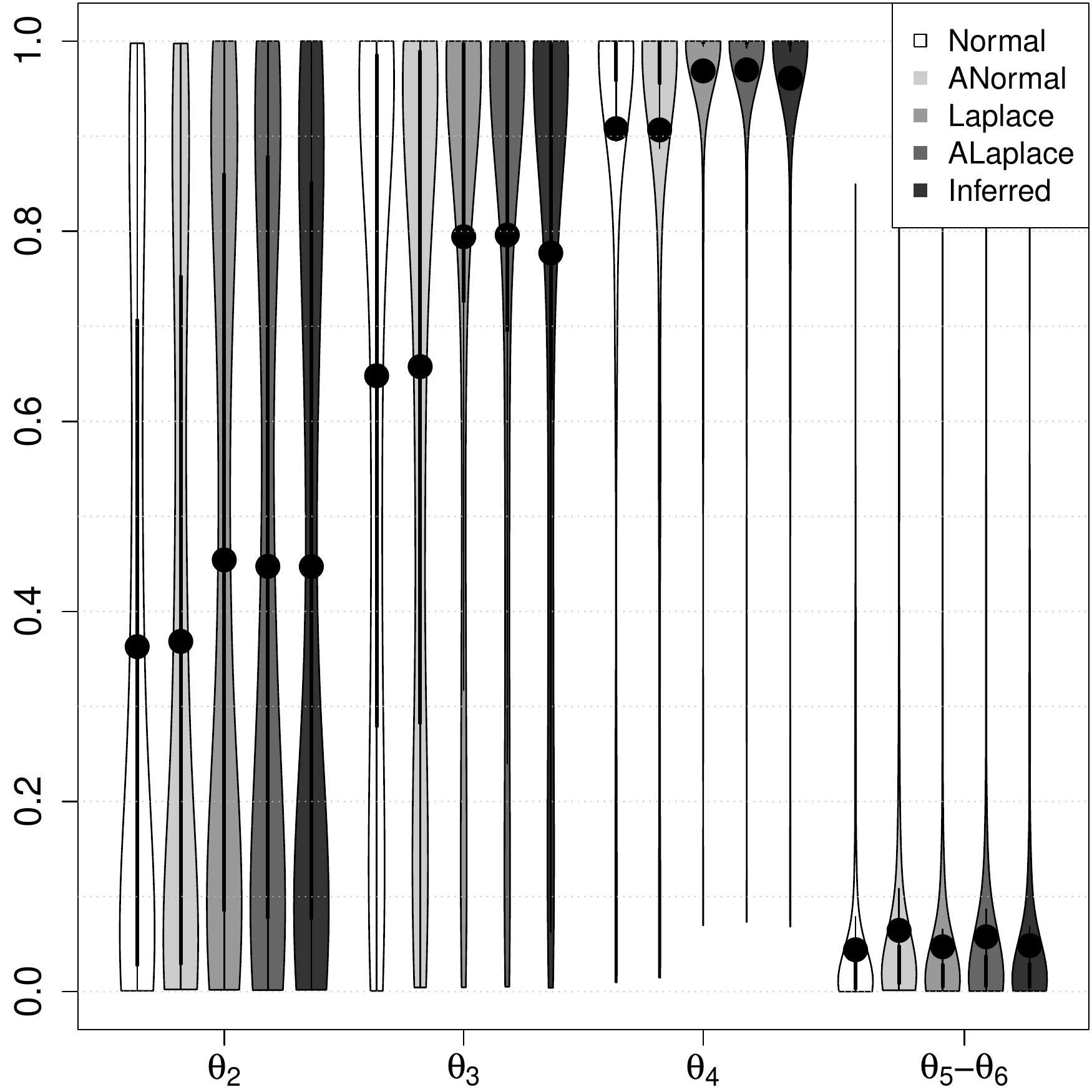} &
\includegraphics[width=0.48\textwidth]{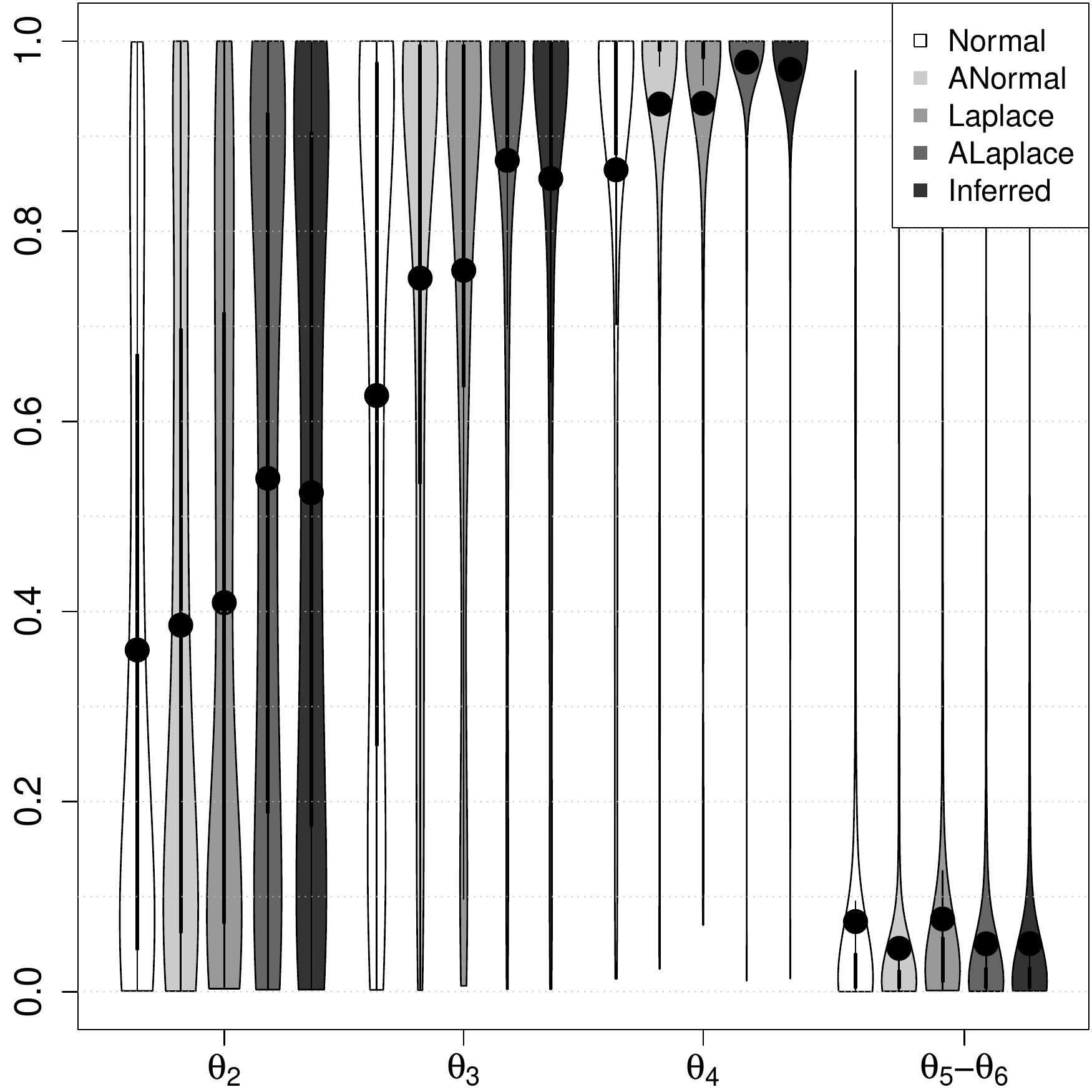} \\
\end{tabular}
\end{center}
\caption{Sensitivity analysis with $g_\alpha=0.087$.
$P(\theta_i \neq 0 \mid y)$ for $p=5$, $\vartheta=2$, $\theta=(0.5,1,1.5,0,0)$,
$n=100$, $\rho_{ij}=0.5$. Black circles show the mean.}
\label{fig:simres_margpp_priorskew1}
\end{figure}

\begin{figure}
\begin{center}
\begin{tabular}{cc}
$\epsilon_i \sim N(0,4)$ & $\epsilon_i \sim \mbox{AN}(0,4,-0.5)$ \\
\includegraphics[width=0.48\textwidth]{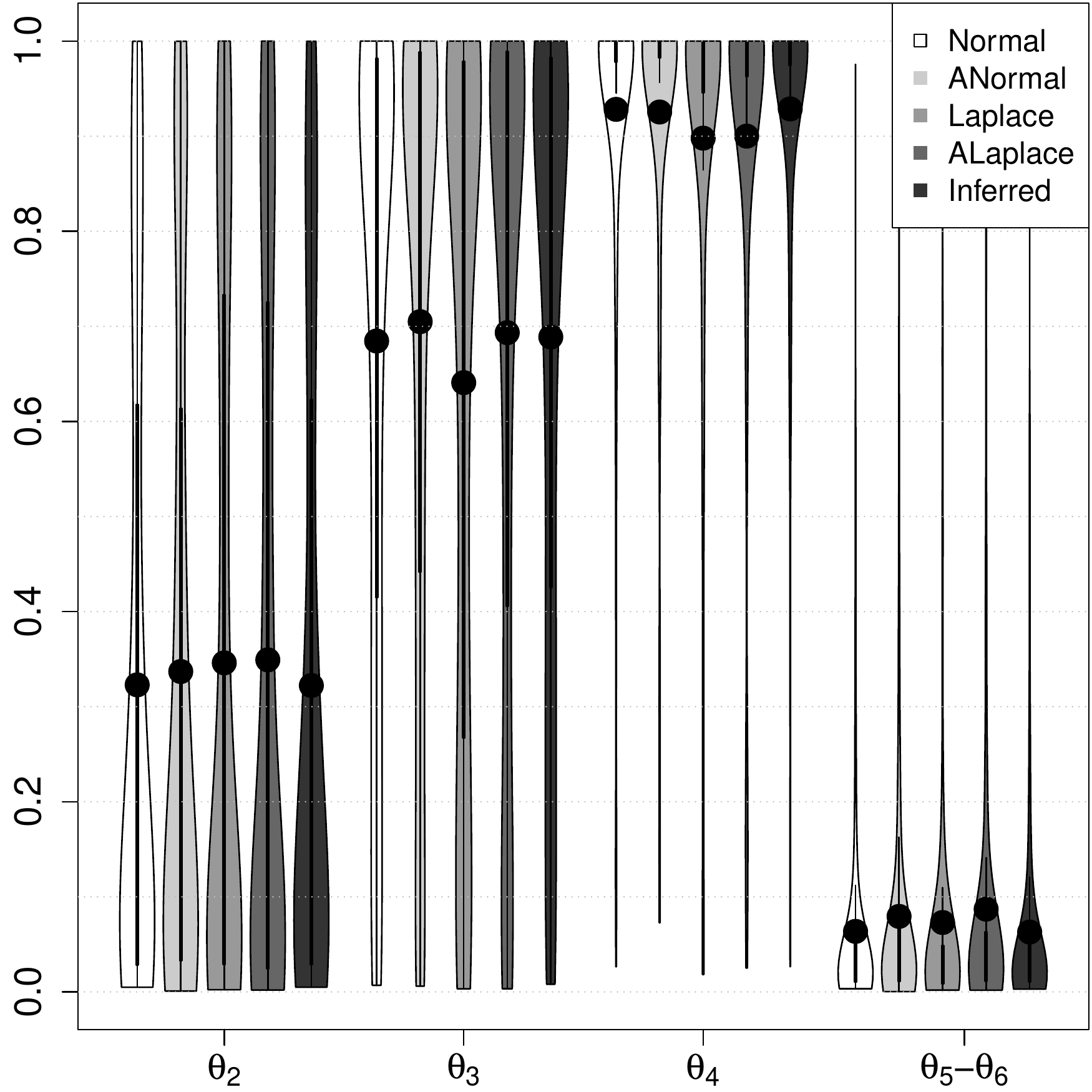} &
\includegraphics[width=0.48\textwidth]{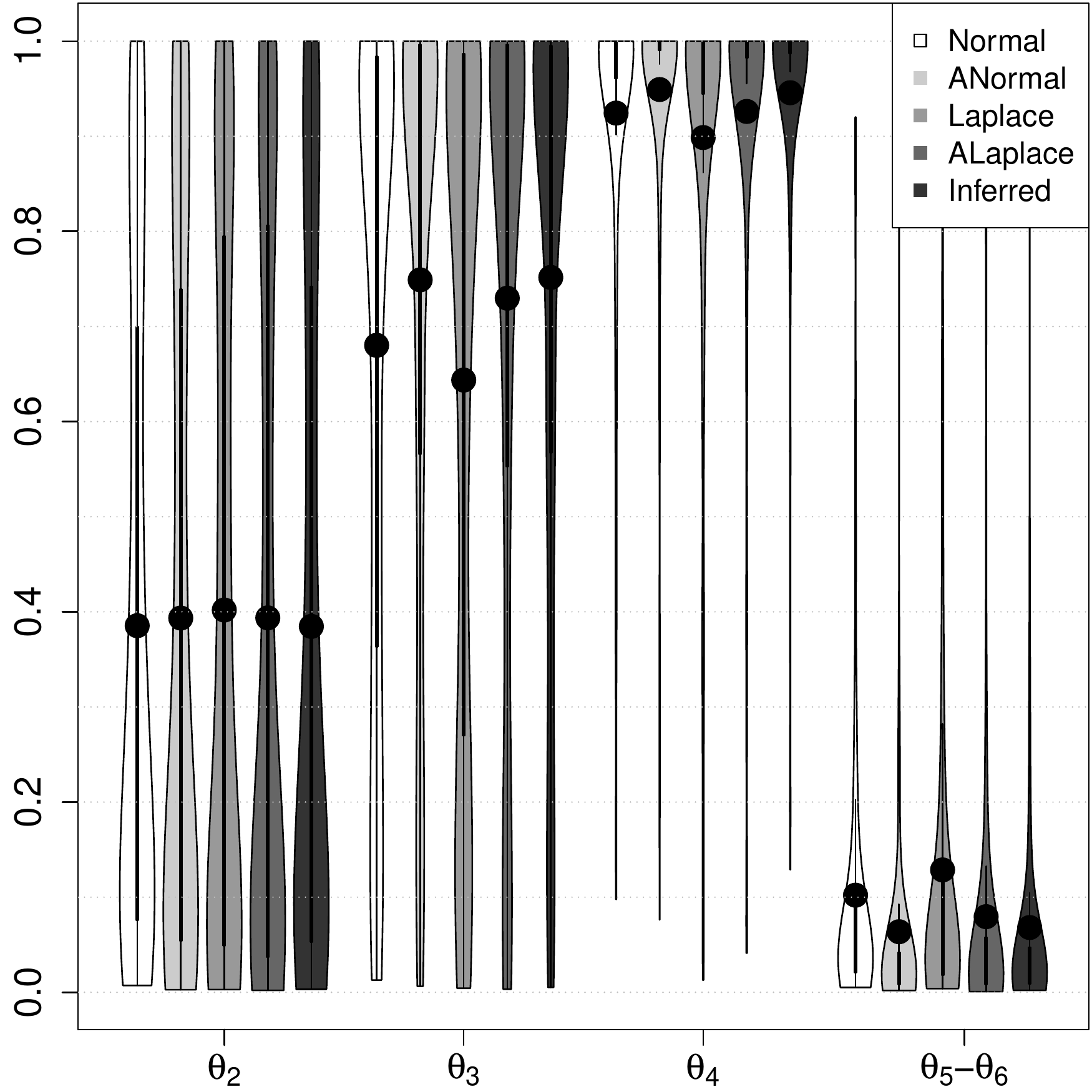} \\
$\epsilon_i \sim L(0,4)$ & $\epsilon_i \sim \mbox{AL}(0,4,-0.5)$ \\
\includegraphics[width=0.48\textwidth]{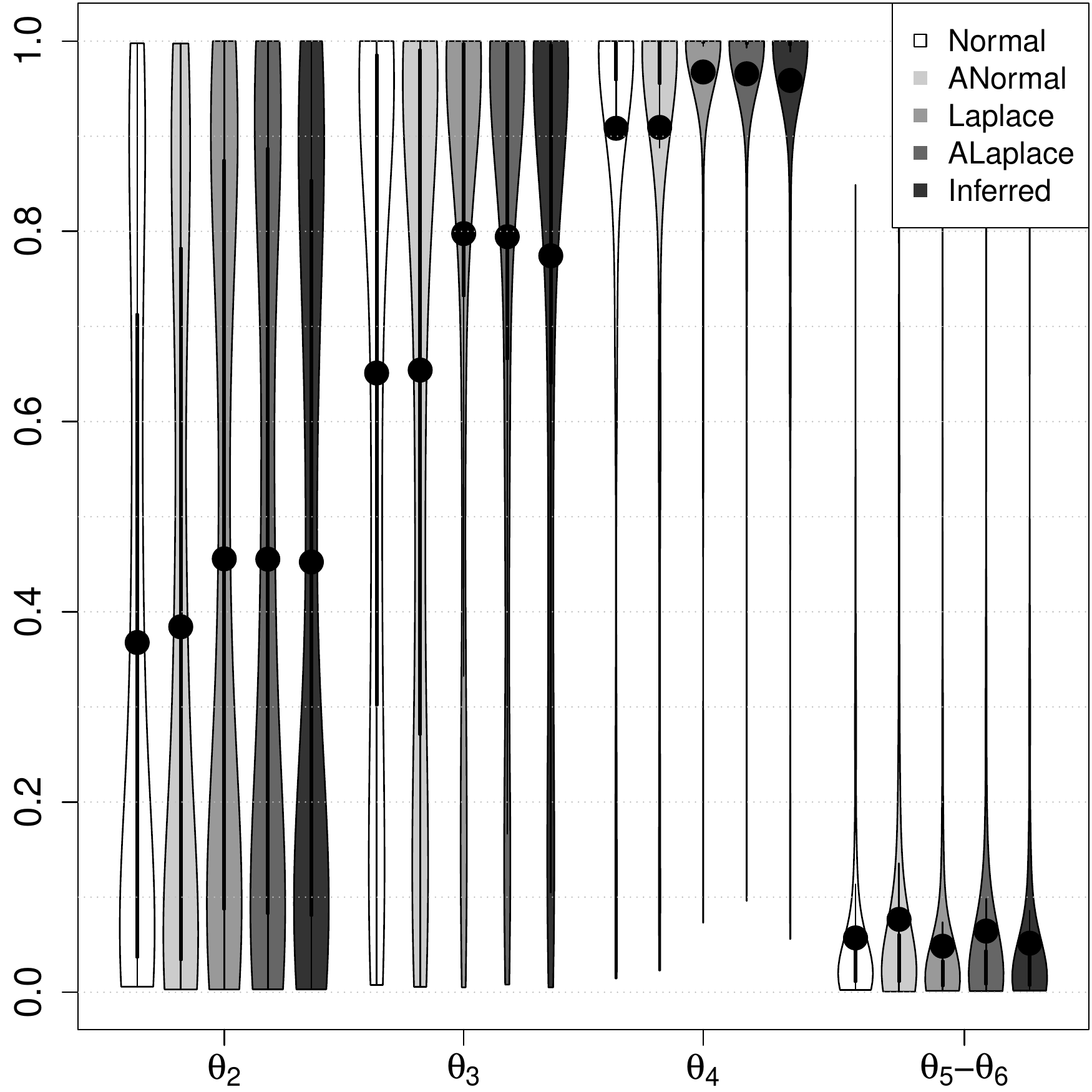} &
\includegraphics[width=0.48\textwidth]{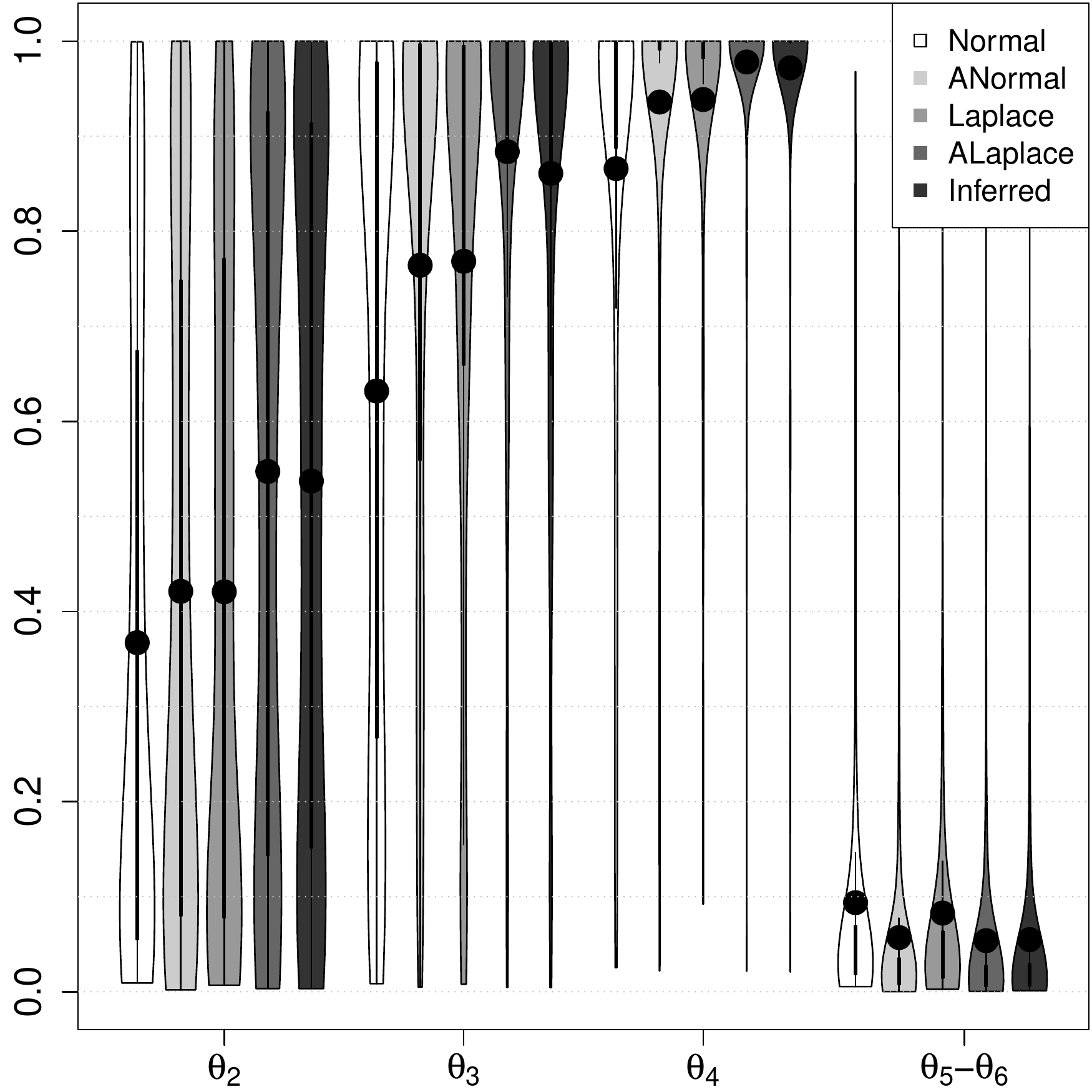} \\
\end{tabular}
\end{center}
\caption{Monte Carlo estimates ($B=10,000$) under $g_\alpha=0.357$.
$P(\theta_i \neq 0 \mid y)$ for $p=5$, $\vartheta=2$, $\theta=(0.5,1,1.5,0,0)$,
$n=100$, $\rho_{ij}=0.5$. Black circles show the mean.}
\label{fig:simres_margpp_mc}
\end{figure}

\begin{table}
\begin{center}
\begin{tabular}{|l|cccc|cccc|} \hline
 & \multicolumn{4}{c|}{$p=100$} & \multicolumn{4}{c|}{$p=500$} \\
 & $p(\gamma_0 \mid y)$ & $p(\widehat{\gamma}=\gamma_0)$ & FP & TP & $p(\gamma_0 \mid y)$ & $p(\widehat{\gamma}=\gamma_0)$ & FP & TP \\ \hline \hline
 & \multicolumn{8}{|c|}{Truly $\epsilon \sim N(0,1)$} \\
  Normal & 0.46 & 0.63 & 0.1 & 2.7 & 0.26 & 0.37 & 0.2 & 2.4 \\
  Two-piece Normal & 0.43 & 0.63 & 0.2 & 2.7 & 0.24 & 0.38 & 0.3 & 2.4 \\
  Laplace & 0.26 & 0.42 & 0.5 & 2.6 & 0.12 & 0.19 & 0.8 & 2.3 \\
  Two-piece Laplace & 0.23 & 0.39 & 0.7 & 2.6 & 0.12 & 0.21 & 0.9 & 2.3 \\
  Inferred & 0.45 & 0.62 & 0.2 & 2.7 & 0.25 & 0.37 & 0.2 & 2.4 \\
  LASSO-LS &  & 0.00 & 12.4 & 3.0 &  & 0.00 & 20.4 & 2.9 \\
  LASSO-LAD &  & 0.00 & 10.2 & 2.9 &  & 0.00 & 18.7 & 2.6 \\
  LASSO-QR &  & 0.00 & 10.2 & 2.9 &  & 0.00 & 18.7 & 2.6 \\
  SCAD &  & 0.07 & 4.2 & 2.9 &  & 0.01 & 7.3 & 2.8 \\
\hline \hline
 & \multicolumn{8}{|c|}{Truly $\epsilon \sim AN(0,1,-0.5)$} \\
  Normal & 0.38 & 0.55 & 0.2 & 2.6 & 0.21 & 0.34 & 0.5 & 2.4 \\
  Two-piece Normal & 0.59 & 0.73 & 0.1 & 2.8 & 0.40 & 0.55 & 0.4 & 2.6 \\
  Laplace & 0.20 & 0.35 & 0.7 & 2.5 & 0.07 & 0.14 & 1.2 & 2.4 \\
  Two-piece Laplace & 0.33 & 0.48 & 0.5 & 2.7 & 0.18 & 0.32 & 1.1 & 2.5 \\
  Inferred & 0.57 & 0.72 & 0.1 & 2.8 & 0.38 & 0.52 & 0.4 & 2.6 \\
  LASSO-LS &  & 0.00 & 12.4 & 3.0 &  & 0.00 & 21.9 & 2.9 \\
  LASSO-LAD &  & 0.00 & 9.8 & 2.8 &  & 0.00 & 18.1 & 2.6 \\
  LASSO-QR &  & 0.00 & 9.0 & 2.9 &  & 0.00 & 15.1 & 2.7 \\
  SCAD &  & 0.07 & 4.0 & 2.9 &  & 0.03 & 7.3 & 2.8 \\
\hline \hline
 & \multicolumn{8}{|c|}{Truly $\epsilon \sim L(0,1)$} \\
  Normal & 0.11 & 0.14 & 0.3 & 2.0 & 0.03 & 0.02 & 0.6 & 1.6 \\
  Two-piece Normal & 0.11 & 0.15 & 0.3 & 2.1 & 0.04 & 0.04 & 1.1 & 1.7 \\
  Laplace & 0.29 & 0.38 & 0.2 & 2.4 & 0.13 & 0.19 & 0.4 & 2.0 \\
  Two-piece Laplace & 0.28 & 0.35 & 0.3 & 2.4 & 0.12 & 0.18 & 0.5 & 2.0 \\
  Inferred & 0.28 & 0.38 & 0.2 & 2.4 & 0.12 & 0.18 & 0.4 & 2.0 \\
  LASSO-LS &  & 0.00 & 11.3 & 2.8 &  & 0.00 & 21.4 & 2.5 \\
  LASSO-LAD &  & 0.01 & 9.7 & 2.8 &  & 0.00 & 17.8 & 2.5 \\
  LASSO-QR &  & 0.01 & 9.7 & 2.8 &  & 0.00 & 17.8 & 2.5 \\
  SCAD &  & 0.02 & 5.0 & 2.7 &  & 0.00 & 9.0 & 2.4 \\
\hline \hline
 & \multicolumn{8}{|c|}{Truly $\epsilon \sim AL(0,-0.5)$} \\
  Normal & 0.07 & 0.10 & 0.4 & 1.9 & 0.02 & 0.02 & 1.1 & 1.5 \\
  Two-piece Normal & 0.21 & 0.27 & 0.2 & 2.2 & 0.11 & 0.15 & 0.3 & 2.0 \\
  Laplace & 0.16 & 0.19 & 0.4 & 2.1 & 0.05 & 0.07 & 0.7 & 1.8 \\
  Two-piece Laplace & 0.43 & 0.51 & 0.2 & 2.5 & 0.27 & 0.34 & 0.4 & 2.3 \\
  Inferred & 0.41 & 0.48 & 0.2 & 2.5 & 0.25 & 0.33 & 0.4 & 2.2 \\
  LASSO-LS &  & 0.00 & 11.6 & 2.8 &  & 0.00 & 20.1 & 2.5 \\
  LASSO-LAD &  & 0.00 & 9.9 & 2.7 &  & 0.00 & 17.5 & 2.3 \\
  LASSO-QR &  & 0.00 & 9.0 & 2.8 &  & 0.00 & 15.2 & 2.5 \\
  SCAD &  & 0.01 & 5.2 & 2.6 &  & 0.01 & 9.4 & 2.3 \\
\hline
\end{tabular}
\end{center}
\caption{Simulation results under $\vartheta=1$.
$\gamma_0$: true predictors. $\widehat{\gamma}$: selected variables.
CC: number of correctly classified variables ($\sum_{j=1}^{p} \mbox{I}(\widehat{\gamma}_j=\gamma_{0j})$).
FP: number of false positives; TP: number of true positives.
LASSO-LAD and LASSO-QR are equivalent when $\alpha=0$}
\label{tab:simres_vtheta1}
\end{table}

\begin{table}
\begin{center}
\begin{tabular}{|l|cccc|cccc|} \hline
 & \multicolumn{4}{c|}{$p=100$} & \multicolumn{4}{c|}{$p=500$} \\
 & $p(\gamma_0 \mid y)$ & $p(\widehat{\gamma}=\gamma_0)$ & FP & TP & $p(\gamma_0 \mid y)$ & $p(\widehat{\gamma}=\gamma_0)$ & FP & TP \\ \hline \hline
 & \multicolumn{8}{|c|}{Truly $\epsilon \sim N(0,1)$} \\
 Normal & 0.01 & 0.01 & 0.4 & 1.2 & 0.00 & 0.00 & 0.8 & 0.9 \\
 Two-piece Normal & 0.01 & 0.01 & 0.5 & 1.2 & 0.00 & 0.00 & 0.9 & 0.8 \\
 Laplace & 0.00 & 0.00 & 0.7 & 1.1 & 0.00 & 0.00 & 1.0 & 0.8 \\
 Two-piece Laplace & 0.00 & 0.01 & 0.8 & 1.1 & 0.00 & 0.00 & 1.1 & 0.8 \\
 Inferred & 0.01 & 0.01 & 0.5 & 1.2 & 0.00 & 0.00 & 0.7 & 0.9 \\
 LASSO-LS &  & 0.00 & 11.9 & 2.5 &  & 0.00 & 18.0 & 2.0 \\
 LASSO-LAD &  & 0.00 & 8.9 & 2.0 &  & 0.00 & 15.6 & 1.4 \\
 LASSO-QR &  & 0.00 & 8.9 & 2.0 &  & 0.00 & 15.6 & 1.4 \\
 SCAD &  & 0.00 & 6.3 & 2.3 &  & 0.01 & 10.4 & 1.8 \\
\hline \hline
 & \multicolumn{8}{|c|}{Truly $\epsilon \sim AN(0,1,-0.5)$} \\
  Normal & 0.00 & 0.00 & 0.5 & 1.2 & 0.00 & 0.00 & 0.7 & 0.9 \\
  Two-piece Normal & 0.01 & 0.01 & 0.4 & 1.4 & 0.00 & 0.01 & 0.7 & 1.1 \\
  Laplace & 0.00 & 0.00 & 0.9 & 1.0 & 0.00 & 0.00 & 1.4 & 0.7 \\
  Two-piece Laplace & 0.01 & 0.01 & 0.7 & 1.2 & 0.00 & 0.00 & 1.5 & 1.0 \\
  Inferred & 0.01 & 0.01 & 0.4 & 1.4 & 0.00 & 0.01 & 0.9 & 1.0 \\
  LASSO-LS &  & 0.00 & 11.0 & 2.4 &  & 0.00 & 19.4 & 1.9 \\
  LASSO-LAD &  & 0.00 & 8.6 & 1.8 &  & 0.00 & 15.3 & 1.4 \\
  LASSO-QR &  & 0.00 & 8.1 & 2.1 &  & 0.00 & 12.8 & 1.5 \\
  SCAD &  & 0.00 & 6.1 & 2.1 &  & 0.00 & 10.1 & 1.8 \\
\hline \hline
 & \multicolumn{8}{|c|}{Truly $\epsilon \sim L(0,1)$} \\
  Normal & 0.01 & 0.01 & 0.4 & 1.3 & 0.00 & 0.00 & 0.8 & 0.9 \\
  Two-piece Normal & 0.01 & 0.01 & 0.5 & 1.3 & 0.00 & 0.00 & 0.9 & 1.0 \\
  Laplace & 0.05 & 0.06 & 0.4 & 1.7 & 0.01 & 0.01 & 0.7 & 1.2 \\
  Two-piece Laplace & 0.05 & 0.07 & 0.4 & 1.7 & 0.01 & 0.01 & 0.8 & 1.2 \\
  Inferred & 0.04 & 0.04 & 0.3 & 1.7 & 0.01 & 0.01 & 0.7 & 1.2 \\
  LASSO-LS &  & 0.00 & 10.8 & 2.5 &  & 0.00 & 20.4 & 2.0 \\
  LASSO-LAD &  & 0.01 & 9.3 & 2.5 &  & 0.00 & 17.1 & 2.0 \\
  LASSO-QR &  & 0.01 & 9.3 & 2.5 &  & 0.00 & 17.1 & 2.0 \\
  SCAD &  & 0.00 & 5.9 & 2.2 &  & 0.00 & 10.3 & 1.8 \\
\hline \hline
 & \multicolumn{8}{|c|}{Truly $\epsilon \sim AL(0,-0.5)$} \\
  Normal & 0.00 & 0.00 & 0.5 & 1.1 & 0.00 & 0.00 & 0.9 & 0.8 \\
  Two-piece Normal & 0.02 & 0.01 & 0.4 & 1.5 & 0.01 & 0.01 & 0.6 & 1.2 \\
  Laplace & 0.02 & 0.01 & 0.6 & 1.3 & 0.00 & 0.01 & 0.8 & 1.0 \\
  Two-piece Laplace & 0.09 & 0.12 & 0.3 & 1.9 & 0.04 & 0.05 & 0.7 & 1.5 \\
  Inferred & 0.09 & 0.10 & 0.3 & 1.8 & 0.04 & 0.05 & 0.6 & 1.4 \\
  LASSO-LS &  & 0.00 & 10.9 & 2.3 &  & 0.00 & 18.0 & 1.8 \\
  LASSO-LAD &  & 0.00 & 9.4 & 2.3 &  & 0.00 & 15.6 & 1.7 \\
  LASSO-QR &  & 0.00 & 8.3 & 2.5 &  & 0.00 & 14.0 & 2.0 \\
  SCAD &  & 0.01 & 5.7 & 2.1 &  & 0.00 & 10.2 & 1.6 \\
\hline
\end{tabular}
\end{center}
\caption{Simulation results under $\vartheta=2$.
$\gamma_0$: true predictors. $\widehat{\gamma}$: selected variables.
CC: number of correctly classified variables ($\sum_{j=1}^{p} \mbox{I}(\widehat{\gamma}_j=\gamma_{0j})$).
FP: number of false positives; TP: number of true positives.
LASSO-LAD and LASSO-QR are equivalent when $\alpha=0$}
\label{tab:simres_vtheta2}
\end{table}

\begin{figure}
\begin{center}
\begin{tabular}{cc}
$\epsilon_i \sim N(0,4)$ & $\epsilon_i \sim \mbox{AN}(0,4,-0.5)$ \\
\includegraphics[width=0.48\textwidth]{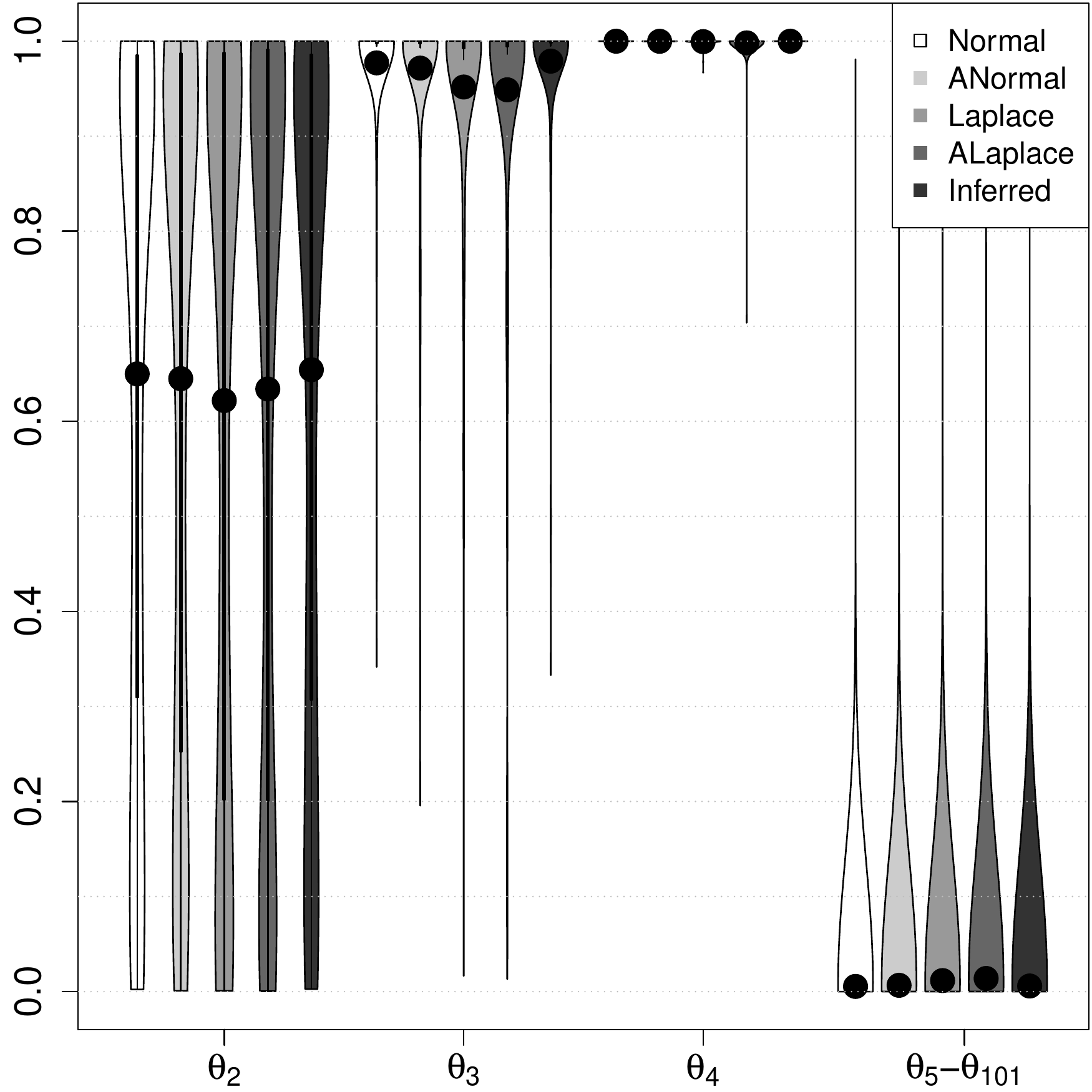} &
\includegraphics[width=0.48\textwidth]{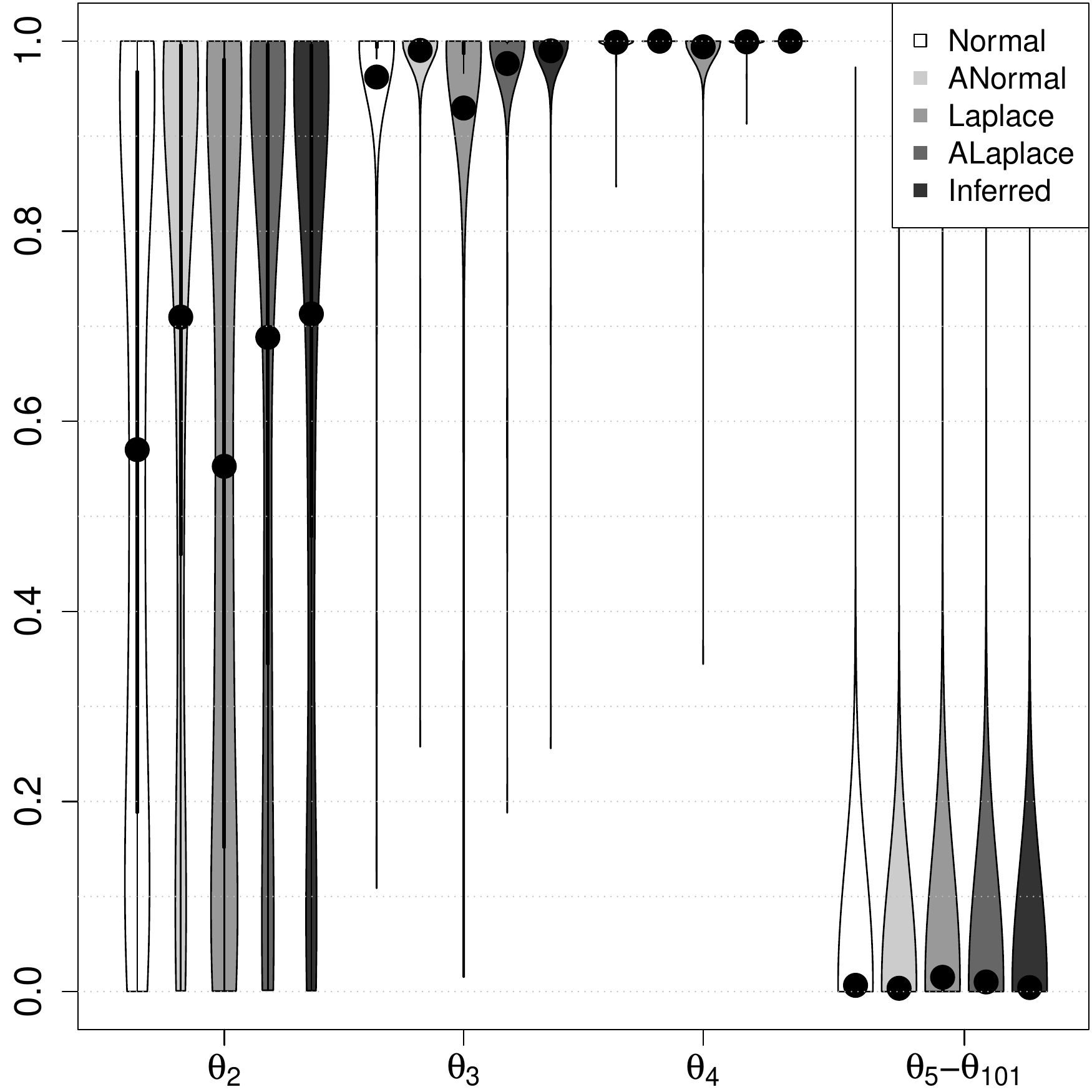} \\
$\epsilon_i \sim L(0,4)$ & $\epsilon_i \sim \mbox{AL}(0,4,-0.5)$ \\
\includegraphics[width=0.48\textwidth]{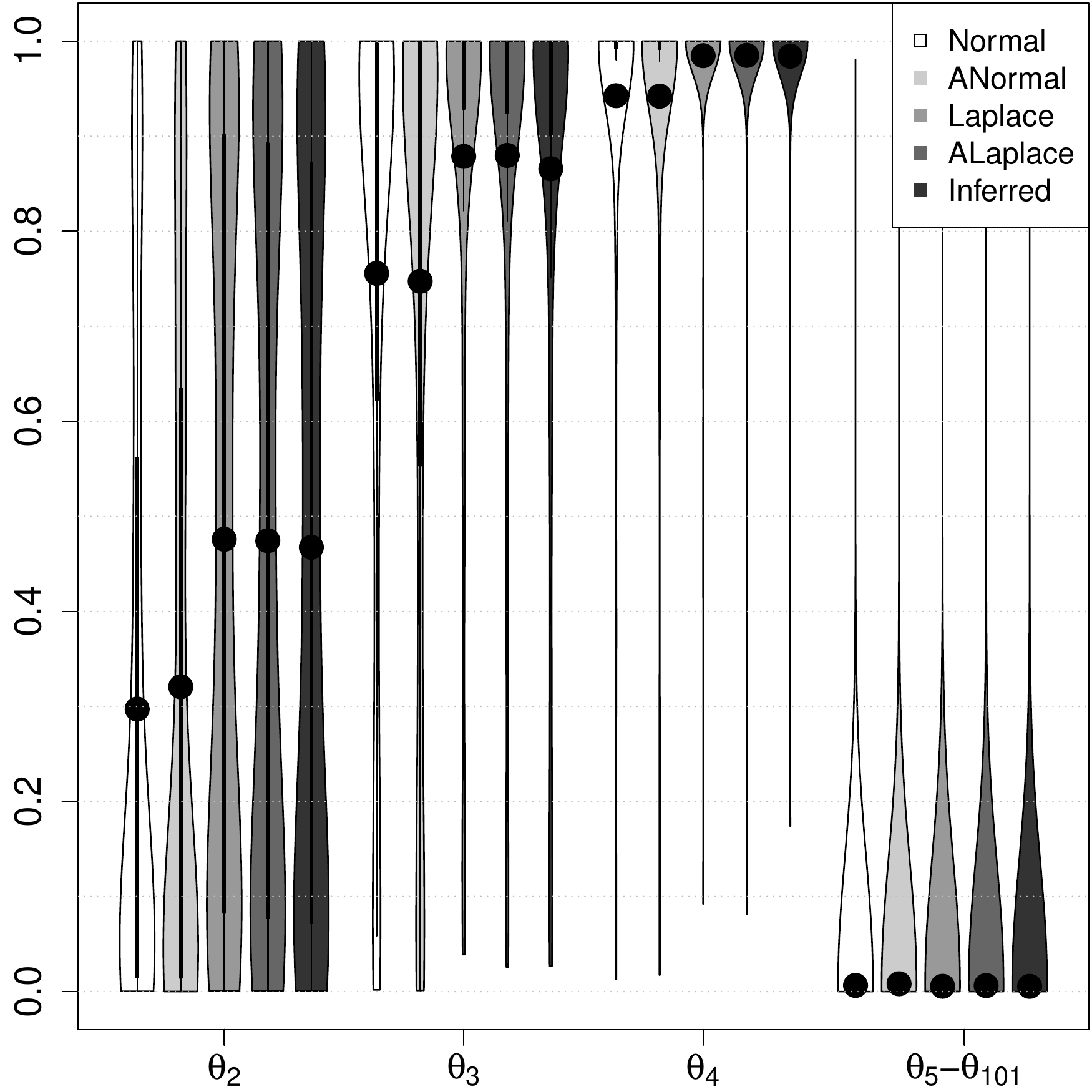} &
\includegraphics[width=0.48\textwidth]{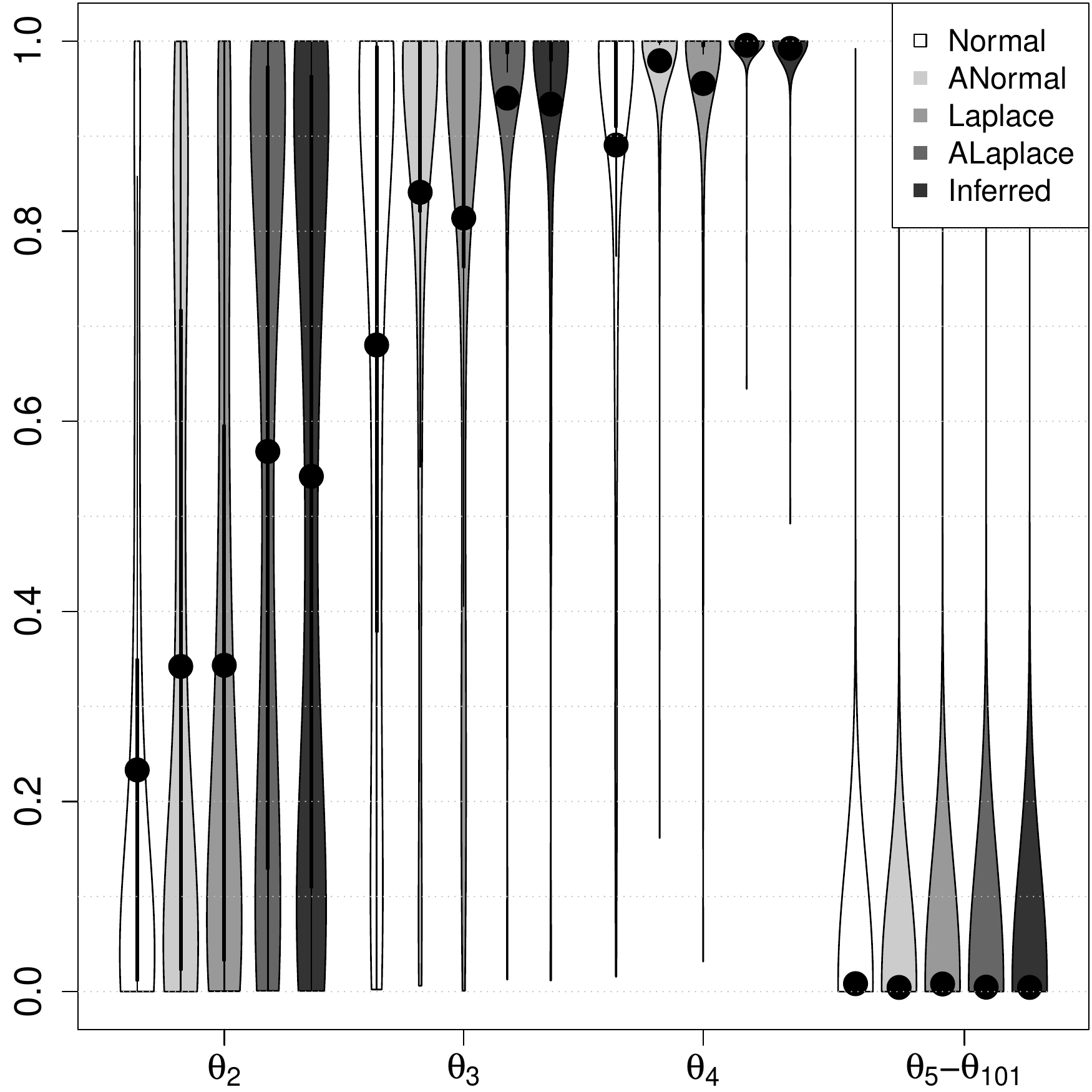} \\
\end{tabular}
\end{center}
\caption{$P(\theta_i \neq 0 \mid y)$ for $p=100$, $\vartheta=1$, $\theta=(0,0.5,1,1.5,0,\ldots,0)$,
$n=100$, $\rho_{ij}=0.5$. Black circles show the mean.}
\label{fig:simres_margpp_p100_vartheta1}
\end{figure}

\begin{figure}
\begin{center}
\begin{tabular}{cc}
$\epsilon_i \sim N(0,4)$ & $\epsilon_i \sim \mbox{AN}(0,4,-0.5)$ \\
\includegraphics[width=0.48\textwidth]{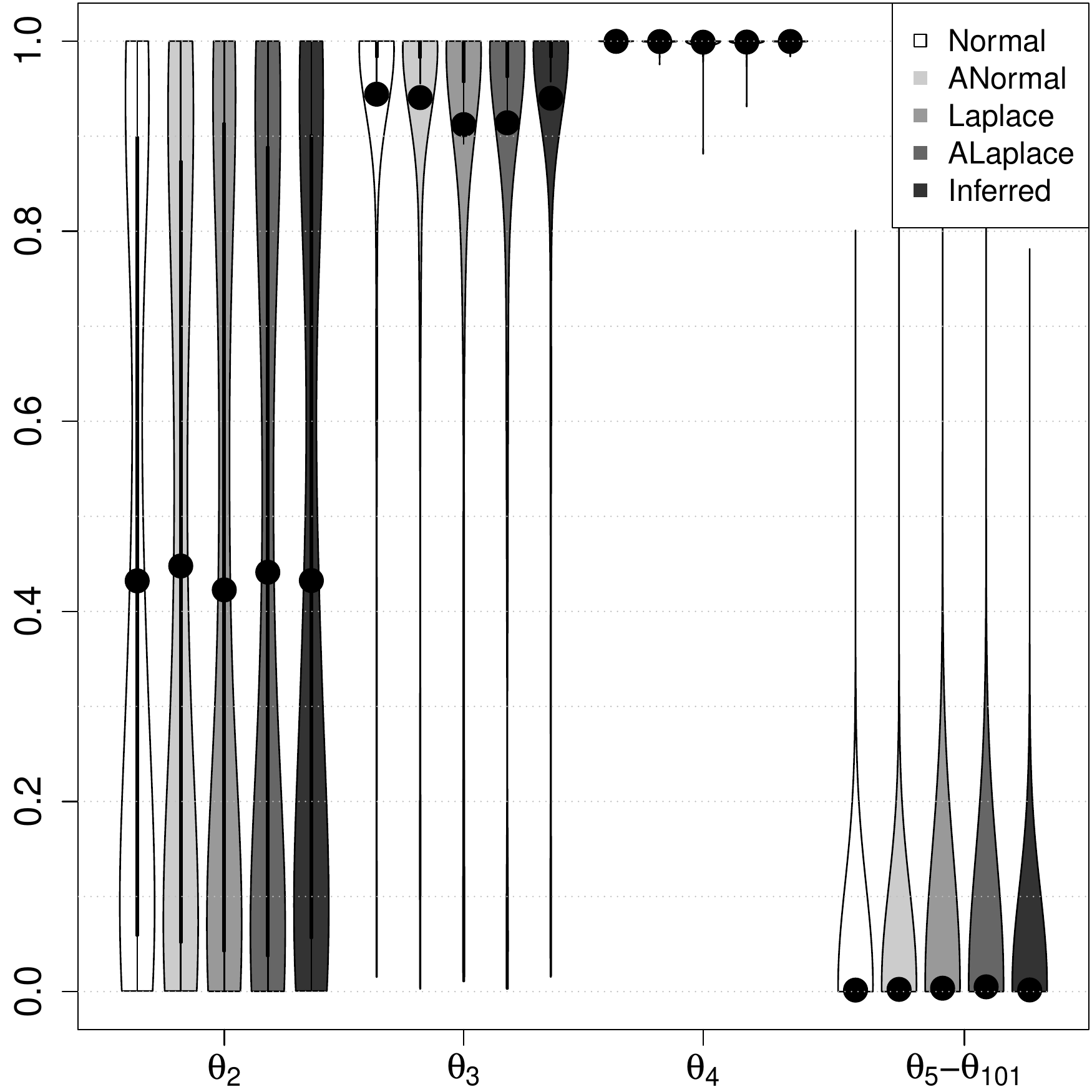} &
\includegraphics[width=0.48\textwidth]{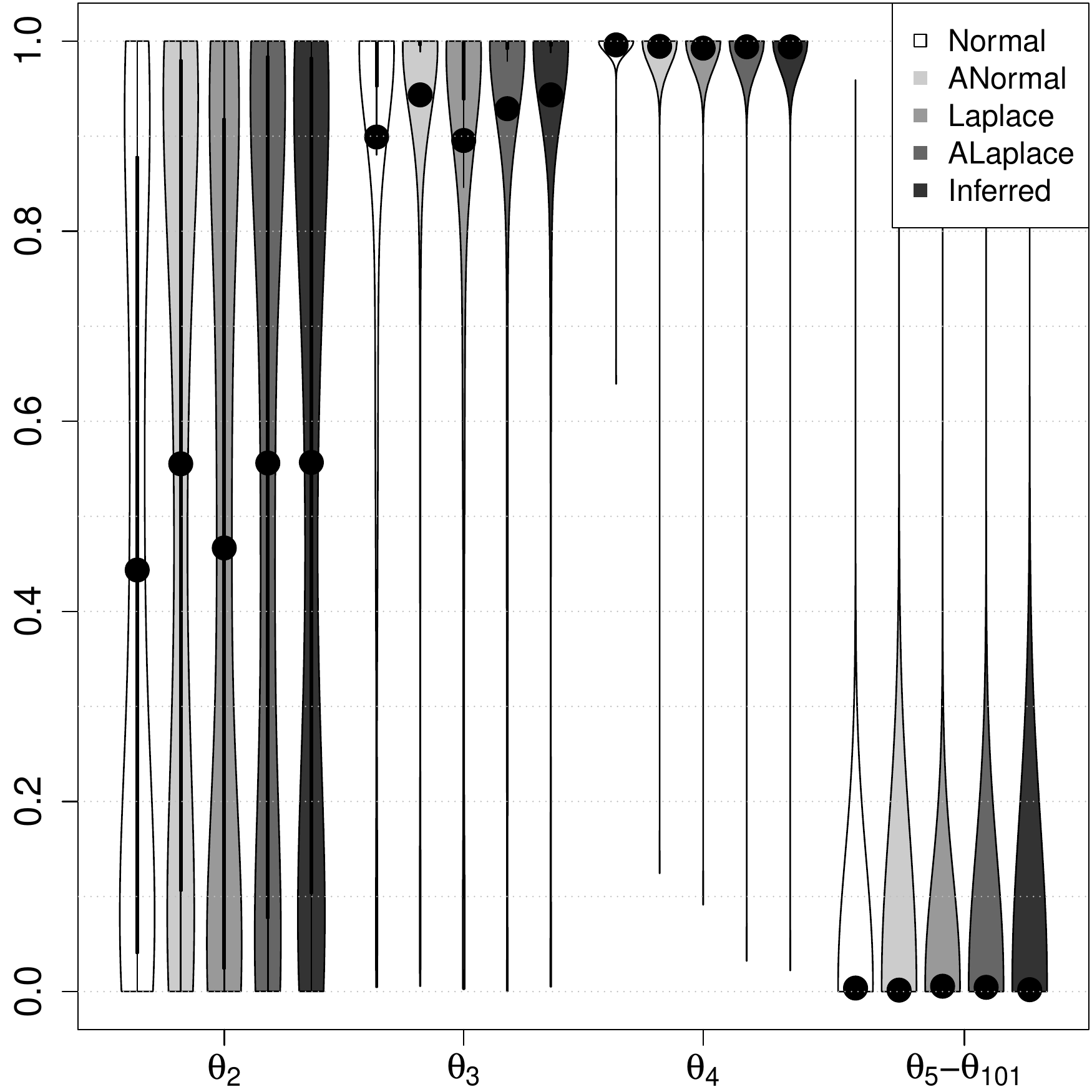} \\
$\epsilon_i \sim L(0,4)$ & $\epsilon_i \sim \mbox{AL}(0,4,-0.5)$ \\
\includegraphics[width=0.48\textwidth]{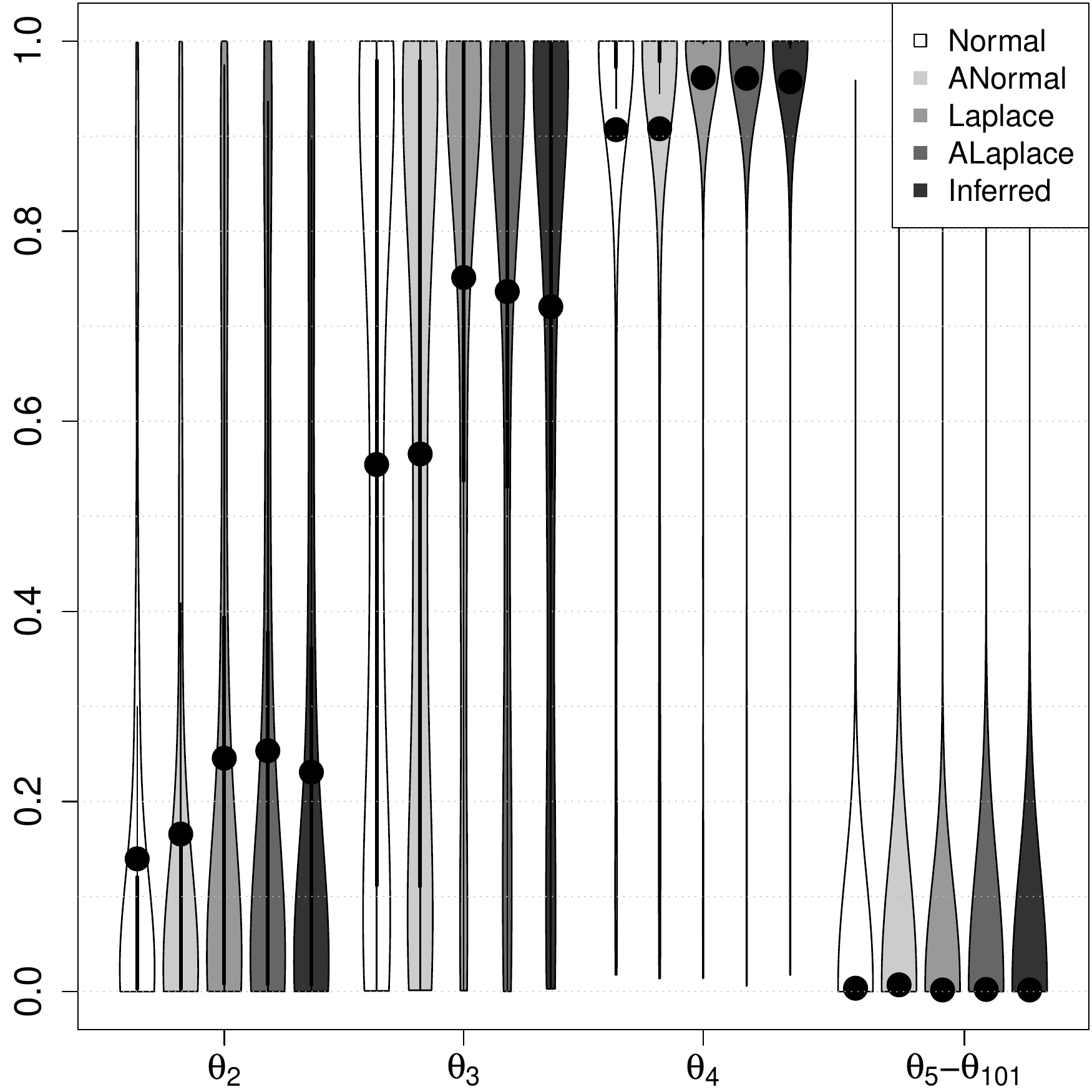} &
\includegraphics[width=0.48\textwidth]{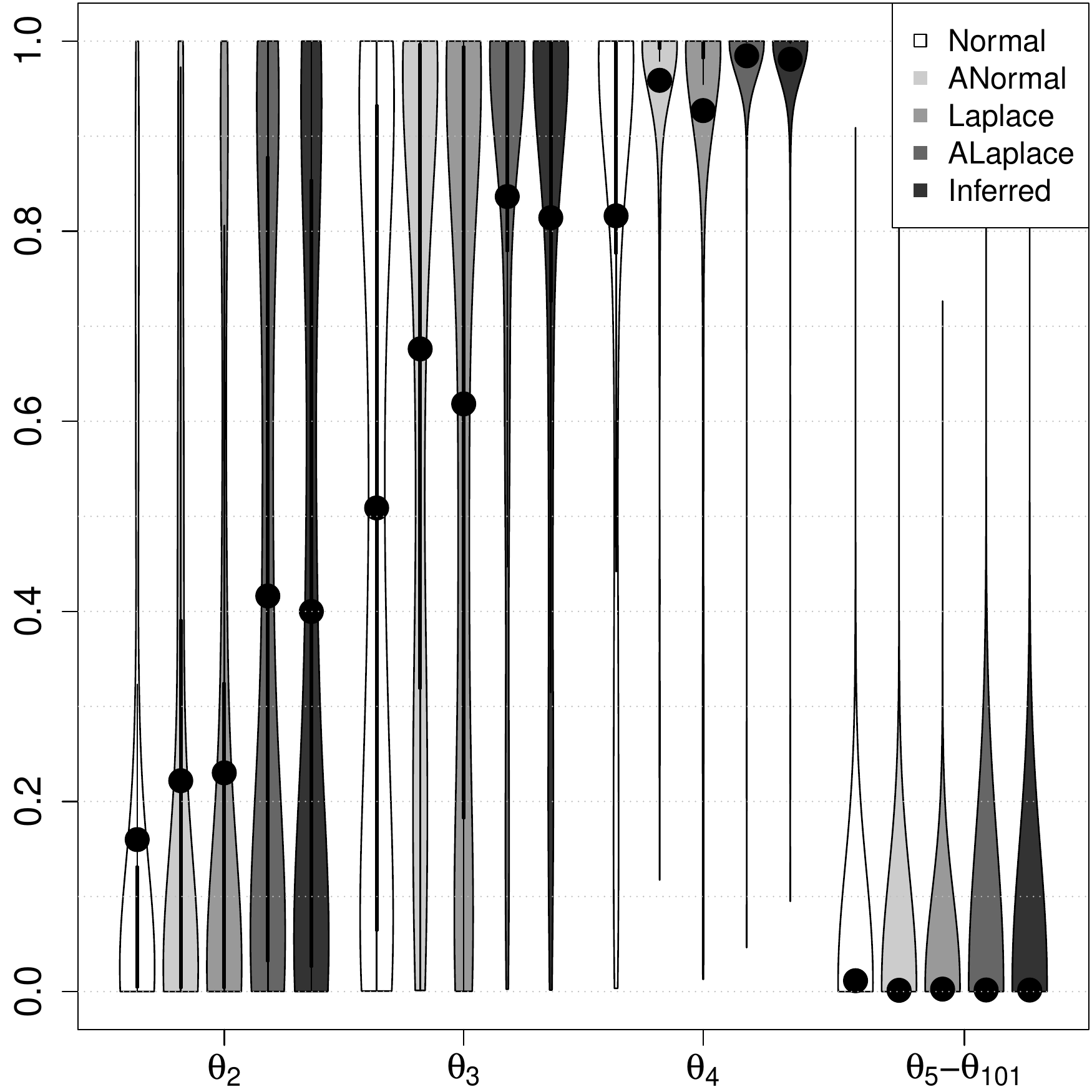} \\
\end{tabular}
\end{center}
\caption{$P(\theta_i \neq 0 \mid y)$ for $p=500$, $\vartheta=1$, $\theta=(0,0.5,1,1.5,0,\ldots,0)$,
$n=100$, $\rho_{ij}=0.5$. Black circles show the mean.}
\label{fig:simres_margpp_p500_vartheta1}
\end{figure}

\begin{figure}
\begin{center}
\begin{tabular}{cc}
$\epsilon_i \sim N(0,4)$ & $\epsilon_i \sim \mbox{AN}(0,4,-0.5)$ \\
\includegraphics[width=0.48\textwidth]{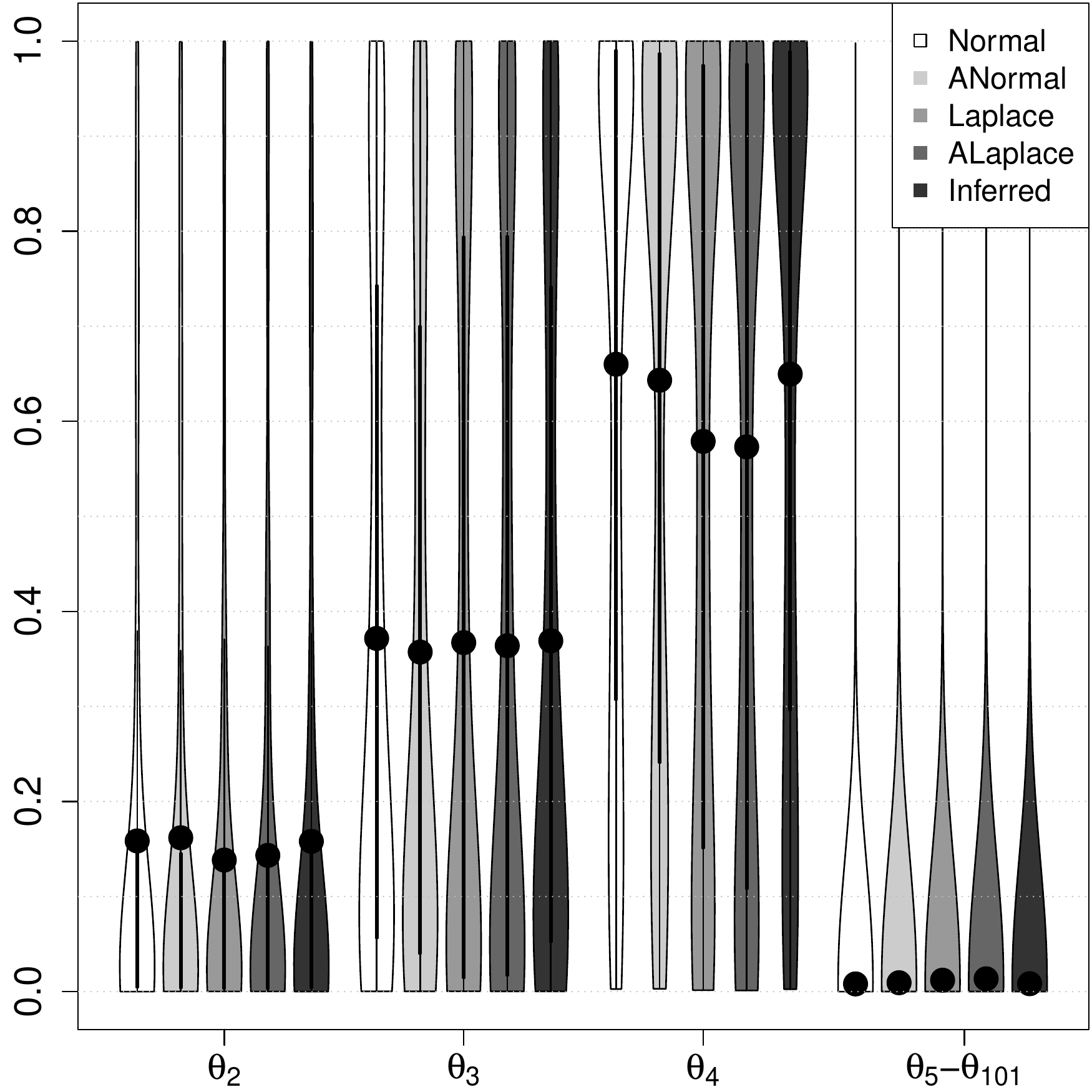} &
\includegraphics[width=0.48\textwidth]{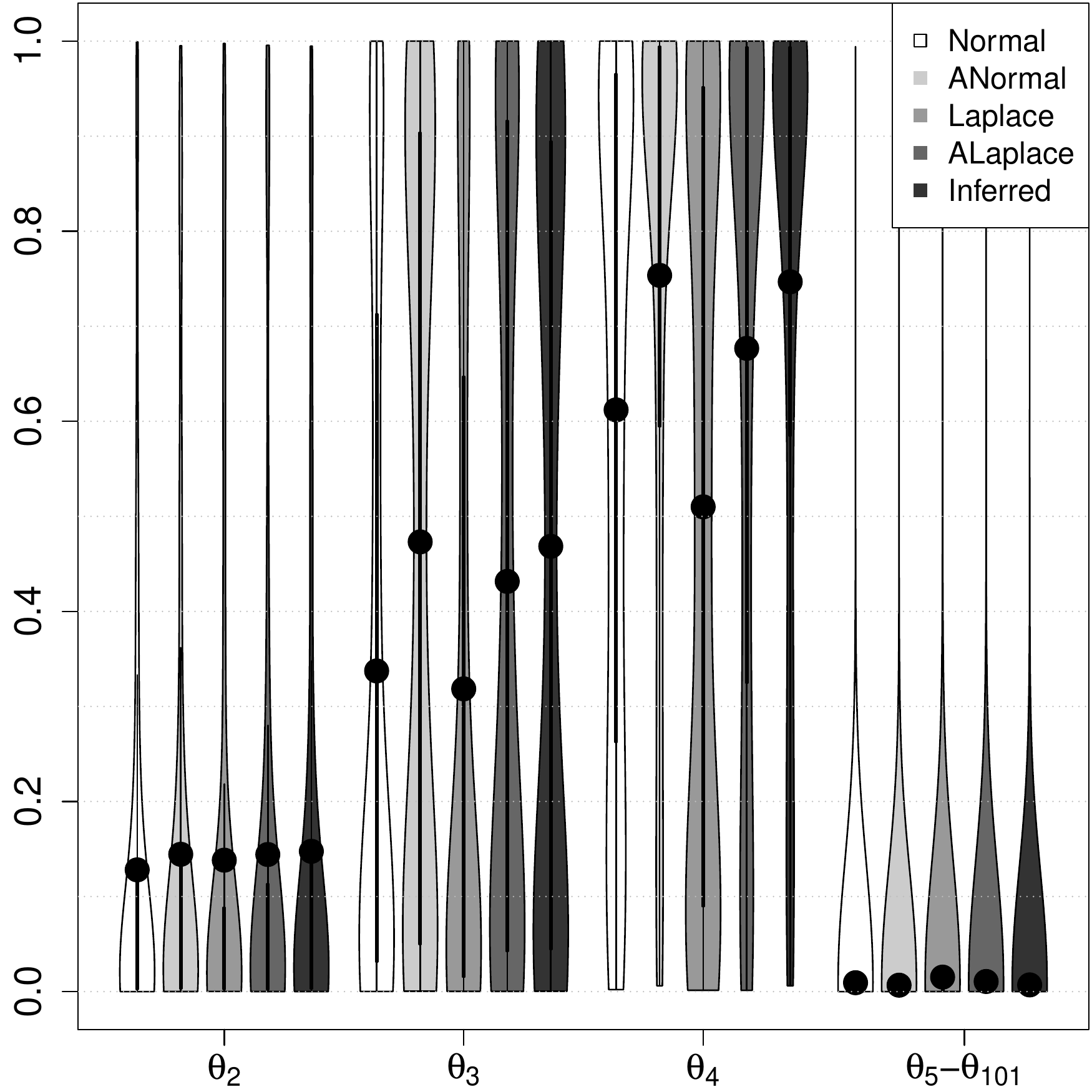} \\
$\epsilon_i \sim L(0,4)$ & $\epsilon_i \sim \mbox{AL}(0,4,-0.5)$ \\
\includegraphics[width=0.48\textwidth]{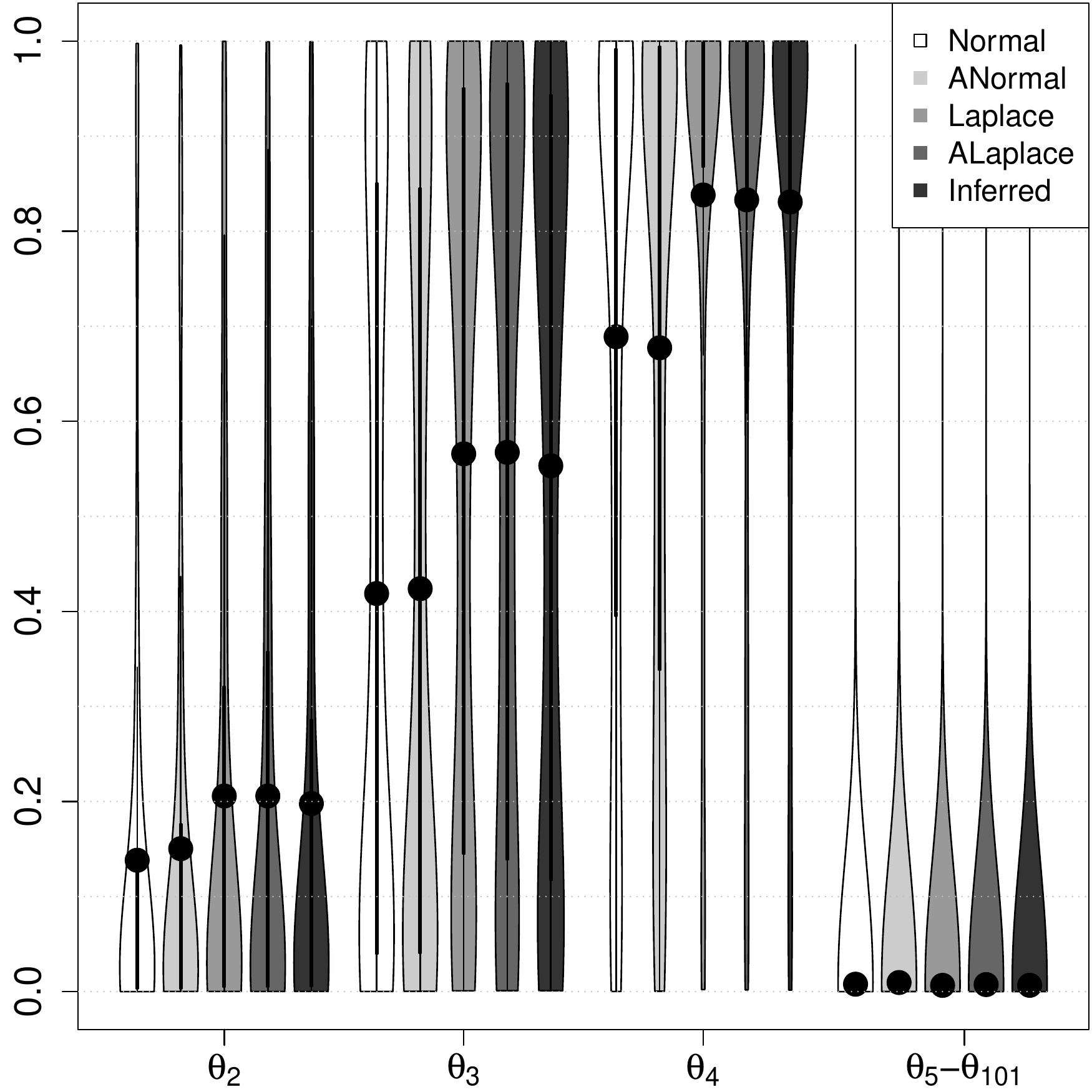} &
\includegraphics[width=0.48\textwidth]{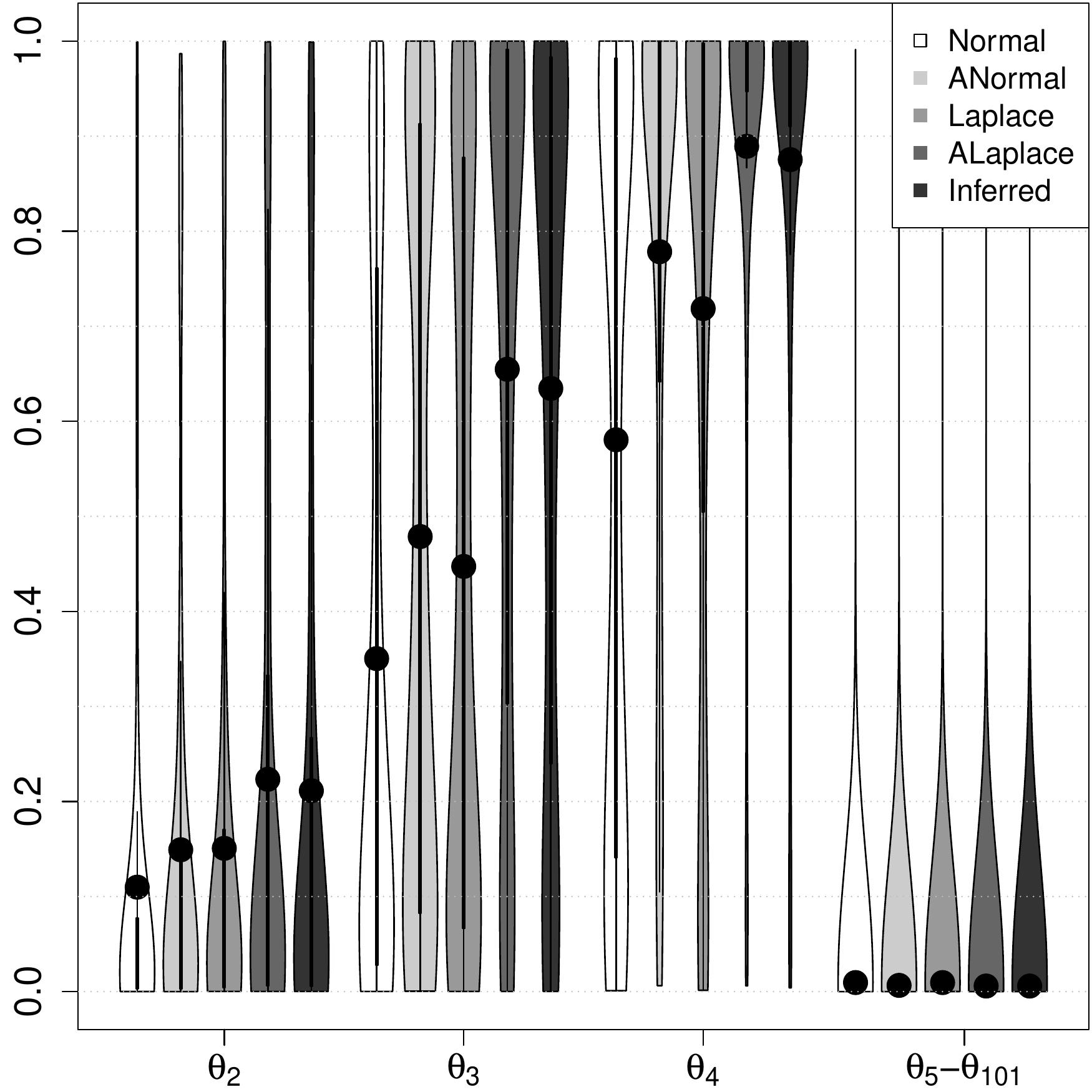} \\
\end{tabular}
\end{center}
\caption{$P(\theta_i \neq 0 \mid y)$ for $p=100$, $\vartheta=2$, $\theta=(0,0.5,1,1.5,0,\ldots,0)$,
$n=100$, $\rho_{ij}=0.5$. Black circles show the mean.}
\label{fig:simres_margpp_p100_vartheta2}
\end{figure}

\begin{figure}
\begin{center}
\begin{tabular}{cc}
$\epsilon_i \sim N(0,4)$ & $\epsilon_i \sim \mbox{AN}(0,4,-0.5)$ \\
\includegraphics[width=0.48\textwidth]{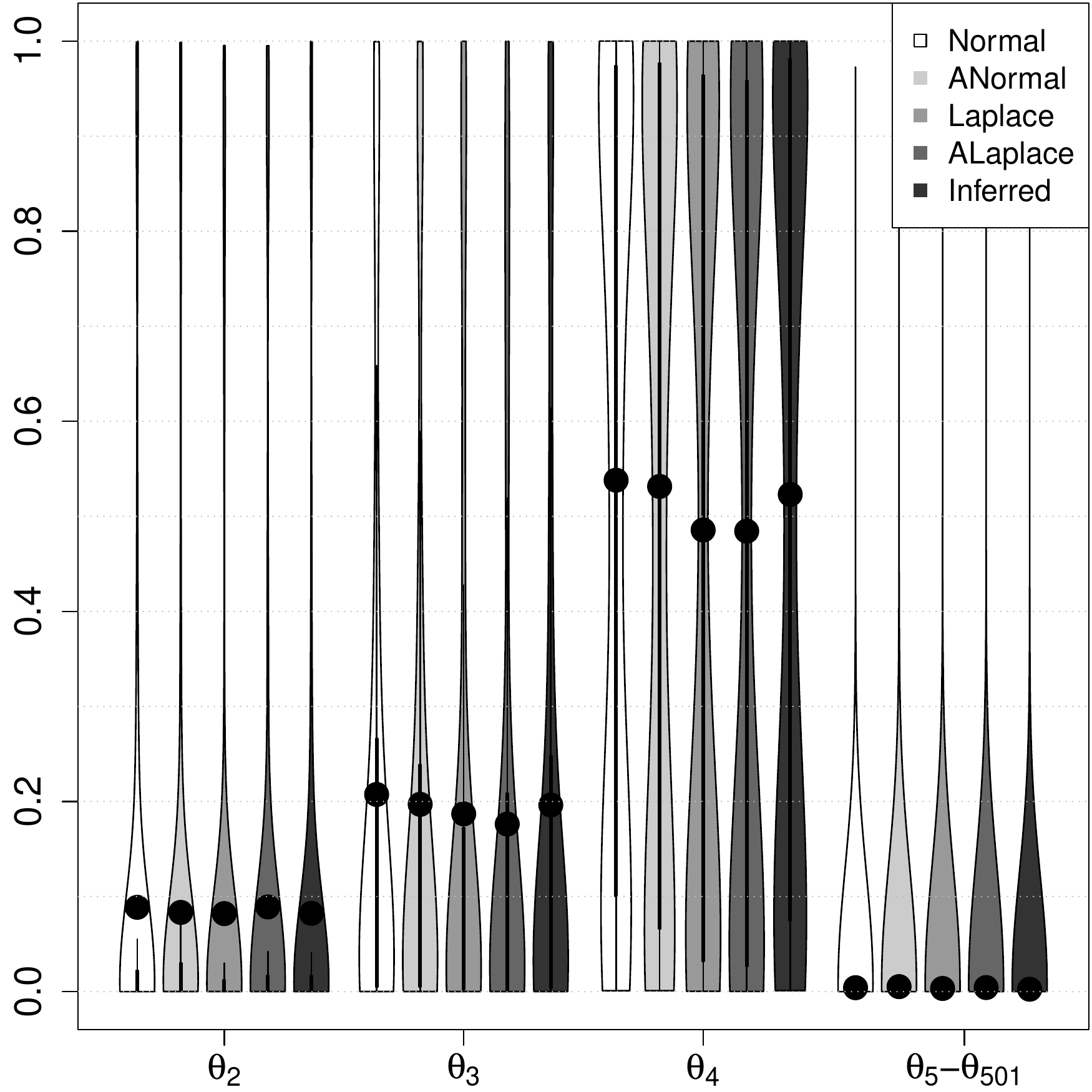} &
\includegraphics[width=0.48\textwidth]{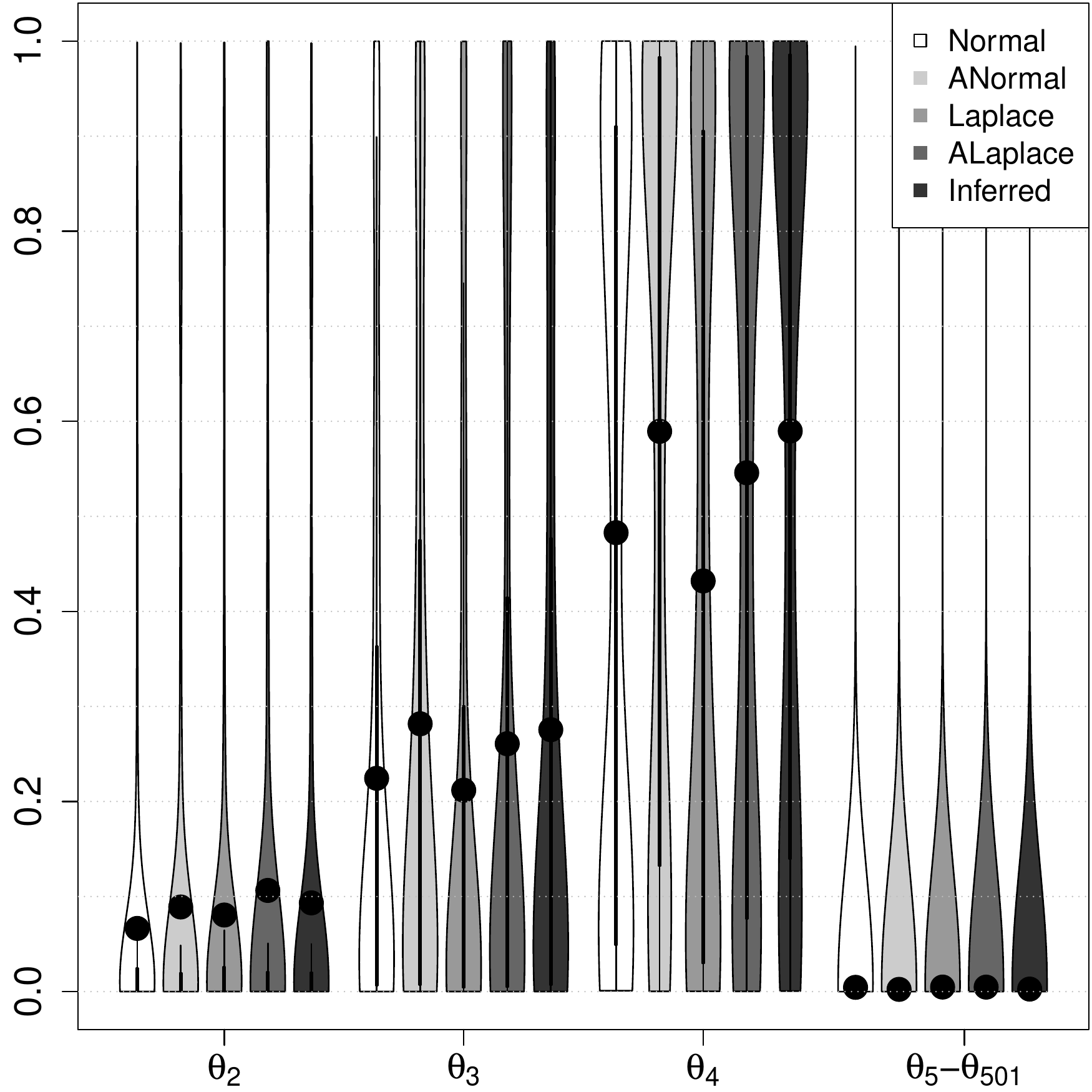} \\
$\epsilon_i \sim L(0,4)$ & $\epsilon_i \sim \mbox{AL}(0,4,-0.5)$ \\
\includegraphics[width=0.48\textwidth]{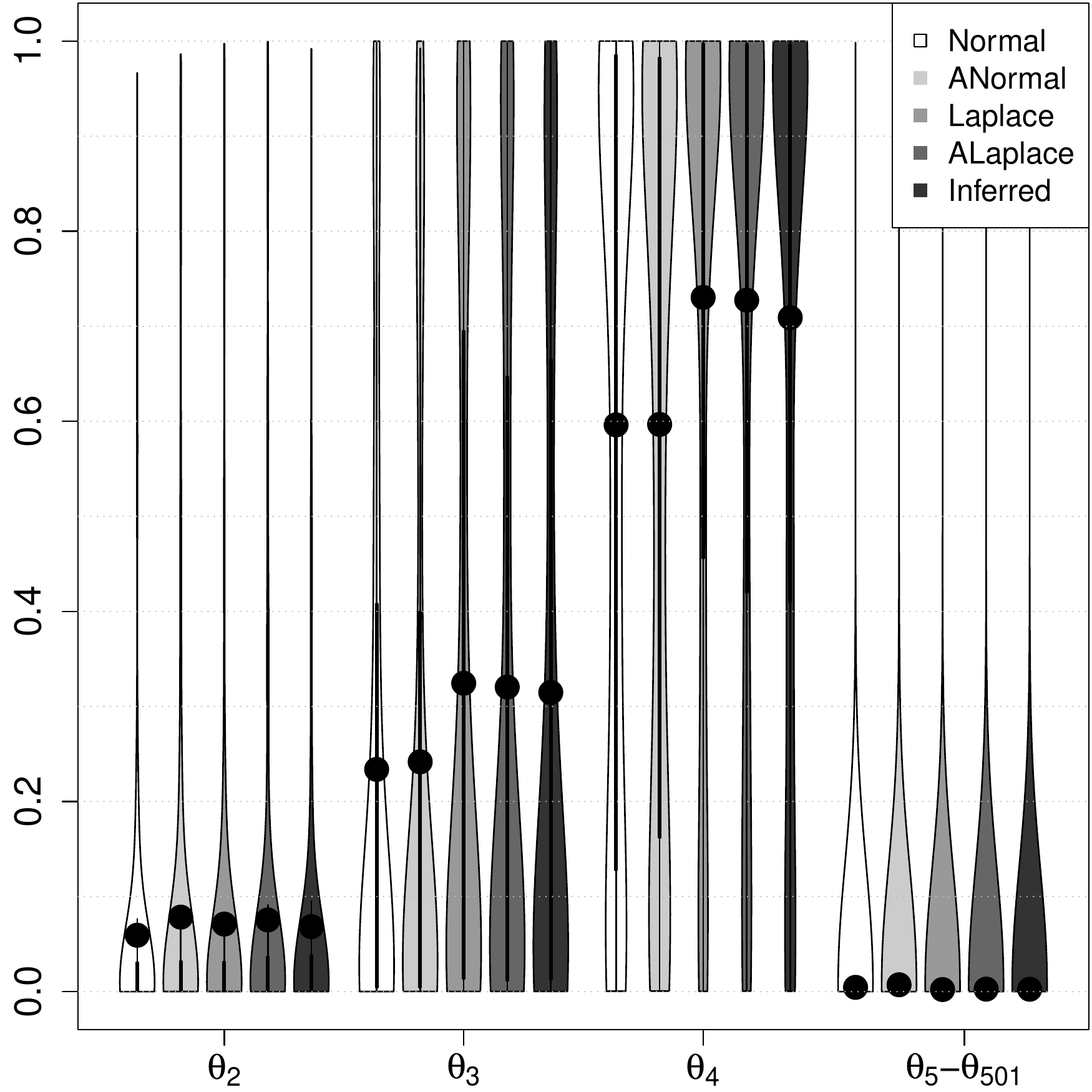} &
\includegraphics[width=0.48\textwidth]{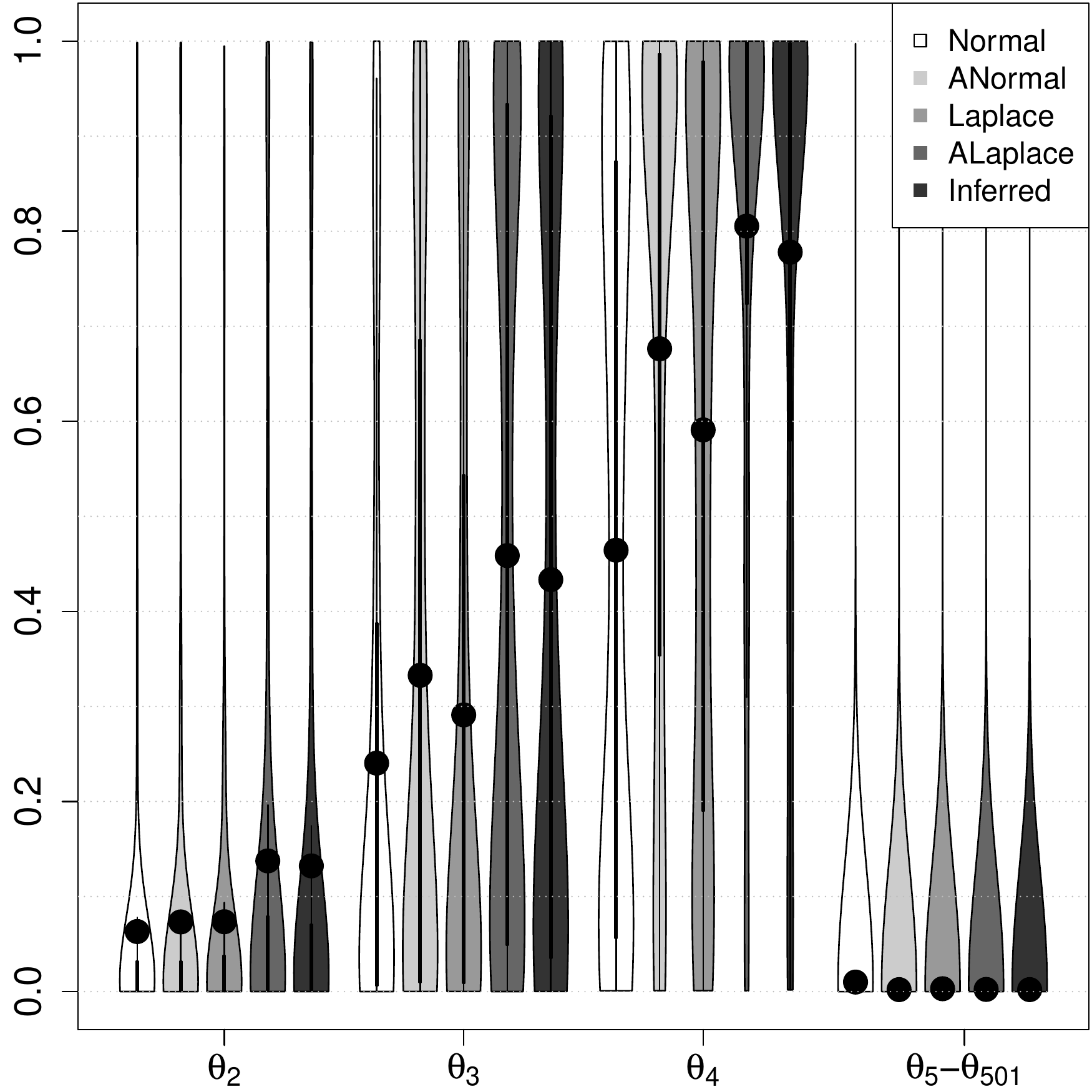} \\
\end{tabular}
\end{center}
\caption{$P(\theta_i \neq 0 \mid y)$ for $p=500$, $\vartheta=2$, $\theta=(0,0.5,1,1.5,0,\ldots,0)$,
$n=100$, $\rho_{ij}=0.5$. Black circles show the mean.}
\label{fig:simres_margpp_p500_vartheta2}
\end{figure}

We assessed the sensitivity of the results of the $p=6$ simulation study in Section \ref{ssec:lowdim_sim} of the main paper
to the prior on the asymmetry coefficient by setting $g_\alpha$ such that $P(|\alpha|>0.1)=0.99$.
Supplementary Table \ref{tab:perror_p5} summarizes the inference on the error distribution and
Supplementary Figure \ref{fig:simres_margpp_priorskew1} the marginal variable inclusion probabilities.
The latter were virtually identical to those in Figure \ref{fig:simres_margpp} obtained under $g_\alpha$ such that $P(|\alpha|>0.2)=0.99$,
showing that variable inclusion is robust to moderate changes in $g_\alpha$.

We also assessed the accuracy of the Laplace approximations to the integrated likelihood $p(y \mid \gamma)$
by comparing the results with those obtained with the importance sampling estimates with $B=10,000$ draws described in Section \ref{sec:modsel}
of the main paper.
Supplementary Figure \ref{fig:simres_margpp_mc} displays the results for $g_\alpha=0.357$.
These are extremely similar to those based on Laplace approximation in Figure \ref{fig:simres_margpp}.

Supplementary Figure \ref{fig:simres_margpp_p100_vartheta2} shows analogous results for $p=100$,
with $g_\alpha=0.357$ and $p(y \mid \gamma)$ estimated via Laplace approximations.

\subsection{Simulation study with non-identically distributed errors}
\label{ssec:simstudy_nonid_supplres}

\begin{table}
\begin{center}
\begin{tabular}{|c|cccc|} \hline
Truth & \multicolumn{4}{c|}{Average $p(\gamma_{p+1},\gamma_{p+2} \mid y)$} \\
& $\gamma_{p+1}=\gamma_{p+2}=0$ & $\gamma_{p+1}=1,\gamma_{p+2}=0$
& $\gamma_{p+1}=0,\gamma_{p+2}=1$ & $\gamma_{p+1}=\gamma_{p+2}=0$ \\ \hline \hline
$\mbox{N}(0,\vartheta_i)$       & 0.000 & 0.000 & 0.914 & 0.086 \\
$\mbox{AN}(0,\vartheta_i,-0.5)$ & 0.000 & 0.003 & 0.096 & 0.901 \\
$\mbox{L}(0,\vartheta_i)$       & 0.000 & 0.000 & 0.906 & 0.094 \\
$\mbox{AL}(0,\vartheta_i,-0.5)$ & 0.000 & 0.000 & 0.053 & 0.947 \\
\hline
\end{tabular}
\end{center}
\caption{Inference on the error distribution under the $p=6$ simulation and heteroskedastic
$\vartheta_i \propto e^{x_i^T \theta}$ errors}
\label{tab:infererror_heterosk}
\end{table}

\begin{table}
\begin{center}
\begin{tabular}{|c|ccccc|} \hline
 & $P(\gamma_2=1 \mid \by)$ & $P(\gamma_3=1 \mid \by)$ & $P(\gamma_4=1 \mid \by)$ & $P(\gamma_5=1 \mid \by)$ & $P(\gamma_6=1 \mid \by)$ \\  \hline
$q=0.05$ & 0.425 & 0.834 & 0.961 & 0.017 & 0.015 \\
$q=0.25$  & 0.751 & 0.950 & 0.996 & 0.016 & 0.015 \\
$q=0.5$  & 0.796 & 0.970 & 0.999 & 0.020 & 0.016 \\
$q=0.75$  & 0.769 & 0.969 & 0.999 & 0.016 & 0.012 \\
$q=0.95$  & 0.473 & 0.912 & 0.987 & 0.016 & 0.016 \\ \hline
\end{tabular}
\end{center}
\caption{Average marginal $P(\gamma_j=1 \mid \by)$ at multiple quantiles $q=0.05,0.25,0.5,0.75,0.95$
(i.e. conditioning on asymmetric Laplace errors with fixed $\alpha=2q-1$)
under the $p=6$ simulation and heteroskedastic $\epsilon_i \sim N(0,\vartheta_i)$,
$\vartheta_i \propto e^{x_i^T \theta}$ errors. Simulation truth is $\theta=(0,0.5,1,1.5,0,0)$}
\label{tab:multiplequantile_heterosk}
\end{table}

\begin{figure}
\begin{center}
\begin{tabular}{cc}
$\epsilon_i \sim $ Normal & $\epsilon_i \sim $ ANormal \\
\includegraphics[width=0.48\textwidth]{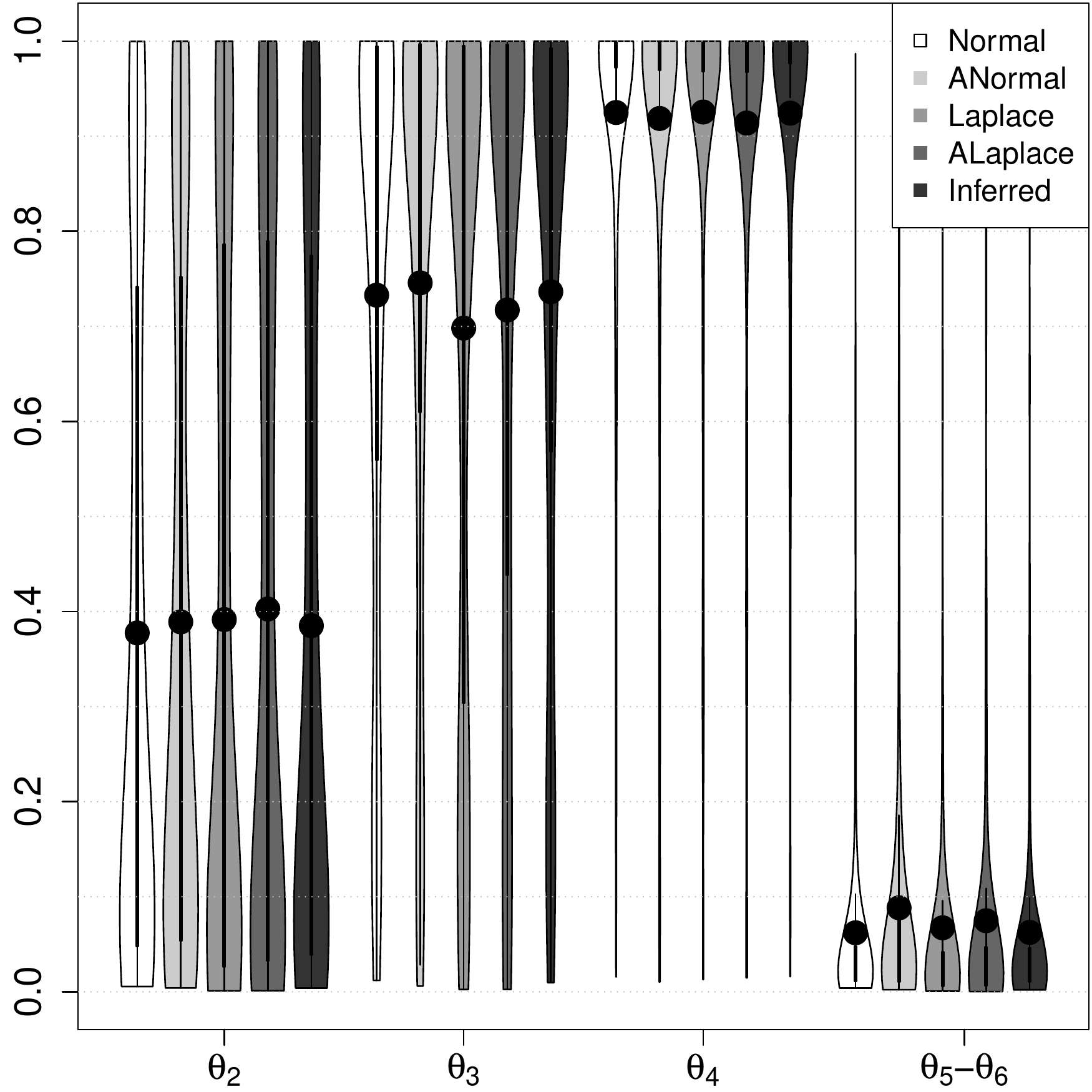} &
\includegraphics[width=0.48\textwidth]{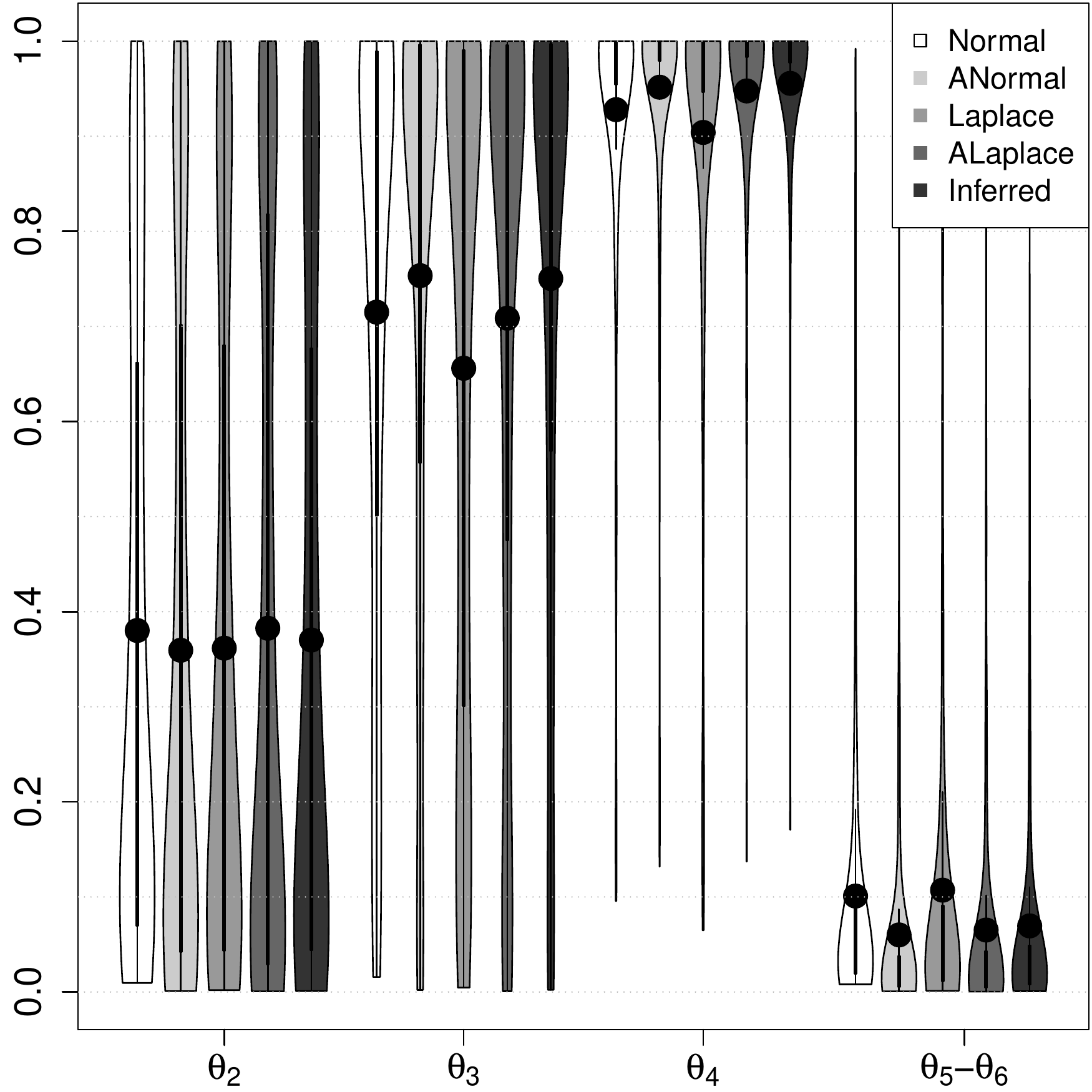} \\
$\epsilon_i \sim $ Laplace & $\epsilon_i \sim $ ALaplace \\
\includegraphics[width=0.48\textwidth]{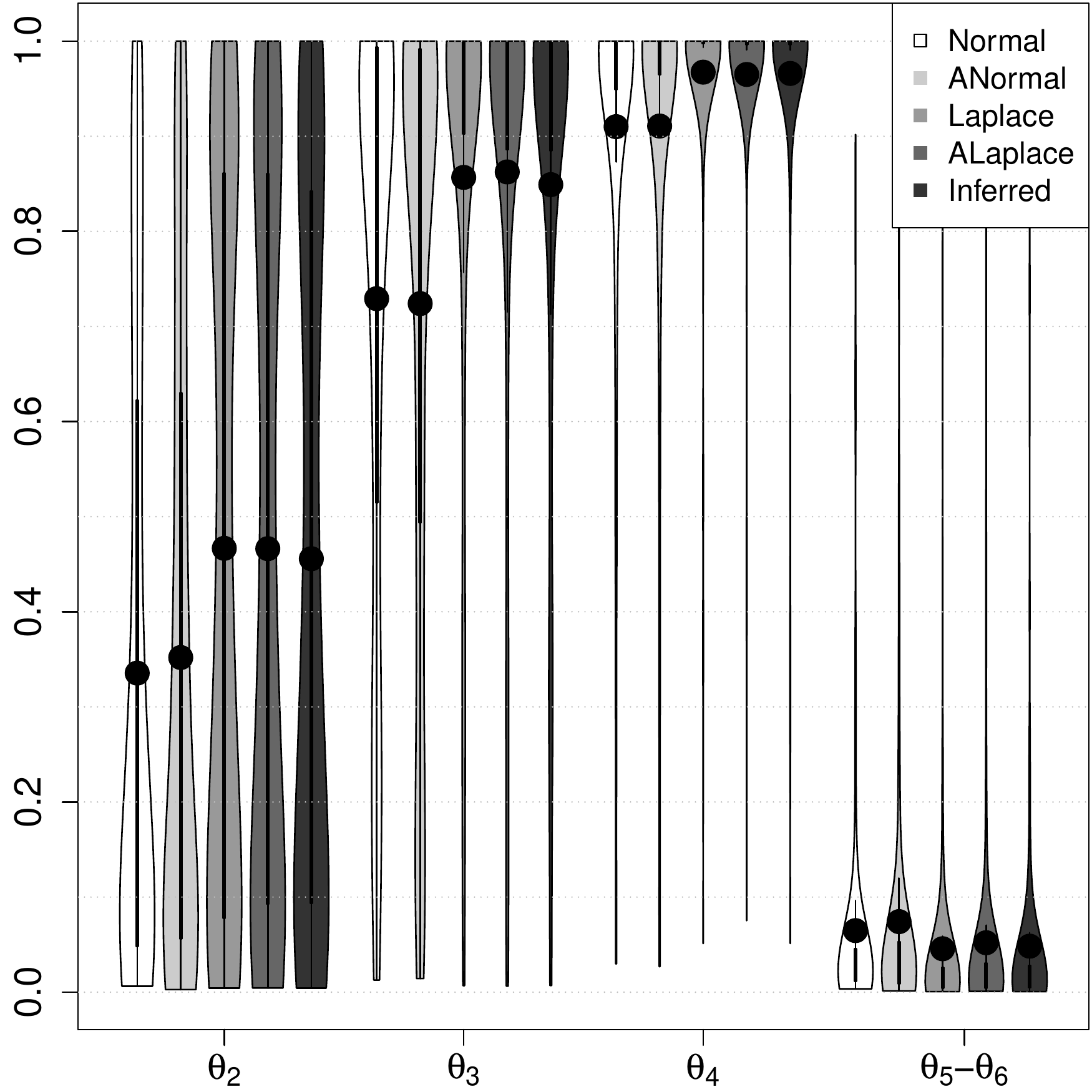} &
\includegraphics[width=0.48\textwidth]{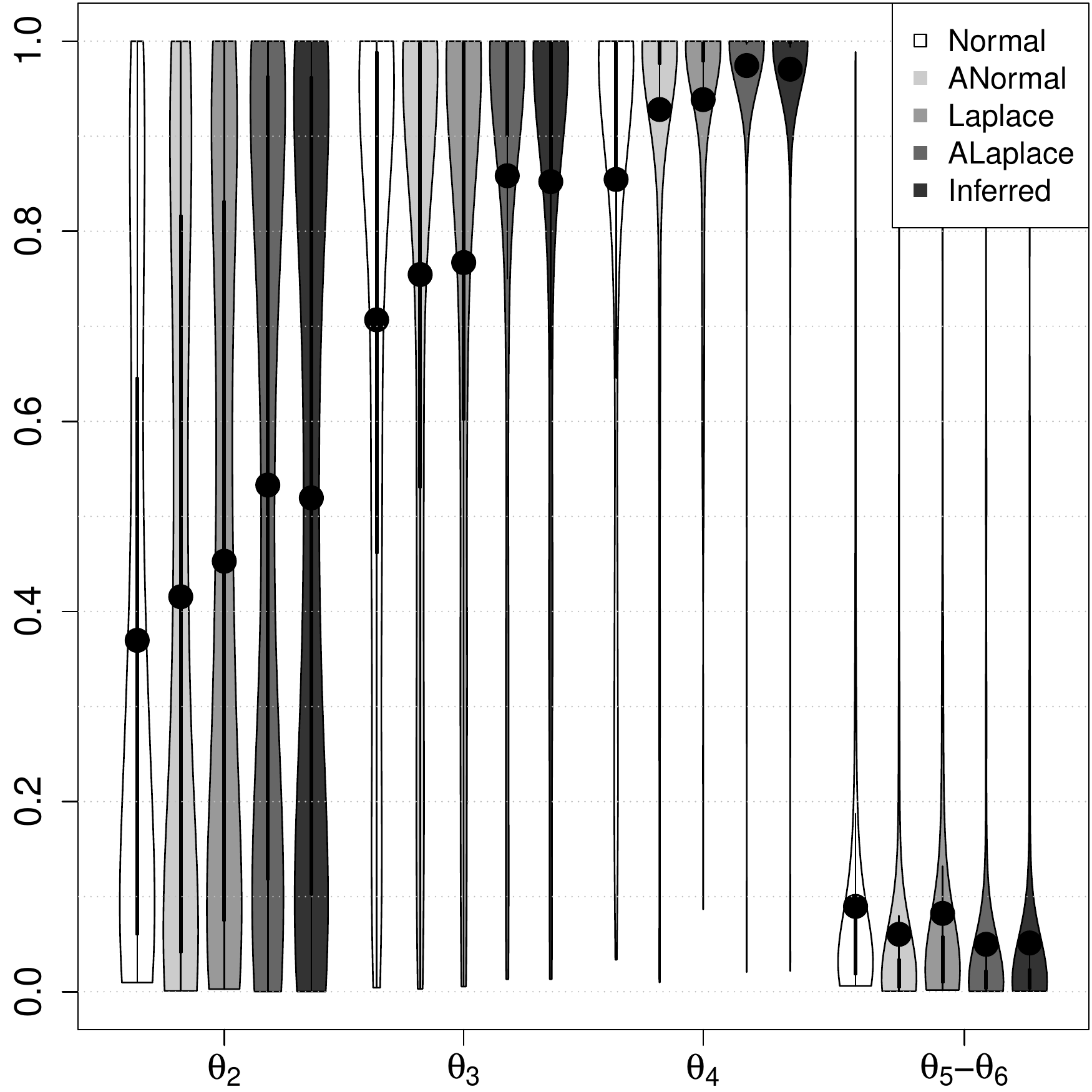} \\
\end{tabular}
\end{center}
\caption{$P(\theta_i \neq 0 \mid y)$ for simulation with constant $\vartheta=0$
and varying $\mbox{tanh}(\alpha_i) \sim N(\mbox{atanh}(\bar{\alpha},1/4^2))$,
where $\bar{\alpha}=0$ for Normal and Laplace and $\bar{\alpha}=-0.5$ for ANormal and ALaplace.
$P(\theta_i \neq 0 \mid y)$ for $p=6$, $\theta=(0,0.5,1,1.5,0,0)$,
$n=100$, $\rho_{ij}=0.5$. Black circles show the mean.}
\label{fig:simres_hetsym_margpp}
\end{figure}

\begin{table}
\begin{center}
\begin{tabular}{|c|cc|} \hline
                         & TP & FP \\
Zellner, Normal errors   & 2.8 & 21.3 \\
pMOM, Normal errors      & 3.0 & 12.0 \\
pMOM, inferred errors    & 2.8 & 10.5 \\
peMOM, Normal errors     & 1.9 & 2.9 \\ \hline
\end{tabular}
\end{center}
\caption{Number of true and false positives in non-id example with 0.5 probability of degenerate $(y_i,x_i)=(0,\ldots,0)$.
$p=n=50$, $\theta^*=(0.1,0.1,0.1,0.1,0.1,0,\ldots,0)$, $\vartheta^*=2$}
\label{tab:gruenwald}
\end{table}

Supplementary Table \ref{tab:infererror_heterosk} shows the mean average posterior probability
assigned to the Normal, asymmetric Normal, Laplace and asymmetric Laplace models under the heteroskedastic
simulation (Section \ref{ssec:nonid_sim}, main manuscript).

Supplementary Figure \ref{fig:simres_hetsym_margpp} shows marginal variable inclusion probabilities
under the hetero-asymmetric simulation.

Supplementary Table \ref{tab:gruenwald} reports true and false positives for our simulation study mimicking \cite{gruenwald:2014}
described in Section \ref{ssec:nonid_sim} of the main manuscript.

\subsection{DLD data}
\label{ssec:suppl_dld}

\begin{table}
\begin{center}
\begin{tabular}{|c|c|c|}\hline
 Gene symbol & Normal & Inferred \\
 C6orf226  & 1.000 & 1.000\\
     ECH1  & 1.000 & 1.000\\
   CSF2RA  & 1.000 & 1.000\\
    RRP1B  & 0.944 & 0.999\\
    FBXL19 & 0.993 & 0.658\\
     MTMR1 & 0.183 & 0.467\\
   SLC35B4 & 0.209 & 0.332\\
 RAB3GAP2  & 0.007 & 0.040\\
\hline
\end{tabular}
\end{center}
\caption{Six genes with largest $p(\gamma_j=1\mid y)$ in the DLD dataset under assumed normality and inferred error distribution.}
\label{tab:dld_margpp}
\end{table}

\begin{table}
\begin{center}
\begin{tabular}{|c|c|} \hline
\multicolumn{2}{|c|}{$\alpha=-0.5$} \\ \hline
 Model & $P(\gamma \mid y)$ \\
 C6orf226, ECH1, CSF2RA, FBXL19, RRP1B & 0.384 \\
 SLC35B4, C6orf226, ECH1, CSF2RA, RRP1B & 0.349 \\
 SLC35B4, C6orf226, MTMR1, ECH1, CSF2RA, RRP1B & 0.127 \\
 C6orf226, MTMR1, ECH1, CSF2RA, FBXL19, RRP1B & 0.049 \\
 C6orf226, MTMR1, RAB3GAP2, ECH1, CSF2RA, RRP1B & 0.023 \\
\hline
\multicolumn{2}{|c|}{$\alpha=0$} \\ \hline
 Model & $P(\gamma \mid y)$ \\
 C6orf226, MTMR1, ECH1, CSF2RA, FBXL19, RRP1B & 0.454 \\
 C6orf226, ECH1, CSF2RA, FBXL19, RRP1B & 0.258 \\
 SLC35B4, C6orf226, MTMR1, ECH1, CSF2RA, RRP1B & 0.108 \\
 SLC35B4, C6orf226, ECH1, CSF2RA, RRP1B & 0.061 \\
 C6orf226, MTMR1, RAB3GAP2, ECH1, CSF2RA, RRP1B & 0.016 \\
\hline
\multicolumn{2}{|c|}{$\alpha=0.5$} \\ \hline
 Model & $P(\gamma \mid y)$ \\
 C6orf226, ECH1, CSF2RA, FBXL19, RRP1B & 0.399 \\
 SLC35B4, C6orf226, ECH1, CSF2RA, RRP1B & 0.359 \\
 SLC35B4, C6orf226, MTMR1, ECH1, CSF2RA, RRP1B & 0.120 \\
 C6orf226, MTMR1, ECH1, CSF2RA, FBXL19, RRP1B & 0.051 \\
 SLC35B4, C6orf226, RAB3GAP2, ECH1, CSF2RA, RRP1B & 0.008 \\
\hline
\end{tabular}
\end{center}
\caption{DLD data. Top 5 models when conditioning on asymmetric Laplace residuals
and fixed $\alpha=-0.5,0,0.5$}
\label{tab:dld_top5models}
\end{table}

Supplementary Table \ref{tab:dld_margpp} shows the six genes with largest marginal inclusion probabilities $p(\gamma_j=1\mid y)$
when conditioning on Normal errors and when inferring the error distribution.
The figures were similar for the four top genes, but the Normal model assigned somewhat higher probability to FBXL19
substantially lower probability to MTMR1.

\clearpage

\bibliographystyle{plainnat}
\bibliography{references}

\end{document}